\tikzstyle{startstop} = [rectangle, rounded corners, minimum width=3cm, minimum height=1cm,text centered, draw=black, fill=white!30]
\tikzstyle{arrow} = [thick,->,>=stealth]
\setlist[1]{itemsep=-3pt}
\newcommand\xqed[1]{%
  \leavevmode\unskip\penalty9999 \hbox{}\nobreak\hfill
  \quad\hbox{#1}}
\newcommand\qeddef{\xqed{\raisebox{-1.5pt}{$\triangle$}}}
\newtheorem{theorem}{Theorem}
\newtheorem{lemma}{Lemma}
\newtheorem{corollary}{Corollary}
\theoremstyle{definition}
\newtheorem{definition}{Definition}
\def\HS{{\sc Halld{\'o}rsson-Szegedy Guessing Game}\xspace} 
\def\OPT{\texttt{OPT}\xspace}
\def\ALG{\texttt{ALG}\xspace}
\def\DET{\texttt{D}\xspace}
\def\ra{\texttt{R}\xspace}
\def\piw{\ensuremath{p_{i\vert w}}\xspace}
\def\P{{\textsc P}\xspace}
\def\Prs{\ensuremath{\textsc{P}^*_\Sigma}\xspace}
\def\Prm{\ensuremath{\textsc{P}^*_\lor}\xspace}
\def\gs{\ensuremath{\mathcal{S}}\xspace}
\def\gt{\ensuremath{\mathcal{T}}\xspace}
\def\posr{\ensuremath{\mathbb{R}_{\geq 0}}\xspace}
\def\pea{\ensuremath{{p\displaystyle_{\mkern+1mu \alpha, \varepsilon}}}\xspace}
\def\pear{\ensuremath{{p\textstyle^r\displaystyle_{\mkern-8mu\alpha, \varepsilon}}}\xspace}
\def\init{\ensuremath{s}\xspace}
\def\lea{\ensuremath{{L\displaystyle_{\mkern+1mu \alpha, \varepsilon}}}\xspace}
\DeclareMathOperator{\E}{\mathbb{E}}
\DeclareMathOperator{\prob}{Pr}
\DeclareMathOperator{\Prob}{Pr}
\DeclareMathOperator{\KL}{KL}
\DeclareMathOperator{\supp}{supp}
\DeclareMathOperator{\cost}{cost}
\DeclareMathOperator{\score}{score}
\DeclareMathOperator{\dist}{dist}
\DeclareMathOperator{\pair}{pairs}
\def\dri{2^{n^{O(1)}}}
\newcommand{\ab}[1] {\left\vert #1\right\vert}
\DeclarePairedDelimiter\paren{\lparen}{\rparen}
\begin{document}

\title{Randomization can be as helpful as a glimpse of the future in online computation}

\author{
	Jesper W. Mikkelsen\thanks{Supported in part by the Villum Foundation and the Danish Council for Independent Research, Natural Sciences.}\\
	\small \texttt{jesperwm@imada.sdu.dk}\\
	\small University of Southern Denmark
}

\date{\vspace{-5ex}}

\maketitle

\begin{abstract}
We provide simple but surprisingly useful direct product theorems for proving lower bounds on online algorithms with a limited amount of advice about the future. Intuitively, our direct product theorems say that if $b$ bits of advice are needed to ensure a cost of at most $t$ for some problem, then $r\cdot b$ bits of advice are needed to ensure a total cost of at most $r\cdot t$ when solving $r$ independent instances of the problem.
Using our direct product theorems, we are able to translate decades of research on randomized online algorithms to the advice complexity model. Doing so improves significantly on the previous best advice complexity lower bounds for many online problems, or provides the first known lower bounds. For example, we show that
\begin{itemize}
\item A paging algorithm needs $\Omega(n)$ bits of advice to achieve a competitive ratio better than $H_k=\Omega(\log k)$, where $k$ is the cache size. Previously, it was only known that $\Omega(n)$ bits of advice were necessary to achieve a constant competitive ratio smaller than $5/4$.
\item Every $O(n^{1-\varepsilon})$-competitive vertex coloring algorithm must use $\Omega(n\log n)$ bits of advice. Previously, it was only known that $\Omega(n\log n)$ bits of advice were necessary to be optimal.
\end{itemize} 
For certain online problems, including the MTS, $k$-server, metric matching, paging, list update, and dynamic binary search tree problem, we prove that randomization and sublinear advice are equally powerful (if the underlying metric space or node set is finite). This means that several long-standing open questions regarding randomized online algorithms can be equivalently stated as questions regarding online algorithms with sublinear advice. For example, we show that there exists a deterministic $O(\log k)$-competitive $k$-server algorithm with sublinear advice if and only if there exists a randomized $O(\log k)$-competitive $k$-server~algorithm~without~advice.

Technically, our main direct product theorem is obtained by extending an information theoretical lower bound technique due to Emek, Fraigniaud, Korman, and Ros{\'e}n [ICALP'09]. 
\end{abstract}

\newpage
\tableofcontents
\newpage

\section{Introduction}
The model of online computation deals with optimization problems where the input arrives sequentially over time. Usually, it is assumed that an online algorithm has no knowledge of future parts of the input. While this is a natural assumption, it leaves open the possibility that a tiny amount of information about the future (which might be available in practical applications) could dramatically improve the performance guarantee of an online algorithm. Recently, the notion of advice complexity \cite{A1, A2, A3, A4} was introduced in an attempt to provide a quantitative and problem-independent framework for studying \emph{semi-online} algorithms with limited (instead of non-existing) knowledge of the future. In this framework, the limited knowledge is modeled as a number of advice bits provided to the algorithm by an oracle (see \cref{cadef}). The goal is to determine how much advice (measured in the length, $n$, of the input) is needed to achieve a certain competitive ratio. In particular, one of the most important questions is how much advice is needed to break the lower bounds for (randomized) online algorithms without advice.
It has been shown that for e.g.\ bin packing and makespan minimization on identical machines, $O(1)$ bits of advice suffice to achieve a better competitive ratio than what is possible using only randomization~\cite{DBLP:conf/wads/AngelopoulosDKR15,Amakespan}. On the other hand, for a problem such as edge coloring, it is known that $\Omega(n)$ bits of advice are needed to achieve a competitive ratio better than that of the best deterministic online algorithm without advice~\cite{Aedge}. However, for many online problems, determining the power of a small amount of advice has remained an open problem.

With a few notable exceptions (e.g.\ \cite{sg,A2,Ak-server,AOC}), most of the previous research on advice complexity has been problem specific.
In this paper, we take a more complexity-theoretic approach and focus on developing techniques that are applicable to many different problems. Our main conceptual contribution is a better understanding of the connection between advice and randomization.
Before explaining our results in details, we briefly review the most relevant previous work.

Standard derandomization techniques~\cite{DBLP:journals/ipl/Newman91,DBLP:conf/soda/ChattopadhyayEEP12}
imply that a randomized algorithm can be converted (maintaining its competitiveness) into a deterministic algorithm with advice complexity $O(\log \log I(n)+\log n)$, where $I(n)$ is the number of inputs of length $n$~\cite{Ak-server} (see also \cref{sec:derand}). Clearly, there are problems where even a single bit of advice is much more powerful than randomization, and so we cannot in general hope to convert an algorithm with advice into a randomized algorithm. However, using machine learning techniques, Blum and Burch have shown that a metrical task system algorithm with advice complexity $O(1)$ can be converted into a randomized algorithm without advice~\cite{BBmts}. Problems such as paging, $k$-server, and list update can be modeled as metrical task systems (see e.g.\ \cite{BE98b}).

\subsection{Overview of results and techniques} 
 We will give a high-level description of the results and techniques introduced in this paper. For simplicity, we restrict ourselves to the case of minimization problems\footnote{All of our results (and their proofs) are easily adapted to maximization problems. See \cref{app:sec:max} for details.}.
\paragraph*{Direct product theorems.} Central to our work is the concept of an \emph{$r$-round input distribution}. Informally, this is a distribution over inputs that are made up of $r$ rounds such that the requests revealed in each round are selected independently of all previous rounds. Furthermore, there must be a fixed upper bound on the length of each round (see \cref{rround}).

As the main technical contribution of the paper, we prove direct product theorems for $r$-round input distributions. Intuitively, a direct product theorem says that if $b$ bits of advice are needed to ensure a cost of at most $t$ for each individual round, then $rb$ bits of advice are needed to ensure a cost of at most $rt$ for the entire input. This gives rise to a useful technique for proving advice complexity lower bounds. In particular, it follows that a linear number of advice bits are needed to get a (non-trivial) improvement over algorithms without any advice at all.

We provide two different direct product theorems formalizing the above intuitive statement in different ways: The first theorem (\cref{mainthm}) is based on an information theoretical argument similar to that of \cite{A2}. The second theorem (\cref{dpmartin}) uses martingale theory.

\paragraph*{Repeatable online problems.} We combine our direct product theorems with the following very simple idea: Suppose that we have a lower bound on the competitive ratio of randomized online algorithms without advice. Often, such a lower bound is proved by constructing a hard input distribution and then appealing to Yao's principle. For some online problems, it is always possible to combine (in a meaningful way) a set of input sequences $\{\sigma_1,\ldots , \sigma_r\}$ into one long input sequence $\sigma=g(\sigma_1,\ldots , \sigma_r)$ such that serving $\sigma$ essentially amounts to serving the $r$ smaller inputs separately and adding the costs incurred for serving each of them. We say that such problems are \emph{$\Sigma$-repeatable} (see \cref{re}). For a $\Sigma$-repeatable online problem, an adversary can draw $r$ input sequences independently at random according to some hard input distribution. This gives rise to an $r$-round input distribution. By our direct product theorem, an online algorithm needs linear advice (in the length of the input) to do better against this $r$-round input distribution than an online algorithm without advice.
Thus, for $\Sigma$-repeatable online problems, we get that it is possible to translate lower bounds for randomized algorithms without advice into lower bounds for algorithms with sublinear advice. More precisely, we obtain the following theorem.
\begin{restatable}{theorem}{restateone}
Let $\P$ be a $\Sigma$-repeatable online minimization problem and let $c$ be a constant. Suppose that for every $\varepsilon>0$ and every $\alpha$, there exists an input distribution $\pea:I\rightarrow [0,1]$ with finite support such that $\E_{\pea}[\DET(\sigma)]\geq (c-\varepsilon)\E_{\pea}[\OPT(\sigma)]+\alpha$ for every deterministic algorithm $\DET$ without advice. Then, every randomized algorithm reading at most $o(n)$ bits of advice on inputs of length $n$ has a competitive ratio of at least $c$.
\label{maint1}
\end{restatable}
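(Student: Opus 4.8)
The plan is to use the $\Sigma$-repeatability hypothesis to build an $r$-round input distribution out of $r$ independent copies of the hard distribution $\pea$, and then invoke the direct product theorem (\cref{mainthm}) to conclude that sublinear advice cannot beat the per-round bound. Fix a putative randomized algorithm $\ALG$ reading $o(n)$ bits of advice, and suppose toward a contradiction that its competitive ratio is some $c' < c$. Choose $\varepsilon > 0$ small enough that $(c-\varepsilon)/1 > c'$ (with a bit of room to spare), and an additive constant $\alpha$ to be fixed later; the hypothesis hands us a distribution $\pea$ on inputs of some bounded length, say $\ell = \ell(\alpha,\varepsilon)$, with $\E_{\pea}[\DET(\sigma)] \geq (c-\varepsilon)\,\E_{\pea}[\OPT(\sigma)] + \alpha$ for \emph{every} deterministic advice-free $\DET$. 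Since every fixed advice string together with a fixed setting of the random coins yields a deterministic advice-free algorithm, this inequality in fact lower-bounds the expected cost of any algorithm on a single round: the per-round cost of any strategy is at least $(c-\varepsilon)\,\E_{\pea}[\OPT] + \alpha$. This is the ``$b$ bits needed to ensure cost $t$'' ingredient, with $b = 0$ and $t$ just below $(c-\varepsilon)\,\E_{\pea}[\OPT]+\alpha$.

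Next I would assemble the $r$-round distribution. Using the combining function $g$ from $\Sigma$-repeatability, draw $\sigma_1,\dots,\sigma_r$ i.i.d.\ from $\pea$ and feed $\sigma = g(\sigma_1,\dots,\sigma_r)$ to the algorithm; each round has length at most $\ell$, so this is a legitimate $r$-round input distribution in the sense of \cref{rround}, and the total input length is $n \leq r\ell$. By the defining property of $\Sigma$-repeatability, $\OPT(\sigma) = \sum_i \OPT(\sigma_i)$ and $\ALG(\sigma) = \sum_i (\text{cost of }\ALG\text{ on round }i)$, so in expectation over the distribution and the coins, $\E[\OPT(\sigma)] = r\,\E_{\pea}[\OPT]$. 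Now apply the direct product theorem: since beating cost $t$ on a single round requires more than $o(\ell)$ — in fact here, any positive amount of — advice, beating total cost $rt$ on $\sigma$ requires $\Omega(r)$ bits of advice, hence a linear (in $n$) number of bits. Concretely, \cref{mainthm} should give that an algorithm reading $o(r)$ bits of advice incurs expected cost at least $(1-o(1))\,r\,t$, i.e.\ at least $(1-o(1))\big((c-\varepsilon)\,r\,\E_{\pea}[\OPT] + r\alpha\big)$.

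To extract the competitive ratio I would pick $\alpha$ (equivalently, arrange via $\pea$ that $\E_{\pea}[\OPT]$ is large, which the additive-constant formulation permits) so that the $r\alpha$ term dominates the lower-order $o(r)$ and $o(1)\cdot r\,\E_{\pea}[\OPT]$ losses coming from \cref{mainthm}; then the expected cost of $\ALG$ on $\sigma$ is at least $(c-2\varepsilon)\,\E[\OPT(\sigma)] + (\text{positive additive term})$ for $r$ large. Since $\E[\OPT(\sigma)] = r\,\E_{\pea}[\OPT] \to \infty$, this forces the competitive ratio of $\ALG$ to be at least $c - 2\varepsilon$. As $\varepsilon$ was arbitrary and $\ALG$ reads $o(n)$ bits of advice (note $o(r) = o(n)$ since $n = \Theta(r)$), we conclude the competitive ratio is at least $c$, contradicting $c' < c$ and finishing the proof.

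The main obstacle I anticipate is the bookkeeping that matches the hypotheses of \cref{mainthm} exactly: the direct product theorem is stated for a per-round cost threshold $t$ and a per-round advice bound $b$, and I need to confirm (i) that the ``every deterministic advice-free algorithm'' quantifier in the hypothesis really does cover the conditional behavior of a randomized advice algorithm on one round — it does, because conditioning on the advice string and coins, and on the realized prefix of earlier rounds, leaves a deterministic advice-free strategy for that round — and (ii) that the additive slack $\alpha$ can be leveraged to absorb the $o(1)$ multiplicative and $o(r)$ additive errors that \cref{mainthm} introduces. The second point is where the choice of parameters (taking $\alpha$, or $\E_{\pea}[\OPT]$, large relative to $1/\varepsilon$ and the rate of the $o(\cdot)$ terms, and then sending $r \to \infty$) has to be done carefully; everything else is a direct unwinding of the definitions of $\Sigma$-repeatability and of the competitive ratio with advice.
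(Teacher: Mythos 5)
Your proposal follows essentially the same route as the paper's proof: repeat $\pea$ independently over $r$ rounds, invoke the direct product theorem to show that $o(r)$ bits of advice cannot improve the per-round cost bound, transfer the lower bound from the repeated setting to $\P$ via the map $g$, absorb the per-round error terms into the additive constant $\alpha$, and conclude via Yao's principle by letting $r\to\infty$. You also correctly flag the two technical points the paper has to verify (conditioning on advice and history leaves a deterministic advice-free per-round strategy; the additive slack soaks up the losses).

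The one place where your sketch is genuinely imprecise rather than merely terse is the step where you write \(\ALG(\sigma)=\sum_i(\text{cost on round }i)\) and \(\OPT(\sigma)=\sum_i\OPT(\sigma_i)\) for \(\sigma=g(\sigma_1,\dots,\sigma_r)\). For an arbitrary $\Sigma$-repeatable problem this does not hold as an equality: the per-round cost decomposition is only defined for the auxiliary problem $\Prs$, and $\Sigma$-repeatability gives you inequalities with $O(r)$ slack — namely $\ALG^*(\sigma^*)\leq\ALG(g(\sigma^*))+k_2 r$ and $\OPT^*_\Sigma(\sigma^*)\geq\OPT(g(\sigma^*))-k_3 r$, with $k_2,k_3$ possibly nonzero. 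The paper therefore runs the direct product theorem on $\Prs$ (where round costs are well-defined by construction), converts the $\P$-algorithm into a $\Prs$-algorithm via \cref{lem:repadv}, and then chooses $\alpha=\alpha'+k_2+k_3(c-\varepsilon)+1$ so that the $k_2 r$ and $k_3 r$ terms are paid for. Your remark about $\alpha$ absorbing ``$o(r)$ additive errors'' undersells this slightly — the errors are $\Theta(r)$, not $o(r)$ — but since the gain from $\alpha$ is also $r\alpha$, the absorption still works with a constant bump to $\alpha$, exactly as the paper does it. With that correction the argument goes through and matches the paper's.
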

Much research has been devoted to obtaining lower bounds for randomized algorithms without advice. \cref{maint1}{} makes it possible to translate many of these lower bounds into advice complexity lower bounds, often resulting in a significant improvement over the previous best lower bounds (see \cref{maintable}).

For a $\Sigma$-repeatable problem, the total cost has to be the sum of costs incurred in each individual round. It is also possible to consider another kind of repeatable problems, where the total cost is the \emph{maximum cost} incurred in a single round. We call such problems $\lor$-repeatable (see \cref{sec:maxrep}). Many online coloring problems are $\lor$-repeatable. For $\lor$-repeatable problems, we show in \cref{lorrep} that under certain conditions, a constant lower bound on the competitive ratio of \emph{deterministic} algorithms without advice carries over to randomized algorithms, even if the randomized algorithms have advice complexity $o(n)$. The proof of this result is straightforward and does not rely on our direct product theorems for $\Sigma$-repeatable problems. However, the result improves or simplifies a number of previously known advice complexity lower bounds for $\lor$-repeatable problems.

In \cref{maintable}, we have listed most of the repeatable online problems for which lower bounds on algorithms with sublinear advice existed prior to our work, and compared those previous lower bounds with the lower bounds that we obtain in this paper. We have also included two examples of repeatable online problems for which \cref{maint1}{} provides the first known advice complexity lower bounds.

It is evident from \cref{maintable} that there are many repeatable online problems. On the other hand, let us mention that e.g.\ bin packing and makespan minimization are examples of problems which are provably not repeatable.

\begin{table}[ht!]
\begin{center}
\begin{minipage}{\textwidth}
\renewcommand{\thempfootnote}{$\mathsection$}
\begin{center}
\begin{tabular}{|l|lr|lr|}
\cline{1-5}
\rule{0pt}{5ex}

& \multicolumn{4}{c|}{\shortstack{Lower bound for algorithms \\ with advice complexity $o(n)$}}  \\
\hline
\hline
\multicolumn{1}{|c|}{$\Sigma$-repeatable problem} & \multicolumn{2}{c|}{\bf \cref{maint1}} & \multicolumn{2}{c|}{Previous best} \\
\hline
\rule{0pt}{3ex} Paging & $\Omega\left(\log k\right)$ & \cite{BLSmetric}& $5/4$ &\cite{A1}\\
\rule{0pt}{3ex} $k$-Server & $\Omega\left(\log k\right)$&\cite{BLSmetric}& $3/2$&\cite{Smula} \\
\rule{0pt}{3ex} $2$-Server & $1+e^{-1/2}$&\cite{2serverlower} & $3/2$&\cite{Smula} \\
\rule{0pt}{3ex} List Update & $3/2$&\cite{Teia93} & $15/14$&\cite{AListupdate} \\
\rule{0pt}{3ex} Metrical Task Systems & $\Omega\left(\log N\right)$&\cite{BLSmetric}& $\Omega\left(\log N\right)$&\cite{A2}\\
\rule{0pt}{3ex} Bipartite Matching & $e/(e-1)$&\cite{optbip}& $1+\varepsilon$&\cite{Abipfinal} \\

\rule{0pt}{3ex} Reordering Buffer Management & $\Omega\left(\log \log k\right)$&\cite{STOCrbm} & $1+\varepsilon$&\cite{Arbm} \\
\rule{0pt}{3ex} 2-Sleep States Management & $e/(e-1)$&\cite{KMMO90} & $7/6$&\cite{Assm} \\
\rule{0pt}{3ex} Unit clustering & $3/2$&\cite{UnitClustering} & $-$& \\
\rule{0pt}{3ex} Max-SAT & $3/2$&\cite{Max-SAT} & $-$& \\
\hline
\hline
\multicolumn{1}{|c|}{$\lor$-repeatable problem} & \multicolumn{2}{c|}{\bf \cref{lorrep}} & \multicolumn{2}{c|}{Previous best} \\
\hline
\rule{0pt}{3ex} Edge Coloring & $2$&\cite{Bar-Noy} & $2$&\cite{Aedge} \\
\rule{0pt}{3ex} $L(2,1)$-Coloring on Paths & $3/2$&\cite{AL21} & $3/2$&\cite{AL21} \\
\rule{0pt}{3ex} $2$-Vertex-Coloring & $\omega(1)$&\cite{ColorBean} & $2$&\cite{Avbipartite} \\
\hline
\end{tabular}
\end{center}
\end{minipage}
\end{center}
\caption{New and previously best lower bounds on the competitive ratio for algorithms reading $o(n)$ bits of advice on inputs of length $n$. For the $\Sigma$-repeatable problems, the lower bounds in this table are obtained by combining \cref{maint1}{} with known lower bounds for randomized online algorithms without advice. For the $\lor$-repeatable problems, the lower bounds are obtained by combining our general result on $\lor$-repeatable problems with known lower bounds for \emph{deterministic} algorithms without advice. In both cases, references to these previously known lower bounds for online algorithms without advice are provided in the second column of the table. See \cref{sec:appmaint1} and \cref{sec:maxrep} for a more detailed explanation of the entries in \cref{maintable}, and for a comparison with the current upper bounds. \\
Bipartite matching and Max-SAT are maximization problems, and hence the lower bound is obtained via \cref{maint1:max} (which is the maximization version of \cref{maint1}). For paging and reordering buffer management, $k$ denotes the cache/buffer size. For metrical task systems, $N$ is the number of states. For the metrical task system problem and the $k$-server problem, the bounds are for a worst-case metric. It is also possible to use \cref{maint1}{} together with known lower bounds for specific metric spaces. The lower bound for unit clustering is for the one-dimensional case.
}
\label{maintable}
\end{table}
\newpage
\paragraph*{Compact online problems.} 
Note that when translating a lower bound on randomized algorithms without advice to a lower bound on algorithms with $o(n)$ bits of advice via \cref{maint1}{}, we had to make some assumptions on the lower bound. This begs the following question: Are there online problems where every lower bound (after suitable modifications) satisfies these assumptions? 
Recall that for finite games, Yao's principle works in both directions: In particular, if there is a lower bound for randomized computation, then there exists an input distribution witnessing this lower bound. Unfortunately, online problems are almost always infinite games (since there is no bound on the length of the input), and so this direction of Yao's principle do not necessarily apply.  However, for certain online problems, it follows from previous results (see \cite{ambuhl, momke} and \cref{subsec:lts}) that this direction of Yao's principle \emph{do} apply in the sense that the (approximately) best possible lower bound for randomized algorithms without advice is witnessed by a family of hard input distributions with finite support. We call such online problems \emph{compact} (\cref{compdef}). 
If an online problem is both $\Sigma$-repeatable and compact, we get the following result: Let $c$ be a constant and let $\varepsilon>0$. If $c$ is a lower bound on the competitive ratio of randomized online algorithms without advice, then $c-\varepsilon$ is also a lower bound on the competitive ratio of online algorithms with sublinear advice. Equivalently, by contraposition, the existence of a $c$-competitive algorithm with sublinear advice implies the existence of a $(c+\varepsilon)$-competitive randomized algorithm without advice. Combining this with existing derandomization results \cite{Ak-server} (see also \cref{sec:derand}) yields the following complexity theoretic equivalence between randomization and sublinear advice (previously, only the forward implication was known~\cite{Ak-server}):
\begin{theorem}
\label{mmtt}
Let $\P$ be a compact and $\Sigma$-repeatable minimization problem with at most $\dri$ inputs of length $n$, and let $c$ be a constant independent of $n$. The following are equivalent:
\begin{enumerate}
\item For every $\varepsilon>0$, there exists a randomized $(c+\varepsilon)$-competitive $\P$-algorithm without advice.
\item For every $\varepsilon>0$, there exists a deterministic $(c+\varepsilon)$-competitive $\P$-algorithm with advice~complexity $o(n)$.
\end{enumerate}
\end{theorem}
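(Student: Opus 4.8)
The plan is to establish the two implications separately. The implication $(1)\Rightarrow(2)$ is a direct consequence of known derandomization results and requires no new ideas; the content of the theorem lies in $(2)\Rightarrow(1)$, which I would prove by contraposition, feeding compactness into \cref{maint1}.

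For $(1)\Rightarrow(2)$: fix $\varepsilon>0$ and let $A$ be a randomized $(c+\varepsilon)$-competitive $\P$-algorithm without advice. By the Newman-style derandomization of online algorithms discussed in \cref{sec:derand} (see \cite{Ak-server}), $A$ can be converted, \emph{without any loss of competitiveness}, into a deterministic $(c+\varepsilon)$-competitive algorithm with advice complexity $O(\log\log I(n)+\log n)$, where $I(n)$ denotes the number of inputs of length $n$. Since by assumption $I(n)\le\dri$, we have $\log I(n)\le n^{O(1)}$, hence $\log\log I(n)=O(\log n)$, and the advice complexity of the resulting algorithm is $O(\log n)=o(n)$. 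This gives $(2)$.

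For $(2)\Rightarrow(1)$: suppose $(1)$ fails, so that there is an $\varepsilon_0>0$ for which no randomized $\P$-algorithm without advice is $(c+\varepsilon_0)$-competitive; equivalently, the competitive ratio of randomized $\P$-algorithms without advice is at least $c+\varepsilon_0$. Since $\P$ is compact (\cref{compdef}), this lower bound is, up to an arbitrarily small additive slack in the ratio, witnessed by finitely supported hard input distributions; concretely, for every $\alpha$ there is an input distribution $p_\alpha$ with finite support such that $\E_{p_\alpha}[\DET(\sigma)]\ge(c+\varepsilon_0/2)\,\E_{p_\alpha}[\OPT(\sigma)]+\alpha$ for every deterministic algorithm $\DET$ without advice. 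This is precisely the hypothesis of \cref{maint1} with the constant $c':=c+\varepsilon_0/2$ (in fact it holds with the parameter ``$\varepsilon$'' of that theorem equal to $0$, hence a fortiori for every $\varepsilon>0$). As $\P$ is $\Sigma$-repeatable, \cref{maint1} now yields that every randomized algorithm reading $o(n)$ bits of advice has competitive ratio at least $c'=c+\varepsilon_0/2$. In particular, a deterministic algorithm with advice complexity $o(n)$ — being a special case of such a randomized algorithm — cannot be $(c+\varepsilon_0/4)$-competitive, since that would force its competitive ratio below $c'$. Thus $(2)$ fails, completing the contrapositive.

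I expect that, once \cref{maint1} (and hence the direct product theorem it rests on), the compactness arguments of \cite{ambuhl,momke} (cf.\ \cref{subsec:lts}), and the derandomization of \cite{Ak-server} are in hand, \cref{mmtt} is essentially an assembly, and the only real work is bookkeeping: matching the quantifier structure in the definition of compactness (its ``$\varepsilon$'' and ``$\alpha$'' parameters) to the hypothesis of \cref{maint1}, keeping track of the additive constants appearing in the definition of the competitive ratio, and chaining the $\varepsilon$-slacks ($\varepsilon_0\mapsto\varepsilon_0/2\mapsto\varepsilon_0/4$) consistently through both implications. The one subtlety I would be careful about is that ``not $(c+\varepsilon_0)$-competitive'' really does yield a randomized lower bound of $c+\varepsilon_0$ rather than merely something exceeding $c$; this is what lets the compactness witness be taken with a constant strictly above $c$, which in turn is what makes $(2)$ genuinely fail rather than merely barely hold.
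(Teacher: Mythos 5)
Your proposal is correct and follows essentially the same two-step route as the paper: derandomization (\cref{thm:derandmin}) for $(1)\Rightarrow(2)$, and compactness fed into \cref{maint1} for $(2)\Rightarrow(1)$, which the paper packages as \cref{comprep1} and \cref{comprep2} and of which your contrapositive is simply the unrolled form. The only small slip is the claim that the derandomization is ``without any loss of competitiveness'': \cref{thm:derandmin} actually incurs a multiplicative $(1+\delta)$ factor for any desired $\delta>0$, but since both clauses of the equivalence quantify over all $\varepsilon>0$, this slack is absorbed exactly as in the paper's proof, which obtains a $(c+2\varepsilon)$-competitive algorithm and then quantifies away the factor of two.
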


In this paper, we use a technique due to Amb{\"u}hl~\cite{ambuhl} and M{\"{o}}mke~\cite{momke} to show that the class of compact and $\Sigma$-repeatable problems contains all problems which can be modeled as a metrical task system (MTS) with a finite number of states and tasks. This means that e.g.\ the $k$-server, list update, paging, and dynamic binary search tree problem are all compact and $\Sigma$-repeatable, assuming that the underlying metric space or node set is finite. For all these problems, it is known that it is possible to achieve a constant competitive ratio with respect to the length of the input. Also, the number of inputs of length $n$ for each of these problems is at most $\dri$ (this bound holds since when we apply \cref{mmtt}{} to e.g.\ the $k$-server problem, the metric space will be fixed and not a part of the input). Thus, for each of these problems, \cref{mmtt}{} applies. Furthermore, for all the problems just mentioned (except paging), determining the best possible competitive ratio of a randomized algorithm without advice is regarded as important open problems~\cite{BE98b,DBLP:journals/siamcomp/DemaineHIP07}. For example, the randomized $k$-server conjecture says that there exists a randomized $O(\log k)$-competitive $k$-server algorithm~\cite{DBLP:journals/csr/Koutsoupias09}. \cref{mmtt}{} shows that this conjecture is equivalent to the conjecture that there exists a deterministic $O(\log k)$-competitive $k$-server algorithm with advice complexity $o(n)$.  Currently, it is only known how to achieve a competitive ratio of $O(\log k)$ using $2n$ bits of advice~\cite{Ak-server,Ak-serverRR}.

We also show that there are compact and $\Sigma$-repeatable problems which cannot be modeled as a MTS. One such example is the metric matching problem (on finite metric spaces)~\cite{DBLP:journals/jal/KalyanasundaramP93}.

Pessimistically, \cref{mmtt}{} may be seen as a barrier result which says that (for compact and $\Sigma$-repeatable online problems) designing an algorithm with sublinear advice complexity and a better competitive ratio than the currently best randomized algorithm without advice might be very difficult. Optimistically, one could hope that this equivalence might be useful in trying to narrow the gap between upper and lower bounds on the best possible competitive ratio of randomized algorithms without advice. In all cases, \cref{mmtt}{} shows that understanding better the exact power of (sublinear) advice in online computation would be very useful.

\subsection{Lower bounds for specific online problems}
The previously mentioned applications of our direct product theorem treat the hard input distribution as a black-box. However, it is also possible to apply our direct product theorem to explicit input distributions. Doing so yields some interesting lower bounds which cannot be obtained via \cref{maint1}{}. In what follows, we give three such examples. 

\paragraph*{Repeated matrix games.} 
Let $q\in\mathbb{N}$ and let $A\in \posr^{q\times q}$ be a matrix defining a two-player zero-sum game. Let $V$ denote the value of the game defined by $A$. The \emph{repeated matrix game (RMG)} with cost matrix $A$ is an online problem where the algorithm and adversary repeatedly plays the game defined by $A$. The adversary is the row-player, the algorithm is the column-player, and the matrix $A$ specifies the cost incurred by the online algorithm in each round. This generalizes the string guessing problem \cite{sg} and the generalized matching pennies problem \cite{A2} (both of these essentially corresponds to the RMG with a $q\times q$ matrix $A$ where $A(i,j)=1$ if $i\neq j$ and $A(i,i)=0$). Using our direct product theorem, we easily get that for every $\varepsilon>0$, an online algorithm which on inputs of length $n$ is guaranteed to incur a cost of at most $(V-\varepsilon)n$ must read $\Omega(n)$ bits of advice.
\begin{restatable}{theorem}{restateRMGmainthm}
\label{m:RMGmainthm}
\label{RMGmainthm}
Let $\ALG$ be an algorithm for the RMG with cost matrix $A$. Furthermore, let $V$ be the value of the (one-shot) two-person zero-sum game defined by $A$ and let $0<\varepsilon\leq V$ be a constant. If $\E[\ALG(\sigma)]\leq (V-\varepsilon)n$ for every input $\sigma$ of length $n$, then $\ALG$ must read at least 
\begin{equation}
\label{RMGmainthmeq}
b\geq \frac{\varepsilon^2}{2\ln(2)\cdot\|A\|_\infty^2}n=\Omega(n)
\end{equation}
bits of advice. 
\end{restatable}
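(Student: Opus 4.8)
The plan is to exhibit a hard $r$-round input distribution for the RMG and then invoke the information-theoretic direct product theorem (\cref{mainthm}) to get the advice lower bound. The starting point is the minimax theorem: since $V$ is the value of the one-shot game defined by $A$, there is a mixed strategy $p^*$ for the row-player (the adversary) such that for \emph{every} pure column $j$ of the algorithm, the expected cost is at least $V$, i.e.\ $\sum_i p^*_i A(i,j)\geq V$. I would let the single-round distribution be: the adversary samples a row $i\sim p^*$ and presents it as the request of that round; the algorithm then answers with a column, incurring cost $A(i,j)$. Playing $r$ rounds independently in this way gives an $r$-round input distribution in the sense of \cref{rround}, with each round of bounded length (a constant, in fact length roughly $1$ up to encoding). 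Here the ``individual round cost'' threshold is $t$ with $t$ slightly below $V$; the point is that without advice, the expected per-round cost is at least $V$, so no deterministic algorithm can guarantee cost $\leq t=V-\varepsilon/2$ on a round except with probability bounded away from $1$.

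The key quantitative step is to convert ``expected cost $\geq V$ per round'' into ``the probability of beating the threshold in a round is bounded by some $p<1$,'' with an explicit $p$ that feeds into the direct product bound. Since per-round cost is a nonnegative random variable bounded by $\|A\|_\infty$ with mean $\geq V$, a reverse Markov / Hoeffding-type estimate bounds the probability that the cost in a round drops below $V-\delta$. Then I would apply \cref{mainthm} (the information-theoretic direct product theorem) with the total cost target $(V-\varepsilon)n$: achieving total cost $\leq (V-\varepsilon)n$ forces the algorithm to beat the per-round threshold on a constant fraction of rounds, which by the direct product theorem requires $\Omega(r)=\Omega(n)$ bits of advice. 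Tracking the constants through the KL-divergence / entropy computation in that theorem is exactly what produces the stated bound $b\geq \frac{\varepsilon^2}{2\ln 2\,\|A\|_\infty^2}n$: the $\varepsilon^2$ comes from a Pinsker-type step (squared bias between the algorithm's success probability with and without advice), the $\|A\|_\infty^2$ from the range of the per-round cost via Hoeffding, and the $\ln 2$ from converting nats to bits.

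The main obstacle I anticipate is not the probabilistic estimate but making the reduction fit the precise hypotheses of \cref{rround} and \cref{mainthm}: one has to check that the RMG, played with the $p^*$-distribution in each round, genuinely is an $r$-round input distribution (independence across rounds and a fixed per-round length bound are immediate, but one must be careful that the ``request'' in round $j$ does not depend on the algorithm's earlier answers in a way that breaks the round structure — here it doesn't, since the adversary's strategy $p^*$ is oblivious), and that the cost decomposes additively across rounds so that ``$\E[\ALG(\sigma)]\le(V-\varepsilon)n$'' translates into a per-round success statement of the form required. Once the setup is phrased correctly, the argument is essentially a specialization of \cref{maint1}'s proof strategy with an explicit single-round distribution rather than a black-box one, and the constant in \cref{RMGmainthmeq} falls out of the constants already present in \cref{mainthm}.
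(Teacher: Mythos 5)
Your high-level plan — take the adversary's optimal mixed strategy $\mu^*$ from the minimax theorem, draw each round i.i.d.\ from $\mu^*$, and feed the resulting $n$-round input distribution into \cref{mainthm} — is exactly what the paper does. Where your sketch goes off-track is the middle step. The hypothesis \cref{mainthmas} of \cref{mainthm} asks for a deterministic convex decreasing function $f$ with $\E[\cost_i(\ALG)\mid W_i=w]\geq f\big(D_{\KL}(p_{i\vert w}\|p_i)\big)$ for every history $w$. There is no per-round threshold $t$, no ``probability of beating $t$ bounded away from $1$,'' and no reverse-Markov or Hoeffding step. That vocabulary belongs to the martingale theorem \cref{dpmartin}, and mixing it into an application of \cref{mainthm} is a genuine gap: if you tried to carry out your plan as written you would either have to abandon the threshold step and re-derive the needed conditional-expectation bound from scratch, or switch to \cref{dpmartin}, which the paper remarks gives slightly weaker constants for closely related repeated matrix games.

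The bridge from KL to expected cost is short and avoids concentration entirely. Given history $w$, the algorithm's answer $y_w$ in round $i$ is deterministic, so $\E[\cost_i(\ALG)\mid W_i=w]=\E_{x\sim p_{i\vert w}}[A(x,y_w)]$. Pinsker's inequality gives $\|p_{i\vert w}-p_i\|_1\leq\sqrt{\ln 4\cdot d}$ with $d=D_{\KL}(p_{i\vert w}\|p_i)$, and the bound $\big|\E_\mu[g]-\E_\nu[g]\big|\leq\|g\|_\infty\,\|\mu-\nu\|_1$ with $g=A(\cdot,y_w)$ yields
\[
\E_{x\sim p_{i\vert w}}[A(x,y_w)]\geq \E_{x\sim \mu^*}[A(x,y_w)]-\|A\|_\infty\sqrt{\ln 4\cdot d}\geq V-\|A\|_\infty\sqrt{\ln 4\cdot d}.
\]
Taking $f(d)=V-\|A\|_\infty\sqrt{\ln 4\cdot d}$, \cref{mainthm} gives $\E[\cost(\ALG)]\geq n\,f(b/n)$, and solving $(V-\varepsilon)n\geq n\,f(b/n)$ yields exactly $b\geq\varepsilon^2 n/(2\ln 2\cdot\|A\|_\infty^2)$. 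So your attribution of the constants is close but slightly off: the $\varepsilon^2$ indeed comes from squaring Pinsker, but the $\|A\|_\infty^2$ comes from the $L_1$-to-expectation step above rather than from Hoeffding, and the $\ln 2$ is just $\ln 4=2\ln 2$ from Pinsker's constant. Drop the threshold/Hoeffding machinery entirely and work with per-round expectations directly.
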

 Furthermore, we also show how a more careful application of our direct product theorem to some particular repeated matrix games yields good trade-off results for the exact amount of advice needed to ensure a cost of at most $\alpha n$ for $0<\alpha<V$. See \cref{sec:repmat} for details and for the proof of \cref{RMGmainthm}.

\paragraph{A better bin packing lower bound via repeated matrix games.} We use our results on repeated matrix games to prove the following advice complexity lower bound for bin packing:
\begin{restatable}{theorem}{restatebetterbp}
Let $c<4-2\sqrt{2}$ be a constant. A randomized $c$-competitive bin packing algorithm must read at least $\Omega (n)$ bits of advice.
\label{thm:betterbp}
\end{restatable}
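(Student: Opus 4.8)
The plan is to exhibit a repeated matrix game that is "embedded" inside bin packing, so that a lower bound from \cref{RMGmainthm} transfers. First I would recall the classical hard instance for bin packing lower bounds (essentially due to Yao, and used in many randomized bin packing lower bounds): items come in a small number of sizes — for the bound $4-2\sqrt 2$ the relevant sizes are roughly $1/2+\delta$ and a complementary "small" item, or the two-size family built from sizes slightly above $1/3$ and $1/2$ — presented in a sequence of identical \emph{blocks}, where within each block the algorithm must irrevocably decide how to pack the current items without knowing whether the complementary items completing the bins will ever arrive. The point is that each block is a one-shot game: the adversary (row player) chooses which "continuation" to play, the algorithm (column player) chooses a packing strategy for the items seen so far, and the ratio $\cost_{\ALG}/\cost_{\OPT}$ contributed by that block is governed by a fixed payoff matrix $A$ whose game value $V$ equals the target competitive ratio $4-2\sqrt 2$ (in the limit $\delta\to 0$). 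This realizes bin packing as (a problem containing) the RMG with cost matrix $A$, up to the usual subtlety that the cost is a ratio rather than an additive quantity.

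Second, I would make that reduction precise by fixing $\varepsilon>0$ with $c<V-\varepsilon' <V$ for a suitable $\varepsilon'$, choosing $\delta$ small enough that the per-block game value is within $\varepsilon'/2$ of $V$, and then invoking \cref{maint1}{}-style reasoning: bin packing with this block structure is $\Sigma$-repeatable (concatenating block instances adds both algorithm cost and optimal cost), so a hard input distribution for a single block — namely the minimax row distribution of $A$ — induces an $r$-round input distribution. An algorithm with competitive ratio $c<V-\varepsilon'$ would, on the $r$-round instance of length $n$, incur cost at most $c\cdot\OPT \le (V-\varepsilon')\cdot\OPT$, which after the standard normalization corresponds to beating the per-round game value by a constant. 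Applying \cref{RMGmainthm} (or directly the direct product theorem underlying \cref{maint1}{}) then forces $\Omega(n)$ bits of advice. The constant in \cref{RMGmainthmeq} depends on $\|A\|_\infty$, which is bounded once $\delta$ is fixed, so the $\Omega(n)$ bound survives; since $c<4-2\sqrt 2$ was arbitrary, taking $\varepsilon\to 0$ (equivalently $\delta\to 0$) covers the whole range.

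The main obstacle — and the step I would spend the most care on — is the mismatch between the \emph{additive} cost model of the RMG (\cref{RMGmainthm} bounds algorithms with $\E[\ALG(\sigma)]\le (V-\varepsilon)n$) and the \emph{competitive-ratio} cost model of bin packing. One cannot directly say "$\cost_{\ALG}\le c\cdot\OPT$" translates to a per-round additive guarantee, because $\OPT$ on the $r$-round instance is itself random. The fix is to choose the block so that $\OPT$ is essentially deterministic: in the standard construction each block has optimal cost exactly one bin (the adversary always presents a completion that makes every partial configuration packable into full bins, or the block is "short"), so $\E[\OPT]$ concentrates tightly around $r$ by a Chernoff/martingale argument, and a $c$-competitive algorithm then has expected cost at most $cr(1+o(1))$, i.e.\ at most $(V-\varepsilon')r$ bins for the right choice of constants — exactly the additive hypothesis of \cref{RMGmainthm} with the length $n$ proportional to $r$. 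Verifying that the relevant two- or three-size bin packing game really has value $4-2\sqrt 2$ (this is the known tight randomized lower bound, e.g.\ from the literature on randomized bin packing) and that $\OPT$ concentrates is the technical heart; everything after that is a direct citation of \cref{RMGmainthm} together with the observation that its constant is positive and bounded below once the item sizes are fixed.
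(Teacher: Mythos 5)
Your proposal takes a genuinely different route from the paper, and unfortunately it has a fundamental gap. The paper explicitly notes (in the paragraph on non-repeatable problems) that bin packing is \emph{provably not} $\Sigma$-repeatable: with sublinear advice one can beat the randomized lower bound for bin packing, which by \cref{maint1} would be impossible if the problem were repeatable. Your plan hinges on claiming that "bin packing with this block structure is $\Sigma$-repeatable" and then invoking \cref{maint1}-style reasoning directly, but you never specify a block structure that would sidestep this obstruction, and indeed no such structure exists --- the reason bin packing fails to be repeatable is precisely that items from one "block" can be co-packed with items from later blocks, so the adversary cannot simulate a restart. Relatedly, you assert that $4-2\sqrt 2$ is "the known tight randomized lower bound" for bin packing realizable as the value of a per-block game, but this is not a pre-existing bound and the value does not arise as the game value of any natural one-shot bin packing game: it is an artifact of optimizing a particular reduction (see below). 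Your concentration-of-$\OPT$ argument is also unnecessary in the correct setup, because the paper's reduction produces a bin packing instance whose optimal cost is deterministically exactly $n$.

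The paper's actual proof keeps the repeated-game structure inside the \emph{string guessing} problem and reduces from there to bin packing, rather than the other way around. Concretely, it uses the weighted binary string guessing game BSG-$(1,t)$, whose one-shot value is $t/(1+t)$, and plugs it into the Angelopoulos et al.\ encoding of a binary string $x\in\{0,1\}^n$ as a three-phase bin packing instance $\sigma_x$ with $\ab{\sigma_x}=2n$ and $\OPT(\sigma_x)=n$. The crucial quantitative inputs are the asymmetric mistake bounds $e_0\le 2(\ALG(\sigma_x)-n)$ and $e_1\le \ALG(\sigma_x)-n$, which give $e_0+t\,e_1\le (2+t)(\ALG(\sigma_x)-n)$. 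Combining with \cref{thm:wsgb} (or simply \cref{RMGmainthm}) and optimizing $t/((1+t)(2+t))+1$ over $t$ gives $t=\sqrt2$ and the threshold $4-2\sqrt2$. So the $\Sigma$-repeatable / RMG machinery is applied on the \emph{source} of the reduction (weighted string guessing, which is a repeated matrix game), and bin packing inherits the $\Omega(n)$ bound purely through the advice-preserving reduction. If you want to rescue your proposal, the fix is to abandon the idea of a repeatable substructure inside bin packing and instead identify the matrix game on the string-guessing side, then route the direct product theorem through that game and the Angelopoulos et al.\ reduction.
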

Previously, Angelopoulos et al.\ showed that a bin packing algorithm with a competitive ratio of $c<7/6$ had to use $\Omega(n)$ bits of advice by a reduction from the binary string guessing problem~\cite{DBLP:conf/wads/AngelopoulosDKR15, Abp}. From our results on repeated matrix games, we obtain a lower bound for \emph{weighted} binary string guessing. Using the same reduction as in~\cite{DBLP:conf/wads/AngelopoulosDKR15}, but reducing from weighted binary string guessing instead, we improve the lower bound for bin packing algorithms with sublinear advice to $4-2\sqrt{2}$. Thus, even though bin packing itself is not repeatable, we can obtain a better lower bound via a reduction from a repeated matrix game. The proof of \cref{thm:betterbp} can be found in \cref{appendix:binpackproof}.

\paragraph*{Superlinear lower bounds for graph coloring.} 
We obtain the following superlinear lower bound for online graph coloring by applying our direct product theorem to an ingenious hard input distribution due to Halld{\'o}rsson and Szegedy \cite{MS} (they show that a randomized graph coloring algorithm \emph{without} advice must have a competitive ratio of at least $\Omega(n/\log^2 n)$):
\begin{restatable}{theorem}{restategraphcollow}
\label{m:thm:graphcollow}
\label{thm:graphcollow}
Let $\varepsilon>0$ be a constant. A randomized $O(n^{1-\varepsilon})$-competitive online graph coloring algorithm must read at least $\Omega(n\log n)$ bits of advice.
\end{restatable}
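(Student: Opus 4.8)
The plan is to combine the direct product theorem (\cref{mainthm}) with the Halld{\'o}rsson--Szegedy hard input distribution. First I would recall the structure of their construction: for each parameter $m$, there is a distribution over graphs on roughly $m$ vertices such that any online coloring algorithm is forced, in expectation, to use $\Omega(m/\log m)$ colors while the chromatic number of the realized graph is $O(\log m)$ with high probability, giving the $\Omega(n/\log^2 n)$ competitive lower bound. The key feature I want to extract is that this construction already has a natural "round" structure: the graph is built up by revealing vertices in phases, and — crucially — the construction can be made \emph{self-similar}, i.e.\ one can nest $r$ independent copies of a smaller hard instance inside a larger one so that coloring the big instance forces the algorithm to pay (roughly) the sum of what it pays on the $r$ copies. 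This is exactly what is needed to realize graph coloring as admitting an $r$-round input distribution in the sense of \cref{rround}: each round reveals at most some fixed number $L$ of vertices, and the vertices revealed in round $i$ (together with their edges to previously revealed vertices) are drawn independently of earlier rounds.

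Concretely, I would proceed as follows. Fix a target instance length $n$ and choose the per-round instance size $s$ and the number of rounds $r$ so that $rs = n$, with $s$ a growing function of $n$ (e.g.\ $s = n^{\varepsilon/2}$ or $s=\log^{c} n$ for a suitable constant — the exact choice is tuned at the end). On a single round of size $s$, the Halld{\'o}rsson--Szegedy analysis gives that \emph{every} deterministic algorithm reading no advice on that round incurs expected cost $\ge t := \Omega(s/\log s)$, whereas the optimum is $O(\log s)$; more importantly for the direct product theorem, I need a statement of the form "to guarantee expected cost $\le t'$ on a single round one must read $\ge b$ bits of advice," which follows from a counting/entropy argument on the single-round distribution (there are many "hard" sub-instances and few colorings achievable with few colors). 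I would extract such a bound $b = \Omega(\log(\#\text{realizations})) = \Omega(s\log s / \text{something})$; the precise bound is what ultimately produces the $\log n$ factor in the final answer. Then \cref{mainthm} upgrades this to: guaranteeing total expected cost $\le rt'$ over all $r$ rounds requires $\Omega(rb)$ bits of advice. Translating back: an $O(n^{1-\varepsilon})$-competitive algorithm has total cost $O(n^{1-\varepsilon}\cdot \mathrm{OPT})$; since $\mathrm{OPT} = O(r\log s)$ on the $r$-round instance, its total cost is $o(rt)$ for the right parameter choice, so it falls below the threshold and must read $\Omega(rb) = \Omega\!\left(\tfrac{n}{s}\cdot b\right)$ bits; plugging $b = \Omega(s\log s)$ and $s$ polynomially small in $n$ gives $\Omega(n\log n)$.

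The main obstacle I anticipate is verifying the \emph{composability} of the Halld{\'o}rsson--Szegedy distribution — i.e.\ checking that $r$ independent small copies really can be packed into one instance of the online coloring problem so that (a) the total number of revealed vertices stays $O(n)$, (b) the per-round length bound required by \cref{rround} holds, (c) the optimum of the combined instance is $O(r\log s)$ rather than something larger (the copies must not create large cliques or odd structures across rounds — one typically makes the copies on disjoint vertex sets and adds no edges between them, but one must confirm the HS construction tolerates this and that the forced cost still adds up), and (d) the independence-across-rounds condition is genuinely met despite the adaptive, revealed-edge nature of graph coloring. A secondary technical point is getting the single-round advice lower bound $b$ with the correct logarithmic dependence on $s$ so that, after optimizing $s$ against the $O(n^{1-\varepsilon})$ slack, the product $rb$ lands at $\Omega(n\log n)$ rather than merely $\Omega(n)$; this amounts to a careful bookkeeping of entropy versus the number of colorings using few colors, and choosing $s = n^{\Theta(\varepsilon)}$ so that $n^{1-\varepsilon}\cdot\mathrm{OPT}$ is comfortably below the $rt$ threshold while $\log s = \Theta(\log n)$.
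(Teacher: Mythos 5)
Your proposal has two interlocking gaps. First, the additivity problem you flag under item (c) is fatal rather than something to merely "confirm": if you realize the $r$ rounds as $r$ independent hard colouring instances on disjoint vertex sets, the total number of colours used is not the sum of per-round costs, because the algorithm may reuse a colour across rounds. The paper treats graph colouring as $\lor$-repeatable precisely for this reason, and that framework (\cref{lorrep}) only gives constant-factor lower bounds. Adding all cross-round edges would force disjoint palettes, but then $\OPT$ becomes a sum too and there is no slack left for a direct-product argument to bite. Second, the per-round advice lower bound $b=\Omega(s\log s)$ that your plan needs for a single round of size $s$ is essentially the theorem being proved rescaled from $n$ to $s$: the Halld{\'o}rsson--Szegedy analysis by itself gives only an $\Omega(s/\log^2 s)$ competitive bound against randomized algorithms without advice, and promoting that to an $\Omega(s\log s)$ advice lower bound is exactly what a direct-product step would be for. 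The ``counting/entropy argument on the single-round distribution'' you gesture at is unelaborated, and I do not see how it produces the $s\log s$ exponent without already having the round structure supplied by the direct product theorem --- the proposal bottoms out in a chicken-and-egg.

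The paper sidesteps both issues by working not with a colouring cost (a max-like quantity) but with a genuinely additive surrogate. It introduces an abstract guessing game (the HSGG): in each of $n$ rounds a random set $A_i\subseteq[k]$ of size $k/2$ and a random correct character $x_i\in A_i$ are revealed, the algorithm answers $y_i$ under a feasibility constraint, and the cost of the whole output is the number of distinct answers used. This non-additive cost is then bounded below by $\pair/(k/2)$, where $\pair=\sum_i\pair_i$ counts rounds that create a ``new pair'' $(y_i,x_i)$; the quantity $\pair$ is additive over single-request rounds. The martingale direct product theorem (\cref{dpmartin}) is then applied with one request per round, and the $\Theta(\log n)$ per-round resistance drops out of the computation $K_{1-p}(1/2)\geq\tfrac{1}{2}\log(\sqrt{k}/16)=\Theta(\log k)$ together with the choice $n=k^c$; no polynomially sized block structure or separate per-block advice lower bound is needed. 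The result transfers to graph colouring via the reduction of \cref{lem:hsgggraph}.
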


Previously, it was only known that $\Omega(n\log n)$ bits of advice were necessary to be $1$-competitive \cite{Agraphc}.
Note that $O(n\log n)$ bits of advice trivially suffice to achieve optimality for graph coloring. Furthermore, it is not hard to prove that for every $c=n^{1-o(1)}$, there exists a $c$-competitive graph coloring algorithm reading $o(n \log n)$ bits of advice. Thus, our lower bound saying that $\Omega(n\log n)$ bits are needed to be $O(n^{1-\varepsilon})$-competitive for every constant $\varepsilon>0$ is essentially tight. See \cref{sec:graphc} for the proof of \cref{thm:graphcollow}.




\subsection{Relation to machine learning}
\label{rtml}
\cref{mmtt}{} is very closely related to previous work on combining online algorithms \cite{BBmts,FiatFKRRV98,AzarBM93}. Let $A_1,\ldots , A_m$ be $m$ algorithms for a MTS of finite diameter $\Delta$. Based on a variant of the celebrated machine learning algorithm Randomized Weighted Majority (RWM) \cite{WMA}, Blum and Burch obtained the following result \cite{BBmts, Burch}: For every $\varepsilon>0$, it is possible to combine the $m$ algorithms into a single randomized MTS-algorithm, $R$, such that
\begin{equation}
\label{bbeq}
\E[R(\sigma)]=(1+2\varepsilon)\cdot\min_{1\leq i\leq m}A_i(\sigma)+\left(\frac{7}{6}+\frac{1}{\varepsilon} \right)\Delta\ln m,
\end{equation}
for every input $\sigma$. An algorithm with $b$ bits of advice corresponds to an algorithm which runs $m=2^b$ algorithms in parallel (and selects the best one at the end). Thus, equation \cref{bbeq} immediately implies that given a $c$-competitive MTS-algorithm with advice complexity $b=O(1)$, we can convert it to a randomized $(c+\varepsilon)$-competitive algorithm without advice.

\cref{mmtt}{} improves on the result of Blum and Burch in two ways. First of all, it allows us to convert algorithms with sublinear instead of only constant advice complexity. Furthermore, \cref{mmtt}{} applies to all compact and $\Sigma$-repeatable online problems, not just those which can be modeled as a MTS.

It is natural to ask if it is possible to use the technique of Blum and Burch in order to obtain a constructive proof of \cref{mmtt}{}. To this end, we remark that the result of Blum and Burch relies fundamentally on the fact that the cost incurred by RWM when switching from the state of algorithm $A_i$ to the state of algorithm $A_j$ for $i\neq j$ is bounded by a constant independent of $n$. Thus, it does not seem possible to extend the result to those compact and $\Sigma$-repeatable problems which does not satisfy this requirement (such as the metric matching problem). On the other hand, in \cref{sec:constructmain}, we show that by combining the result of Blum and Burch with the ideas that we use to prove that the MTS problem is compact, it is possible to use a variant of RWM to algorithmically convert a $c$-competitive MTS-algorithm with advice complexity $o(n)$ (instead of just $O(1)$) into a randomized $(c+\varepsilon)$-competitive algorithm without advice. This yields a constructive version of \cref{mmtt}{} for problems that can be modeled as a MTS.
\FloatBarrier

Very shortly after and independently of our work, B\"{o}ckenhauer et al.\ also considered applications of machine learning techniques to advice complexity of online algorithms~\cite{aML}.

\subsection{Other related work}
\label{sec:related}
\paragraph{Compact online problems.} The existing results on (what we call) the compactness of online problems were not motivated by advice complexity. If an online problem is compact, this essentially gives a way of constructing an algorithm which can approximate the best possible competitive ratio for that online problem. Such an algorithm is known as a competitive ratio approximation scheme~\cite{DBLP:conf/soda/GuntherMMW13}. In general, it is not clear how construct such a scheme since online problems can have arbitrarily long input sequences (or an infinite number of possible requests in each round). For deterministic algorithms, this idea goes back at least to a paper by Manasse et al. \cite{DBLP:conf/stoc/ManasseMS88}. See also \cite{MR1165342, LPforOP}. More recently, Amb{\"u}hl showed that the list update problem is compact~\cite[Theorem 2.3]{ambuhl}. Furthermore, is has been shown by M{\"{o}}mke in \cite{momke} that the $k$-server problem is compact on finite graphs. In \cite{momke}, M{\"{o}}mke also observes that the same technique he uses to show compactness of the $k$-server problem can be used to show that a larger class of online problems (see \cite{DBLP:conf/stacs/KommKKM14}), including the MTS problem on finite metrics, are compact. However, \cite{momke} does not contain a full proof of this. For completeness, we show in \cref{subsec:lts} that e.g.\ metrical task systems with finite state spaces are compact. Our proof technique is essentially the same as the one used by both Amb{\"u}hl and M{\"{o}}mke.

\paragraph{The high-entropy technique of Emek et al.} As already mentioned, our information theoretical direct product theorem (\cref{mainthm}) is an extension of a previous lower bound technique due to Emek et al.~\cite{A2}. We will call this technique the \emph{high-entropy technique}. The only known application of this technique is a lower bound for generalized matching pennies \cite{A2}, but it is clear that the technique could also be used (together with an $r$-round input distribution) for other online problems.

Consider the set of $r$-round input distributions with support $I$. The crux of the high-entropy technique is to show that for any $r$-round input distribution over $I$, if the entropy of the distribution is sufficiently high, then every online algorithm must incur a large cost. This means that in order to achieve a low cost, an online algorithm must read enough advice so that the conditional entropy of the input sequence with respect to the advice becomes sufficiently low. In order to use the high-entropy technique, it must be the case that every probability distribution over $I$ with high entropy yields a good lower bound. While it often happens that the uniform distribution over $I$ (which is the distribution over $I$ with maximum entropy) yields a good lower bound, this is far from always the case. Sometimes, good lower bounds can only be obtained by using non-uniform distributions. The main advantage of our direct product theorem over the high-entropy technique is that it can be used to obtain good lower bounds even if the uniform distribution is not a hard input distribution. In particular, this allows us to prove lower bounds for randomized online problems while treating the hard input distribution as a black-box.

\subsection{Further work}
We expect the techniques and results introduced in this paper to find other applications, besides those that we provide. Our main result on $\Sigma$-repeatable online problems, \cref{maint1}, is very easy to apply and reduces the problem of proving a linear advice complexity lower bound to that of proving a lower bound against randomized algorithms without advice. Specifically, it allows one to quickly deduce what is (implicitly) already known about the advice complexity of such problems from previous results on randomized algorithms without advice. Similarly, the results on repeated matrix games might prove useful as a starting point for showing hardness results via advice-preserving reductions. 

As stated, \cref{maint1} can only be used to show a constant lower bound on the competitive ratio of algorithms with sublinear advice, even if there exists a lower bound of $c(n)$ on the competitive ratio of randomized algorithms without advice, where $c(n)$ is an increasing function of the input length $n$. It is possible to use the techniques from this paper to obtain a more general version of \cref{maint1} which given such a lower bound on randomized algorithms yields an advice complexity lower bound for achieving a competitive ratio which may depend on $n$. This result will appear in the final version of the paper.

\section{Preliminaries}
\label{Preliminaries}
Throughout the paper, $n$ always denotes the length of the input (number of requests). As a consequence, we use $N$ (instead of $n$) to denote the number of nodes in a graph or number of points in a metric space. We usually denote an online algorithm by $\ALG$. If we want to emphasize that an algorithm is deterministic (resp. randomized), we use the notation $\DET$ (resp. $\ra$) instead. We let $\OPT$ be an optimal offline algorithm for the problem at hand. Furthermore, we let $\log$ denote the binary logarithm $\log_2$, we let $\ln$ denote the natural logarithm $\log_e$, and we let $\posr=[0,\infty)$.
\subsection{The advice complexity model}
We start by formally defining competitive analysis and advice complexity. Since we are very much interested in sublinear advice, we will use the \emph{advice-on-tape model}~\cite{A1, A3}. In this model, the algorithm is allowed to read an arbitrary number of advice bits from an advice tape. There is an alternative model, the \emph{advice-with-request model}~\cite{A2}, where a fixed number of advice bits is provided along with each request.

We give the formal definition of the advice-on-tape model and competitive analysis only for minimization problems, but it can easily be adapted to maximization problems.
\begin{definition}[Advice complexity \cite{A1,A3} and competitive ratio \cite{CompRatio1,CompRatio2}]\label{cadef}
The input to an online problem, \P, is a sequence $\sigma=(\init, x_1,\ldots , x_n)$. We say that $s$ is the \emph{initial state} and $x_1,\ldots , x_n$ are the \emph{requests}. A \emph{deterministic online algorithm with advice}, \ALG, computes the output $\gamma=( y_1,\ldots , y_n)$, under the constraint that $y_i$ is computed from $\varphi,\init, x_1,\ldots , x_i$, where $\varphi$ is the content of the advice tape. The \emph{advice complexity}, $b(n)$, of \ALG is the largest number of bits of $\varphi$ read by \ALG over all possible inputs of length at most $n$. 

For a request sequence $\sigma$, $\ALG(\sigma)$ $(\OPT(\sigma))$ denotes the non-negative \emph{cost} of the output computed by $\ALG$ $(\OPT)$ when serving $\sigma$. 
We say that $\ALG$ is \emph{$c$-competitive} if there exists a constant $\alpha$ such that $\ALG (\sigma)\leq c\cdot\OPT (\sigma)+\alpha$ for all request sequences $\sigma$. If the inequality holds with $\alpha=0$, we say that $\ALG$ is \emph{strictly $c$-competitive}.

A \emph{randomized online algorithm $\ra$ with advice complexity $b(n)$} is a probability distribution over deterministic online algorithms with advice complexity at most $b(n)$. We say that $\ra$ is $c$-competitive if there exists a constant $\alpha$ such that $\E[\ra(\sigma)]\leq c\cdot \OPT(\sigma)+\alpha$ for all request sequences $\sigma$.
\label{def}
\qeddef
\end{definition}


Suppose that $\ALG$ is a deterministic algorithm with advice complexity $b(n)$. For every fixed input length $n$, $\ALG$ can be converted into $2^{b(n)}$ algorithms $\ALG_1,\ldots , \ALG_{2^{b(n)}}$ without advice, one for each possible advice string, such that $\ALG(\sigma)=\min_{i}\ALG_i(\sigma)$ for every input $\sigma$ of length at most $n$.
Another basic observation is that the advice partitions the set of inputs of length $n$ into $2^{b(n)}$ disjoint subsets.

In order to make the definition more readable, there are some (unimportant) technicalities not addressed in \cref{def}. In particular, we will assume that whenever $\ALG$ has to compute some $y_i$, there is a non-empty set of \emph{valid} values that $y_i$ may take (this set may depend on $s, x_1,\ldots , x_i$ and $y_1,\ldots , y_{i-1}$. As an example, on a page-fault, a paging algorithm must evict a page which is currently in its cache.).  An algorithm is required to compute a valid answer in each round. 

In \cref{def}, the initial state, $s$, is part of the input and known to the online algorithm before the first request $x_1$ arrives. Note that the initial state is not counted as a request itself. For the $k$-server problem on a metric space $M$, the initial state is a placement of the $k$ servers in $M$. For the bin packing problem, there is only one possible initial state (the empty state). For problems where there is only one possible initial state, we will usually omit it and simply write $\sigma=(x_1,\ldots , x_n)$ when specifying an input.


There are several possible definitions of a randomized algorithm with advice. The one we give in \cref{cadef} is as close as possible to the standard definitions used for randomized online algorithms without advice~\cite{BBKTW} and randomized online algorithms in the advice-per-request model~\cite{A2}. In \cref{app:rando}, we discuss some of the alternative definitions and how they relate to \cref{cadef}.

Finally, we remark that the advice is provided on an infinite tape to prevent the algorithm from learning anything from the length of the advice~\cite{A1, A3}. However, all lower bounds proved in this paper also holds even if the algorithm simply receives (at the very beginning) a finite advice string of length $b(n)$ where $n$ is the input length. See \cref{app:aab} for a more detailed discussion of this point.

\paragraph{Parameter of the problem or part of the input?} When applying our main results, \cref{maint1,mmtt}, to a problem $\P$, some care needs to be taken regarding exactly how $\P$ is defined. Take as an example the $k$-server problem. In this paper, we will not consider this problem to be one single problem, but instead a family of problems parameterized by a number, $k$, of servers and a finite metric space $(M,d)$. Let $P_{k,(M,d)}$ denote the $k$-server problem on the metric space $(M,d)$. A $P_{k,(M,d)}$-input $\sigma=(s_0,r_1,\ldots , r_n)$ consists of an initial placement, $s_0$, of the $k$ servers and $n$ requests. The number of possible length $n$ inputs is $\ab{M}^k\cdot \ab{M}^{n}=\ab{M}^{k+n}$, which is $\dri$ since $k$ and $\ab{M}$ are fixed. When applying \cref{maint1} to $P_{k,(M,d)}$, we are allowed to set $c=\log k$ since $k$ is fixed and therefore $\log k$ is a constant not depending on the input $\sigma$.

Regarding the additive constant $\alpha$ in \cref{def}, we do \emph{not} allow $\alpha$ to depend on the initial state $s$ of the input. For several problems, allowing $\alpha$ to depend on $s$ would trivialize the problem (for example, this is the case for the matching problems discussed in \cref{app:bipmath,app:metmatch}). We remark that several papers on the $k$-server problem do allow the additive constant to depend on the initial configuration. While this is important when dealing with unbounded metric spaces, it does not matter when the metric space is finite. Indeed, suppose that a $k$-server algorithm is $c$-competitive on a metric space $(M,d)$ with an additive term $\alpha(s_0)$ depending on the initial placement $s_0$ of the $k$ servers. If the metric space is finite, there are only finitely many possible initial placements of the servers. Thus, we may remove the dependency on $s_0$ by replacing $\alpha(s_0)$ with the constant $\max_{s_0}\alpha(s_0)$. Generally, for any problem $\P$ with a finite number of possible initial states, our requirement that $\alpha$ depends only on $\P$ and not the initial state does not make any difference for the competitive ratio of $\P$-algorithms.

\subsection{Information theoretic preliminaries}
\label{appendix:inft}
We will define the quantities of information theory that we are going to use (see also \cite{inftheory}). In order to simplify notation, we will use the following common conventions:  $0\log 0=0$, $0\log(q/0)=0$, and $0\log (0/q)=0$ for any fixed $q\geq 0$.

Let $\Omega$ be a finite set. In this paper, most probability spaces will be finite (they will consist of a finite number of input sequences). A \emph{probability distribution} over $\Omega$ is a mapping $\mu:\Omega\rightarrow [0,1]$ such that $\sum_{\omega\in\Omega}\mu (\omega)=1$. The \emph{support} of $\mu$ is defined as $\supp(\mu )=\{\omega\in\Omega : \mu(\omega)>0\}$. Formally, a random variable $X$ over a finite set $\Omega$ is a mapping $X:\Omega\rightarrow \mathcal{X}$ from $\Omega$ into some set $\mathcal{X}$. We say that $\mathcal{X}$ is the \emph{range} of $X$. Random variables are always denoted by capital letters and their range by the calligraphic version of that letter. We always assume that the range\footnote{Our definition of the range $\mathcal{X}$ of $X$ is equivalent to what is often called the support of $X$.} of a random variable $X$ over a finite set $\Omega$ is chosen so that for every $x\in\mathcal{X}$, it holds that $\Pr[X=x]>0$.

The \emph{entropy}, $H(X)$, of a random variable $X$ is defined as

\begin{equation}
H(X)=\sum_{x\in\mathcal{X}}\Pr[X=x]\log\paren*{\frac{1}{\prob[X=x]}}.
\end{equation}

One important property of entropy is that $0\leq H(X)\leq \log\ab{\mathcal{X}}$ with equality if and only if $X$ is uniformly distributed. The \emph{joint entropy} $H(X_1,\ldots , X_n)$ of a sequence $X_1,\ldots , X_n$ of random variables is the entropy of the random variable $(X_1,\ldots , X_n)$. Let $Y$ be a random variable. For any fixed $y\in\mathcal{Y}$, we define $H(X\vert Y=y)$ to be the entropy of the random variable $X$ conditioned on the event $Y=y$. The \emph{conditional entropy} $H(X\vert Y)$ is defined as the expected value of $H(X\vert Y=y)$ over all $y\in\mathcal{Y}$:
\begin{equation}
  H(X\vert Y)=\E_y[H(X\vert Y=y)]=\sum_{y\in\mathcal{Y}}\prob[Y=y]H(X\vert Y=y).
\end{equation}
It holds that $H(X,Y)=H(X)+H(Y\vert X)$. The chain rule for conditional entropy states that
\begin{equation}
H(X_1, X_2,\ldots , X_n \vert Y)=\sum_{i=1}^nH(X_i\vert X_1,\ldots , X_{i-1}, Y).
\label{eq:chain}
\end{equation}

The \emph{mutual information} is defined as $I(X;Y)=H(X)+H(Y)-H(X,Y)$. It follows from the chain rule that $I(X;Y)=H(Y)-H(Y\vert X)=H(X)-H(X\vert Y)$. Also, it can be shown that $I(X;Y)\geq 0$.

Let $\mu, \nu :\Omega\rightarrow [0,1]$ be two probability distributions on some finite set $\Omega$. The \emph{Kullback-Leibler divergence} between $\mu$ and $\nu$ is defined as
\begin{equation}
  D_{KL}(\mu\| \nu)=\sum_{\omega\in\Omega}\mu(\omega)\log\paren*{\frac{\mu(\omega)}{\nu(\omega)}}.
\end{equation}
If $\mu(\omega)=0$ for some $\omega\in\Omega$, then the term in the sum corresponding to $\omega$ is zero according to our conventions. On the other hand, if $\nu(\omega)=0$ and $\mu(\omega)>0$ for some $\omega\in\Omega$, then $D(\mu\| \nu)$ is undefined. In particular, $D(\mu\| \nu)$ is only defined if $\supp(\mu)\subseteq\supp(\nu)$.

\paragraph{Pinsker's inequality.} Let $\mu$ and $\nu$ be two probability distributions on a finite set $\Omega$. Define the \emph{$L_1$-distance} between $\mu$ and $\nu$ as 
\begin{equation}
\|\mu-\nu\|_{1}=\sum_{x\in\Omega}\ab{\mu(x)-\nu(x)}.
\label{L1dist}
\end{equation}
Let $f$ be a function from $\Omega$ into $\mathbb{R}$ and let $\|f\|_\infty = \max_{x\in\Omega} \ab{f(x)}$. Then, 
\begin{equation}
\big| \mathbb{E}_{\mu}[f(x)]-\mathbb{E}_{\nu}[f(x)]\big|=\ab{\sum_{x\in\Omega}f(x)\mu (x)-\sum_{x\in\Omega}f(x)\nu(x)}\leq \|f\|_{\infty}\cdot \|\mu-\nu\|_{1}.
\label{etv}
\end{equation}
Pinsker's inequality (\cite[Lemma 11.6.1]{inftheory})\footnote{Some care needs to be taken when comparing different statements of Pinsker's inequality. Please note that we measure the Kullback-Leibler divergence in bits (whereas it is often measured in nats). Furthermore, the inequality is sometimes stated for a different distance measure which is half the $L_1$-distance.} says this if the Kullback-Leibler divergence between $\mu$ and $\nu$ is small, then $\mu$ and $\nu$ must be close in $L_1$-norm:
\begin{align}
\notag \|\mu-\nu\|_{1}&\leq \sqrt{2\ln 2\cdot D_{\KL}(\mu\|\nu)}\\
\label{pinsker}&=\sqrt{\ln 4\cdot D_{\KL}(\mu\|\nu)}.
\end{align}

\paragraph{The function $K_y(x)$.}

We define a family of functions related to the Kullback-Leibler divergence.
\begin{definition}
\label{def:k}
For $0<y< 1$, define $K_y:[0,1]\rightarrow\mathbb{R}$ by \begin{equation}K_y(x)=x\log\left(\frac{x}{y}\right)+(1-x)\log\left(\frac{1-x}{1-y}\right).\end{equation}
\end{definition}
 Note that $K_y(x)$ is the Kullback-Leibler divergence $D(\mu\| \nu)$ between $\mu$ and $\nu$ where $\mu$ (resp. $\nu$) is the Bernoulli distribution with parameter $x$ (resp. $y$). 

For any fixed $y$, the function $K_y(x)$ is convex. Furthermore, it is decreasing for $x\in[0,y]$ and increasing for $x\in[y,1]$. Also, $K_y(0)=\log (1/(1-y))$, $K_y(y)=0$, and $K_y(1)=\log (1/y)$. We will often need to consider inverse functions of $K_y$. Unfortunately, there is no standard notation for such functions. We define $K_{y,l}^{-1}$ to be the inverse of $K_{y}\vert_{[0,y]}$ and $K_{y,r}^{-1}$ to be the inverse of $K_{y}\vert_{[y,1]}$. In order to simplify some statements, we use the following conventions: $K_{y,l}^{-1}(t)=0$ if $t>K_y(0)=\log(1/(1-y))$ and $K_{y,r}^{-1}(t)=1$ if $t>K_y(1)=\log (1/y)$.

Recall that for $q\in\{2,3,\ldots \}$, the \emph{$q$-ary entropy function} $h_q:[0,1]\rightarrow [0,1]$ is defined as $h_q(x)=x\log_{q}(q-1)-x\log_q\paren*{x}-(1-x)\log_q\paren*{1-x}$. A straightforward calculation reveals that $K_{1/q}(x)=(1-h_q(1-x))\log q$ for $q\in\mathbb{N}, q\geq 2$.

The following lemma shows that if we have an upper bound on the Kullback-Leibler divergence between $\mu$ and $\nu$, then we can bound how much $\mu(\omega)$ can differ from $\nu(\omega)$.  
 
\begin{lemma}
\label{Klem}
Let $q\geq 2$ and let $\mu,\nu:\{x_1,x_2,\ldots, x_q\}\rightarrow [0,1]$ be two probability distributions such that $0<\nu (i)<1$ for all $i$. Assume that $D_{KL}(\mu\| \nu)\leq d$ for some $d\geq 0$. Then, for any $1\leq i\leq q$, it holds that 
\begin{equation}
\mu(i)\geq K_{\nu (i),l}^{-1}(d),
\label{Klem1}
\end{equation}
and 
\begin{equation}
\mu(i)\leq K_{\nu (i),r}^{-1}(d).
\label{Klem2}
\end{equation}
\end{lemma}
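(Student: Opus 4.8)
\textbf{Proof plan for \cref{Klem}.}

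The plan is to reduce the multi-dimensional statement about $\mu$ and $\nu$ on $q$ symbols to a two-dimensional (Bernoulli) statement, and then exploit the monotonicity properties of $K_y$ already recorded just before the lemma. First I would fix an index $i$ and consider the coarsening that merges all symbols other than $x_i$ into a single symbol: let $\bar\mu$ be the Bernoulli distribution with parameter $\mu(i)$ and $\bar\nu$ the Bernoulli distribution with parameter $\nu(i)$. The key inequality is the data processing inequality for Kullback-Leibler divergence: applying a (deterministic) function to the underlying random variable cannot increase the divergence, so $D_{KL}(\bar\mu\|\bar\nu)\leq D_{KL}(\mu\|\nu)\leq d$. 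Since $0<\nu(i)<1$, the quantity $D_{KL}(\bar\mu\|\bar\nu)$ is exactly $K_{\nu(i)}(\mu(i))$ by the remark following \cref{def:k}. Hence $K_{\nu(i)}(\mu(i))\leq d$.

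Now I would invoke the shape of $K_y$: for fixed $y=\nu(i)$, the function $K_y$ is convex, strictly decreasing on $[0,y]$ and strictly increasing on $[y,1]$, with minimum value $0$ at $x=y$. So the sublevel set $\{x : K_y(x)\leq d\}$ is a closed interval $[x_\ell, x_r]$ containing $y$, where $x_\ell$ is determined by $K_y(x_\ell)=d$ on the decreasing branch (or $x_\ell=0$ if $d\geq K_y(0)=\log(1/(1-y))$), and $x_r$ is determined by $K_y(x_r)=d$ on the increasing branch (or $x_r=1$ if $d\geq K_y(1)=\log(1/y)$). By definition of the branch inverses $K_{y,l}^{-1}$ and $K_{y,r}^{-1}$ together with the stated conventions for $d$ large, we have exactly $x_\ell=K_{\nu(i),l}^{-1}(d)$ and $x_r=K_{\nu(i),r}^{-1}(d)$. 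Since $\mu(i)$ lies in this sublevel set, \cref{Klem1} and \cref{Klem2} follow immediately.

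If one prefers to avoid citing the data processing inequality as a black box, the same bound $K_{\nu(i)}(\mu(i))\leq D_{KL}(\mu\|\nu)$ can be obtained directly: write $D_{KL}(\mu\|\nu)=\mu(i)\log(\mu(i)/\nu(i))+\sum_{j\neq i}\mu(j)\log(\mu(j)/\nu(j))$ and apply the log-sum inequality to the second group of terms, bounding $\sum_{j\neq i}\mu(j)\log(\mu(j)/\nu(j))\geq (1-\mu(i))\log\!\big((1-\mu(i))/(1-\nu(i))\big)$; adding the $i$-th term gives precisely $K_{\nu(i)}(\mu(i))$. I expect the main (minor) obstacle to be purely bookkeeping: verifying that the degenerate cases — where $d$ exceeds $K_y(0)$ or $K_y(1)$, or where $\mu(i)\in\{0,1\}$ — are correctly handled by the stated conventions on $K_{y,l}^{-1}$ and $K_{y,r}^{-1}$ and by the $0\log 0=0$, $0\log(q/0)=0$ conventions, so that the inequalities remain valid (and non-vacuous) at the boundary. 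No genuine difficulty is anticipated beyond this case analysis.
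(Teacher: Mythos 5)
Your proposal is correct and takes essentially the same approach as the paper: the log-sum inequality argument you offer as the ``alternative'' is precisely what the paper uses to establish $K_{\nu(i)}(\mu(i))\leq d$, and the data-processing inequality for KL divergence under coarsening is just the abstract packaging of that same calculation. The conclusion via the monotonicity of $K_{\nu(i)}$ on the two branches (equivalently, your sublevel-set-is-an-interval framing) matches the paper's case analysis.
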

\begin{proof}
Fix $i$ such that $1\leq i\leq q$. Then,
\begin{align*}
D(\mu\| \nu)&=\sum_{1\leq j\leq q} \mu(j)\log\left(\frac{\mu(j)}{\nu(j)}\right)\\
&=\mu(i)\log\left(\frac{\mu(i)}{\nu(i)}\right)+\sum_{j\neq i}\mu(j)\log\left(\frac{\mu(j)}{\nu(j)}\right)\\
&\geq \mu(i)\log\left(\frac{\mu(i)}{\nu(i)}\right)+(1-\mu(i))\log\left(\frac{1-\mu(i)}{1-\nu(i)}\right)\tag*{\{Log sum inequality\}}\\
&=K_{\nu(i)}(\mu(i)).
\end{align*}
Thus, $K_{\nu(i)}(\mu(i))\leq d$. We will show that (\ref{Klem1}) is true. Note that (\ref{Klem1}) is trivially true if $\nu(i)\leq \mu(i)$ since the range of $K^{-1}_{\nu(i),l}$ is $[0, \nu(i)]$. Similarly, if $d>K_{\nu(i)}(0)$, then $K^{-1}_{\nu(i),l}(d)=0$ and so (\ref{Klem1}) is also trivially true. Assume therefore that $\mu(i)<\nu(i)$ and that $d\leq K_{\nu(i)}(0)$. In this case, it follows from $K_{\nu(i)}(\mu(i))\leq d$ that $\mu(i)\geq K_{\nu(i),l}^{-1}(d)$ since $K_{\nu(i)}$ is decreasing on $[0,\nu(i)]$ and since $K^{-1}_{\nu(i),l}$ is the inverse of $K_{\nu(i)}$ on this interval. By a symmetric argument, the inequality (\ref{Klem2}) also holds.
\end{proof}

\begin{center}

\end{center}

\subsection{Advice complexity lower bounds via Yao's principle.}

Yao's principle shows that we can prove lower bounds on randomized online algorithms with advice by proving lower bounds on deterministic algorithms with advice against a fixed input distribution. Please note that this direction of Yao's principle does not rely on the minimax theorem or other game-theoretical results (it simply relies on interchanging the order of summation).
\begin{theorem}[Yao \cite{Yao}]
\label{Yao}
Let $\P$ be a minimization problem, let $b(n)$ be a function and let $c>1$. Assume that for every $\varepsilon>0$ and every $\alpha$, there is an input distribution $\pea$ such that if $\DET$ is a deterministic $\P$-algorithm with advice complexity $b(n)$ then \begin{equation}\E_{\pea}[\DET(\sigma)]\geq (c-\varepsilon)\cdot \E_{\pea}[\OPT(\sigma)]+\alpha. \label{yaoeq}\end{equation}

Then, the competitive ratio of every randomized $\P$-algorithm with advice complexity $b(n)$ is at least $c$.
\end{theorem}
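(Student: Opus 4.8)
The plan is to prove Yao's principle (\cref{Yao}) directly by contraposition, using only the fact that a randomized algorithm with advice is, by definition, a probability distribution over deterministic algorithms with advice (of the same advice complexity). Suppose, for contradiction, that there is a randomized $\P$-algorithm $\ra$ with advice complexity $b(n)$ whose competitive ratio is some $c' < c$. Then there is a constant $\alpha_0$ such that $\E[\ra(\sigma)] \le c'\cdot \OPT(\sigma) + \alpha_0$ for all request sequences $\sigma$. Pick $\varepsilon > 0$ small enough that $c - \varepsilon > c'$, and pick $\alpha$ larger than $\alpha_0$ (the precise value will be fixed at the end so that a strict inequality is violated — since $\OPT(\sigma) \ge 0$, choosing $\alpha > \alpha_0$ suffices because $c'\cdot\OPT(\sigma)+\alpha_0 < (c-\varepsilon)\cdot\OPT(\sigma)+\alpha$ whenever $(c-\varepsilon-c')\OPT(\sigma) > \alpha_0 - \alpha$, which holds trivially when $\alpha > \alpha_0$). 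Let $\pea$ be the input distribution guaranteed by the hypothesis for this $\varepsilon$ and $\alpha$.

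The key step is then to interchange the two expectations. Write $\ra$ as a distribution over deterministic algorithms $\{\DET_j\}$ with advice complexity $b(n)$, say $\DET_j$ chosen with probability $p_j$. Averaging the competitiveness guarantee of $\ra$ over $\sigma \sim \pea$ gives
\begin{equation}
\E_{\sigma\sim\pea}\big[\E_j[\DET_j(\sigma)]\big] \le c'\cdot \E_{\pea}[\OPT(\sigma)] + \alpha_0 < (c-\varepsilon)\cdot\E_{\pea}[\OPT(\sigma)] + \alpha.
\end{equation}
On the other hand, $\pea$ has (we may assume; otherwise restrict attention to a finite-support subdistribution or invoke the stated hypothesis which in \cref{maint1} already gives finite support) a well-behaved support, so Fubini/Tonelli lets us swap the order of summation: $\E_{\sigma}\E_j[\DET_j(\sigma)] = \E_j \E_\sigma[\DET_j(\sigma)]$. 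Hence there must exist at least one index $j^*$ with $\E_{\pea}[\DET_{j^*}(\sigma)] < (c-\varepsilon)\cdot\E_{\pea}[\OPT(\sigma)] + \alpha$ — since an average strictly below a threshold forces some term strictly below it. But $\DET_{j^*}$ is a deterministic $\P$-algorithm with advice complexity $b(n)$, so this directly contradicts the hypothesis \cref{yaoeq}, which asserts $\E_{\pea}[\DET(\sigma)]\geq (c-\varepsilon)\cdot \E_{\pea}[\OPT(\sigma)]+\alpha$ for every such $\DET$. This contradiction shows no randomized algorithm with advice complexity $b(n)$ can have competitive ratio below $c$.

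The only real subtlety — and the one place to be slightly careful — is the interchange of expectations and the handling of the additive constant and infinite supports. Interchanging is justified because the cost $\DET_j(\sigma)$ is non-negative, so Tonelli's theorem applies regardless of whether the distribution over $\sigma$ or over $j$ has finite support; in the applications (e.g.\ \cref{maint1}) the input distribution $\pea$ has finite support, which makes this entirely elementary. The additive-constant bookkeeping is handled by noting that the quantifier structure of the hypothesis — "for every $\varepsilon > 0$ and every $\alpha$" — lets us choose $\alpha$ after seeing $\alpha_0$, and since $\OPT \ge 0$, picking $\alpha > \alpha_0$ already makes the two bounds incompatible. I expect no genuine obstacle here; the proof is a standard averaging argument, and the remark preceding the theorem statement already flags that it "simply relies on interchanging the order of summation." I would present it in three or four lines: set up the contradiction, average over $\pea$, swap expectations, extract a bad deterministic algorithm, contradict the hypothesis.
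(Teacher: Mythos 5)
Your proof is correct and follows essentially the same route as the paper's: assume a better-than-$c$ randomized algorithm for contradiction, average the competitiveness guarantee over $\pea$, interchange the two expectations via Tonelli's theorem (justified by non-negativity of costs), and derive a contradiction with \cref{yaoeq}. The only cosmetic difference is that you extract a single "bad" deterministic algorithm by an averaging argument while the paper applies \cref{yaoeq} inside the outer expectation and arrives at a direct numerical contradiction; these are logically equivalent.
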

\begin{proof}[Proof (following \cite{BLSmetric}).]
Assume by way of contradiction that there exists a randomized $c'$-competitive algorithm $\ra$ with advice complexity $b(n)$ for some $c'<c$. By \cref{cadef}, $\ra$ defines a probability distribution $\mu$ over deterministic algorithms with advice complexity at most $b(n)$. Thus, we may write $\E[\ra (\sigma)]=\E_{\DET\sim\mu}[\DET(\sigma)]$ for the expected cost incurred by $\ra$ on a fixed input $\sigma$. By definition, there must exist a constant $\alpha'$ such that $\E[\ra(\sigma)]=\E_{\DET\sim\mu}[\DET(\sigma)]\leq c'\cdot\OPT(\sigma)+\alpha'$ for every input sequence $\sigma$. 

Choose $\varepsilon=c-c'$ and $\alpha=\alpha'+1$. Let $\pea$ be the probability distribution satisfying (\ref{yaoeq}) for this choice of $\varepsilon$ and $\alpha$. Since $\ra$ is $c'$-competitive, we have that $E_{\sigma\sim \pea}[\E_{\DET\sim\mu}[\DET(\sigma)]]\leq c'\E_{\sigma\sim \pea}[\OPT(\sigma)]+\alpha'$. On the other hand, since costs are by \cref{def} non-negative,
\begin{align*}
\E_{\sigma\sim \pea}[\E_{\DET\sim\mu}[\DET(\sigma)]]&=\E_{\DET\sim\mu}[\E_{\sigma\sim \pea}[\DET(\sigma)]]\tag*{\{Tonelli's theorem\}}\\
&\geq \E_{\DET\sim\mu}[(c-\varepsilon)\E_{\sigma\sim\pea}[\OPT(\sigma)]+\alpha]\tag*{\{by (\ref{yaoeq})\}}\\
&=\E_{\DET\sim\mu}[c'\E_{\sigma\sim \pea}[\OPT(\sigma)]+\alpha'+1]\\
&=c'\E_{\sigma\sim \pea}[\OPT(\sigma)]+\alpha'+1,
\end{align*}
which is a contradiction.
\end{proof}

\section{Direct product theorems for online computation with advice}
The main results of this section are two direct product theorems, namely \cref{mainthm,dpmartin}. Before proving these theorems, we need to define an $r$-round input distribution.

\subsection{Definition of \texorpdfstring{$r$}{r}-round input distributions}
Recall that informally, an $r$-round input distribution is an input distribution which can naturally be split up into $r$ rounds such that in each round, the requests which are revealed are chosen independently from the requests in previous rounds. Furthermore, we are interested in the case where one can associate to each round a cost function such that the total cost of the output computed by an algorithm is simply the sum of the costs incurred in each round. We remark that a round can consist of more than one request. 

Given $\P$-inputs $\sigma_1,\ldots , \sigma_r$, we define $\sigma=\sigma_1\ldots \sigma_r$ to be the $\P$-input obtained by concatenating the requests of the $r$ inputs and using the same initial state as $\sigma_1$. For example, if $\sigma_1=(s,x_1,\ldots , x_n)$ and $\sigma_2=(s',x_1',\ldots , x_{n'}')$, then $\sigma_1\sigma_2=(\init,x_1,\ldots , x_n, x_1',\ldots , x_{n'}')$. We are now ready to formally define an $r$-round input distribution. We will also define a number of random variables associated to such a distribution.
\begin{definition}[$r$-round input distribution]
\label{rround}
Let $\P$ be a minimization problem and let $r\in\mathbb{N}$. For each $1\leq i\leq r$, let $I_i$ be a finite set of $\P$-inputs such that the following holds: If $\sigma_1,\ldots, \sigma_r$ are such that $\sigma_i\in I_i$ for $1\leq i\leq r$, then $\sigma=\sigma_1\sigma_2\ldots \sigma_r$ is a valid $\P$-input. Furthermore, let $I^r=I_1\times\cdots\times I_r=\{\sigma_1\ldots\sigma_i\ldots\sigma_r\mid\sigma_i\in I_i \text{ for $1\leq i\leq r$}\}$.

For each $1\leq i\leq r$, let $\cost_i$ be a function which maps an output $\gamma$ computed for an input $\sigma\in I^r$ to a non-negative real number $\cost_i(\gamma,\sigma)$. We say that $\cost_i$ is the $i$th round \emph{cost function}.

Let $p_i:I_i\rightarrow [0,1]$ be a probability distribution over $I_i$ and let $p^r:I^r\rightarrow [0,1]$ be the (pro\-duct) probability distribution which maps $\sigma_1\sigma_2\ldots\sigma_r\in I^r$ into $p_1(\sigma_1)p_2(\sigma_2)\cdots p_r(\sigma_r)$. We say that $p^r$ (together with the associated cost functions $\cost_i$) is an \emph{$r$-round input distribution.} For $1\leq i\leq r$, we say that $p_i$ is the $i$th round input distribution of $p^r$.
\qeddef
\end{definition}

Given an $r$-round input distribution and a fixed deterministic algorithm, we introduce a number of random variables which are useful when analyzing how the algorithm performs against the $r$-round input distribution. These random variables will be used throughout the paper. Recall that $I^r$ is by \cref{rround} a finite set, and so these random variables are finite.
\begin{definition}
\label{rround2}

Let $\P$ be a minimization problem and let $r\in\mathbb{N}$. Let $p^r:I^r\rightarrow [0,1]$ be an $r$-round input distribution with associated cost functions $\cost_i$ and with $i$th round input distributions $p_i$. Let $\ALG$ be a deterministic $\P$-algorithm with advice. For $\sigma\in I^r$, let $\varphi(\sigma)$ denote the advice read by $\ALG$ on input $\sigma$. Assume that $\ab{\varphi(\sigma)}\leq b$ for all $\sigma\in I^r$, that is, assume that $\ALG$ reads at most $b$ bits of advice when the input belongs to $I^r$. We define the following random variables:

\begin{itemize}
\item For $1\leq i\leq r$, let $X_i$ be the requests\footnote{The variable $X_1$ will in addition to the requests revealed in round $1$ also contain the initial state of the input.} revealed in round $i$. Formally, the random variable $X_i:I^r\rightarrow I_i$ maps $\sigma=\sigma_1\ldots\sigma_r\in I^r$ to $\sigma_i\in I_i$. Furthermore, let $X=X_1\ldots X_r$.
\item Let $B$ be the advice bits read by $\ALG$. Formally, the random variable $B:I^r\rightarrow \{0,1\}^{b}$ maps $\sigma$ to the advice $\varphi(\sigma)\in\{0,1\}^{b}$ read by $\ALG$ on input $\sigma$.
\item For $1\leq i\leq r$, let $\cost_i(\ALG)$ be the $i$th round cost function applied to the output computed by $\ALG$. Formally, the random variable $\cost_i(\ALG):I^r\rightarrow \mathbb{R}$ maps $\sigma\in I^r$ to $\cost_i(\gamma,\sigma)\in\mathbb{R}$ where $\gamma$ are the answers computed by $\ALG$ when given $\sigma$ as input. Furthermore, let $\cost(\ALG)$ be the total cost incurred by $\ALG$, that is, $\cost(\ALG)=\sum_{i=1}^r \cost_i(\ALG)$.
\item Let $W_i=(X_1,\ldots , X_{i-1}, B)$ be the information available to $\ALG$ when the $i$th round begins. This information consists of all previous requests and the advice read by $\ALG$. Formally, $W_i:I^r\rightarrow I_1\times\cdots\times I_{i-1}\times \{0,1\}^{b(n)}$ maps $\sigma\in I^r$ to $(\sigma_1,\sigma_2,\ldots , \sigma_{i-1}, \varphi(\sigma))$.
\end{itemize}
Finally, for every $1\leq i\leq r$ and every $w\in\mathcal{W}_i$, we define the conditional $i$th round input distribution $\piw:I_i\rightarrow [0,1]$ as follows\footnote{Recall that we are assuming that the range $\mathcal{W}_i$ of the finite random variable $W_i$ is chosen to be smallest possible, that is, if $w\in\mathcal{W}_i$, then $\Pr[W_i=w]>0$.}:
\begin{align*}
\piw(x)=p_i(x\vert W_i=w)=\frac{\prob(X_i=x, W_i=w)}{\prob (W_i=w)}.
\end{align*}
\qeddef
\end{definition}
Even if $\ALG$ is an algorithm without advice, we will still use the random variables introduces in \cref{rround2}. In this case, $B$ is the empty string and hence $W_i$ is simply the \emph{history} of requests revealed in previous rounds. This will be relevant when proving and applying \cref{dpmartin}.

When proving lower bounds via $r$-round input distributions, we always assume that the algorithm receives all of the advice at the very beginning. This only strengthens the lower bounds. However, some care has to be taken due to the fact that the length (number of requests) of an input drawn from an $r$-round input distribution is a random variable. By \cref{rround}, there is only a finite number of possible request sequences in each round and, hence, there exists an upper bound on the total length of inputs from the $r$-round distribution. Thus, the value $b$ (the upper bound on the number of advice bits read by $\ALG$) in \cref{rround2} is well-defined. When defining $B$ and $W_i$, we assume that the algorithm $\ALG$ receives a $b$-bit advice string at the beginning. Please note that even if $\sigma$ is such that $\ALG$ does not actually read $b$ bits of advice when serving $\sigma$, we still assume that $\ALG$ receives the first $b$ bits of the advice-tape prepared by its oracle for $\sigma$ (again, this can only help the algorithm).

\subsection{An information-theoretic direct product theorem}
We prove an information theoretical direct product theorem by extending the high-entropy lower bound technique due to Emek et al.~\cite{A2}. We will make use of the notation of \cref{rround,rround2}.
\begin{theorem}
\label{mainthm}
\label{m:mainthm}
Let $\P$ be a minimization problem and let $p^r$ be an $r$-round input distribution with associated cost functions $\cost_i$. Furthermore, let $\ALG$ be a deterministic algorithm reading at most $b$ bits of advice on every input in the support of $p^r$.
Assume that there exists a convex and decreasing function $f:[0,\infty]\rightarrow\mathbb{R}$ such that for every $1\leq i\leq r$ and $w\in\mathcal{W}_i$, the following holds:
\begin{equation}
\E [\cost_i(\ALG)\vert W_i=w]\geq f\left(D_{KL}(p_{i\vert w}\| p_i)\right).
\label{mainthmas}
\end{equation}
Then, $\E[\cost(\ALG)]\geq  r f(b/r)$.
\end{theorem}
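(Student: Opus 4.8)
The plan is to lower-bound $\E[\cost(\ALG)] = \sum_{i=1}^r \E[\cost_i(\ALG)]$ by using the hypothesis \cref{mainthmas} round-by-round, then pushing the expectation inside $f$ via Jensen's inequality, and finally controlling $\sum_i D_{KL}(p_{i\vert w}\| p_i)$ in expectation by the number of advice bits $b$. Concretely, for each $i$ I would write $\E[\cost_i(\ALG)] = \E_{w\sim W_i}\bigl[\E[\cost_i(\ALG)\vert W_i=w]\bigr] \geq \E_{w\sim W_i}\bigl[f(D_{KL}(p_{i\vert w}\| p_i))\bigr]$. Since $f$ is convex, Jensen gives $\E_{w}\bigl[f(D_{KL}(p_{i\vert w}\| p_i))\bigr] \geq f\bigl(\E_{w}[D_{KL}(p_{i\vert w}\| p_i)]\bigr)$. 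Summing over $i$ and invoking convexity of $f$ once more (in the averaged form $\tfrac1r\sum_i f(a_i) \geq f(\tfrac1r\sum_i a_i)$) yields $\E[\cost(\ALG)] \geq r\, f\!\left(\tfrac1r\sum_{i=1}^r \E_{w\sim W_i}[D_{KL}(p_{i\vert w}\| p_i)]\right)$. At this point the theorem follows provided $\sum_{i=1}^r \E_{w\sim W_i}[D_{KL}(p_{i\vert w}\| p_i)] \leq b$, because $f$ is decreasing.

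**The key identity.** The heart of the argument is the claim that $\sum_{i=1}^r \E_{w\sim W_i}[D_{KL}(p_{i\vert w}\| p_i)] = I(X; B)$, the mutual information between the input and the advice. This is an information-theoretic book-keeping step. I would first observe that $\E_{w\sim W_i}[D_{KL}(p_{i\vert w}\| p_i)] = I(X_i ; W_i)$: indeed, since the rounds are drawn independently, the marginal of $X_i$ is $p_i$, and $D_{KL}(p_{i\vert w}\| p_i)$ averaged over $w$ is precisely $I(X_i; W_i) = H(X_i) - H(X_i\vert W_i)$ by the standard characterization of mutual information as expected KL-divergence of the conditional from the marginal. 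Next, because $W_i = (X_1,\ldots,X_{i-1},B)$ and $X = X_1\cdots X_r$, I would apply the chain rule: $I(X_i; W_i) = I(X_i; X_1,\ldots,X_{i-1},B)$, and since the $X_j$ are mutually independent, $I(X_i ; X_1,\ldots,X_{i-1}) = 0$, so $I(X_i; W_i) = I(X_i ; B \mid X_1,\ldots,X_{i-1})$. Summing over $i$ and using the chain rule for mutual information (combined with independence of the $X_i$), $\sum_{i=1}^r I(X_i; B\mid X_1,\ldots,X_{i-1}) = I(X ; B)$. Finally $I(X;B) = H(B) - H(B\mid X) \leq H(B) \leq b$, since $B$ takes values in $\{0,1\}^b$. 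This chain of equalities and inequalities gives exactly the bound needed.

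**Main obstacle.** The delicate point, and the step I would be most careful with, is justifying $\E_{w\sim W_i}[D_{KL}(p_{i\vert w}\| p_i)] = I(X_i; W_i)$ and then correctly unwinding the chain rule using the independence of the rounds. One has to be slightly careful that $p_i$ really is the marginal distribution of $X_i$ (true by the product structure of $p^r$), that conditioning on $W_i$ does not change the marginal of the *future* rounds but we only need the relation for $X_i$ itself, and that the KL-divergences are all well-defined (the supports are contained in the supports of the $p_i$ since $\mathcal{W}_i$ is taken minimal). A secondary subtlety is the double use of Jensen: the per-round application needs $f$ convex on $[0,\infty]$, and the cross-round averaging step is the same inequality applied to the uniform average of the $r$ values $D_i := \E_{w\sim W_i}[D_{KL}(p_{i\vert w}\|p_i)]$. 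Once the identity $\sum_i D_i = I(X;B) \leq b$ is in hand, combining with the monotonicity of $f$ gives $\E[\cost(\ALG)] \geq r f\!\left(\tfrac1r \sum_i D_i\right) \geq r f(b/r)$, completing the proof.
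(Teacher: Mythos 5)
Your proposal is correct and follows essentially the same route as the paper's proof: Jensen's inequality per round to replace $\E_w[f(D_{KL}(p_{i|w}\|p_i))]$ by $f(\E_w[D_{KL}(p_{i|w}\|p_i)])$, the identification of $\E_w[D_{KL}(p_{i|w}\|p_i)]$ with $I(X_i;W_i)$, the decomposition $\sum_i I(X_i;W_i)=I(X;B)\leq b$ using independence of the rounds, and a second use of Jensen plus monotonicity to pass from $\sum_i f(D_i)$ to $rf(b/r)$. The only cosmetic difference is that the paper carries out the chain-rule bookkeeping at the level of (conditional) entropies whereas you phrase it via the chain rule for mutual information; these are interchangeable.
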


Before sketching the proof of \cref{m:mainthm}{}, we informally discuss the theorem. Recall that $W_i$ is the information available to the algorithm when round $i$ begins (the advice read and the history of previous requests). Without any advice or knowledge of the history, the probability of $x\in \mathcal{X}_i$ being selected as the round $i$ request sequence is $p_i(x)$. However, this probability may change given that the algorithm knows $W_i$ (in the most extreme case, the advice could specify exactly the request sequence in round $i$). For any fixed $w\in\mathcal{W}_i$, the probability of $x$ being selected in round $i$ given that $W_i=w$ is denoted $\piw (x)$. Assumption (\ref{mainthmas}) informally means that the closer $\piw$ and $p_i$ are to each other, the better a lower bound we must have on the expected cost incurred by $\ALG$ in round $i$ given that $\ALG$ knows $w$. We remark that the convexity assumption on $f$ is automatically satisfied in most applications. The conclusion of \cref{mainthm}{} essentially says that under these assumptions, an algorithm with $b$ bits of advice for the entire input can do no better than an algorithm with $b/r$ bits of advice for each individual round. In particular, this will allow us to conclude that $\Omega(r)$ bits of advice are needed to get a non-trivial improvement over having no advice at all. 

\begin{proof}[Proof (of \cref{mainthm})]

First of all, note that $H(B)$, the entropy of $B$, is at most $b$ since the algorithm $\ALG$ reads at most $b$ bits of advice. It follows that 
\begin{equation}
I(X;B)=H(B)-H(B\vert X)\leq b.
\label{atmostb}
\end{equation}
On the other hand, using (in the third equality below) that $X_1,\ldots , X_r$ are independent random variables and the chain rule of conditional entropy, we get
\begin{align}
\notag I(X; B) &= I(X_1,X_2,\ldots , X_r ; B)\\
\notag &=H(X_1,X_2,\ldots , X_r)-H(X_1,X_2,\ldots , X_r\vert B)\\
\notag &=\sum_{i=1}^rH(X_i) - \sum_{i=1}^rH(X_i\vert X_{1},\ldots , X_{i-1}, B)\\
\notag &=\sum_{i=1}^r I(X_i ; B, X_{1}, X_2,\ldots , X_{i-1})\\
&=\sum_{i=1}^r I(X_i; W_i).
\label{Ichain}
\end{align}
Combining (\ref{Ichain}) and (\ref{atmostb}) gives
\begin{equation}
I(X; B)=I(X_1;W_1)+I(X_2; W_2)+\cdots + I(X_r;W_r)\leq b. 
\label{crux}
\end{equation}
Fix $i$ such that $1\leq i\leq r$. By definition, for any $w\in\mathcal{W}_i$,
$$\piw(x)=p_i(x\vert W_i=w)=\frac{\prob(X_i=x, W_i=w)}{\prob (W_i=w)}.$$
Note that $\piw$ is, for each fixed $w\in\mathcal{W}_i$, a probability distribution satisfying $\supp(\piw)\subseteq \supp (p_i)$. Thus, the Kullback-Leibler divergence between $\piw$ and $p_i$ is well-defined and finite.
Using the law of total expectation, the assumption (\ref{mainthmas}), and Jensen's inequality, we get that
\begin{align}
\E[\cost_i(\ALG)]&=\E_w[\E[\cost_i(\ALG)\vert W_i=w]]\notag \\ &\geq \E_w\big[f\big(D_{\KL}(\piw\| p_i)\big)\big]\notag\\
&\geq f\left(\E_w[D_{\KL}(\piw\| p_i)]\right).\label{mainthmckl}
\end{align}
As the following calculation shows, the mutual information $I(X_i; W_i)$ is the expected Kullback-Leibler divergence between between $\piw$ and $p_i$ over all $w\in\mathcal{W}_i$:
\begin{align}
I(X_i; W_i)&=\;\;\sum_{\mathclap{x\in I_i,\, w\in\mathcal{W}_i}}\,\Prob(X_i=x,W_i=w)\cdot\log\paren*{\frac{\Prob(X_i=x, W_i=w)}{\Prob (W_i=w)\Prob (X_i=x)}}\notag\\
&=\;\;\sum_{\mathclap{x\in I_i,\, w\in\mathcal{W}_i}}\,\Prob(W_i=w)\cdot \piw(x)\cdot\log\paren*{\frac{\piw(x)}{p_i(x)}}\notag\\
&=\sum_{w\in \mathcal{W}_i}\Prob(W_i=w)\cdot D_{\KL}(\piw\| p_i)\notag\\
&=\E_w[D_{\KL}(\piw\| p_i)].\label{mainthmmikl}
\end{align}
Combining (\ref{mainthmckl}) and (\ref{mainthmmikl}) and using linearity of expectation, we get that
\begin{align*}
\E[\cost(\ALG)]&=\sum_{i=1}^r\E[\cost_i(\ALG)]\\
&\geq \sum_{i=1}^r f\big(\E_w[D_{\KL}(\piw \| p_i)]\big)\\
&= \sum_{i=1}^r f(I(X_i; W_i)).
\end{align*}
We claim that the sum $\sum_{i=1}^rf(I(X_i; W_i))$ is at least $rf(b/r)$. In order to prove this claim, suppose that $h_1,\ldots , h_r$ are non-negative real numbers such that $\sum_{i=1}^r h_i\leq b$. Then, by Jensen's inequality and the assumption that $f$ is convex and decreasing, it follows that
\begin{displaymath}
f\left(\frac{b}{r}\right)\leq f\left(\frac{\sum_{i=1}^r h_i}{r}\right)\leq\frac{1}{r}\sum_{i=1}^r f(h_i).
\end{displaymath}
Thus, $rf(b/r)\leq \sum_{i=1}^{r}f(h_i)$. Since $I(X_1; W_1), \ldots , I(X_r, W_r)$ are non-negative real numbers which according to (\ref{crux}) sums to at most $b$, we conclude that $\E[\cost(\ALG)]\geq rf(b/r)$.
\end{proof}
For a simple example of how to apply \cref{mainthm}, we refer to \cref{thmreprove}.
\subsection{A martingale-theoretic direct product theorem}
We will now provide an alternative direct product theorem using the Azuma-Hoeffding inequality.
\begin{theorem}
\label{dpmartin}
Let $\P$ be a minimization problem and let $p^r$ be an $r$-round input distribution with associated cost functions $\cost_i$.
Assume that there exist constants $t,s\geq 0$ such that for every deterministic $\P$-algorithm $\DET$ without advice, it holds that for every $i$ and every history $w\in\mathcal{W}_i$,
\begin{align}
\E[\cost_i(\DET)\vert W_i=w]&\geq t,\label{asume}\\
\E [(\cost_i(\DET)-t)^2\vert W_i=w]&\leq s^2.\label{asumv}
\end{align}
Then, for every algorithm $\ALG$ reading at most $b$ bits of advice and every $\varepsilon>0$,
\begin{equation}
\label{asumconc}
\Pr \left[\text{cost}(\ALG)\leq (t-\varepsilon)r\right]\leq \exp_2\left(b-K_{\mbox{\Large$\frac{\gamma}{1+\gamma}$}}\left(\frac{\alpha+\gamma}{1+\gamma}\right)r\right),
\end{equation}
where $\gamma=\dfrac{s^2}{t^2}, \alpha=\dfrac{\varepsilon}{t}$, $\exp_2(x)=2^x$, and $K_y(x)=x\log\left(x/y\right)+(1-x)\log\left((1-x)/(1-y)\right)$.
\end{theorem}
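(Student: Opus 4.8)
The plan is to reduce the claim for algorithms with $b$ bits of advice to a concentration inequality for algorithms \emph{without} advice, and then prove the latter by the exponential-moment (Chernoff) method. For the reduction, note that on the (finite) support of $p^r$ the algorithm $\ALG$ reads at most $b$ bits and hence induces at most $2^b$ deterministic algorithms $\ALG_1,\dots,\ALG_{2^b}$ without advice, one per advice string, with $\cost(\ALG)(\sigma)=\cost(\ALG_{j(\sigma)})(\sigma)$ where $j(\sigma)$ is the string used on $\sigma$. Therefore $\Pr[\cost(\ALG)\leq(t-\varepsilon)r]\leq\sum_{j}\Pr[\cost(\ALG_j)\leq(t-\varepsilon)r]\leq 2^b\max_j\Pr[\cost(\ALG_j)\leq(t-\varepsilon)r]$. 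Since $(\ref{asume})$ and $(\ref{asumv})$ hold for every deterministic no-advice algorithm --- for which $W_i=(X_1,\dots,X_{i-1})$ is simply the history of the first $i-1$ rounds --- it suffices to show that every such $\DET$ satisfies $\Pr[\cost(\DET)\leq(t-\varepsilon)r]\leq\exp_2\!\bigl(-K_{\gamma/(1+\gamma)}((\alpha+\gamma)/(1+\gamma))\,r\bigr)$, where $\gamma=s^2/t^2$ and $\alpha=\varepsilon/t$ (so $t>0$ is implicit, and we may assume $0<\varepsilon\leq t$, the left-hand side being $0$ otherwise); multiplying by $2^b=\exp_2(b)$ then yields $(\ref{asumconc})$.

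For the no-advice bound, fix $\DET$ and $\lambda>0$ and apply Markov's inequality to $e^{-\lambda\,\cost(\DET)}$:
\[
\Pr\bigl[\cost(\DET)\leq(t-\varepsilon)r\bigr]\ \leq\ e^{\lambda(t-\varepsilon)r}\cdot\E\!\left[\,\textstyle\prod_{i=1}^{r} e^{-\lambda\,\cost_i(\DET)}\right].
\]
Peeling off the rounds one at a time from the last and conditioning on $W_i$ --- legitimate since, for a no-advice algorithm, $\cost_i(\DET)$ is a function of the requests of rounds $1,\dots,i$ --- the right-hand expectation is bounded by a product of single-round conditional moment generating functions, so the crux is the following claim: for every $i$, every $w\in\mathcal W_i$, and every $\lambda\geq 0$,
\[
\E[\, e^{-\lambda\,\cost_i(\DET)} \mid W_i=w\,]\ \leq\ \frac{\gamma}{1+\gamma}+\frac{1}{1+\gamma}\,e^{-\lambda(1+\gamma)t}.
\]
Equivalently, among all nonnegative random variables $Y$ with $\E[Y]\geq t$ and $\E[(Y-t)^2]\leq s^2$, the quantity $\E[e^{-\lambda Y}]$ is maximized by the two-point variable $Y^{*}$ equal to $0$ with probability $p:=\gamma/(1+\gamma)$ and to $v:=(1+\gamma)t$ with probability $1-p$ --- which indeed has $\E[Y^{*}]=t$ and $\E[(Y^{*}-t)^2]=s^2$.

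I would prove this single-round bound by exhibiting a quadratic $q(y)=1+By+Cy^2$ with $C\geq 0$, $B\leq 0$ and $B+2Ct\leq 0$ that majorizes $y\mapsto e^{-\lambda y}$ on $[0,\infty)$ with equality at $y=0$ and at $y=v$ (take $q$ tangent to $e^{-\lambda y}$ at $v$; then $C=\bigl(1-e^{-\lambda v}(1+\lambda v)\bigr)/v^{2}\geq 0$, the sign conditions hold because $v\geq t$ and $C,\lambda\geq 0$, and a short computation using the convexity of $e^{-\lambda y}$ confirms $q\geq e^{-\lambda y}$ on $[0,\infty)$). Then $\E[e^{-\lambda Y}\mid W_i=w]\leq\E[q(Y)\mid W_i=w]$, and rewriting $\E[Y^{2}\mid W_i=w]=\E[(Y-t)^{2}\mid W_i=w]+2t\,\E[Y\mid W_i=w]-t^{2}$ and invoking the sign conditions together with $(\ref{asume})$ and $(\ref{asumv})$ bounds the right-hand side by $1+Bt+C(t^{2}+s^{2})=p\,q(0)+(1-p)\,q(v)$, which is exactly the asserted bound. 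Granting the claim, iterated conditioning gives $\E[\prod_i e^{-\lambda\cost_i(\DET)}]\leq(p+(1-p)e^{-\lambda v})^{r}$, hence for every $\lambda>0$,
\[
\Pr\bigl[\cost(\DET)\leq(t-\varepsilon)r\bigr]\ \leq\ \Bigl(e^{\lambda(t-\varepsilon)}\bigl(p+(1-p)e^{-\lambda v}\bigr)\Bigr)^{r}.
\]
The bracketed quantity is $e^{\lambda(t-\varepsilon)}$ times the moment generating function of $Y^{*}$, so minimizing over $\lambda$ is precisely the Legendre computation behind the Chernoff bound for a sum of $r$ i.i.d.\ copies of $Y^{*}$: substituting $\mu=\lambda v$ and $\beta=(t-\varepsilon)/v=(1-\alpha)/(1+\gamma)$, one gets $\inf_{\mu\geq 0}e^{\mu\beta}\bigl(p+(1-p)e^{-\mu}\bigr)=\exp_2\bigl(-K_{1-p}(\beta)\bigr)$, and the complementation identity $K_{1-p}(\beta)=K_{p}(1-\beta)$ for the binary relative entropy, together with $1-p=1/(1+\gamma)$ and $1-\beta=(\gamma+\alpha)/(1+\gamma)$, turns this into the no-advice bound above. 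Raising to the $r$-th power and multiplying by $\exp_2(b)$ finishes the proof.

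Everything except the single-round moment generating function bound is bookkeeping: the union bound over advice strings, the round-by-round peeling (for which one should check that $\cost_i(\DET)$ depends only on the first $i$ rounds, as holds for the $r$-round cost functions arising in the applications), and a standard one-dimensional optimization. The main obstacle is the single-round estimate --- identifying the two-point distribution on $\{0,(1+\gamma)t\}$ as the extremal one and verifying the quadratic majorant of $e^{-\lambda y}$ --- which is a small convex-analysis / moment-problem argument; the degenerate cases $\varepsilon>t$ (left-hand side $0$) and $s=0$ (then $p=0$, and the bound is read off via the conventions for $K$) are handled separately.
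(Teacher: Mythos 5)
Your proof is correct and takes a genuinely different route from the paper's. The paper's proof constructs the explicit supermartingale $V_i = it - \sum_{j\le i}\cost_j(\DET)$, checks its bounded-increment property ($V_{i+1}-V_i\le t$, from non-negativity of costs) and conditional second-moment property ($\E[(V_{i+1}-V_i)^2\mid V_1,\dots,V_i]\le s^2$, from \cref{asumv}), and then invokes as a black box the version of the Azuma--Hoeffding inequality stated in its appendix, which already produces the exact rate $K_{\gamma/(1+\gamma)}\!\left(\frac{\alpha+\gamma}{1+\gamma}\right)$; the union bound over $2^b$ advice strings is then identical to yours. You instead re-derive that concentration bound inline by the exponential-moment method: peeling rounds one at a time, bounding each conditional moment generating function $\E[e^{-\lambda\cost_i(\DET)}\mid W_i=w]$ by that of the extremal two-point law on $\{0,(1+\gamma)t\}$ (which has conditional mean exactly $t$ and second central moment exactly $s^2$), and carrying out the Chernoff optimization and binary-entropy bookkeeping yourself. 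This is precisely the standard Bennett/Freedman-type proof of the very lemma the paper cites, so mathematically the two routes coincide; yours buys self-containment at the cost of length, the paper's buys brevity by delegation. For the record, the step you flag as the ``main obstacle'' does go through: with $q(y)=1+By+Cy^2$, $q(0)=1$ and $q$ tangent to $e^{-\lambda y}$ at $v=(1+\gamma)t$, the difference $h:=q-e^{-\lambda y}$ has $h''(y)=2C-\lambda^2e^{-\lambda y}$, i.e.\ $h$ is concave up to a single inflection point and convex thereafter, and the elementary bound $e^{\lambda v}\ge 1+\lambda v+\tfrac{(\lambda v)^2}{2}$ is exactly the statement that this inflection lies at or to the left of $v$; convexity of $h$ past the inflection together with the stationary zero at $v$ gives $h\ge 0$ there, and concavity before the inflection together with $h(0)=0$ and $h\ge 0$ at the inflection gives $h\ge 0$ there too, so the majorization holds on all of $[0,\infty)$. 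Both arguments implicitly require that $\cost_i(\DET)$ be determined by the first $i$ rounds (so that conditioning and the martingale structure make sense); you correctly flag this, and it holds for every $r$-round distribution used in the paper.
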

It is very important to note that (\ref{asume}) and (\ref{asumv}) must only hold for an algorithm \emph{without advice}. In this case, $W_i$ is simply the history of requests in the previous rounds (and does not contain any advice about the input). Assumption (\ref{asume}) says that the expected cost of a deterministic algorithm (without advice) should be at least $t$ in each round, no matter what has happened in previous rounds. Regarding assumption (\ref{asumv}), note that if $M$ is an upper bound on the cost incurred by any algorithm in a single round, then (\ref{asumv}) is satisfied by taking $s=M$. However, in some cases, it is possible to obtain a better estimate and thereby a better advice complexity lower bound. Azuma-Hoeffding's inequality essentially shows that under these assumptions, the probability that a single fixed algorithm without advice incurs a total cost smaller than $(t-\varepsilon)r$ decreases exponentially in $r$. Viewing an algorithm with $b$ bits of advice as $2^b$ algorithms without advice then gives the result. 

Note that if $b<K_{\mbox{$\frac{\gamma}{1+\gamma}$}}\left(\frac{\alpha+\gamma}{1+\gamma}\right)r$ then the probability in (\ref{asumconc}) is strictly smaller than $1$. Thus, by the probabilistic method, there exists an input $\sigma$ such that the algorithm with $b$ bits of advice incurs a cost of at least $(t-\varepsilon)r$. In particular, we again get that in order to achieve a non-trivial improvement over an algorithm without advice, it must be the case that $b=\Omega(r)$.
\begin{proof}[Proof (of \cref{dpmartin})]
Let $\DET$ be a deterministic algorithm without advice. Recall that for $1\leq j\leq r$, the random variable $\cost_j(\DET)$ is the cost incurred by $\DET$ in round $j$ when the input is drawn from the $r$-round input distribution $p^r$. Let $V_0=0$ and for $1\leq i\leq r$, let $V_i=i\cdot t-\sum_{j=1}^i\cost_j(\DET)$. We claim that $V_0, V_1, V_2,\ldots, V_r$ is a supermartingale. To this end, note that for $0\leq i<r$,
\begin{align*}
\E[V_{i+1}\vert V_1,\ldots , V_i]&=\E[V_i+t-\cost_{i+1}(\DET)\vert V_1,\ldots , V_i]\\
&=V_i+t-\E[\cost_{i+1}(\DET)\vert V_1,\ldots , V_i].
\end{align*}
Since we assume that $\E[\cost_{i+1}(\DET)\vert W_{i+1}=w]\geq t$ for every $w\in \mathcal{W}_{i+1}$, it follows that the conditional expectation $\E[\cost_{i+1}(\DET)\vert V_1,\ldots , V_i]$ is a random variable which is always at least $t$. Thus, $\E[V_{i+1}\vert V_1,\ldots , V_i]\leq V_i$, which proves that $V_0,V_1,\ldots , V_r$ is a supermartingale.

Since costs are always non-negative, we have that $V_{i+1}-V_{i}=t-\cost_{i+1}(\DET)\leq t$. Furthermore, from (\ref{asumv}) we get that $\E[(V_{i+1}-V_{i})^2\vert V_1,\ldots , V_i]=\E[(t-\cost_{i+1}(\DET))^2\vert V_1,\ldots , V_i]\leq s^2$. Thus, by the Azuma-Hoeffding inequality (see \cref{azuma} in \cref{app:sec:az}),
\begin{align*}
\Pr[\cost(\DET)\leq (t-\varepsilon)r)]&=\Pr[rt-\cost(\DET(X))\geq \varepsilon r]\\
&=\Pr\left[V_r\geq \varepsilon r\right]\\
&\leq \exp_2\left(-K_{\mbox{\Large$\frac{\gamma}{1+\gamma}$}}\left(\frac{\alpha+\gamma}{1+\gamma}\right)r\right).
\end{align*}
Let $\ALG$ be a deterministic algorithm reading at most $b$ bits of advice for every input in $\supp (p^r)$. Recall that $\ALG$ can be converted into $2^b$ deterministic algorithms without advice. Above, we computed an upper bound on the probability that a single fixed deterministic algorithm without advice will incur a cost of at most $(t-\varepsilon)r$. Using the union bound, we can get an upper bound on the probability that at least one of the $2^b$ algorithms will incur at most this cost:
\begin{align*}
\Pr[\cost(\ALG)\leq (t-\varepsilon)r)]\leq\exp_2\left(b-K_{\mbox{\Large$\frac{\gamma}{1+\gamma}$}}\left(\frac{\alpha+\gamma}{1+\gamma}\right)r\right).
\end{align*}
\end{proof}

\cref{dpmartin} can be easier to apply than \cref{mainthm}, but it is also less flexible. In this paper, we mainly use \cref{mainthm}, but most of the results could also be proved using \cref{dpmartin}. In some cases (such as \cref{thm:wsgb}), \cref{dpmartin} would yield a slightly weaker bound than \cref{mainthm}. Our main application of \cref{dpmartin} is a lower bound for online graph coloring given in \cref{sec:graphc}.

\section{Lower bounds for \texorpdfstring{$\Sigma$}{Sigma}-repeatable online problems}
In order to prove \cref{maint1}, we first define the repeated version, $\Prs$, of an online problem $\P$. This is a new online problem consisting of some number of (variable-length) rounds of $\P$. Just before each round, a restart takes place and everything is reset to some initial state of $\P$. While we are usually not interested in the problem $\Prs$ itself, we introduce it as a stepping stone for proving lower bounds for $\P$.

Given $\P$-inputs $\sigma_1,\ldots , \sigma_r$ with the same initial state $s$, we define $(\sigma_1;\ldots ;\sigma_r)$ to be a sequence with the requests of the $r$ inputs concatenated and such that the initial state $s$ arrives together with the first request of each $\sigma_i$. For example, if $\sigma_1=(s,x_1,\ldots , x_n)$ and $\sigma_2=(s,x_1',\ldots , x_{n'}')$, then $(\sigma_1;\sigma_2)=(\init, \{s,x_1\},x_2,\ldots , x_n,\{s,x_1'\},x_2',\ldots , x_{n'}')$. Note that if $\sigma=\sigma_1\ldots \sigma_r$ and $\sigma^*=(\sigma_1;\ldots ;\sigma_r)$, then $\sigma$ is an actual $\P$-input whereas $\sigma^*$ technically is not (since $\{s, x_1\}$ is not a valid $\P$-request). When defining the repeated version of an online problem, we need to use $\sigma^*$ in order to make sure that the algorithm knows when one phase ends and another begins (this may not be possible for the algorithm to deduce in $\sigma$).

\begin{definition}
\label{m:def:prs}
\label{def:prs}
Let $\P$ be an online problem, let $S$ be the set of initial states for $\P$, and let $I$ ($I_s$) be the set of all possible request sequences for $P$ (with initial state $s$). Define $\Prs$ to be the online problem with inputs
\mbox{$I^*=\{\sigma^*=(\sigma_1;\sigma_2;\ldots ; \sigma_r)\mid r\geq 1, s\in S, \sigma_i\in I_s\}.$}
An algorithm for $\Prs$ must produce an output $\gamma^*=(\gamma_1,\ldots, \gamma_r)$ where $\gamma_i=(y_1,\ldots , y_{n_i})$ is a valid sequence of answers for the $\P$-input $\sigma_i=(s,x_1,\ldots , x_{n_i})\in I_s$. The score of the output $\gamma^*$ is $\score (\gamma^*,\sigma^*)=\sum_{i=1}^r \score_{\raisebox{-1pt}{$\scriptstyle P$}} (\gamma_i,\sigma_i)$, where $\score_{\raisebox{-1pt}{$\scriptstyle P$}}(\gamma_i,\sigma_i)$ is the score of the $\P$-output $\gamma_i$ with respect to the $\P$-input $\sigma_i$. The optimal offline algorithm for $\Prs$ is denoted $\OPT^*_{\Sigma}$.

\noindent $\P^*_{\lor}$ is defined similarly, except that  $\score(\gamma^*,\sigma^*)=\max\{\score_{\raisebox{-1pt}{$\scriptstyle P$}} (\gamma_1,\sigma_1),\ldots , \score_{\raisebox{-1pt}{$\scriptstyle P$}}(\gamma_r,\sigma_r)\}$. 
\end{definition}
In order to better understand the definition of $\Prs$, it is useful to imagine that after serving the last request of round $i-1$ but before serving the first request of round $i$, the current state is changed to the initial state $s$ (note that the algorithm knows when this happens since in $(\sigma_1;\ldots ;\sigma_r)$, the first request of each $\sigma_i$ is special). It is, however, important to keep in mind that even though a reset occurs when round $i$ begins, the previous rounds are not forgotten. The algorithm has perfect recall of all requests and answers in all previous rounds. Without this recall, advice complexity lower bounds for $\Prs$ would be easy to obtain but would be of little use.

If $\P$ is the $k$-server problem, then an initial state $s$ is a placement of the $k$ servers in the metric space. Thus, in $\Prs$, after serving the last request of round $i-1$ and before serving the first request of round $i$, the $k$ servers automatically return to their initial position specified by $s$. Note that when $\P$ is the $k$-server problem, we can concatenate $\P$-inputs $\sigma_1 ,\sigma_2,\ldots , \sigma_r$ into one long $\P$-input $\sigma=\sigma_1\sigma_2\ldots \sigma_r$. The only difference between serving the $\P$-input $\sigma$ and serving the $\Prs$-input $\sigma^*=(\sigma_1;\ldots ; \sigma_r)$ is that for the $\P$-input $\sigma$, the $k$ servers are not returned to the initial state $s$ when a round ends. However, if the underlying metric space has finite diameter $\Delta$, this difference in the placement of servers when a new round begins can be handled by ensuring that $k\Delta$ is small compared to the total cost incurred during each round. In fact, it turns out that for many online problems, there is a natural reduction from $\Prs$ to $\P$ that essentially preserves all lower bounds. This is formalized in \cref{re}.

\subsection{Lower bounds for \texorpdfstring{$\Prs$}{P*}}
We will now show how to obtain advice complexity lower bounds for $\Prs$ by repeating a hard input distribution $r$ times and using our information theoretical direct product theorem.

\begin{lemma}
\label{techlemma}
Fix $r\geq 1$. Let $\P$ be a minimization problem. Let $p:I_s\rightarrow [0,1]$ be an input distribution, where $I_s$ is a finite set of $P$-inputs with the same initial state $s$ and length at most $n'$. Assume that for every deterministic $\P$-algorithm $\ALG$ without advice, it holds that $\E_{\sigma\sim p}[\ALG(\sigma)]\geq t$. Also, let $M$ be the largest cost that any $\P$-algorithm can incur on any input from $I_s$.

Then, there exists an $r$-round input distribution, $p^r$, over $\Prs$-inputs with at most $rn'$ requests in total, such that any deterministic $\Prs$-algorithm reading at most $b$ bits of advice (on inputs of length at most $rn'$) has an expected cost of at least $r(t-2M\sqrt{b/r})$. Furthermore, $\E_{\sigma^*\sim p^r}[\OPT_{\Sigma}^{\text{*}}(\sigma^*)]=r\E_{\sigma\sim p}[\OPT(\sigma)]$.
\end{lemma}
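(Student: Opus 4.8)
The plan is to build $p^r$ as the $r$-fold product of $p$ in the obvious way and then apply the information-theoretic direct product theorem (\cref{mainthm}) with an appropriate convex decreasing $f$. First I would set $I_i = I_s$ and $p_i = p$ for each $1 \le i \le r$, and let $\cost_i(\gamma,\sigma^*)$ be $\score_P(\gamma_i,\sigma_i)$, the cost incurred by the algorithm on the $i$th round of the $\Prs$-input $\sigma^* = (\sigma_1;\ldots;\sigma_r)$. Because the rounds are separated by the special restart markers, the total cost is genuinely $\sum_i \cost_i$, and an input drawn from $p^r$ has at most $rn'$ requests since each $\sigma_i$ has length at most $n'$. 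The claim about $\OPT^*_\Sigma$ is immediate: on a $\Prs$-input the optimal offline algorithm can solve each round independently (there is a true reset to $s$), so $\OPT^*_\Sigma(\sigma^*) = \sum_i \OPT(\sigma_i)$, and taking expectations over the product distribution with linearity of expectation gives $\E_{p^r}[\OPT^*_\Sigma] = r\,\E_p[\OPT]$.

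The heart of the argument is verifying hypothesis \cref{mainthmas} of \cref{mainthm}. Fix $i$ and $w \in \mathcal{W}_i$. Conditioned on $W_i = w$, the round-$i$ requests are distributed according to $\piw$, and the algorithm's behavior in round $i$ is that of some fixed deterministic no-advice algorithm $\ALG_w$ (the advice and history are frozen by $w$). Thus $\E[\cost_i(\ALG)\mid W_i = w] = \E_{\sigma \sim \piw}[\ALG_w(\sigma)]$. I want to lower-bound this in terms of $D_{\mathrm{KL}}(\piw\|p)$. We know $\E_{\sigma\sim p}[\ALG_w(\sigma)] \ge t$ by hypothesis, and $\ALG_w(\sigma) \in [0,M]$ for every $\sigma \in I_s$. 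By Pinsker's inequality \cref{pinsker} and the bound \cref{etv} applied to the function $f(\sigma) = \ALG_w(\sigma)$ with $\|f\|_\infty \le M$,
\begin{equation}
\bigl|\E_{\piw}[\ALG_w] - \E_{p}[\ALG_w]\bigr| \le M\,\|\piw - p\|_1 \le M\sqrt{\ln 4 \cdot D_{\mathrm{KL}}(\piw\|p)} \le 2M\sqrt{D_{\mathrm{KL}}(\piw\|p)},
\end{equation}
using $\ln 4 < 4$. Hence $\E[\cost_i(\ALG)\mid W_i=w] \ge t - 2M\sqrt{D_{\mathrm{KL}}(\piw\|p)}$. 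So \cref{mainthmas} holds with $f(x) = t - 2M\sqrt{x}$, which is decreasing on $[0,\infty]$ and convex (its second derivative is positive). Applying \cref{mainthm} gives $\E[\cost(\ALG)] \ge r\,f(b/r) = r\bigl(t - 2M\sqrt{b/r}\bigr)$, as desired.

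The main obstacle, and the one place I would be careful, is making sure the conditioning argument is airtight: I need that given $W_i = w$ the round-$i$ input $X_i$ is independent of everything the algorithm has already committed to, so that the algorithm's round-$i$ play really is a fixed deterministic function of the round-$i$ requests only (together with $w$). This is exactly where the product structure of $p^r$ (the $X_i$ are mutually independent) and the definition $W_i = (X_1,\ldots,X_{i-1},B)$ are used — $X_i$ is independent of $(X_1,\ldots,X_{i-1})$ and $B$ depends only on $X$, but once $w$ fixes the past requests and the advice string, the algorithm's answers in round $i$ depend only on $X_i$, so the expected round-$i$ cost under $\piw$ is precisely $\E_{\sigma\sim\piw}[\ALG_w(\sigma)]$ for the no-advice algorithm $\ALG_w$. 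A secondary minor point is the bound $M$ on the per-round cost: since all $\Prs$-algorithms here answer round $i$ validly on a $\P$-input from $I_s$, the per-round cost is at most $M$ by definition of $M$, which is what \cref{etv} needs. Everything else is routine bookkeeping (well-definedness of $b$ as the max advice length over the finitely many inputs of length $\le rn'$, and the trivial $\sqrt{\ln 4} \le 2$ estimate).
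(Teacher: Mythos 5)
Your proof is correct and essentially identical to the paper's: same construction of $p^r$ as the $r$-fold product, same hard-wiring of $w$ into $\ALG_w$, same use of Pinsker's inequality plus the $L_1$-to-expectation bound, and the same application of \cref{mainthm} with a decreasing convex $f$. The only cosmetic difference is that you absorb the $\sqrt{\ln 4} < 2$ estimate into the definition of $f$ before applying the direct product theorem, whereas the paper works with $f(d) = t - M\sqrt{d\ln 4}$ and rounds off at the very end; this changes nothing.
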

\begin{proof}

Define $p^r\colon\{(\sigma_1;\ldots ;\sigma_r)\mid \sigma_i\in I_s\}\rightarrow [0,1]$ to be the $\Prs$-input distribution which maps $\sigma^*=(\sigma_1;\sigma_2;\ldots;\sigma_r)$ into $p(\sigma_1)p(\sigma_2)\cdots p(\sigma_r)$. Thus, in each of the $r$ rounds, we draw independently a request sequence from $I_s$ according to $p$. From the definition of $\Prs$, we naturally obtain a cost-function, $\cost_i$, for each round $1\leq i\leq r$. Together with these cost-functions, the input distribution $p^r$ is an $r$-round input distribution for $\Prs$. Note that the $i$th round input distribution $p_i$ is simply $p_i=p$, and that a round of $p^r$ corresponds to a round of $P^*_{\Sigma}$.

Let $\ALG^*$ be a deterministic algorithm for $\Prs$ reading at most $b$ bits of advice on inputs of length at most $rn'$.
Fix $1\leq i\leq r$ and $w\in \mathcal{W}_i$. Let $d=D_{\KL}(\piw \| p_i)$. In order to apply \cref{maint1}, we need a lower bound on $\E [\cost_i(\ALG^*)\vert W_i=w]$ in terms of $d$. By Pinsker's inequality (\ref{pinsker}),
\begin{equation}
\|\piw - p_i\|_{1}\leq\sqrt{d\cdot \ln 4}.
\label{pintechlem}
\end{equation}
Thus, it suffices to bound $\E[\cost_i(\ALG^*)| W_i=w]$ in terms of the $L_1$-distance between $p_{i\vert w}$ and $p_i$. Let $h(d)=\sqrt{d\cdot\ln 4}$. We construct a $P$-algorithm, $\ALG_w$, without advice by hard-wiring $w$ (i.e., the advice and the requests in rounds $1$ to $i-1$) into the $\Prs$-algorithm $\ALG^*$. That is, for an input sequence $\sigma\in\supp (\piw)$, the new algorithm $\ALG_w$ simulates the computation of $\ALG^*$ on $\sigma$ when $W_i=w$ and $\ALG^*$ is given the requests $\sigma$ in round $i$. Note that this is possible since $w$ is fixed, and hence the output of $\ALG^*$ in round $i$ given $\sigma$ as input in this round is completely determined. For all other input sequences, $\ALG_w$ behaves arbitrarily (but does compute some valid output). It follows that $\ALG_w$ is well-defined for all input sequences in $\supp(p_{i})$. Thus, $\ALG_w$ defines a mapping $\sigma\mapsto \ALG_w(\sigma)$ on $\supp(p_{i})$ such that $\|\ALG_w\|_{\infty}=\max_{\sigma\in \supp(p_i)}\ab{\ALG_w(\sigma)}\leq M$ and such that if $\sigma\in\supp (\piw)\subseteq\supp(p_i)$, then $\ALG_w(\sigma)$ is equal to the cost incurred by $\ALG^*$ if $W_i=w$ and $\sigma$ is given as input in round $i$. It follows that

\begin{align*}
\E[\cost_i(\ALG^*)\vert W_i=w]&=\E_{\sigma\sim \piw}[\ALG_w(\sigma)]\\
&\geq \E_{\sigma\sim p}[\ALG_w(\sigma)]-M\cdot h(d) \tag*{\{By (\ref{pintechlem}) and (\ref{etv})\}}\\
&\geq t-M\cdot h(d).
\end{align*}
Define $f(d)=t-M\cdot \sqrt{\ln 4\cdot d}$. Since $f$ is convex and decreasing, it follows from our direct product theorem (\cref{m:mainthm}{}) that (remember that $p_i=\pea$):
\begin{equation}
\label{techlemmaconc}
\E_{\sigma^*\sim p^r}[\ALG^*(\sigma^*)]=\E[\cost(\ALG^*)]\geq rf(b/r)=r\left(t-M\sqrt{\frac{b \cdot \ln 4}{r}}\right) \geq r\left(t-2M\sqrt{\frac{b}{r}}\right).
\end{equation}
Here, we used that $\sqrt{\ln 4}<2$. Finally, we observe that
\begin{align*}
\E_{\sigma^*\sim p^r}[\OPT^*_{\Sigma}(\sigma^*)]&=\E_{\sigma^*\sim p^r}\left[\sum_{i=1}^r\OPT(\sigma_i)\right]=\sum_{i=1}^r\E_{\sigma^*\sim p^r}[\OPT(\sigma_i)]=\sum_{i=1}^r\E_{\sigma_i\sim p_i}[\OPT(\sigma_i)]\\
&=\sum_{i=1}^r\E_{\sigma_i\sim p}[\OPT(\sigma_i)]=r\E_{\sigma\sim p}[\OPT(\sigma)].
\end{align*}
\end{proof}

\subsection{Definition of repeatable online problems}

\cref{techlemma} shows how to obtain lower bounds for $\Prs$. Of course, we are usually not very interested in the problem $\Prs$ itself. However, for several online problems, there is an obvious reduction from $\Prs$ to $\P$ since it is possible for an adversary to simulate a restart in $\P$. This idea is formalized by the notion of a repeatable online problem.

\begin{definition}
\label{re}
Let $\P$ be an online minimization problem such that for every fixed $\P$-input, there is only a finite number of valid outputs. Furthermore, let $k_1,k_2,k_3\geq 0$. We say that $\P$ is \emph{$\Sigma$-repeatable with parameters $(k_1,k_2,k_3)$} if there exists a mapping $g:I^*\rightarrow I$ with the following properties:
\begin{enumerate}[label=\textcolor{darkgray}{\sffamily\bfseries\mathversion{bold} $\Sigma$\arabic*.},leftmargin=1.75\parindent]
\item For every $\sigma^*\in I^*$, 
\begin{equation}
\ab{g(\sigma^*)}\leq \ab{\sigma^*}+k_1r,
\label{re1}
\end{equation}
where $r$ is the number of rounds in $\sigma^*$.
\item For every deterministic $\P$-algorithm $\ALG$, there is a deterministic $\Prs$-algorithm $\ALG^*$ such that for every $\sigma^*\in I^*$, 
\begin{equation}
\ALG^*(\sigma^*)\leq \ALG(g(\sigma^*))+k_2r,
\label{re2}
\end{equation}
where $r$ is the number of rounds in $\sigma^*$. 
\item For every $\sigma^*\in I^*$,
\begin{equation}
\OPT^*_{\Sigma}(\sigma^*)\geq \OPT(g(\sigma^*))-k_3r,
\label{re3}
\end{equation}
where $r$ is the number of rounds in $\sigma^*$.
\end{enumerate}
\end{definition}
If a problem $\P$ is repeatable with $k_2=k_3=0$, we say that $\P$ is \emph{strictly repeatable}. The definition of $\lor$-repeatable is identical to that of $\Sigma$-repeatable, except that $\Prs$ and $\OPT^*_\Sigma$ are replaced by $\P^*_{\lor}$ and $\OPT^*_{\lor}$. The following simple lemma is useful when working with repeatable online problems.

\begin{lemma}
Let $\P$ be a $\Sigma$-repeatable problem with parameters $(k_1,k_2, k_3)$, let $b$ be a constant, and let $\ALG$ be a $\P$-algorithm with advice. Let $J^*\subseteq I^*$ be a subset of $\Prs$-inputs. Suppose that for every $\sigma\in J^*$, the algorithm $\ALG$ reads at most $b$ bits of advice on the $\P$-input $g(\sigma^*)$. Then there exists a $\Prs$-algorithm $\ALG^*$ such that $\ALG^*$ reads exactly $b$ bits of advice on every input $\sigma^*\in J^*$ and such that $\ALG(g(\sigma^*))\geq \ALG^*(\sigma^*)-k_2r$ for every input $\sigma^*\in J^*$ where $r$ is the number of rounds in $\sigma^*$.
\label{lem:repadv}
\end{lemma}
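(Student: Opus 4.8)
The plan is to lift property $\Sigma2$ from the definition of $\Sigma$-repeatability (\cref{re}) from deterministic advice-free algorithms to the advice-reading algorithm $\ALG$. The idea is the standard one: view $\ALG$ as a family of (at most $2^b$) deterministic algorithms without advice, one per $b$-bit advice string; apply $\Sigma2$ to each member of the family; and then re-bundle the resulting $\Prs$-algorithms into a single $\Prs$-algorithm with advice, whose oracle simply re-computes, from $\sigma^*$, the advice that $\ALG$'s oracle would have used on $g(\sigma^*)$.

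Concretely, first I would, for every $\varphi\in\{0,1\}^{b}$, let $\ALG_\varphi$ be the deterministic $\P$-algorithm without advice obtained from $\ALG$ by fixing the content of the advice tape to $\varphi$ (padded by zeros beyond position $b$; the padding is irrelevant below). Since by hypothesis $\ALG$ reads at most $b$ bits on every $\P$-input of the form $g(\sigma^*)$ with $\sigma^*\in J^*$, its computation on such an input depends only on the first $b$ advice bits; hence if $\varphi$ equals the length-$b$ prefix of the advice string that $\ALG$'s oracle writes for $g(\sigma^*)$, then $\ALG_\varphi(g(\sigma^*))=\ALG(g(\sigma^*))$. Next I apply $\Sigma2$ to each deterministic algorithm $\ALG_\varphi$, obtaining a deterministic $\Prs$-algorithm $\ALG^*_\varphi$ with $\ALG^*_\varphi(\sigma^*)\leq \ALG_\varphi(g(\sigma^*))+k_2 r$ for every $\sigma^*\in I^*$, where $r$ is the number of rounds of $\sigma^*$. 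Now I define the $\Prs$-algorithm $\ALG^*$ with advice as follows: its oracle, given $\sigma^*$, computes $g(\sigma^*)\in I$ (possible since $g$ is a fixed mapping and the oracle is computationally unrestricted), determines the length-$b$ prefix $\varphi$ of the advice string that $\ALG$'s oracle writes for $g(\sigma^*)$, and writes $\varphi$ on the advice tape; the algorithm $\ALG^*$ always reads exactly these $b$ cells and then simulates the advice-free online algorithm $\ALG^*_\varphi$ on $\sigma^*$. This simulation is legitimate precisely because $\ALG^*_\varphi$ uses no advice and is therefore a genuine online algorithm, so it can be run request-by-request on $\sigma^*$. (For $\sigma^*\notin J^*$ we let the oracle do the analogous thing, possibly truncating; no performance claim is made there, but $\ALG^*$ is still well-defined on all of $I^*$.)

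Finally I would verify the two conclusions. For $\sigma^*\in J^*$ with $r$ rounds, writing $\varphi$ for the string chosen by the oracle, we get $\ALG^*(\sigma^*)=\ALG^*_\varphi(\sigma^*)\leq \ALG_\varphi(g(\sigma^*))+k_2 r=\ALG(g(\sigma^*))+k_2 r$, which rearranges to $\ALG(g(\sigma^*))\geq \ALG^*(\sigma^*)-k_2 r$; and by construction $\ALG^*$ reads exactly $b$ advice bits on every $\sigma^*\in J^*$. I do not expect a genuine mathematical obstacle here: the entire content is in doing the definitional bookkeeping correctly — namely (i) that "reads at most $b$ bits on $g(\sigma^*)$" lets one recover $\ALG$'s behavior from a $b$-bit prefix so that padding is harmless, (ii) that $\ALG^*$ can be forced to read exactly $b$ bits (just have it always consume the first $b$ tape cells), and (iii) that the oracle of $\ALG^*$ is permitted the unbounded computation needed to evaluate $g$ and to reconstruct the relevant prefix of $\ALG$'s advice. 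The mild subtlety worth stating explicitly is point (i), since on a given $g(\sigma^*)$ the algorithm $\ALG$ may read strictly fewer than $b$ bits, so several distinct $\varphi$ may induce the same $\ALG_\varphi$-behavior on that input — but this causes no problem, as any valid choice of prefix works.
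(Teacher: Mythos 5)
Your proposal is correct and takes essentially the same approach as the paper: decompose $\ALG$ into $2^b$ deterministic $\P$-algorithms (one per $b$-bit advice string), apply property $\Sigma 2$ to each to obtain advice-free $\Prs$-algorithms, and let the oracle of $\ALG^*$ write $b$ bits selecting which one to simulate. The paper's oracle writes a minimizing index $i$ with $\ALG(g(\sigma^*))=\ALG_i(g(\sigma^*))$ whereas yours reconstructs the $b$-bit prefix of $\ALG$'s own advice, but these are equivalent bookkeeping choices.
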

\begin{proof}
By definition, there exists $2^b$ deterministic algorithms, $\ALG_1,\ldots , \ALG_{2^b}$, without advice such that $\ALG(g(\sigma^*))=\min_i\ALG_i(g(\sigma^*))$ for every input $\sigma^*\in J^*$. For each $1\leq i\leq 2^b$, it follows from (\ref{re2}) in \cref{re} that there exists a deterministic $\Prs$-algorithm $\ALG_i^*$ without advice such that $\ALG_i(g(\sigma^*))\geq \ALG_i^*(\sigma^*)-k_2r$ for every $\sigma^*$. We are now ready to define $\ALG^*$ and the corresponding advice oracle: The algorithm $\ALG^*$ always reads $b$ bits of advice before the first request arrives. This is possible since $b$ is a constant. For an input $\sigma^*\in J^*$, the oracle computes an index $i$ such that $\ALG(g(\sigma^*))=\ALG_i(g(\sigma^*))$. This index $i$ is written onto the advice tape. The algorithm $\ALG^*$ learns $i$ from the advice tape and serves $\sigma^*$ using $\ALG^*_i$. 
It follows that $\ALG(g(\sigma^*))\geq \ALG^*(\sigma^*)-k_2r$ for any input $\sigma^*\in J^*$. 
\end{proof}
We remark that \cref{lem:repadv} will only be used when the length of inputs in $J^*$ is bounded. This explains why the lemma is only stated for the case where $b$ is a constant. Also, note that for inputs outside of $J^*$, the algorithm $\ALG^*$ can behave arbitrarily.
\subsection{Proof of \texorpdfstring{\cref{maint1}}{Theorem 1}}
We are now ready to prove our main theorem for $\Sigma$-repeatable online problems, \cref{maint1}. For convenience, we restate the theorem before giving the proof.
\restateone*
\begin{proof}

Let $\P$ be a $\Sigma$-repeatable online problem with parameters $(k_1, k_2, k_3)$. The lower bound will be obtained via Yao's principle (\cref{Yao}). To this end, fix $\varepsilon'>0$ and $\alpha'$ and let $\ALG$ be an arbitrary deterministic $\P$-algorithm reading $b(n)\in o(n)$ bits of advice. We need to show that there exists a probability distribution $p$ such that $\E_{p}[\ALG(\sigma)]\geq (c-\varepsilon')\E_{p}[\OPT(\sigma)]+\alpha'$. Choose $\varepsilon=\varepsilon'$ and $\alpha=\alpha'+k_2+k_3(c-\varepsilon)+1$. The reason for choosing this particular value of $\alpha$ will become clear later on. For this choice of $\varepsilon$ and $\alpha$, let $\pea$ be a probability distribution for which it holds that $\E_{\pea}[\DET(\sigma)]\geq (c-\varepsilon)\E_{\pea}[\OPT(\sigma)]+\alpha$ for every deterministic algorithm $\DET$ without advice. Without loss of generality, we assume that all inputs in the support of $\pea$ have the same initial state (\cref{lem:wlogsames} in \cref{sec:omitted} justifies this assumption).

$\P$ is $\Sigma$-repeatable, and so for any fixed input in $\supp(\pea)$ there is only a finite number of possible outputs. Since $\supp(\pea)$ is itself finite, this means that there exists a real number $M$ such that, on inputs from $\supp(\pea)$, no valid $\P$-algorithm incurs a cost larger than $M$\footnote{Here, we use that according to \cref{def}, the cost of an output cannot be $\infty$ but must be a real number.}.
Thus, from $\pea$ we obtain for each $r\in\mathbb{N}$ via \cref{techlemma} a probability distribution $\pear$ such that for any $\Prs$-algorithm $\ALG^*$ which on inputs in $\supp(\pear)$ reads at most $b^*\in\mathbb{N}$ bits of advice, it holds that $\E_{\sigma^*\sim \pear}[\ALG^*(\sigma^*)]\geq r\left((c-\varepsilon)\E_{\pea}[\OPT(\sigma)]+\alpha-2M\sqrt{b^* /r}\right)$.


Since $\supp(\pea)$ is finite, there exists a constant $\lea$ such that every $\sigma\in\supp(\pea)$ contains at most $\lea$ requests. It follows that the number of requests in any input sequence in the support of $\pear$ is at most $\lea r$. By (\ref{re1}) in \cref{re}, this implies that the number of requests in $g(\sigma^*)$ is at most $\lea r+k_1r$ for every $\sigma^*\in\supp (\pear)$. Forget for a moment that $b$ is the advice complexity of $\ALG$, and just view $b$ as a function $b:\mathbb{N}\rightarrow\mathbb{N}$. By assumption, $b(n)\in o(n)$. Since $k_1$ and $\lea $ are constants (independent of $r$), this implies that $b(\lea r+k_1r)/r\rightarrow 0$ as $r\rightarrow\infty$. Choose $r$ large enough that $2M\sqrt{\frac{b(\lea r+k_1r)}{r}} \leq 1$.

We have now fixed the value of $r$. Let $b_{r}=b(\lea r+k_1r)$. Note that $b_{r}$ is a fixed integer (since $r$ is fixed) and that the algorithm $\ALG$ will read at most $b_{r}$ bits of advice on any input $g(\sigma^*)$ where $\sigma^*\in \supp(\pear)$. Using \cref{lem:repadv}, we convert the $\P$-algorithm $\ALG$ into a $\Prs$-algorithm $\ALG^*$ such that $\ALG^*$ reads $b_{r}$ bits of advice on every input $\sigma^*\in \supp(\pear)$ and such that $\ALG(g(\sigma^*))\geq \ALG^*(\sigma^*)-k_2r$ for every $\sigma^*\in\supp(\pear)$. The proof is completed by the following calculation:

\begin{align*}
\E_{\sigma^*\sim \pear}[\ALG(g(\sigma^*))]&\geq \E_{\sigma^*\sim \pear}[\ALG^*(\sigma^*)]-k_2r\\
&\geq r\left((c-\varepsilon)\E_{\sigma\sim\pea}[\OPT(\sigma)]+\alpha-2M\sqrt{\frac{b_{r}}{r}}\right)-k_2r\tag*{\{\cref{techlemma}\}}\\
&\geq r\Big((c-\varepsilon)\E_{\sigma\sim\pea}[\OPT(\sigma)]+\alpha-1\Big)-k_2r\tag*{\{by choice of $r$\}}\\
&=(c-\varepsilon)r\E_{\sigma\sim\pea}[\OPT(\sigma)]-r(1+k_2-\alpha)\\
&=(c-\varepsilon)\E_{\sigma^*\sim \pear}[\OPT^*_{\Sigma}(\sigma^*)]-r(1+k_2-\alpha)\tag*{\{\cref{techlemma}\}}\\
&\geq (c-\varepsilon)\E_{\sigma^*\sim \pear}[\OPT(g(\sigma^*))]-r(1+k_2+k_3(c-\varepsilon)-\alpha)\tag*{\{by (\ref{re3})\}}\\
&\geq (c-\varepsilon')\E_{\sigma^*\sim \pear}[\OPT(g(\sigma^*))]+\alpha'.\tag*{\{by choice of $\alpha$\}}
\end{align*}
\end{proof}

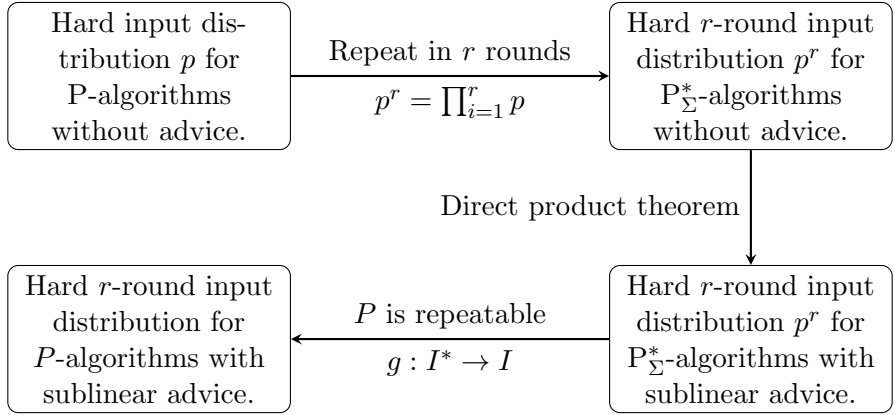
\begin{figure}
\begin{center}
\begin{tikzpicture}[node distance=2cm]

\node (start) [startstop, text width=3.5cm] {Hard input distribution $p$ for $\P$-algorithms without advice.};

\node (repeat) [startstop, text width=3.5cm, right of=start, xshift=6cm] {Hard $r$-round input distribution $p^r$~for $\Prs$-algorithms without advice.};

\node (repeatadv) [startstop, text width=3.5cm, below of=repeat, yshift=-1.5cm] {Hard $r$-round input distribution $p^r$ for $\Prs$-algorithms with sublinear advice.};

\node (startadv) [startstop, text width=3.5cm, below of=start, yshift=-1.5cm] {Hard $r$-round input distribution for $P$-algorithms with sublinear advice.};

\draw [arrow] (start) -- node[anchor=north] {$p^r=\prod _{i=1}^r p$} node[anchor=south] {Repeat in $r$ rounds}  (repeat);
\draw [arrow] (repeat) -- node[anchor=west] {}node[anchor=east]{Direct product theorem} (repeatadv);
\draw [arrow] (repeatadv) -- node[anchor=north] {$g:I^*\rightarrow I$}node[anchor=south] {$P$ is repeatable} (startadv);

\end{tikzpicture}
\end{center}
\caption{Schematic overview of our technique for obtaining advice complexity lower bounds for repeatable online problems. One starts with a hard input distribution $p$ for $P$-algorithms without advice, then repeats it in $r$ rounds to get a hard $r$-round input distribution $p^r$ for $\Prs$. Using a direct product theorem, we get that $p^r$ is also a hard input distribution for algorithms with sublinear advice. These two steps are collected in \cref{techlemma}. Finally, since $\P$ is repeatable, we can obtain a lower bound on $\P$-algorithms with sublinear advice using the reduction $g$.}
\end{figure}
\FloatBarrier

\subsection{Generalized task systems are repeatable}
We seek a simple set of conditions which are sufficient to show that an online problem is $\Sigma$-repeatable. It is not hard to show that any problem which can be modeled as a task system \cite{BLSmetric} is $\Sigma$-repeatable. However, it turns out that one can relax the definition of a task system and still maintain repeatability. Therefore, we introduce \emph{generalized task systems (GTS)}. Essentially, a GTS is a task system with (almost) no restrictions imposed on the distance function. \cref{GTSar} shows that every online problem which can be modeled as a GTS satisfying a certain finiteness condition is $\Sigma$-repeatable. 
\begin{definition}
\label{GTSdefinition}
A \emph{generalized task system (GTS)} is a minimization problem defined by a triple $(\mathcal{S}, \mathcal{T}, d)$ where $\mathcal{S}$ is a set of $N$ \emph{states}, $\mathcal{T}$ is a set of allowable tasks and $d:\mathcal{S}\times \mathcal{S}\rightarrow \mathbb{R}_{\geq 0}$ is an arbitrary function. A \emph{task} is a mapping $t:\mathcal{S}\rightarrow \mathbb{R}_{\geq 0}\cup\{\infty\}$ satisfying that there exists at least one state $s\in\mathcal{S}$ such that $t(s)\neq \infty$.

An input $\sigma=(s_0, t_1, t_2,\ldots , t_n)$ consists of an \emph{initial state} $s_0\in \mathcal{S}$ and $n$ tasks, $t_1,\ldots , t_n,$ where $t_i\in\mathcal{T}$ for each $1\leq i\leq n$. An online algorithm must compute as output a set of states $y=(s_1,\ldots , s_n)$ such that $s_i$ is computed from $(s_0,t_1, t_2,\ldots , t_i)$ and such that $t_i(s_i)\neq \infty$. When the algorithm outputs a state $s_i$ in round $i$, it incurs a cost of  $d(s_{i-1}, s_i)+t_i(s_i)$. We say that $d(s_{i-1},s_i)$ is the \emph{transition cost} and that $t_i(s_i)$ is the \emph{processing cost} of serving the task $t_i$. The total cost of the output $y$ is the sum of the costs incurred in each round.
\end{definition}

If the function $d$ is clear from the context, we will sometimes write $(\gs, \gt)$ instead of $(\gs, \gt , d)$. If for a task $t$ and a state $s$ it holds that $d(s,s)+t(s)=0$, then we say that $s$ is a \emph{haven} against the task $t$. Note that if $d(s,s)>0$, then $s$ cannot be a haven against any task.

Clearly, any task system and any metrical task system (MTS) is also a generalized task system (GTS).
Recall that for any MTS with $N$ states, there is a $O(\text{polylog~} N)$-competitive randomized algorithm~\cite{DBLP:journals/jcss/FakcharoenpholRT04}. It is easy to see that no such result can be achieved for an arbitrary GTS. However, if an online problem $\P$ can be modeled as a GTS, this gives us a lot of information about the (game-theoretical) structure of $\P$: It gives us a notion of a state, it shows that the cost of an output can be decomposed into the costs of each answer, and other similar properties. It will often be important for us to have upper (and lower) bounds on the cost that can be incurred while serving a single task.

\begin{definition}
\label{finiteltsdef}
Let $(\mathcal{S}, \mathcal{T}, d)$ be a GTS. The \emph{max-cost} of $(\gs, \gt, d)$ is the infimum over all real numbers $\Delta$ which for every $t\in\mathcal{T}$ satisfy the following condition:
\begin{align}
\text{For all $s,s'\in\mathcal{S}$, if $t(s')\neq \infty$ then $d(s,s')+t(s')\leq \Delta$.}\label{st1}
\end{align}
The \emph{min-cost} of $(\gs, \gt, d)$ is the supremum over all real numbers $\delta$ which for every $t\in\mathcal{T}$ satisfy the following conditions:
\begin{align}
&\text{For all $s,s'\in\mathcal{S}$, if $s\neq s'$ then $d(s,s')\geq \delta$.}\label{st2}\\
&\text{For all $s\in\mathcal{S}$, either $d(s,s)+t(s)=0$ or $d(s,s)+t(s)\geq \delta$.}\label{st3}
\end{align}
\end{definition}

If $(\gs, \gt)$ has max-cost $\Delta$ then it follows from (\ref{st1}) that the cost incurred by any algorithm in a single round is at most $\Delta$. Similarly, if $(\gs, \gt)$ has min-cost $\delta$, then (\ref{st2}) ensures that if an online algorithm switches to a different state, then it incurs a cost of at least $\delta$. The condition (\ref{st3}) ensures that if an online algorithm stays in the same state in some round, then it either incurs no cost at all or it incurs a cost of at least $\delta$. The min-cost will play an important role later on in \cref{sec:compon}. For now, we only need to consider generalized task systems with bounded max-cost.

\begin{lemma}
Let $\P$ be a minimization problem which can be modeled as a generalized task system with finite max-cost $\Delta$. Then $\P$ is $\Sigma$-repeatable with parameters $(0, \Delta, \Delta)$.
\label{GTSar}
\end{lemma}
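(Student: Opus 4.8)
The plan is to verify the three conditions of \cref{re} directly, taking the reduction $g$ to be concatenation. First I would fix a GTS $(\gs,\gt,d)$ modelling $\P$ with $N=\ab{\gs}<\infty$ and max-cost $\Delta<\infty$; since $\gs$ is finite, every fixed $\P$-input of length $n$ has at most $N^n$ valid outputs, so the finiteness hypothesis of \cref{re} is met. Given $\sigma^*=(\sigma_1;\dots;\sigma_r)\in I^*$ with $\sigma_i=(s_0,t^{(i)}_1,\dots,t^{(i)}_{n_i})$, I would set $g(\sigma^*)=\sigma_1\sigma_2\cdots\sigma_r$, the GTS input with initial state $s_0$ whose task sequence is the concatenation of those of the $\sigma_i$. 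This is a valid $\P$-input and has exactly $\sum_i n_i=\ab{\sigma^*}$ requests, so condition $\Sigma 1$ holds with $k_1=0$.

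For $\Sigma 2$, given a deterministic $\P$-algorithm $\ALG$, I would let $\ALG^*$ be the $\Prs$-algorithm that, in round $i$ at step $j$, outputs the state $\ALG$ produces on $g(\sigma^*)$ after reading the first $n_1+\dots+n_{i-1}+j$ tasks. By the perfect-recall property of $\Prs$, $\ALG^*$ has the information needed to compute this, and validity of $\ALG$'s output on $g(\sigma^*)$ makes $\ALG^*$'s output valid for each $\sigma_i$ (the relevant processing costs are finite). Writing $\hat s^{(i)}_j$ for these states, the cost of $\ALG^*$ on $\sigma^*$ and of $\ALG$ on $g(\sigma^*)$ agree term by term except at the first step of each round $i\ge 2$, where $\ALG^*$ pays the transition cost $d(s_0,\hat s^{(i)}_1)$ while $\ALG$ pays $d(\hat s^{(i-1)}_{n_{i-1}},\hat s^{(i)}_1)$. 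Since $t^{(i)}_1(\hat s^{(i)}_1)\neq\infty$, the max-cost property gives $d(s_0,\hat s^{(i)}_1)\le\Delta$, so the per-round discrepancy is at most $\Delta$; summing over the $r-1$ round boundaries yields $\ALG^*(\sigma^*)\le \ALG(g(\sigma^*))+\Delta r$, so $\Sigma 2$ holds with $k_2=\Delta$.

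For $\Sigma 3$, I would first observe that $\OPT^*_{\Sigma}(\sigma^*)=\sum_{i=1}^r\OPT(\sigma_i)$, because in $\Prs$ each round is served independently after a reset to $s_0$. Concatenating optimal offline solutions for the individual $\sigma_i$ gives a valid offline solution for $g(\sigma^*)$ whose cost differs from $\sum_i\OPT(\sigma_i)$ only in the first-step transition cost of each round $i\ge 2$; by the same max-cost bound, this overhead is at most $\Delta$ per round boundary. Hence $\OPT(g(\sigma^*))\le\sum_i\OPT(\sigma_i)+\Delta r=\OPT^*_{\Sigma}(\sigma^*)+\Delta r$, which rearranges to $\Sigma 3$ with $k_3=\Delta$.

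The only genuinely delicate point will be the boundary bookkeeping in $\Sigma 2$ and $\Sigma 3$: in $\Prs$ the first step of round $i$ starts from $s_0$, whereas in $\P$ it starts from wherever the preceding round's portion ended, so the per-step costs do not literally coincide and one must bound the mismatch rather than claim equality. This is exactly what finite max-cost buys us — any transition into a state on which the current task is finite costs at most $\Delta$ — so the accumulated slack over all round boundaries is at most $\Delta r$ in either direction. Everything else (validity of the simulated and concatenated outputs, the identity $\OPT^*_{\Sigma}(\sigma^*)=\sum_i\OPT(\sigma_i)$, and the length bound) should be routine.
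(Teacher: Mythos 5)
Your proof is correct and follows essentially the same route as the paper's: take $g$ to be concatenation, verify $\Sigma1$ trivially, obtain $\Sigma2$ by having $\ALG^*$ simulate $\ALG$ on $g(\sigma^*)$ and bounding the one-step discrepancy at each round boundary by the max-cost $\Delta$, and obtain $\Sigma3$ by the symmetric boundary argument for $\OPT$. The small extras you add (explicitly noting the finitely-many-outputs hypothesis from \cref{re}, writing out the per-boundary discrepancy, and using the identity $\OPT^*_\Sigma(\sigma^*)=\sum_i\OPT(\sigma_i)$) only make the same argument slightly more explicit.
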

\begin{proof}
Let $\P$ be the online problem defined by the generalized task system $(\gs, \gt, d)$. Assume that $\P$ has finite max-cost $\Delta$. We will show that $\P$ is $\Sigma$-repeatable with parameters $(0,\Delta,\Delta)$. Let $I$ (resp. $I^*$) be the set of all input sequences for $\P$ (resp. $\P^*$). An input $\sigma^*=(\sigma_1;\ldots ; \sigma_r)\in I^*$ consists of $r$ rounds. For $1\leq i\leq r$, $\sigma_i=(s_0, t_1^i,t_2^i,\ldots , t_{n_i}^i)$ is a sequence of tasks belonging to $\gt$.

  The mapping $g:I^*\rightarrow I$ maps $\sigma^*=(\sigma_1;\sigma_2;\ldots ; \sigma_r)$ into the request sequence 
\begin{equation}
\sigma=\sigma_1\sigma_2\cdots\sigma_r=\left(s_0,t_1^1,t_2^1,\ldots , t_{n_1}^1, t_1^2,\ldots , t_{n_2}^2,\ldots , t_{1}^r,\ldots , t_{n_r}^r\right).
\end{equation}
Note that $\sigma$ is obtained from $\sigma^*$ simply by revealing the same tasks in the same order but not performing the reset to the initial state $s_0$ at the beginning of each round $i>1$. It is clear that (\ref{re1}) of \cref{re} is satisfied with $k_1=0$ since $\ab{g(\sigma^*)}=\ab{\sigma^*}$. Let $\ALG$ be a deterministic $\P$-algorithm (without advice). Define $\ALG^*$ to be the following $\P^*$-algorithm: $\ALG^*$ serves the $j$th task of $\sigma^*$ by moving to the same state as $\ALG$ does when serving the $j$th task of $g(\sigma^*)$. We will now prove that so defined $\ALG^*$ satisfies (\ref{re2}). When a reset occurs at the end of a round, $\ALG^*$ will be moved from its current state to the initial state of the next round (at no cost). This means that when serving the very first task after a reset, $\ALG^*$ might incur a larger cost than $\ALG$ does. However, by assumption, the cost incurred by $\ALG^*$ for serving the first request of the next round is at most $\Delta$. For the remaining tasks in the round, $\ALG^*$ incurs exactly the same cost as $\ALG$. Thus, $\ALG^*(\sigma^*)\leq \ALG(g(\sigma^*))+k_2r$ where $k_2=\Delta$ is a constant independent of $\ALG^*$ and $r$. 

The proof of (\ref{re3}) is similar. $\OPT$ can serve the tasks of $g(\sigma^*)$ in the same way as $\OPT^*_{\Sigma}$ serves the corresponding tasks of $\sigma^*$. The only problem is that when a reset occurs, $\OPT^*_{\Sigma}$ is moved to the initial state of the new round for free, which might allow $\OPT^*_{\Sigma}$ to serve the very first task after the reset at a lower cost than $\OPT$. However, this difference in cost for serving the first request of a round can be at most $\Delta$. Thus, $\OPT^*_{\Sigma}(\sigma^*)\geq\OPT(g(\sigma^*))-k_3r$ where $k_3=\Delta$.
\end{proof}

\section{Compact online problems}
\label{sec:compon}
It is natural to ask when a lower bound on randomized online algorithm without advice is witnessed by a family of input distributions satisfying the conditions of \cref{maint1} (we formalize this question in \cref{compdef}). It is well-known that there exist problems with an infinite number of requests and answers in each round for which an optimal lower bounds on randomized algorithms cannot be obtained via Yao's principle (see e.g. \cite{BE98b}). However (as we will see in \cref{subsec:lts}), for several important online problems, one can show that a lower bound on randomized algorithms must always be witnessed by a family of input distributions which are compatible with \cref{maint1}.

\subsection{Definition of compact online problems and proof of \texorpdfstring{\cref{mmtt}}{Theorem 2}}
\label{sec:defcomp}
\begin{definition}
\label{m:compdef}
\label{compdef}
Let $\P$ be a minimization problem and let $c>1$ be a constant such that the expected competitive ratio of every randomized $\P$-algorithm is at least $c$. We say that $\P$ is \emph{compact} if for every $\varepsilon>0$ and every $\alpha\geq 0$, there exists an input distribution $\pea$ with finite support such that if $\DET$ is a deterministic online algorithm (without advice), then $\E_{\pea}[\DET(\sigma)]\geq (c-\varepsilon)\cdot \E_{\pea}[\OPT(\sigma)]+\alpha$.
\end{definition}



Informally, an online problem is compact if we can use inputs of bounded length to prove lower bounds on the competitive ratio which are essentially tight. The following two lemmas and \cref{mmtt} follows trivially by combining \cref{maint1} with \cref{compdef}. Nevertheless, we believe that \cref{mmtt} is an important conceptual implication of \cref{maint1}.

\begin{lemma}
\label{comprep1}
Let $\P$ be a compact and $\Sigma$-repeatable online problem, and let $c>1$ be a constant. Assume that every randomized algorithm without advice has a competitive ratio of at least $c$. Then the competitive ratio of every randomized algorithm reading $o(n)$ bits of advice is at least $c$.
\end{lemma}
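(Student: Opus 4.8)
The plan is to deduce \cref{comprep1} directly from \cref{maint1} (\cref{maint1} in the paper, i.e.\ the restated \cref{maint1}) together with the hypothesis that $\P$ is compact. The structure is: unpack the definition of compactness to obtain, for every $\varepsilon > 0$ and every $\alpha$, a finite-support hard input distribution; observe that this is exactly the hypothesis required by \cref{maint1}; then invoke \cref{maint1} to conclude the advice-complexity lower bound.

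\begin{proof}
Since $\P$ is compact and the competitive ratio of every randomized $\P$-algorithm without advice is at least $c$, \cref{compdef} guarantees that for every $\varepsilon > 0$ and every $\alpha \geq 0$ there exists an input distribution $\pea$ with finite support such that
\[
\E_{\pea}[\DET(\sigma)] \geq (c - \varepsilon) \cdot \E_{\pea}[\OPT(\sigma)] + \alpha
\]
for every deterministic algorithm $\DET$ without advice. The case of negative $\alpha$ is immediate: if the inequality holds for $\alpha = 0$ it holds a fortiori for any $\alpha < 0$ (the cost terms are non-negative by \cref{def}). Hence the hypothesis of \cref{maint1} is satisfied with this value of $c$: for every $\varepsilon > 0$ and every $\alpha$ there is a finite-support distribution $\pea$ witnessing the stated inequality against all advice-free deterministic algorithms. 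Applying \cref{maint1} to the $\Sigma$-repeatable problem $\P$ with this constant $c$, we conclude that every randomized $\P$-algorithm reading at most $o(n)$ bits of advice on inputs of length $n$ has competitive ratio at least $c$.
\end{proof}

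The only point requiring any care is matching the quantifier structure of \cref{compdef} with that of \cref{maint1}: compactness supplies the witnessing distributions for $\alpha \geq 0$, whereas \cref{maint1} asks for them for every $\alpha$, but monotonicity of the additive constant (a larger $\alpha$ yields a stronger inequality) makes the two requirements equivalent. I do not anticipate a genuine obstacle here; the lemma is, as the excerpt remarks, a trivial combination of \cref{maint1} and \cref{compdef}, and its interest is conceptual rather than technical.
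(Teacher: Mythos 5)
Your proof is correct and takes exactly the same route as the paper's: unpack the definition of compactness to obtain the family of finite-support hard input distributions, then invoke \cref{maint1}. Your explicit handling of the $\alpha < 0$ case (which the paper passes over silently) is a small but correct point of care; otherwise the two arguments are identical.
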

\begin{proof}
The problem $\P$ is compact, and so for every $\varepsilon>0$ and every $\alpha\geq 0$, there exists a probability distribution $\pea$ with finite support such that $\E_{\pea}[\DET(\sigma)]\geq (c-\varepsilon)E_{\pea}[\OPT(\sigma)]+\alpha$ for every deterministic algorithm $\DET$. Since $\P$ is $\Sigma$-repeatable, it follows from \cref{maint1} that every randomized algorithm with sublinear advice complexity has a competitive ratio of at least $c$.
\end{proof}

\begin{lemma}
\label{comprep2}
Let $\P$ be a compact and $\Sigma$-repeatable online problem and let $c\geq 1$ be a constant. If there exists a (possibly randomized) $c$-competitive algorithm reading $o(n)$ bits of advice then, for all $\varepsilon>0$, there exists a randomized $(c+\varepsilon)$-competitive algorithm without advice.
\end{lemma}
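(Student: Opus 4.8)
The plan is to prove the contrapositive-free direction directly: assume a (possibly randomized) $c$-competitive $\P$-algorithm $\ra$ with advice complexity $b(n)\in o(n)$, and produce, for each $\varepsilon>0$, a randomized $(c+\varepsilon)$-competitive algorithm without advice. The key observation is that \cref{comprep1} already rules out one escape route: it tells us that the competitive ratio of every randomized algorithm with $o(n)$ advice is at least the best randomized-without-advice ratio. So if we let $c^\star$ denote the infimum of competitive ratios achievable by randomized algorithms \emph{without} advice, then $\P$ being compact means $c^\star$ is (approximately) witnessed by finite-support distributions, and \cref{comprep1} applied with that value gives that no $o(n)$-advice algorithm beats $c^\star$. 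Hence from the hypothesized $c$-competitive $o(n)$-advice algorithm we get $c\geq c^\star$, and therefore for every $\varepsilon>0$ there is a randomized $(c^\star+\varepsilon)$-competitive algorithm without advice, which is in particular $(c+\varepsilon)$-competitive. That is the whole argument.

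More carefully, I would first argue that $c^\star$ is well-defined (the set of achievable randomized competitive ratios is nonempty and bounded below by $1$, so its infimum exists) and that compactness of $\P$ is exactly the statement ``for every $\varepsilon>0$ and every $\alpha\geq 0$ there is a finite-support $\pea$ with $\E_{\pea}[\DET(\sigma)]\geq(c^\star-\varepsilon)\E_{\pea}[\OPT(\sigma)]+\alpha$ for all deterministic $\DET$ without advice,'' together with the fact that $c^\star$ is a genuine lower bound for randomized algorithms without advice. Then I would invoke \cref{comprep1} with the constant $c^\star$: since $\P$ is compact and $\Sigma$-repeatable and every randomized algorithm without advice has competitive ratio at least $c^\star$, the competitive ratio of every randomized algorithm reading $o(n)$ bits of advice is at least $c^\star$. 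Applying this to the assumed $c$-competitive $o(n)$-advice algorithm yields $c\geq c^\star$.

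Finally I would close the loop using the definition of $c^\star$ as an infimum: for any $\varepsilon>0$ there exists a randomized algorithm without advice whose competitive ratio is less than $c^\star+\varepsilon\leq c+\varepsilon$, which is exactly the claimed conclusion. The additive-constant bookkeeping in \cref{cadef} is harmless here since ``$c$-competitive'' already allows an arbitrary additive $\alpha$, and $c^\star$ is defined via the same notion.

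\textbf{Main obstacle.} The only delicate point is making sure $c^\star$ is the ``right'' constant to feed into \cref{comprep1}: we need that compactness of $\P$ (as in \cref{compdef}, which is stated relative to \emph{some} constant $c$ that is a valid randomized lower bound) is available precisely for $c=c^\star$, i.e.\ that the infimal randomized competitive ratio is itself witnessed by finite-support hard distributions up to $\varepsilon$. This is really just unwinding \cref{compdef}, but it is the step where one must be careful that ``compact'' is not vacuous and that the constant appearing in \cref{comprep1} can be taken to be the tight lower bound rather than some weaker bound; everything else is a one-line chain of inequalities $c\geq c^\star$ followed by the definition of infimum.
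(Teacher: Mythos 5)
Your argument is correct and relies on the same engine as the paper's, namely \cref{comprep1}; the paper's own proof is a shorter contradiction (assume no randomized $(c+\varepsilon)$-competitive algorithm without advice exists, apply \cref{comprep1} with the constant $c+\varepsilon$, and contradict the $c$-competitiveness of the hypothesized $o(n)$-advice algorithm). The only real difference is that your detour through the infimum $c^\star$ quietly assumes the set of achievable randomized competitive ratios is nonempty so that $c^\star<\infty$; the paper's contradiction route never needs to establish this, which is why it is a couple of lines shorter.
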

\begin{proof}
Let $\ALG$ be a $c$-competitive $\P$-algorithm reading $o(n)$ bits of advice. Assume by way of contradiction that for some $\varepsilon>0$, there does not exist a randomized $(c+\varepsilon)$-competitive algorithm without advice. By \cref{comprep1}, this implies that the competitive ratio of $\ALG$ must be at least $c+\varepsilon$, a contradiction. 
\end{proof}

\setcounter{theorem}{1}
\begin{theorem}
Let $\P$ be a compact and $\Sigma$-repeatable minimization problem with at most $\dri$ inputs of length $n$, and let $c$ be a constant not depending on $n$. The following are equivalent:
\begin{enumerate}
\item For every $\varepsilon>0$, there exists a randomized $(c+\varepsilon)$-competitive $\P$-algorithm without advice.
\item For every $\varepsilon>0$, there exists a deterministic $(c+\varepsilon)$-competitive $\P$-algorithm with advice~complexity $o(n)$.
\end{enumerate}
\end{theorem}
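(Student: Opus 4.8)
The plan is to establish the two implications separately. Both follow with little effort from results already in hand, so the proof is mostly a matter of assembling them while keeping careful track of the error parameters; exactly one of the two structural hypotheses is used in each direction (the bound of $\dri$ on the number of inputs for $(1)\Rightarrow(2)$, and compactness together with $\Sigma$-repeatability for $(2)\Rightarrow(1)$).

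For $(1)\Rightarrow(2)$, I would apply the standard derandomization of online algorithms with advice~\cite{Ak-server} (see also \cref{sec:derand}): any randomized $c'$-competitive $\P$-algorithm can be converted, maintaining its competitiveness, into a deterministic $c'$-competitive $\P$-algorithm with advice complexity $O(\log\log I(n)+\log n)$, where $I(n)$ is the number of $\P$-inputs of length $n$. By hypothesis $I(n)\le\dri$, hence $\log\log I(n)=O(\log n)$ and the advice complexity of the derandomized algorithm is $O(\log n)=o(n)$. Thus, given the randomized $(c+\varepsilon)$-competitive algorithm guaranteed by~(1), we obtain a deterministic $(c+\varepsilon)$-competitive $\P$-algorithm reading $o(n)$ bits of advice, which is exactly~(2). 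This direction uses neither compactness nor $\Sigma$-repeatability, and is essentially already contained in~\cite{Ak-server}.

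For $(2)\Rightarrow(1)$, I would invoke \cref{comprep2}, which is precisely where compactness and $\Sigma$-repeatability --- and, through \cref{comprep1}, the direct product theorem \cref{maint1} --- enter. Fix $\varepsilon>0$. By~(2) applied with the parameter $\varepsilon/2$, there is a deterministic --- hence in particular ``possibly randomized'' --- $(c+\varepsilon/2)$-competitive $\P$-algorithm reading $o(n)$ bits of advice. Since $\P$ is compact and $\Sigma$-repeatable and $c+\varepsilon/2\ge 1$, applying \cref{comprep2} with the constant $c+\varepsilon/2$ in the role of its ``$c$'' and $\varepsilon/2$ in the role of its ``$\varepsilon$'' yields a randomized $(c+\varepsilon/2+\varepsilon/2)=(c+\varepsilon)$-competitive $\P$-algorithm without advice. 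As $\varepsilon>0$ was arbitrary, this gives~(1).

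The only point requiring care is the quantifier order: both statements read ``for every $\varepsilon>0$, there exists \ldots'', so in the direction $(2)\Rightarrow(1)$ one must start from the hypothesis at the smaller error $\varepsilon/2$ and spend the remaining slack on the unavoidable loss incurred by \cref{comprep2}, while in the direction $(1)\Rightarrow(2)$ no such halving is needed because derandomization is loss-free. I do not expect a genuine obstacle here: the real mathematical work --- amplifying a hard input distribution for advice-free deterministic algorithms into one that defeats sublinear advice, and showing that a near-optimal such distribution always exists for the problems in question --- has already been carried out in \cref{maint1} and in the compactness arguments of \cref{subsec:lts}, so this theorem is their conceptual consequence.
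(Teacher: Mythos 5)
Your proof is correct and follows the paper's own argument: derandomization via \cref{thm:derandmin} (due to~\cite{Ak-server}) gives $(1)\Rightarrow(2)$, and \cref{comprep2} --- which rests on compactness, $\Sigma$-repeatability, and \cref{maint1} --- gives $(2)\Rightarrow(1)$. One small imprecision worth flagging: derandomization via \cref{thm:derandmin} is \emph{not} loss-free (it costs a multiplicative $(1+\varepsilon')$ factor), so the paper's proof of $(1)\Rightarrow(2)$ also halves the slack, producing a $(c+2\varepsilon)$-competitive deterministic algorithm from a $(c+\varepsilon)$-competitive randomized one and then invoking that $\varepsilon$ is arbitrary --- the same bookkeeping you correctly perform in the reverse direction --- but this does not affect the validity of your argument.
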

\setcounter{theorem}{5}
\begin{proof}
Suppose that for every $\varepsilon>0$, there exists a randomized $(c+\varepsilon)$-competitive $\P$-algorithm without advice. Since $\P$ is a minimization problem with at most $\dri$ possible inputs of length $n$, it follows from the derandomization result of B\"ockenhauer et al.\ \cite{Ak-server} (or \cref{thm:derandmin}) that for every $\varepsilon>0$, there exists a $(c+2\varepsilon)$-competitive deterministic $\P$-algorithm with advice complexity $O(\log n)$. Since $\varepsilon$ was arbitrary, this proves the first implication of the theorem.

Suppose that for every $\varepsilon>0$, there exists a deterministic $(c+\varepsilon)$-competitive $\P$-algorithm with advice complexity $o(n)$. Then, by \cref{comprep2}, for every $\varepsilon>0$ there exists a randomized $(c+2\varepsilon)$-competitive $\P$-algorithm without advice. Since $\varepsilon$ was arbitrary, this proves the remaining implication of the theorem.
\end{proof}

\cref{mmtt} is of course only interesting in those cases where one can prove that a problem is compact without actually determining the best possible competitive ratio that a randomized algorithm can achieve. We now sketch how to prove that several important $\Sigma$-repeatable online problems are compact. Interestingly, our proof will rely on the ``upper bound part'' of Yao's principle~\cite{Yao} which is (much) less frequently used than the lower bound part. 

Fix a $\Sigma$-repeatable problem $\P$ such that for every $n$, the number of inputs of length at most $n$ is finite\footnote{Since $\P$ is $\Sigma$-repeatable (\cref{re}), this assumption automatically implies that there is only finitely many different $\P$-algorithms for inputs of length at most $n$.}. Let $c>1$ be a constant such that the expected competitive ratio of every randomized $\P$-algorithm is at least $c$. What does it mean for $\P$ to \emph{not} be compact? It means that there exists an $\varepsilon>0$ and $\alpha\geq 0$ such that the following holds: For every $n'\in\mathbb{N}$ and for every probability distribution $p$ over $\P$-inputs of length at most $n'$, there exists a deterministic algorithm $\DET$ such that $\E_{\sigma\sim p}[\DET(\sigma)]< (c-\varepsilon)\E_{\sigma\sim p}[\OPT(\sigma)]+\alpha$. Recall that, by assumption, there is only a finite number of inputs and outputs for $\P$ of length at most $n'$. This makes it possible to view the problem $\P$ restricted to inputs of length at most $n'$ as a \emph{finite} two-player zero-sum game between an algorithm and adversary. Thus (see \cref{yaofinite} for the full proof), by Yao's principle, we get that there exists a randomized $\P$-algorithm $\ra_{n'}$ such that $\E[\ra_{n'}(\sigma)]< (c-\varepsilon)\OPT(\sigma)+\alpha$ for every $\P$-input of length at most $n'$. Now, if we can somehow show that it is possible to use the algorithms $\ra_1,\ra_2,\ldots$ to obtain a single algorithm $\ra$ which is better than $c$-competitive (on all possible inputs), then the problem at hand must, by contradiction, be compact. In what follows, we will use this strategy to show that all problems which can be modeled as a task system satisfying certain requirements are compact.

\subsection{Lazy task systems are compact}
\label{subsec:lts}
Recall that we introduced generalized task systems in \cref{GTSdefinition} in order to make it easier to show that an online problem is $\Sigma$-repeatable. We will now define a certain subclass of generalized task systems known as \emph{lazy task systems (LTS)}. In a lazy task system, the distance function must satisfy the triangle inequality and it must separate states. \cref{LTZcompact} shows that these conditions on the distance function together with the finiteness of the state space imply that an online problem is compact (and $\Sigma$-repeatable). In \cref{app:modellazy}, we show how to model several well-known online problems as lazy task systems. We note that a metrical task system is always a lazy task system. 
\begin{definition}
\label{def:lts}
A \emph{lazy task system (LTS)} is a generalized task system $(\gs, \gt, d)$ where the function $d$ satisfies the following conditions for all $s,s',s''\in \gs$:
\begin{align}
&\text{$d(s,s')>0$ if $s\neq s'$.}\label{ltsreq1}\\
&\text{$d(s,s')\leq d(s,s'')+d(s'',s')$.}\label{ltsreq2}
\end{align}
We say that a LTS-algorithm $\ALG$ is \emph{lazy} if $\ALG$ does not move to another state if its current state is a haven for the current task.\end{definition}
The reason for the name \emph{lazy} task system is that given a LTS-algorithm $\ALG$, there exists a lazy LTS-algorithm $\ALG'$ such that $\ALG'(\sigma)\leq \ALG(\sigma)$ for every input $\sigma$. This follows since the function $d$ satisfies the triangle inequality \cref{ltsreq2}.
We will need this fact when proving that a finite LTS is compact.

The following lemma states that a finite LTS always has a strictly positive min-cost (and a bounded max-cost). The lemma follows immediately from the finiteness of the LTS combined with inequality (\ref{ltsreq1}).
\begin{lemma}
Let $(\mathcal{S}, \mathcal{T})$ be a LTS where $\gs$ and $\gt$ are both finite sets. There exists $\delta,\Delta\in\mathbb{R}$ such that $0<\delta\leq \Delta$ and such that $(\gs, \gt)$ has min-cost $\delta$ and max-cost $\Delta$.
\label{finiteltslem}
\end{lemma}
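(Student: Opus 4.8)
The plan is to write down explicit candidates for $\Delta$ and $\delta$ as the maximum (resp.\ minimum) of the finitely many quantities appearing in \cref{finiteltsdef}, and then check that these candidates are exactly the max-cost and the min-cost. For the max-cost, note that since $\gs$ and $\gt$ are finite, the set
\[
A=\{\,d(s,s')+t(s')\;:\;t\in\gt,\ s,s'\in\gs,\ t(s')\neq\infty\,\}
\]
is a finite set of non-negative reals. It is non-empty: fixing any $t\in\gt$, a task is finite at some state $s'$ by definition, and $d(s',s')<\infty$, so $d(s',s')+t(s')\in A$. Let $\Delta=\max A$. Every $\Delta'\geq\Delta$ satisfies condition (\ref{st1}) for all $t$, while the pair witnessing $\max A$ shows no $\Delta'<\Delta$ does; hence the infimum defining the max-cost is exactly $\Delta\in\mathbb{R}_{\geq 0}$.

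For the min-cost, first set $\delta_1=\min_{s\neq s'}d(s,s')$, a minimum over the finitely many ordered pairs of distinct states, which is strictly positive by (\ref{ltsreq1}). Next set $\delta_2=\min\{\,d(s,s)+t(s):t\in\gt,\ s\in\gs,\ d(s,s)+t(s)>0\,\}$, a minimum of a finite set of strictly positive reals (put $\delta_2=+\infty$ if this set is empty). Let $\delta=\min(\delta_1,\delta_2)>0$. By construction, every $\delta'\leq\delta$ satisfies (\ref{st2}) (since $d(s,s')\geq\delta_1\geq\delta'$ for $s\neq s'$) and (\ref{st3}) (since each nonzero $d(s,s)+t(s)$ is at least $\delta_2\geq\delta'$), whereas any $\delta''>\delta$ is ruled out by the pair or task/state achieving $\delta_1$ or $\delta_2$. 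So the supremum defining the min-cost equals $\delta$. Finally, $\delta\leq\Delta$: pick any $t\in\gt$ and a state $s_0$ with $t(s_0)\neq\infty$; then $\Delta\geq d(s,s_0)+t(s_0)\geq d(s,s_0)$ for every $s$, and taking $s\neq s_0$ gives $\Delta\geq\delta_1\geq\delta$ (in the fully degenerate situation where $\gs$ is a singleton and $\delta_2<\infty$, one instead uses $\Delta\geq d(s_0,s_0)+t(s_0)\geq\delta_2=\delta$ directly).

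The only points requiring care — and the reason the lemma is not entirely trivial despite the claim that it "follows immediately" — are: (i) verifying that the sets over which we take $\max$/$\min$ are non-empty, which is precisely where the hypotheses that every task is finite somewhere and that $\gs,\gt$ are finite enter; and (ii) keeping the two distinct extremal quantities straight, namely $\delta_1$ coming from transition costs via (\ref{st2}) and $\delta_2$ coming from processing-plus-self-loop costs via (\ref{st3}), when checking that $\delta$ simultaneously satisfies both conditions. Neither step uses anything beyond finiteness of $\gs$ and $\gt$ together with inequality (\ref{ltsreq1}), as asserted.
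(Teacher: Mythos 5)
Your proof is correct and spells out what the paper simply asserts "follows immediately from finiteness and \cref{ltsreq1}"; it is the same approach, just written out in full. Two tiny points worth flagging: the set defining $\delta_2$ can contain $\infty$ (when $t(s)=\infty$ and $d(s,s)<\infty$), so you should say "finite positive values" rather than "strictly positive reals" (the $\min$ makes this harmless); and in the extreme degenerate case $\gs=\{s_0\}$ with $d(s_0,s_0)+t(s_0)=0$ for every $t$ one gets min-cost $=+\infty$ and max-cost $=0$, so the lemma as stated fails — but the paper silently ignores this case too, so it is a flaw of the statement rather than of your argument.
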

We now turn to proving that finite lazy task systems are compact. As previously mentioned, we will use the same proof technique as Ambühl~\cite{ambuhl} and Mömke~\cite{momke}. We begin by introducing the necessary terminology.

\begin{definition}
Let $x\in\mathbb{N}$ and let $A$ be a set of deterministic algorithms $\{\ALG_1,\ALG_2,\ldots\}$ for a LTS $(\gs, \gt)$. Let $\sigma=(s_0,t_1,\ldots, t_n)$ be a $(\gs, \gt)$-input. An algorithm is \emph{caught} after serving $t_i$ if it has incurred a cost of at least $x$ for serving $t_1,\ldots , t_{i}$. After serving a request, an algorithm which has not yet been caught can choose to \emph{surrender}. An algorithm which is not caught nor has surrendered is said to be \emph{free}.
A request is \emph{bad} (with respect to $A$ and $x$) if at least one free algorithm of $A$ incurs a non-zero cost for serving the request.
\end{definition}

We remark that eventually we will only consider algorithms that never surrenders. However, allowing an algorithm to surrender makes it easier to state a sufficiently strong induction hypothesis in the proof of the following lemma.

\begin{lemma}
\label{lcaught}
For every $x\in\posr$ and $k\in\mathbb{N}$, there exists some $N_{k,x}\in\mathbb{N}$ such that for every finite LTS $(\mathcal{S}, \mathcal{T})$ with min-cost $\delta$ and $\ab{\gs}=k$, and every set $A=\{\ALG_1, \ALG_2,\ldots\}$ of lazy deterministic algorithms (each of which is initially placed in some state), no input sequence can contain more than $\delta^{-k}N_{k,x}$ bad requests with respect to $A$ and $x$.
\end{lemma}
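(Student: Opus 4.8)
The plan is to prove the statement by induction on the number of states $k$. First I would fix the finite LTS $(\mathcal{S},\mathcal{T})$ with $|\mathcal{S}|=k$ and min-cost $\delta$, together with the set $A$ of lazy deterministic algorithms, each initially placed at some state. Since the algorithms are lazy, whenever a free algorithm incurs a non-zero cost on a request $t_i$, it is because its current state is \emph{not} a haven against $t_i$; by the definition of min-cost (conditions \cref{st2,st3} in \cref{finiteltsdef}), every such non-zero cost is at least $\delta$. Hence, a free algorithm can survive (remain uncaught) through at most $\lceil x/\delta\rceil$ bad requests on which it personally pays a non-zero cost before accumulating cost $\geq x$ and becoming caught. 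This observation is what lets an induction on $k$ go through: I will argue that the ``most constrained'' state gets hit often enough that we can either catch algorithms or peel off a coordinate.

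The key step is the inductive reduction. Consider the states $s\in\mathcal{S}$ and, for a given prefix of the input, look at the free algorithms currently sitting at $s$. The idea (following Amb\"uhl and M\"omke) is to track, for each state, how many bad requests have been ``charged'' to that state, and to show that once some state $s$ has absorbed enough bad requests, we may remove $s$ from consideration: every algorithm that would ever want to sit at $s$ and stay there has by then either been caught or surrendered, so the remaining behaviour is governed by a LTS on $\mathcal{S}\setminus\{s\}$. More precisely, I would set up $N_{k,x}$ recursively: after at most some function of $N_{k-1,x}$ and $x/\delta$ bad requests, either all algorithms are caught/surrendered (so no further bad requests are possible), or there is a state that can be deleted, reducing to the $(k-1)$-state bound $N_{k-1,x}$. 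The factor $\delta^{-k}$ in the statement arises precisely from the $k$-fold nesting of the $\lceil x/\delta\rceil$-type bounds as we strip off states one at a time; the clean way to handle it is to prove the slightly rescaled statement ``at most $\delta^{-k}N_{k,x}$ bad requests'' with $N_{k,x}$ depending only on $k$ and $x$ (not on $\delta$), and check that the recursion $N_{k,x}\leftsquigarrow$ (polynomial in $N_{k-1,x}$, $x$) closes, the $\delta$-dependence being carried entirely by the explicit $\delta^{-k}$ prefactor.

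For the base case $k=1$ there is a single state; a lazy algorithm never moves, so it pays only processing costs, each $0$ or $\geq\delta$, and after at most $\lceil x/\delta\rceil$ bad requests on which it pays, it is caught. Since all free algorithms share the one state, a bad request is one on which some free algorithm pays, so the total number of bad requests is at most $\lceil x/\delta\rceil\cdot|A|$ — but $|A|$ is unbounded, so here I need the refinement that a bad request with respect to $A$ requires \emph{at least one} free algorithm to pay, and once an algorithm is caught it is no longer free; an amortization/charging argument (charge each bad request to a free algorithm paying on it, each algorithm absorbing $\leq\lceil x/\delta\rceil$ charges before being caught) does not immediately bound the count either since new... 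Actually the correct base-case bound is $\lceil x/\delta\rceil$: with one state and lazy algorithms, \emph{all} free algorithms are in that state and a request on which one pays is a request on which every free algorithm (in a haven-or-not sense) pays, hence after $\lceil x/\delta\rceil$ such requests every algorithm that was ever free is caught. I would make this precise and take $\delta^{-1}N_{1,x}\ge\lceil x/\delta\rceil$, i.e.\ $N_{1,x}=\lceil x\rceil+1$ suffices after absorbing $\delta$ into the prefactor.

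The main obstacle I anticipate is exactly the bookkeeping in the inductive step: identifying which state to delete and verifying that deleting it genuinely yields a valid LTS on the remaining states (the triangle inequality \cref{ltsreq2} and positivity \cref{ltsreq1} are inherited, so this part is fine) \emph{and} that the algorithms, restricted to never using the deleted state, behave like lazy algorithms for the smaller system with the same cost accounting up to already-absorbed amounts. Getting the induction hypothesis strong enough — tracking not just a count of bad requests but also the accumulated cost of each still-free algorithm and which states have been ``exhausted'' — is where the real care lies, and it is also where the $\delta^{-k}$ versus $N_{k,x}$ split must be chosen just right so the recursion closes without the constant blowing up uncontrollably. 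Everything else (the per-move cost lower bound $\delta$, laziness giving havens, finiteness giving a bounded max-cost $\Delta$ via \cref{finiteltslem}) is routine given the earlier setup.
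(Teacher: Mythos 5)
Your overall shape is right (induction on $k=|\gs|$, a reduction to a $(k{-}1)$-state LTS, the base case $k=1$ handled via the min-cost $\delta$), and your treatment of the base case matches the paper's. However, the inductive step as you describe it goes in the wrong direction and does not close.

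You propose to charge bad requests to states and delete a state $s$ once it has ``absorbed enough bad requests,'' on the grounds that algorithms wanting to sit at $s$ have by then been caught. This is not the mechanism that works. Absorbing many bad requests at $s$ says nothing about the remaining $k-1$ states: they may all continue to be havens for everyone indefinitely, and algorithms can repeatedly enter and leave $s$, so there is no way to argue that $s$ becomes irrelevant. The paper's argument instead removes the state that has absorbed \emph{no} cost: it partitions the input into \emph{phases}, each phase ending after $N_{k-1,x}+1$ bad tasks, and observes that any algorithm that is free and pays nothing throughout a phase (except possibly its last task) must, by laziness and condition \cref{ltsreq1}, sit motionless in a single state $s'$ that is a haven for every task in the phase up to the last one. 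It is $s'$ — the haven, not the costly state — that is removed to invoke the $(k-1)$-state induction hypothesis. The crucial technical device you are missing is the modification trick: every algorithm that would move into $s'$ is altered to instead surrender, making it a valid algorithm for the LTS on $\gs\setminus\{s'\}$; the induction hypothesis then guarantees that after $N_{k-1,x}$ bad tasks of the phase these modified algorithms are all caught or surrendered, which for the original algorithms means they have either moved to $s'$, surrendered, or been caught. Hence by the end of the phase all free algorithms are concentrated at $s'$, the last (bad) task of the phase must cost them all at least $\delta$, and so every free algorithm pays at least $\delta$ per complete phase. After $\lceil x/\delta\rceil$ complete phases no algorithm is free, giving the recursion $N_{k,x}\leq\lceil x\rceil(N_{k-1,x}+1)$ (with the $\delta^{-k}$ accounted for separately, as you correctly anticipate by normalizing to $\delta=1$). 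Without the phase decomposition and the ``remove the haven, not the hot state'' inversion, the recursion you gesture at does not materialize.
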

\begin{proof}

Since the LTS $(\gs, \gt)$ is finite, its min-cost $\delta$ is strictly positive according to \cref{finiteltslem}. We start by assuming that $\delta=1$.

Fix $x\in\posr$. The proof is by induction on $k$. Setting $N_{1,x}=\lceil x\rceil$ shows that the lemma is true for $k=1$. Indeed, since there is one unique state, all free algorithms incurs a cost of at least $1$ for each bad task (since we are assuming that $\delta=1$). Thus, after $N_{1,x}=\lceil x\rceil$ bad tasks, all algorithms which have not surrendered will have incurred a cost of at least $\lceil x\rceil \geq x$.

Assume the lemma is true for some $k\in\mathbb{N}$, and let $(\gs, \gt)$ be a LTS with $\ab{\gs}=k+1$. Fix a set of algorithms $A$. Let $\sigma$ be an input sequence and partition $\sigma$ into phases $\sigma_1,\sigma_2,\ldots ,\sigma_l$ (depending on $A$) as follows: The first phase $\sigma_1$ starts with the first task of $\sigma$. A phase ends when there have been $N_{k,x}+1$ bad tasks since the beginning of the phase. Thus, each complete phase contains exactly $N_{k,x}+1$ bad tasks (the last task is always bad) and an arbitrary number of good tasks. Note that $\sigma=\sigma_1\sigma_2\ldots \sigma_l$, where the final phase $\sigma_l$ may be incomplete. Let $1\leq j<l$. Assume that $\sigma_j=(t_1,\ldots , t_m)$. We want to show that the following claim is true.

\emph{Claim: If an algorithm in the set $A$ is free at the beginning of a complete phase $\sigma_j$, then that algorithm must either surrender or incur a cost of at least $1$ while serving $\sigma_j$.}

 Note that the last task $t_m$ of the phase $\sigma_j$ must be bad since $\sigma_j$ is a complete phase. Let $\tau=( t_1,\ldots , t_{m-1})$ be all tasks of $\sigma_j$ except the final bad task $t_m$. In order to prove the claim, assume that $\ALG'\in A$ is an algorithm which does not incur any cost for serving $\tau$ and which is still free after serving $t_{m-1}$ (if no such algorithm exists, the claim is trivially true). We want to show that $\ALG'$ must necessarily incur a cost of $1$ for serving $t_m$. Note that $\ALG'$ must have been in the same state, $s'$, while serving all of the tasks in $\tau$ (because of inequality \cref{ltsreq1}) and that $s'$ must have been a haven (\cref{GTSdefinition}) for all tasks in $\tau$. In particular, $d(s',s')=0$. In order to show that $\ALG'$ incurs a cost of at least $1$ for serving $t_{m}$, we will show that all free algorithms in $A$ must be in state $s'$ after serving $t_{m-1}$. If there exists a task $t\in\tau$ such that $t(s)=\infty$ for all $s\neq s'$, then this is obviously true. Suppose therefore that for every task $t\in\tau$, at least one state in $\gs\setminus\{s'\}$ is available.

Let $M\subseteq A$ be the set of free algorithms which were not in state $s'$ at the beginning of the phase $\sigma_j$. Let $M_{s'}\subseteq M$ be those algorithms in $M$ which at some point during the phase served a task in $\tau$ by moving to $s'$, and therefore (since all algorithms in $A$ are lazy) stayed in $s'$ while serving $\tau$.  We will now make use of the induction hypothesis. Suppose that we modify all algorithms in $M$ such that if $\ALG\in M$ moves to $s'$ when serving some task, the modified algorithm, $\widetilde{\ALG}$, instead moves to some arbitrary available state in $\mathcal{S}\setminus\{s'\}$ and then surrenders. Let $\widetilde{M}$ be the set of modified algorithms. The algorithms in $\widetilde{M}$ are well-defined and valid algorithms for the LTS $(\gs\setminus\{s'\}, \widetilde{\gt})$, where $\widetilde{\gt}$ consists of the same tasks as $\gt$ but with the state $s'$ removed. Consider now the task sequence $\widetilde{\tau}$ for $(\gs\setminus\{s'\}, \widetilde{\gt})$ induced by $\tau$. If $t_i\in \tau$ is bad with respect to $A$, this is because $t_i$ is bad for some free algorithm $\ALG\in M$ which has not yet moved into state $s'$ while serving $\tau$. Thus, the corresponding task $\widetilde{t_i}$ in $\widetilde{\tau}$ must be bad for the corresponding (free) algorithm $\widetilde{\ALG}$ in $\widetilde{M}$. It follows that $\widetilde{\tau}$ must contain at least $N_{k,x}$ tasks which are bad with respect to $\widetilde{M}$. By the induction hypothesis, after the $N_{k,x}$ bad tasks of $\widetilde{\tau}$ have been served, all of the modified algorithms in $\widetilde{M}$ have either surrendered or have incurred a cost of at least $x$ and therefore been caught. This means that for the $(\gs, \gt)$-algorithms in $M$, after the first $N_{k,x}$ bad tasks of $\tau$, all of the algorithms in $M$ have either moved into state $s'$, surrendered, or been caught. Recall that by definition, all algorithms in $A\setminus M$ have been in state $s'$ since the beginning of the current phase. We conclude that when $t_m$ arrives, all free algorithms must be in state $s'$. Therefore, in order for the final task $t_m$ of the phase to be bad, that task must force all algorithms in state $s'$ to incur a cost of at least $1$. Since all other algorithms have either surrendered or been caught, this means that all algorithms which are free after serving $\sigma_j$ must have incurred a cost of at least $1$ during the phase $\sigma_j$. This proves the claim.

It follows that every input sequence contains at most $\lceil x\rceil$ complete phases (since after this number of complete phases, all algorithms in $A$ have incurred a cost of at least $x$ or have surrendered). Hence, an input sequence can contain at most $N_{k+1,x}:=\lceil x\rceil\cdot (N_{k,x}+1)$ bad tasks with respect to $A$ and $x$. Since this recursively defined bound on $N_{k,x}$ depends only on $k$ and $x$ (and not the set of algorithms $A$ nor the LTS $(\gs, \gt)$), this finishes the proof for $\delta=1$. The general case follows by multiplying by $\delta^{-1}$ in each of the $k$ steps of the induction (and hence gives a factor of $\delta^{-k}$).
\end{proof}

We are now ready to prove an important \emph{Reset Lemma} for lazy task systems: Suppose that for every $n'$, we have a good algorithm for inputs of length at most $n'$. Then, we can split the input into epochs of bounded length. When a new epoch begins, the good online algorithm wipes its own memory and serves the epoch as if the input had only consisted of this epoch. This results in a good algorithm for inputs of arbitrary lengths. While this sound similar to the concept of a repeatable online problem, note that for repeatable online problems, it is the \emph{adversary} that can simulate a reset of the game. On the other hand, for LTS, it is the algorithm that can decide to essentially reset the game between itself and the adversary. The only difficulty in proving the reset lemma lies in the fact that an input sequence could contain arbitrarily long subsequences for which $\OPT$ pays nothing. However, using \cref{lcaught}, we can show that there is a limit as to how long the adversary can benefit from such subsequences.

\begin{lemma}[Reset Lemma]
\label{resetlemma}
Let $(\mathcal{S}, \mathcal{T})$ be a LTS where both $\mathcal{S}$ and $\mathcal{T}$ are finite. Let $c\geq 1$ be a constant and let $\alpha:\mathbb{N}\rightarrow\posr$ be a function of $n$ such that $\alpha=o(n)$. Assume that for every $n'\in\mathbb{N}$, there exists a randomized algorithm $\ra_{n'}$ (depending on $n'$) such that $\E[\ra_{n'}(\sigma)]\leq c\OPT(\sigma)+\alpha(n')$ for every input sequence $\sigma$ where $\ab{\sigma}\leq n'$. Then for every constant $\varepsilon>0$, there exists a single randomized algorithm $\ra$ such that $\ra$ is $(c+\varepsilon)$-competitive (for inputs of arbitrary length).
\end{lemma}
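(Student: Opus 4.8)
The plan is to prove the Reset Lemma by chopping an arbitrary input $\sigma$ into consecutive \emph{epochs}, each of bounded length, and running a fresh copy of $\ra_{n'}$ (for a suitable fixed $n'$) on each epoch, after wiping all memory. The key point is that the randomized algorithm is lazy, so between two epochs it pays nothing to "reset''; the difficulty is purely in bounding the total additive loss, which requires controlling how many epochs we need and comparing the sum of per-epoch optimal costs to $\OPT(\sigma)$.

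First I would fix $\varepsilon>0$ and choose the epoch length $n'$ large enough that $\alpha(n')/n' \le \varepsilon'$ for a suitably small $\varepsilon'$ (possible since $\alpha=o(n)$); also pick a cost threshold $x$ (a constant depending on $\varepsilon$, $\Delta$, $\delta$, $k$) so that Lemma~\ref{lcaught} controls the number of zero-$\OPT$ "bad'' requests. I would then partition $\sigma$ greedily into epochs $\sigma^{(1)},\sigma^{(2)},\ldots$ as follows: start a new epoch, and close it once either it reaches $n'$ requests, or the optimal cost restricted to that epoch (started from the algorithm's current state) reaches some target threshold. The laziness of the algorithm, together with the triangle inequality~\eqref{ltsreq2}, ensures that restarting $\ra_{n'}$ from scratch at each epoch boundary costs nothing extra in transition — the algorithm simply behaves on $\sigma^{(j)}$ as though the input were $\sigma^{(j)}$ alone, and the per-epoch guarantee $\E[\ra_{n'}(\sigma^{(j)})]\le c\,\OPT(\sigma^{(j)}) + \alpha(n')$ applies directly.

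The heart of the argument is relating $\sum_j \OPT(\sigma^{(j)})$ to $\OPT(\sigma)$ and bounding the number of epochs $m$. An optimal solution for $\sigma$ induces a solution for each $\sigma^{(j)}$ (starting from wherever $\OPT$ happens to be at that point), so $\sum_j \OPT(\sigma^{(j)}) \le \OPT(\sigma) + (m-1)\Delta$, the $(m-1)\Delta$ accounting for the at-most-$\Delta$ discrepancy in the starting state at each of the $m-1$ epoch boundaries. Thus $\E[\ra(\sigma)] \le \sum_j \E[\ra_{n'}(\sigma^{(j)})] \le c\,\OPT(\sigma) + c(m-1)\Delta + m\,\alpha(n')$. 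So it remains to show $m$ is not too large relative to $\OPT(\sigma)$: every \emph{complete} epoch either contributes at least its threshold amount to $\OPT(\sigma)$ (paid in that epoch), giving $m = O(\OPT(\sigma)) + O(1)$, \emph{or} it was closed because it hit $n'$ requests while $\OPT$ paid almost nothing on it — and here is where Lemma~\ref{lcaught} is invoked: the number of requests on which \emph{some} cheap near-optimal algorithm pays a nonzero cost before being "caught'' is bounded by $\delta^{-k}N_{k,x}$, so an epoch of $n'$ requests with negligible $\OPT$ cost can only occur if $n'$ is below that bound — which we preclude by taking $n'$ larger than $\delta^{-k}N_{k,x}$. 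Consequently every complete epoch forces $\OPT$ to pay at least a constant amount, so $m \le O(\OPT(\sigma)) + 1$, and absorbing the $O(1)$ into the additive constant and choosing $\varepsilon'$ small enough that $c\Delta\cdot O(1/\OPT\text{-per-epoch}) + \alpha(n')/\OPT\text{-per-epoch} < \varepsilon$ yields $(c+\varepsilon)$-competitiveness.

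**The main obstacle** I anticipate is the bookkeeping in the epoch-splitting rule: one has to define the "close the epoch'' condition carefully so that (i) the per-epoch input genuinely has length $\le n'$, (ii) each complete epoch either makes measurable progress on $\OPT(\sigma)$ or is long-but-cheap-for-$\OPT$ in a way that Lemma~\ref{lcaught} rules out beyond a bounded number of requests, and (iii) the algorithm can actually detect epoch boundaries online (it can, since the splitting rule depends only on the already-seen prefix and on quantities like the running restricted-$\OPT$ cost, which the algorithm can compute). Making the constants $x$, $n'$, and the threshold fit together — so that the accumulated $c(m-1)\Delta + m\,\alpha(n')$ error is genuinely a $\varepsilon$-fraction of $\OPT(\sigma)$ plus an additive constant — is the delicate part, but it is routine once the structure above is in place.
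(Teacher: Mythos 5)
Your high-level strategy (chop into bounded pieces, restart a fresh copy of $\ra_{n'}$, invoke Lemma~\ref{lcaught} to control zero-$\OPT$ stretches, and amortize the additive $\alpha(n')$ loss against $\OPT$'s per-piece cost) is the right one, and your comparison $\sum_j\OPT(\sigma^{(j)})\le\OPT(\sigma)+(m-1)\Delta$ is correct. However, there is a genuine gap where you assert that an epoch of $n'$ requests with negligible $\OPT$ cost ``can only occur if $n'$ is below $\delta^{-k}N_{k,x}$.'' Lemma~\ref{lcaught} bounds the number of \emph{bad} requests --- requests on which at least one free simulated algorithm incurs nonzero cost --- not the total number of requests. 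An epoch of raw length $n'$ can consist overwhelmingly of requests on which every simulated algorithm (and $\OPT$) sits at a haven and pays nothing, with only a handful of bad requests interspersed. In that case $\OPT(\sigma^{(j)})$ is $0$ or tiny, the epoch still triggers a fresh restart and a fresh $\alpha(n')$ additive loss, the number of such epochs is $\Theta(|\sigma|/n')$ and unbounded, and the accumulated $m\cdot\alpha(n')$ term destroys the competitive ratio. Choosing $n'>\delta^{-k}N_{k,x}$ does nothing to prevent this, because the $\delta^{-k}N_{k,x}$ bound is on bad requests only.

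The idea you are missing, which is the crux of the paper's proof, is the \emph{filtering} step: the restarted algorithm must \emph{ignore} any request on which none of its simulated deterministic algorithms would incur cost (serving it trivially from the haven it already occupies, without passing it to the black-box instance of $\ra_{n'}$). Then the ``length $\le n'$'' budget is charged only against passed-on (bad) requests, so Lemma~\ref{lcaught} actually applies to the quantity being bounded. Making this work also requires two further devices that your sketch omits: (i) a \emph{phase} structure defined by haven-availability (a phase ends when every state has failed to be a haven for some task in it), which both forces $\OPT$ to pay at least $\delta$ per complete phase and resets the ``caught'' status so Lemma~\ref{lcaught} can be reinvoked phase by phase; and (ii) the $\OPT$-chasing modification of the simulated algorithms, which ensures that once an algorithm has paid $\approx x$ in a phase it relocates into the still-haven set $M$ at bounded extra cost $\le k\Delta$, so that the number of passed-on requests per phase is bounded by $\delta^{-k}N_{k,x}+k$ rather than just $\delta^{-k}N_{k,x}$ up to the first time all algorithms are caught. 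Only after stacking $\Theta(\varepsilon^{-1}\delta^{-1}(\alpha(n')+\Delta))$ complete phases into an epoch --- so that $\OPT$'s cost per epoch dominates the additive error --- and choosing $n'$ so that the passed-on requests per epoch stay $\le n'$ does the bookkeeping close. These are not routine details to be supplied later; they are the substance of the proof, and without the filtering step your epoch-closing rule cannot be made to work.
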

In this section (where our goal is to prove \cref{mmtt}), we will only use the Reset Lemma when $\alpha$ is just an additive constant, that is, $\alpha=O(1)$. However, we prove a more general version of the Reset Lemma where the additive term $\alpha$ is allowed to be any function of order $o(n)$. This will be needed in \cref{sec:constructmain} where we use online learning to give a constructive version of \cref{mmtt}. Note that the lemma is obviously false if $\alpha$ was allowed to be of order $O(n)$ instead of $o(n)$, since no algorithm would ever incur a cost larger than $\alpha(n)=\Delta\cdot n=O(n)$ on inputs of length at most $n$, where $\Delta$ is the max-cost of the LTS.
\begin{proof}[Proof (of \cref{resetlemma})]
Let $\varepsilon>0$. The goal is to design an algorithm $\ra$ for inputs $\sigma$ of arbitrary length. To this end, we first fix some constants (i.e., numbers which may depends on $c$, $\alpha$, $\varepsilon$, and $(\gs, \gt)$, but not on the input $\sigma$).
Let $k=\ab{\gs}$, and let $\delta$ be the min-cost and $\Delta$ the max-cost of $(\gs, \gt)$. By \cref{finiteltslem}, we get that $0<\delta\leq \Delta<\infty$. Furthermore, for reasons that will become clear later on, let $x=\varepsilon^{-1}k\Delta$ and choose $n'\in\mathbb{N}$ large enough so that \begin{equation}\label{nle}n'\geq \lceil\varepsilon^{-1}\delta^{-1}(\alpha(n')+(c+\varepsilon)\Delta)\rceil\cdot (\delta^{-k}N_{k,x}+k).\end{equation} This is possible since, by assumption, $\alpha(n)\in o(n)$ while $\varepsilon, c, \delta, k,$ and $N_{k,x}$ are all constants with respect to the input length $n$. Thus, $\lceil\varepsilon^{-1}\delta^{-1}(\alpha(n)+(c+\varepsilon)\Delta)\rceil\cdot (\delta^{-k}N_{k,x}+k)\in o(n)$, from which the existence of an $n'$ satisfying (\ref{nle}) follows. Please note that $n'$ is simply some fixed integer, depending on the constants $\varepsilon, c, \delta,\Delta, k,N_{k,x}$ and the function $\alpha$. Thus, $n'$ and $\alpha(n')$ are both independent of the input $\sigma$ and its length.

By assumption, there exists a randomized algorithm $\ra_{n'}$ such that $\E[\ra_{n'}(\sigma)]\leq c\cdot \OPT(\sigma)+\alpha(n')$ for every input $\sigma$ of length $\ab{\sigma}\leq n'$. Recall that formally, a randomized algorithm is a probability distribution over a (possibly infinite number of) deterministic algorithms. However, since the number of possible inputs and outputs of length at most $n'$ is finite (we assumed that $\gs$ and $\gt$ were both finite), it follows that there only exist a finite number of possible deterministic algorithms for inputs of length at most $n'$. In particular, there must be a finite number of deterministic algorithms $\ALG_1,\ldots , \ALG_m$ such that for inputs of length at most $n'$, the algorithm $\ra_{n'}$ can be viewed as a probability distribution over these $m$ deterministic algorithms. For $1\leq i\leq m$, let $p(i)$ be the probability assigned to the deterministic algorithm $\ALG_i$ by $\ra_{n'}$. We will assume that the algorithms $\ALG_1,\ldots , \ALG_m$ are lazy. If not, we can simply make them lazy and thereby obtain a new randomized algorithm with exactly the same performance guarantee as $\ra$.

Consider an input sequence $\sigma$ of arbitrary length $n$. We partition $\sigma$ into phases as follows: The first phase begins with the first request. A phase ends when for every state $s\in\gs$, it holds that $s$ was not a haven for at least one task in the current phase. At any time step, we let $M$ be the set of states which have so far been havens against every task in the current phase. Note that both the phases and the set $M$ can be computed online. Also, note that immediately before a phase starts, every state belongs to $M$. During a phase, the number of states in $M$ will decrease. When the last task of the phase arrives, $M$ will become  empty. Note that any algorithm (including $\OPT$) must incur a cost of at least the min-cost $\delta$ while serving the tasks of a complete phase.

We will now modify the algorithm $\ra_{n'}$. The modified algorithm, $\ra'$, will still only be used on inputs of length at most $n'$ but will serve as an important subroutine in the final algorithm for inputs of arbitrary length. In order to define $\ra'$, we first obtain a set of deterministic algorithms $\{\ALG_1',\ldots , \ALG_m'\}$ by modifying the algorithms $\{\ALG_1,\ldots , \ALG_m\}$.
For each $i$, the modified algorithm, $\ALG_i'$, is defined as follows: At the beginning of each phase, $\ALG_i'$ behaves exactly as $\ALG_i$ (recall that an online algorithm can figure out itself when an old phase ends and a new phase begins). However, assume that a task $t$ arrives such that after serving $t$, the algorithm $\ALG_i$ will have incurred a total cost of at least $x$ in the current phase. In this case, $\ALG_i'$ serves $t$ in the same way that $\ALG_i$ does and then goes \emph{\OPT-chasing}. An algorithm which is \OPT-chasing will move to some (arbitrarily chosen) state $s$ currently in $M$. The algorithm will remain in state $s$ until a task against which $s$ is not a haven arrives. When this happens, the \OPT-chasing algorithm $\ALG_i'$ moves to another state currently in $M$. This continues until $M$ is empty, that is, until the phase ends. When the last task $t$ of the phase arrives, the algorithm $\ALG_i'$ will stop its \OPT-chasing and will serve $t$ by moving to the same state as $\ALG_i$ does when serving $t$.

Let $\widehat{\sigma}$ be a phase of $\sigma$. We claim that $\ALG_i'(\widehat{\sigma})\leq (1+\varepsilon)\ALG_i(\widehat{\sigma})$. Let $\widehat{\sigma}=\widehat{\sigma}_1\widehat{\sigma}_2$ where $\widehat{\sigma}_1$ are the tasks served by $\ALG_i'$ before it went \OPT-chasing and $\widehat{\sigma}_2$ are the tasks served by $\ALG_i'$ while \OPT-chasing. By definition, $\ALG_i'(\widehat{\sigma}_1)=\ALG_i(\widehat{\sigma}_1)$. Also, $\ALG_i'(\widehat{\sigma}_2)\leq k\Delta$. Since $\ALG_i'$ only goes \OPT-chasing if $\ALG_i(\widehat{\sigma}_1)\geq x=\varepsilon^{-1}k\Delta$, we get that $\ALG_i'(\widehat{\sigma})\leq \ALG_i(\widehat{\sigma}_1)+k\Delta\leq (1+\varepsilon)\ALG_i(\widehat{\sigma}_1)\leq (1+\varepsilon)\ALG_i(\widehat{\sigma})$.

We are now ready to define $\ra'$. The algorithm $\ra'$ randomly selects an algorithm, $\ALG '$, from $\{\ALG_1',\ldots , \ALG_m'\}$ such that $\Pr[\ALG'=\ALG_i']=p(i)$. Note that $\ALG '$ is a random variable. Also, $\ra'$ simulates (deterministically) all of the algorithms $\ALG_1',\ldots , \ALG_m'$ while serving the input. If a task $t$ arrives for which at least one of the $m$ algorithms will incur a non-zero cost (i.e., $t$ is bad), then $\ra'$ passes on the task $t$ to $\ALG '$ and serves $t$ by moving to the same state as $\ALG '$ does. $\ra'$ also updates its deterministic simulation of the $m$ algorithms. On the other hand, if a task $t$ arrives for which none of the $m$ algorithms would incur a non-zero cost (i.e., every algorithm is in a state which is a haven against $t$), then the task $t$ is \emph{ignored}. That is, $\ra'$ does not pass on the task $t$ to $\ALG '$ and it does not update its deterministic simulation of the $m$ algorithms. Instead, $\ra'$ simply serves $t$ without incurring any cost (recall that $\ra'$ must currently be in a state which is a haven against $t$) and without the help of $\ALG '$. Thus, all further requests will be handled exactly the same as if $t$ had never arrived. 

We will now bound the number of requests that can be passed on to $\ALG '$ in a single phase. By \cref{lcaught}, after $\ra'$ has passed on $\delta^{-k}N_{k,x}$ requests in a single phase to $\ALG '$, all of the $m$ algorithms $\ALG_1',\ldots , \ALG_m'$ must have incurred a cost of at least $x$. Thus, they must have gone $\OPT$-chasing. In particular, they are all currently occupying some state that belongs to $M$. Thus, from now on, every request not ignored by $\ra'$ will decrease the size of $M$ by one. It follows that after passing on at most $k$ further requests, $M$ is empty and the phase has ended. We conclude that at most $\delta^{-k}N_{k,x}+k$ requests in a single phase can be passed on to $\ALG '$.

Finally, we are ready to define the algorithm $\ra$. To this end, we divide the input sequence into \emph{epochs}. Each epoch will consist of a number of phases. The first epoch starts when the first task of $\sigma$ arrives. An epoch ends when it contains $\lceil\varepsilon^{-1}\delta^{-1}(\alpha(n')+(c+\varepsilon)\Delta)\rceil$ complete phases. Since the beginning and end of each phase can be computed online, so can the beginning and end of each epoch.  

Suppose a new epoch $\tau$ has just begun. The algorithm $\ra$ will be in some state $s$ (the state in which $\ra$ served the last task of the previous epoch, or, if $\tau$ is the very first epoch, $s$ will be the initial state of the input sequence $\sigma$). The algorithm $\ra$ will use a \emph{new instantiation} of $\ra'$ (which will initially be in state $s$) to serve $\tau$. By a new instantiation, we mean that an epoch should be served using an instantiation of $\ra'$ which is independent (and unaware) of all previous epochs of $\sigma$ and how they were served. Formally, $\ra$ will simulate $\ra'$ on the $(\gs, \gt)$-input with initial state $s$ and tasks $\tau$, and $\ra$ will serve all tasks of $\tau$ exactly as they are served by $\ra'$. When $\tau$ ends, $\ra$ throws away the current instantiation of $\ra'$ and serves the next epoch using a new instantiation of $\ra'$.

We denote by $\OPT(\tau)$ the cost incurred by $\OPT$ while serving an epoch $\tau$ when given $\sigma$ as input. Note that the expected cost incurred by $\ra$ while serving an epoch $\tau$, which we denote by $\E[\ra(\tau)]$, is exactly $\E[\ra'(\tau)]$. We will give an upper bound on $\E[\ra'(\tau)]$ in terms of $\OPT(\tau)$. Since at most $N_{k,x}+k$ tasks of each phase are passed on to $\ALG '$ by $\ra'$, this means that if we use $\ra'$ to serve the epoch $\tau$, then at most $\lceil\varepsilon^{-1}\delta^{-1}(\alpha(n')+(c+\varepsilon)\Delta)\rceil\cdot (\delta^{-k}N_{k,x}+k)$ tasks are being passed on to $\ALG '$ by the definition of an epoch. By the choice of $n'$, this means that at most $n'$ tasks are being passed on to $\ALG '$ in a single epoch. Note that by the triangle inequality, the optimal cost of serving an epoch must be at least the optimal cost of serving those requests in the epoch that were passed on to $\ALG'$. Thus, by the performance guarantee of $\ra_{n'}$ for inputs of length at most $n'$, we get that $\E[\ra'(\tau)]\leq (c+\varepsilon) \OPT_s(\tau)+\alpha(n')$ where $\OPT_s$ is an optimal offline algorithm which initially is in the same state $s$ as $\ra'$ was when $\tau$ began. With the exception of the very first epoch, it could be the case that $\OPT$ (the optimal offline algorithm for the entire sequence $\sigma$) begins the epoch $\tau$ in a state different from $s$. However, we have that $\OPT_s(\tau)\leq \OPT(\tau)+\Delta$. Thus, 
\begin{align}
\label{rtaubound}
\E[\ra(\tau)]=\E[\ra'(\tau)]\leq (c+\varepsilon)\OPT_s(\tau)+\alpha(n') \leq (c+\varepsilon)\OPT(\tau)+\alpha (n')+(c+\varepsilon)\Delta. 
\end{align}

Let $\tau_1,\ldots, \tau_l$ be all of the epochs (so that $\sigma=\tau_1\ldots \tau_l$). For $1\leq i <l$, $\tau_i$ is a complete epoch. Since $\OPT$ incurs a cost of at least $\delta$ for each complete phase, it follows that 
\begin{align}
\notag\OPT(\tau_i)&\geq \delta \lceil\varepsilon^{-1}\delta^{-1}(\alpha(n')+(c+\varepsilon)\Delta) \rceil \\
&\geq \varepsilon^{-1}(\alpha(n')+(c+\varepsilon)\Delta).\label{OPTtaubound}
\end{align}
Combining (\ref{rtaubound}) and (\ref{OPTtaubound}), we get that $\ra(\tau_i)\leq (c+2\varepsilon)\OPT(\tau_i)$ for $1\leq i<l$. Note that the last epoch, $\tau_l$, may be incomplete and so we cannot use this estimate for $\ra(\tau_l)$. However, from the bound on the complete epochs and the competitiveness of $\ra'$, it follows that

\begin{align*}
\E[\ra(\sigma)]&=\sum_{i=1}^{l} \E[\ra(\tau_i)] = \left(\sum_{i=1}^{l-1}\E[\ra(\tau_i)]\right) + \E[\ra(\tau_l)]\\
&\leq \left(\sum_{i=1}^{l-1}(c+2\varepsilon)\OPT(\tau_i)\right) + (c+\varepsilon)\OPT(\tau_l)+\alpha(n')+(c+\varepsilon)\Delta\\
&\leq \left(\sum_{i=1}^{l}(c+2\varepsilon)\OPT(\tau_i)\right) + \alpha(n')+(c+\varepsilon)\Delta\\
&= (c+2\varepsilon)\OPT(\sigma) + \alpha(n')+(c+\varepsilon)\Delta.
\end{align*}
Since $\alpha(n')+(c+\varepsilon)\Delta=O(1)$ does not depend on the input $\sigma$, this shows that $\ra$ is $(c+2\varepsilon)$-competitive for input sequences of arbitrary length. Since $\varepsilon$ was arbitrary this proves the theorem.

\end{proof}

We say that an online problem is \emph{truly finite} if there are only a finite number of possible inputs and outputs. Note that we do not just require that for some fixed input length there should be a finite number of inputs (and outputs), or that there are only finitely many types of requests. It is the total number of inputs and outputs which must be finite. In particular, a necessary condition for an online problem to be finite is that there is a fixed upper bound on the number of requests. Also, all valid algorithms for a finite online problem are (non-strictly) $1$-competitive (simply choose the additive constant to be sufficiently large). The following lemma follows directly from (the difficult direction of) Yao's principle. For completeness, we include the proof.

\begin{lemma}
\label{yaofinite}
Let $\P$ be a truly finite minimization problem and let $c>1$, $\varepsilon>0$ and $\alpha\geq 0$. Assume that, for every probability distribution $p$ over inputs to $\P$, there exists a deterministic algorithm $\ALG$ such that $\E_{\sigma\sim p}[\ALG(\sigma)] < (c-\varepsilon)\E_{\sigma\sim p}[\OPT(\sigma)]+\alpha$. Then, there exists a randomized algorithm $\ra$ such that $\E[\ra(\sigma)] < (c-\varepsilon)\OPT(\sigma)+\alpha$ for every input sequence $\sigma$.
\end{lemma}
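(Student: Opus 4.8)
The plan is to recognize this as the "hard direction" of Yao's principle and derive it from von Neumann's minimax theorem for finite zero-sum games; the hypothesis that $\P$ is truly finite is precisely what makes the relevant game finite. Since there are finitely many possible inputs, enumerate them as $\sigma_1,\dots,\sigma_m$. A deterministic algorithm is determined by the valid output it produces on each reachable request-prefix, and since there are finitely many inputs and finitely many valid outputs, there are only finitely many distinct deterministic algorithms; call them $\ALG_1,\dots,\ALG_M$. I would then set up the two-player zero-sum game in which the adversary (maximizer) picks a row $i$, the algorithm (minimizer) picks a column $j$, and the payoff is $G(i,j)=\ALG_j(\sigma_i)-(c-\varepsilon)\OPT(\sigma_i)$. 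A mixed adversary strategy is a distribution $p$ over inputs, and a mixed algorithm strategy is a distribution $q$ over $\{\ALG_1,\dots,\ALG_M\}$, i.e.\ exactly a randomized algorithm in the sense of \cref{cadef}.

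Next I would restate the hypothesis in game-theoretic language. For every distribution $p$ over inputs the hypothesis yields some $\ALG_j$ with $\E_{\sigma\sim p}[\ALG_j(\sigma)-(c-\varepsilon)\OPT(\sigma)]<\alpha$, i.e.\ $\min_j\E_{i\sim p}[G(i,j)]<\alpha$ for every $p$. The map $p\mapsto\min_j\E_{i\sim p}[G(i,j)]$ is a pointwise minimum of finitely many affine functions, hence continuous on the compact simplex of distributions over $\{\sigma_1,\dots,\sigma_m\}$, so it attains its maximum; consequently $\max_p\min_j\E_{i\sim p}[G(i,j)]<\alpha$.

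Then I apply the minimax theorem to get $\min_q\max_i\E_{j\sim q}[G(i,j)]=\max_p\min_j\E_{i\sim p}[G(i,j)]<\alpha$, where the interchange of the two expectations is just a finite sum reordering. Again $q\mapsto\max_i\E_{j\sim q}[G(i,j)]$ is continuous on a compact simplex, so the minimum is attained at some $q^\ast$ with $\max_i\E_{j\sim q^\ast}[G(i,j)]<\alpha$. Let $\ra$ be the randomized algorithm that runs $\ALG_j$ with probability $q^\ast(j)$. Then for every input $\sigma_i$,
\[
\E[\ra(\sigma_i)]-(c-\varepsilon)\OPT(\sigma_i)=\E_{j\sim q^\ast}[G(i,j)]\le\max_i\E_{j\sim q^\ast}[G(i,j)]<\alpha,
\]
which is the desired conclusion.

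The step I expect to need the most care is preserving the \emph{strict} inequality across the two passages to the limit: a priori the hypothesis only gives ``$<\alpha$'' for each individual $p$, and one must rule out the possibility that the supremum over $p$ equals $\alpha$. This is exactly where finiteness enters twice — finiteness of the input set and of the set of deterministic algorithms make both strategy simplices compact, so the maximum over $p$ and the minimum over $q$ are attained and the strict inequality is inherited. The rest is a routine invocation of the finite minimax theorem.
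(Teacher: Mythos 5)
Your proposal is correct and takes essentially the same route as the paper: set up the finite two-player zero-sum game between adversary and algorithm, invoke von Neumann's minimax theorem, and use finiteness to preserve the strict inequality. The only cosmetic difference is in how that last step is carried out — the paper extracts optimal strategies via Loomis' lemma and argues by contradiction that the value of the game is strictly positive, while you argue directly that $p\mapsto\min_j\E_{i\sim p}[G(i,j)]$ attains its maximum on the compact simplex and therefore inherits the strict bound $<\alpha$; both are equivalent ways of cashing in the finiteness hypothesis, and you correctly identified this as the only delicate point.
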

\begin{proof}

Let $\sigma_1,\sigma_2,\ldots , \sigma_m$ be an enumeration of all $\P$-inputs and let $\ALG_1,\ALG_2,\ldots , \ALG_{m'}$ be all of the valid deterministic $\P$-algorithms (there are only a finite number of such inputs and algorithms since $\P$ is truly finite). It is well-known that competitive analysis of an online problem may be viewed as a two-player zero-sum game between an online player and an adversary. Consider the function $f(i,j)=(c-\varepsilon)\OPT(\sigma_j)+\alpha-\ALG_i(\sigma_j)$. The payoff function for the online player is $f(i,j)$ while the payoff function for the adversary is $-f(i,j)$. Thus, the online player is trying to select $i$ so as to maximize $f(i,j)$ while the adversary is trying to select $j$ so as to minimize $f(i,j)$.

Since we are interested in randomized algorithms, we allow the players to use mixed strategies (a probability distribution over deterministic strategies). We denote by $\mathbf{y}=(y_1,\ldots , y_{m'})$ a mixed strategy of the online player (where $y_i$ is the probability assigned to $\ALG_i$) and by $\mathbf{x}=(x_1,\ldots , x_m)$ a mixed strategy of the adversary. Let $F(\mathbf{y},\mathbf{x})=\E_{i\sim \mathbf{y}}[\E_{j\sim \mathbf{x}}[f(i,j)]]$ be the expected payoff (for the online player). By a slight abuse of notation, we will write $F(\ALG_i, \mathbf{x})$ as a shorthand for $F(\mathbf{y},\mathbf{x})$ where $\mathbf{y}$ is the probability distribution which assigns a probability of $1$ to $\ALG_i$. Similarly for $F(\mathbf{y},\sigma_j)$.

Since $\P$ is a finite two-person zero-sum game, it follows from von Neumann's minimax theorem that
\begin{equation}
\max_{\mathbf{y}}\min_{\mathbf{x}} F(\mathbf{y},\mathbf{x})=\min_{\mathbf{x}}\max_{\mathbf{y}} F(\mathbf{y},\mathbf{x}).
\end{equation}
As a corollary (sometimes known as Loomis' lemma), it follows that there exists an optimal mixed strategy, $\mathbf{y}^{\star}$, for the online player and an optimal mixed strategy, $\mathbf{x}^{\star}$, for the adversary such that 
\begin{equation}
\label{loomislemma}
\max_i F(\ALG_i,\mathbf{x}^{\star})=\min_j F(\mathbf{y}^{\star}, \sigma_j).
\end{equation}
We claim that $\max_i F(\ALG_i, \mathbf{x}^{\star})>0$. If not, then $\max_i F(\ALG_i, \mathbf{x}^{\star})\leq 0$ which implies that $\E_{j\sim \mathbf{x}^{\star}}[\ALG_i(\sigma_j)]\geq (c-\varepsilon)\E_{j\sim \mathbf{x}^{\star}}[\OPT(\sigma_j)]+\alpha$ for every deterministic algorithm $\ALG_i$, but this contradicts the assumption of the theorem. This shows that the claim is true and thus, by \cref{loomislemma}, we have that $\min_j F(\mathbf{y}^{\star}, \sigma_j)=\max_i F(\ALG_i, \mathbf{x}^{\star})>0$. Equivalently, $\E_{i\sim \mathbf{y}^{\star}}[\ALG_i(\sigma)] < (c-\varepsilon)\OPT(\sigma)+\alpha$ for every input $\sigma$. Thus, the mixed strategy $\mathbf{y}^{\star}$ proves the existence of a randomized algorithm with the desired performance guarantee.

\end{proof}

\begin{theorem}
Let $(\mathcal{S}, \mathcal{T})$ be a LTS such that both $\mathcal{S}$ and $\mathcal{T}$ are finite. Then $(\mathcal{S}, \mathcal{T})$ is compact.
\label{LTZcompact}
\end{theorem}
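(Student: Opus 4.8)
The plan is to obtain the theorem as a short combination of two results already at our disposal: the ``upper bound'' direction of Yao's principle for finite games (\cref{yaofinite}) and the Reset Lemma (\cref{resetlemma}). Essentially all of the work has been done in \cref{resetlemma} (which in turn rests on the counting estimate of \cref{lcaught}), so the remaining argument is largely bookkeeping. First I would dispose of a degenerate case: if the optimal randomized competitive ratio of $(\mathcal{S},\mathcal{T})$ equals $1$, then \cref{compdef} asks for nothing (there is no $c>1$ lower bound to witness), so I may assume there is a constant $c>1$ such that every randomized $(\mathcal{S},\mathcal{T})$-algorithm has competitive ratio at least $c$, and take $c$ to be this optimal ratio.

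Next I would argue by contradiction. Suppose $(\mathcal{S},\mathcal{T})$ is not compact. Negating \cref{compdef} gives $\varepsilon_0>0$ and $\alpha_0\geq 0$ such that no finite-support input distribution $p$ satisfies $\E_{p}[\DET(\sigma)]\geq (c-\varepsilon_0)\E_{p}[\OPT(\sigma)]+\alpha_0$ for all deterministic $\DET$; since a witness for a smaller $\varepsilon$ is a fortiori a witness for a larger one, the same failure holds for every $\varepsilon\in(0,\varepsilon_0]$, so I fix $\varepsilon\in\bigl(0,\min\{\varepsilon_0,\,c-1\}\bigr)$, which guarantees $c-\varepsilon>1$. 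For each $n'\in\mathbb{N}$, consider $(\mathcal{S},\mathcal{T})$ restricted to inputs of length at most $n'$: since $\mathcal{S}$ and $\mathcal{T}$ are finite, this restricted problem has only finitely many inputs and finitely many outputs, i.e.\ it is truly finite. By the choice of $\varepsilon$ and $\alpha_0$, every distribution $p$ supported on these inputs admits a deterministic $\DET$ with $\E_{p}[\DET(\sigma)]<(c-\varepsilon)\E_{p}[\OPT(\sigma)]+\alpha_0$, so \cref{yaofinite} (applied with this $c$, this $\varepsilon$, and $\alpha_0$) yields a randomized $\ra_{n'}$ with $\E[\ra_{n'}(\sigma)]<(c-\varepsilon)\OPT(\sigma)+\alpha_0$ for every $\sigma$ with $\ab{\sigma}\leq n'$.

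Finally I would feed the family $\{\ra_{n'}\}_{n'\in\mathbb{N}}$ into the Reset Lemma (\cref{resetlemma}), applied with its constant ``$c$'' set to $c-\varepsilon>1$ and its additive term $\alpha(\cdot)$ taken to be the constant $\alpha_0=O(1)=o(n)$; the hypothesis that $(\mathcal{S},\mathcal{T})$ is a finite LTS is exactly our standing assumption. The lemma then produces, for every $\varepsilon'>0$, a single randomized $(c-\varepsilon+\varepsilon')$-competitive algorithm for inputs of arbitrary length; choosing $\varepsilon'=\varepsilon/2$ gives a randomized $(c-\varepsilon/2)$-competitive algorithm, contradicting the assumption that every randomized $(\mathcal{S},\mathcal{T})$-algorithm has competitive ratio at least $c$. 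This establishes compactness; as a byproduct, \cref{finiteltslem} and \cref{GTSar} show that the same hypotheses make $(\mathcal{S},\mathcal{T})$ $\Sigma$-repeatable. The one place that genuinely requires care is keeping the additive slack of order $o(n)$ when passing from bounded-length inputs to inputs of arbitrary length, and that is precisely what the Reset Lemma was built to handle --- so the main technical obstacle has already been cleared before this point.
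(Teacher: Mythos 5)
Your proof is correct and takes essentially the same route as the paper: negate compactness, view the length-$\leq n'$ restriction as a truly finite game and apply \cref{yaofinite} to produce the family $\{\ra_{n'}\}$, then invoke the Reset Lemma (\cref{resetlemma}) with constant additive term to manufacture a randomized algorithm beating the assumed lower bound $c$. Your extra step --- observing that failure at $\varepsilon_0$ propagates to all smaller $\varepsilon$, and then shrinking $\varepsilon$ below $c-1$ so that $c-\varepsilon\geq 1$ --- is a small bit of bookkeeping the paper leaves implicit but is genuinely needed for the Reset Lemma's hypothesis ``$c\geq 1$'' to be met; it tidies up a detail rather than changing the argument.
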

\begin{proof}
Assume by way of contradiction that $(\gs, \gt)$ is a finite task system which is not compact. Let $c>1$ be a constant such that the expected competitive ratio of every randomized $(\gs, \gt)$-algorithm is at least $c$ (if no such $c$ exists, then we are done). Since $(\gs, \gt)$ is not compact, there exists an $\varepsilon>0$ and an $\alpha\geq 0$ such that the following holds: For every $n'\in\mathbb{N}$ and every probability distribution $p$ over inputs of length at most $n'$, there exists a deterministic algorithm $\ALG$ such that
\begin{equation}
\E_{p}[\ALG(\sigma)]< (c-\varepsilon)\E_{p}[\OPT(\sigma)]+\alpha.
\label{finitestcomp}
\end{equation}
Recall that $\gs$ and $\gt$ are both finite by assumption. This means that for every $n'\in\mathbb{N}$, we obtain a truly finite online problem by considering the set of $(\gs,\gt)$-inputs of length at most $n'$. Thus, it follows from \cref{yaofinite} and (\ref{finitestcomp}) that for every $n'\in\mathbb{N}$, there exists a randomized algorithm $\ra_{n'}$ such that $\E[\ra_{n'}(\sigma)]<(c-\varepsilon)\OPT(\sigma)+\alpha$ for every $(\gs,\gt)$-input $\sigma$ of length at most ${n'}$. But then \cref{resetlemma} implies that there exists a single randomized $(\gs, \gt)$-algorithm which is $(c-\varepsilon /2)$-competitive for inputs of arbitrary length. This gives the desired contradiction, since the competitive ratio of every randomized $(\gs, \gt)$-algorithm was assumed to be at least $c$.
\end{proof}

\section{Applications of main results}
\label{sec:appmaint1}
We are finally ready to apply \cref{maint1,mmtt}. In particular, we will prove the lower bounds for $\Sigma$-repeatable problems stated in \cref{maintable} and, when relevant, discuss the equivalence of randomization and sublinear advice implied by \cref{mmtt}.

If a problem is $\Sigma$-repeatable but not known to be compact, we have to verify that the lower bound on randomized algorithms without advice is compatible with \cref{maint1}. This is often straightforward, even if the lower bound is not explicitly proved using Yao's principle. Indeed, most lower bound proofs starts by fixing (an upper bound on) the number of requests $n$. For this fixed $n$, a \emph{finite} number of request sequences of length at most $n$ are constructed in an adversarial manner. It is then shown that the lower bound follows by letting $n$ tend to infinity. By Yao's principle (see \cref{yaofinite}), such a lower bound can always be converted into a family of input distributions compatible with \cref{maint1}.

\subsection{Paging}
The paging problem with a cache of size $k$ (and a set of $N$ pages) can be modeled as a finite LTS and is therefore compact and $\Sigma$-repeatable. Thus, the well-known~(see \cite{Paging91, BLSmetric}) lower bound of $H_k=\Omega(\log k)$ on the competitive ratio of randomized online algorithms without advice carries over to online algorithms with sublinear advice. Previously, the best known lower bound for algorithms with sublinear advice was $5/4$ \cite{A1}.

It is shown in \cite{A1} that with $\lceil \log k \rceil$ bits of advice, a deterministic algorithm can achieve a competitive ratio of $3\log k+O(1)=O(\log k)$, and with $n$ bits of advice, a deterministic algorithm can be strictly $1$-competitive. Combining these results with our lower bound, we see that the asymptotic advice complexity of paging is now fully understood: $O(1)$ bits of advice are enough for a deterministic algorithm to be $O(\log k)$-competitive (a randomized algorithm can be $O(\log k)$-competitive without advice). However, $o(n)$ bits of advice are not enough to be better than $\Omega(\log k)$-competitive, while $n$ bits of advice suffice for a deterministic algorithm to be strictly $1$-competitive. 

\cref{pagingprec} gives a lower bound on the exact number of advice bits needed to achieve some competitive ratio $1\leq c< H_k$ for paging. 
\subsection{\texorpdfstring{$k$}{k}-server}
\label{sec:appmaint1:k}
For every $k$ and every finite metric space $(M,d)$, the $k$-server problem on $(M,d)$ can be modeled as a finite LTS and is therefore compact and $\Sigma$-repeatable. Since paging corresponds to the $k$-server problem on the uniform metric, the lower bound of $H_k=\Omega(\log k)$ for paging algorithms with advice complexity $o(n)$ also applies to the $k$-server problem. Previously, the best known lower bound for the competitive ratio of $k$-server algorithms with sublinear advice was $3/2$~\cite{Smula}. This lower bound was shown to hold even for two servers on a line, and improved an earlier lower bound of $5/4$ obtained in \cite{A1} and \cite{Asushmita}.

For the $2$-server problem on the line, a lower bound of $e/(e-1)\approx 1.58$ for randomized algorithms without advice is well-known~\cite{KMMO90}. For the general $2$-server problem, an even better lower bound of $1+e^{-1/2}\approx 1.60$ for randomized algorithms was obtained in \cite{2serverlower}. Both of these lower bounds are obtained using finite metric spaces. Thus, combining these results with \cref{maint1} improves on the previous best lower bound of $3/2$ for the $2$-server problem (on a line and in general).

In \cite{Bartalram}, Bartal et al.\ show that the competitive ratio of a randomized $k$-server algorithm on \emph{any} metric space (with more than $k$ points) is at least $\Omega(\log k / \log^2 \log k)$. By \cref{maint1}, the same lower bound holds for algorithms with advice complexity $o(n)$. No advice complexity lower bounds for the $k$-server problem on arbitrary metric spaces were previously known.

For arbitrary metric spaces, the best known $k$-server algorithm is the deterministic work-function algorithm which is $(2k-1)$-competitive~\cite{WFA}. In a recent breakthrough, Bansal et al.\ gave a randomized $O(\log^2 k\log^3 m \log\log m)$-competitive $k$-server algorithm \cite{polylogkserver}, where $m$ is the number of points in the underlying metric space. In particular, this algorithm has a better competitive ratio than the work-function algorithm if $m$ is subexponential in $k$. It has been conjectured that it is in fact possible to achieve a competitive ratio of $O(\log k)$ for any metric space (see e.g.\ \cite{DBLP:journals/csr/Koutsoupias09}). It is known that it is possible to achieve a competitive ratio of $O(\log k)$ using $2n$ bits of advice \cite{Ak-server} (see also \cite{Ak-serverRR}). By \cref{mmtt}, achieving the same competitive ratio with $o(n)$ bits of advice would confirm the randomized $k$-server conjecture (for finite metric spaces).

\subsection{Metrical task systems}\label{sec:appmaint1:mts} Any finite MTS with $N$ states can, obviously, be modeled as a finite LTS and is therefore compact and $\Sigma$-repeatable. As with the $k$-server problem, the best known lower bound for randomized algorithms is the $\Omega(\log N)$ lower bound on the uniform metric space. Together with \cref{maint1}, this implies that a MTS-algorithm with advice complexity $o(n)$ must have a competitive ratio of at least $\Omega(\log N)$. However, this advice complexity lower bound was already obtained (for another metric space) in \cite{A2} by a reduction from the generalized matching pennies problem. Thus, we do not obtain an improved advice complexity lower bound for the MTS problem on a worst-case metric space. We remark that the metric space and the type of tasks used to obtain the $\Omega(\log N)$ lower bound for algorithms with $o(n)$ bits of advice in \cite{A2} is of such a nature that the lower bound does not imply a similar lower bound for e.g.\ $k$-server or paging.

As with the $k$-server problem, combining \cref{maint1} with the results of Bartal et al.\ \cite{Bartalram} yields a lower bound of $\Omega(\log N/ \log^2\log N)$ for algorithms with sublinear advice on arbitrary metric spaces. No advice complexity lower bounds for the MTS problem on arbitrary metric spaces were previously known.

Contrary to the $k$-server problem, almost tight lower and upper bounds on the competitive ratio of randomized algorithms without advice are known. In particular, there exists a randomized $O(\log ^2 N\log\log N)$-competitive MTS-algorithm for arbitrary metric spaces~\cite{DBLP:journals/jcss/FakcharoenpholRT04}. However, there is still a small gap of $\Theta(\log N \log\log N)$ between the best known upper and lower bound for randomized algorithms without advice. \cref{mmtt} shows that finding the best possible competitive ratio of deterministic algorithms with advice complexity $o(n)$ would close this gap for finite metric spaces. We remark that it is possible to achieve a competitive ratio of $O(\log N)$ using $n$ bits of advice~\cite{A2}.
\subsection{List update}
The list update problem on a finite list can be modeled as a finite LTS and is therefore compact and $\Sigma$-repeatable. In \cite{AListupdate}, a lower bound of $15/14$ for list update algorithms with sublinear advice was shown by a reduction from binary string guessing. We improve this lower bound to $3/2$ by combining the fact that list update is $\Sigma$-repeatable with a lower bound of $3/2$ for randomized algorithms without advice due to Teia~\cite{Teia93}. 

The best known randomized list update algorithm is the $1.6$-competitive COMB algorithm~\cite{COMBlu}. By \cref{mmtt}, achieving a competitive ratio of $1.5\leq c<1.6$ with $o(n)$ bits of advice would improve on this upper bound.
\subsection{Bipartite matching}
\label{app:bipmath}
We consider the (unweighted) bipartite matching problem~\cite{optbip}. The initial state of the problem is a number $l\in\mathbb{N}$. A sequence of requests is defined in terms of a bipartite graph $G=(L,R,E)$ (where $L$ and $R$ are the parts of the bipartition of the vertex set and $E$ is the set of edges) satisfying $\ab{L}=l$. The vertices of $L$ are called the \emph{offline} vertices. A request is a vertex $v\in R$ together with all edges between $L$ and $r$. As soon as a vertex arrives, it must irrevocably be matched to an incident (and unmatched) vertex in $L$, if possible. The goal is to match as many vertices as possible. Note that the number of requests, $n$, equals the size of $R$. It is usually assumed that $n=\ab{R}=\Theta(l)$. 

Note that the size $l$ of $L$ is part of the initial state of the problem. In particular, the additive constant in the definition of the competitive ratio cannot depend on $l$ since $l$ is part of the input. This is necessary to avoid trivializing the problem, since $\OPT(\sigma)\leq l$ for all inputs $\sigma$.


Let $\P$ be the bipartite matching problem. We will show that $\P$ is $\Sigma$-repeatable. Given an input $\sigma^*=(\sigma_1;\ldots ; \sigma_r)$ for $\Prs$ where all rounds have $l$ offline vertices, we define the input $g(\sigma^*)$ for $\P$ as follows: The input $g(\sigma^*)$ will have $r\cdot l$ offline vertices\footnote{Since the number of offline vertices is part of the input, the reduction $g$ from $\Prs$ to $\P$ is allowed to change it.}. We partition these vertices into $r$ groups such that the $i$th group contains $l$ vertices. The requests of $\sigma_1$ will be revealed using the first group of offline vertices, then the requests of $\sigma_2$ will be revealed using the second group of offline vertices, and so on. It is easy to see that this reduction shows that bipartite matching is $\Sigma$-repeatable with parameters $(0,0,0)$.

Previously, it was known that $\lceil \log (n!)\rceil =\Omega(n\log n)$ bits of advice were necessary to be strictly $1$-competitive~\cite{Abipfinal}. A classical result of Karp et al.\ says that there exists a randomized $e/(e-1)$-competitive algorithm, and that no randomized algorithm (without advice) can achieve a better competitive ratio~\cite{optbip}. We get from this and \cref{maint1} that no randomized algorithm with advice complexity $o(n)$ can achieve a competitive ratio better than $e/(e-1)$, and that this is tight. Shortly after our work, it was shown by D{\"{u}}rr et al. (using a streaming algorithm from \cite{DBLP:journals/algorithmica/EggertKMS1}) that for every constant $\varepsilon>0$, there exists a $(1+\varepsilon)$-competitive algorithm using $O(n)$ bits of advice~\cite{DBLP:journals/corr/DurrKR16}.

\subsection{Reordering buffer management}
It is easy to show that the reordering buffer management problem with a buffer of size $k$ is $\Sigma$-repeatable (see \cref{lem:rbmrep}). In \cite{STOCrbm}, Adamaszek el al.\ give a lower bound of $\Omega(\log \log k)$ for randomized online algorithms without advice. Since reordering buffer management is $\Sigma$-repeatable (and since the lower bound in \cite{STOCrbm} satisfies the conditions of \cref{maint1}), we get the same lower bound for algorithms with advice complexity $o(n)$. In \cite{FOCSrbm}, a randomized $O(\log \log k)$-competitive algorithm is obtained. Using derandomization, we see that there exists a deterministic $O(\log \log k)$-competitive algorithm with $O(\log n)$ bits of advice. Finally, it is shown in \cite{Arbm} that for every $\varepsilon>0$, there is a deterministic $(1+\varepsilon)$-competitive algorithm reading $O(n)$ bits of advice and that $\Omega(n\log k)$ bits of advice is necessary (and obviously sufficient) to be $1$-competitive. Thus, the asymptotic number of advice bits needed to achieve a given asymptotic competitive ratio is now fully determined for randomized algorithms.
\subsection{Dynamic binary search trees}
\label{app:dbs}
We will consider the dynamic binary search tree problem from the perspective of competitive analysis. Several models exists, most of which are equivalent up to constant factors (see e.g.\ \cite{DBLP:journals/siamcomp/Wilber89}). The model that we use is as follows: The goal is to create a binary search tree (BST) which supports searches for keys from a static set $[N]$. An input is a sequence of keys $\sigma=(x_1,\ldots , x_n)$ such that $x_i\in [N]$ for each $1\leq i\leq n$ (so that searches are always successful). A BST-algorithm maintains a pointer. The pointer can be moved to its left child, its right child, or its parent, all at unit cost. Also, the BST-algorithm can perform a rotation on the current node and its parent at unit cost. When $x_i$ is revealed, the BST-algorithm must make a sequence of unit-cost operations such that the node containing the key $x_i$ is at some point visited. We assume that the pointer starts in the node where it ended after the search for the previous key $x_{i-1}$ in the input sequence. Finally, we assume that initializing the pointer (at the last node visited) is a unit-cost operation which must always be performed at the beginning of a search. This assumption ensures that the cost of finding a key is at least the number of nodes visited.

Demaine et al.\ has shown that there exist $O(\log\log N)$-competitive binary search trees~\cite{DBLP:journals/siamcomp/DemaineHIP07}. The well-known dynamic optimality conjecture claims that Sleater and Tarjan's splay trees \cite{DBLP:journals/jacm/SleatorT85} are in fact $O(1)$-competitive. We show in \cref{app:BSTlazy} in \cref{app:modellazy} that the dynamic binary search tree problem (as it is defined above) can be modeled as a lazy task system. This implies that if there exists a BST with advice complexity $o(n)$ where $n$ is the number of searches (and not the number of keys), then there also exists a randomized $O(1)$-competitive BST without advice. We remark that it has previously been shown to be possible to (effectively) convert a BST with advice complexity $O(1)$ into a deterministic BST with essentially the same performance guarantee~\cite{DBLP:journals/siamcomp/DemaineHIP07}.

\subsection{Sleep states management}
The $2$-sleep states management problem ($2$-SSM, see \cite{ssm,Assm}) can, obviously, be modeled as a generalized task system, and so it is $\Sigma$-repeatable. As observed in \cite{ssm}, the $2$-SSM problem is essentially a repeated ski-rental problem. Thus, it follows from \cite{KMMO90} that no randomized algorithm without advice can be better than $e/(e-1)$-competitive. Furthermore, this lower bound satisfies the conditions of \cref{maint1}. Thus, we get a lower bound of $e/(e-1)$ for algorithms with advice complexity $o(n)$ for $2$-SSM. The previous best lower bound for algorithms with sublinear advice was $7/6$ and was obtained in \cite{Assm}. We note that \cite{KMMO90} also gives a randomized algorithm with a competitive ratio of $e/(e-1)$. Furthermore, in \cite{Assm}, it is shown that it is possible for a deterministic algorithm to achieve a competitive ratio arbitrarily close to $e/(e-1)$ using $O(1)$ bits of advice. Since $n$ bits of advice trivially suffice to be optimal for $2$-SSM, the asymptotic advice complexity of $2$-SSM is now fully determined.

\subsection{One-dimensional unit clustering}
We consider the one-dimensional online unit clustering problem (see \cite{DBLP:journals/mst/ChanZ09,UnitClustering}). We claim that this problem is $\Sigma$-repeatable with parameters $(0,0,0)$. Let $\sigma^*=(\sigma_1;\ldots ; \sigma_r)$ be an input for the repeated version of the problem. For $1\leq i\leq r$, let $x^i_{\min}$ be the left-most point of $\sigma_i$, and let $x^i_{\max}$ be the right-most point of $\sigma_i$. Thus, for every point $x\in \sigma_i$, it holds that $x\in[x^i_{\min}; x^i_{\max}]$. Let $L_i=\ab{x^{i}_{\max}-x^{i}_{\min}}$. Furthermore, let $a_1=0, b_1=L_1$ and define recursively $a_i=b_{i-1}+2$ and $b_i=a_i+L_{i}$ for $2\leq i\leq r$.  We construct an input $\sigma$ for the online unit clustering problem as follows: The points of $\sigma_i$ are shifted so that they all fall within the interval $[a_i, b_i]$. This is possible since $b_i-a_i=L_i$. Furthermore, for $i\neq i'$, a cluster cannot cover points from both $[a_i, b_i]$ and $[a_{i'}; b_{i'}]$. It follows easily that this reduction satisfies the conditions of \cref{re}.

It is shown in \cite{UnitClustering} that every randomized algorithm is at least $3/2$-competitive. This is done by showing that for every $\varepsilon>0$, there exists a hard input distribution with finite support such that no deterministic algorithm can be better than $(3/2-\varepsilon)$-competitive against this distribution. Note that the number of possible requests for the unit clustering problem is uncountable. However, the lower bound of $3/2$ is proved using input distributions with finite support. Thus, combining the lower bound of \cite{UnitClustering} with \cref{maint1}, it follows that an algorithm with $o(n)$ bits of advice must be at least $3/2$-competitive. We remark that the best known online algorithm for one-dimensional unit clustering is a $5/3$-competitive deterministic algorithm~\cite{DBLP:journals/tcs/EhmsenL13}.

\subsection{Max-SAT}
The online Max-SAT problem (see \cite{Max-SAT}) is $\Sigma$-repeatable with parameters $(0,0,0)$. Indeed, given an input $\sigma^*=(\sigma_1;\ldots ; \sigma_r)$ for the repeated version of the problem, we can construct an input $\sigma$ of Max-SAT by using a fresh set of variables for each round $\sigma_i$ (that is, for $i\neq i'$, the set of variables used in round $i$ is disjoint from the set of variables used in round $i'$). It is easy to see that this reduction satisfies the conditions of \cref{re}. 

It is shown in \cite{Max-SAT} that no randomized algorithm can achieve a competitive ratio better than $3/2$. This is done using a family of hard input distributions which are compatible with \cref{maint1}. Thus, we get that an algorithm with advice complexity $o(n)$ must have a competitive ratio of at least $3/2$. We remark that a $3/2$-competitive randomized algorithm (without advice) is given in \cite{Max-SAT}.

\subsection{Metric matching}
\label{app:metmatch}
In the metric matching problem (see \cite{DBLP:journals/jal/KalyanasundaramP93,DBLP:journals/tcs/KhullerMV94}), we are given as initial state a metric space $(M,d)$ and $k$ servers $S\subseteq M$ with a pre-specified location\footnote{Note that $S$ is a multisubset of $M$ since more than one server could be placed at the same point in $M$.}. A request $r_i$ is a point $r_i\in M$ in the metric space. When a request $r_i$ arrives, an online algorithm must irrevocably match the request to a server $s\in S$ which has not previously been matched to another request. The algorithm incurs a cost of $d(r_i,s)$ for matching the request $r_i$ to the server $s$. Note in particular that the number of requests is (at most) $k$.

We will show that the metric matching problem is $\Sigma$-repeatable with parameters $(0,0,0)$. To this end, we will use that the underlying metric space and the pre-specified servers are parts of the input (and not fixed as parameters of the problem). As with bipartite matching, the problem would trivialize if the additive constant was allowed to depend on $k$ and the metric space (since the cost of $\OPT$ is never more that $k\Delta$ where $\Delta$ is the diameter). Note that this is very different from the $k$-server problem.

 Given an input $\sigma^*=(\sigma_1;\ldots ; \sigma_r)$ for the repeated version of the problem, we can construct an input $\sigma$ of metric matching as follows: Denote by $(M',d')$ and $S'\subseteq M'$ the metric space and set of servers used as initial state for $\sigma^*$ (recall that all of $\sigma_1,\ldots ,\sigma_r$ by definition are required to have the same initial state). We define a new metric space $(M,d)$ which contains $r$ subspaces, $(M_1, d_1),\ldots , (M_r,d_r)$, each of which is a copy of $(M',d')$, in such a way that the distance between two points in two different subspaces is at least $B$ for some sufficiently large $B\in\mathbb{R}$. The requests of $\sigma$ are simply the requests of $\sigma_1,\ldots , \sigma_r$ such that the requests of $\sigma_i$ are points in $(M_i,d_i)$. By choosing $B$ to be sufficiently large, we can ensure that if an algorithm ever matches a request in $\sigma_i$ to a server in a subspace other than $(M_i,d_i)$, then that algorithm incurs a cost larger than any algorithm which only matches requests to servers in their corresponding subspace. With such a choice of $B$, it is easy to see that the conditions of \cref{re} are satisfied.

For general metric spaces, there exists an $O(\log^2 k)$-competitive randomized algorithm~\cite{DBLP:journals/algorithmica/BansalBGN14}. Also, it is known that any randomized algorithm must be $\Omega(\log k)$-competitive~\cite{DBLP:conf/soda/MeyersonNP06}. Note that $\Omega(\log k)$ is not constant with respect to the input length. However, together with \cref{maint1}, it does imply that no algorithm with sublinear advice can be $O(1)$-competitive.

From the discussion at the end of \cref{sec:defcomp}, it is easy to see that the metric matching problem is compact (on finite metric spaces). Indeed, in this problem, there are $k$ servers placed in a metric space. Each server can be matched to at most one request (and vice versa). But this means that for every fixed metric space and fixed set of $k$ servers, the length of the input never exceeds $k$. This allow (reusing the notation from the discussion) us to construct a single algorithm $\ra$ which on inputs with $k$ servers run the appropriate algorithm $\ra_k$. 

An interesting special case is the line metric. An $O(\log k)$-competitive randomized algorithm is known~\cite{DBLP:conf/icalp/GuptaL12}, but even for deterministic algorithms, the best lower bound is only 9.001~\cite{DBLP:journals/tcs/FuchsHK05}. Therefore, it has been conjectured that there exists an $O(1)$-competitive randomized algorithm (see e.g.\ \cite{DBLP:conf/dagstuhl/KalyanasundaramP96}). If instead of the real line we consider the problem on finite metric discretizations of the line, then the metric matching problem becomes compact. Thus, it follows from \cref{mmtt} that this conjecture is equivalent to the conjecture that there exists a deterministic $O(1)$-competitive algorithm with sublinear advice for the line metric.

\refstepcounter{subsection}
\subsection*{\thesubsection \quad Examples of non-repeatable online problems}

Some examples of classical online problems which are provably not $\Sigma$-repeatable includes bin packing (see however \cref{subsec:betterbp}) and makespan minimization (on identical machines). For both of these problems, it is has been shown that sublinear advice allows an algorithm to achieve a competitive ratio which is provably better than the competitive ratio of any randomized online algorithm without advice~\cite{Abp,Amakespan}. Since the lower bounds for algorithms without advice satisfy the conditions of \cref{maint1}, this proves that the problems are not $\Sigma$-repeatable.

\section{Constructive version of \texorpdfstring{\cref{mmtt}}{Theorem 2} via online learning}
\label{sec:constructmain}
As mentioned, the proof of \cref{mmtt} is non-constructive. It does not show how to obtain the randomized algorithm given an algorithm with sublinear advice. In this section, we show how to easily obtain a constructive version of an important special case of \cref{mmtt} by combining a variant of the Weighted Majority Algorithm from online learning with the Reset Lemma (\cref{resetlemma}). Recall that \cref{mmtt} holds for all compact and $\Sigma$-repeatable problems. However, the most interesting class of problems known to be compact are those that can be formulated as a (finite) LTS. We will show how to obtain an algorithmic version of \cref{mmtt} for lazy task systems. 
\begin{theorem}
\label{conmainthm}
Let $\P$ be a minimization problem which can be modeled as a finite lazy task system, let $c\geq 1$ be a constant and let $\ALG$ be a deterministic $c$-competitive $\P$-algorithm with advice complexity $o(n)$. Then, for every $\varepsilon>0$, there exists a randomized $(c+\varepsilon)$-competitive algorithm without advice.
\end{theorem}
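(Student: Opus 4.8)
The plan is to mimic the structure of the proof of the Reset Lemma (\cref{resetlemma}), but to replace the black-box appeal to Yao's principle (which only guarantees the \emph{existence} of a good randomized algorithm $\ra_{n'}$ for bounded-length inputs) with an \emph{explicit} construction of such an algorithm via a variant of the Randomized Weighted Majority algorithm of Blum and Burch \cite{BBmts,WMA}. Concretely, fix $\varepsilon>0$ and write $b(n)\in o(n)$ for the advice complexity of $\ALG$. For each fixed input length $n'$, convert $\ALG$ into the $m=2^{b(n')}$ deterministic experts $\ALG_1,\dots,\ALG_m$ (one per advice string) with the property that $\ALG(\sigma)=\min_i \ALG_i(\sigma)$ for all $\sigma$ of length at most $n'$; since $\ALG$ is $c$-competitive, $\min_i \ALG_i(\sigma)\le c\cdot\OPT(\sigma)+O(1)$ for such $\sigma$. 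Now feed these $m$ experts into the Blum--Burch combiner \cref{bbeq}, which produces a randomized algorithm $\ra_{n'}$ with $\E[\ra_{n'}(\sigma)]\le (1+2\varepsilon')\min_i\ALG_i(\sigma)+(\tfrac{7}{6}+\tfrac{1}{\varepsilon'})\Delta\ln m$ for every input, where $\Delta$ is the max-cost of the LTS. Choosing $\varepsilon'$ small enough in terms of $\varepsilon$ and $c$, this gives $\E[\ra_{n'}(\sigma)]\le (c+\varepsilon/2)\OPT(\sigma)+\alpha(n')$ for all $\sigma$ with $\ab{\sigma}\le n'$, where $\alpha(n')=O(\Delta\ln m)=O(b(n'))=o(n')$. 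This is exactly the hypothesis required by the general form of the Reset Lemma (\cref{resetlemma}), which we proved for additive terms of order $o(n)$ precisely for this purpose.

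Having obtained the family $\{\ra_{n'}\}_{n'\in\mathbb{N}}$ with the $o(n')$ additive slack, I would then invoke \cref{resetlemma} directly: it yields a single randomized algorithm $\ra$ that is $(c+\varepsilon)$-competitive on inputs of arbitrary length. The one thing to check is that the whole pipeline is genuinely constructive, i.e.\ that $\ra$ can be described as an algorithm that (given its random bits) runs in the model without access to advice. This is clear: the Blum--Burch combiner is an explicit online learning procedure, the conversion of $\ALG$ into $2^{b(n')}$ experts is explicit (enumerate advice strings and simulate $\ALG$), and the Reset Lemma's construction (partition into phases and epochs, the $\OPT$-chasing modification of each expert, the fresh instantiation of $\ra'$ per epoch) is itself a concrete online procedure — all of its parameters ($k$, $\delta$, $\Delta$, $N_{k,x}$, $x$, $n'$) are constants depending only on $(\gs,\gt)$, $c$, $\varepsilon$, and the function $\alpha$, which is now known explicitly since $\alpha(n')$ is an explicit function of $b(n')$.

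The main obstacle is a subtle mismatch in what "length" means. In \cref{resetlemma} the algorithm $\ra_{n'}$ is required to be good for inputs of length at most $n'$, and the proof there passes at most $n'$ \emph{bad} requests per epoch to the subroutine; but the Blum--Burch guarantee as stated in \cref{bbeq} holds for inputs of \emph{any} length while using $m=2^{b(n')}$ experts, so I must make sure I instantiate the combiner with the right number of experts relative to the length bound actually enforced inside the Reset Lemma's epoch/phase machinery. Since the Reset Lemma is explicitly designed (and proved) so that each epoch forwards at most $n'$ requests to the subroutine $\ra'$, and $\ra'$ is in turn built from $\ra_{n'}$, instantiating Blum--Burch with $m=2^{b(n')}$ experts is consistent, and the additive error $\alpha(n')=O(b(n'))$ is indeed $o(n')$; I would spell this dependency out carefully. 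A secondary point worth a sentence is that the experts handed to the combiner should be made lazy first (as in the Reset Lemma), which only helps and does not affect the competitiveness guarantee. Once these bookkeeping points are settled, the theorem follows by feeding the explicit $\{\ra_{n'}\}$ into \cref{resetlemma}.
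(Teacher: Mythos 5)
Your proposal is correct and follows essentially the same route as the paper's proof: convert $\ALG$ into $m=2^{b(n')}$ experts per length bound $n'$, combine them with the Blum--Burch variant of Hedge to obtain $\ra_{n'}$ with additive slack $\alpha(n')=O(\Delta\, b(n'))=o(n')$, and then invoke the Reset Lemma (\cref{resetlemma}) with its $o(n)$ additive tolerance. The "length mismatch" you flag is not actually an obstacle --- the Reset Lemma only needs the guarantee of $\ra_{n'}$ on inputs of length at most $n'$, which Blum--Burch provides with $m=2^{b(n')}$ experts --- and the paper does not pause on it; otherwise your bookkeeping matches the paper's choices $\varepsilon_1=\varepsilon/(4c)$, $\varepsilon_2=\varepsilon/2$.
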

\begin{proof}
By assumption, $\P$ can be modeled as a finite LTS $(\gs, \gt)$. Since $(\gs, \gt)$ is finite, it has a finite max-cost $\Delta\in\mathbb{R}$. Let $\varepsilon_1>0$ be arbitrary and let $b$ be the advice complexity of the $c$-competitive $\P$-algorithm $\ALG$. We want to apply the Reset Lemma. To this end, let $n'\in\mathbb{N}$. Note that for inputs of length at most $n'$, the algorithm $\ALG$ can be converted into $m=2^{b(n')}$ deterministic algorithms, $\ALG_1,\ldots , \ALG_{m}$, without advice. Following Blum and Burch~\cite{BBmts} (see also \cref{sec:related}), we use the algorithm Hedge~\cite{WMAfs} to combine the $m$ algorithms into a single randomized algorithm, $\ra_{n'}$, such that for every input $\sigma$ of length at most $n'$, it holds that (cf. \cref{bbeq}):
\begin{align*}
\E[\ra_{n'}(\sigma)]&=(1+2\varepsilon_1)\cdot\min_{1\leq i\leq m}\ALG_i(\sigma)+\left(\frac{7}{6}+\frac{1}{\varepsilon_1} \right)\Delta\ln m\\
&=(1+2\varepsilon_1)\cdot \ALG(\sigma) + \left(\frac{7}{6}+\frac{1}{\varepsilon_1} \right)\Delta\; b(n')\ln (2).\\
&\leq (1+2\varepsilon_1)\cdot (c\cdot \OPT(\sigma)+\alpha_0)+\left(\frac{7}{6}+\frac{1}{\varepsilon_1} \right)\Delta\; b(n')\ln (2)\\
&=(1+2\varepsilon_1)\cdot (c\cdot\OPT(\sigma))+(1+2\varepsilon_1)\alpha_0+\left(\frac{7}{6}+\frac{1}{\varepsilon_1} \right)\Delta\; b(n')\ln (2).
\end{align*} 
Define $\alpha:\mathbb{N}\rightarrow\posr$ as follows:
\begin{equation} 
\alpha(n)=(1+2\varepsilon_1)\alpha_0+\left(\frac{7}{6}+\frac{1}{\varepsilon_1} \right)\Delta\; b(n)\ln (2).
\end{equation}
By assumption, $b(n)\in o(n)$. Furthermore, $\varepsilon_1, \alpha_0$, and $\Delta$ are constants (with respect to $n$). Thus, $\alpha(n)\in o(n)$. 

We now have that for each $n'\in\mathbb{N}$, there exists a randomized algorithm $\ra_{n'}$ such that $\E[\ra_{n'}(\sigma)]\leq (1+2\varepsilon_1)c\cdot \OPT(\sigma)+\alpha(n')$ for every input $\sigma$ of length $\ab{\sigma}\leq n'$. Thus, since $\alpha(n)\in o(n)$, the Reset Lemma (\cref{resetlemma}) shows that for every $\varepsilon_2>0$, there exists a randomized algorithm $\ra$ without advice such that $\ra$ is $((1+2\varepsilon_1)c+\varepsilon_2)$-competitive on inputs of arbitrary length. Recall that $c$ is a constant. Choosing $\varepsilon_1=\varepsilon/(4c)$ and $\varepsilon_2=\varepsilon/2$ yields a randomized $(c+\varepsilon)$-competitive algorithm without advice.

\end{proof}

\section{Repeated matrix games}
One of the most successful techniques for proving advice complexity lower bounds has been via reductions from the string guessing problem \cite{A2, sg}. In this section, we show that string guessing is a special case of a more general online problem where the algorithm and the adversary repeatedly play a zero-sum game. Applications to bin packing and paging are given.

\label{sec:repmat}
We first fix the notation. Let $q\geq 2$ and let $[q]=\{1,2,\ldots , q\}$. Let $A\in\posr^{q\times q}$ be a quadratic $q\times q$ matrix with non-negative real entries. Such a matrix defines a finite two-player zero-sum game: Player I, the ``row-player'', has $q$ pure strategies, one for each row of $A$. Player II, the ``column-player'', has $q$ pure strategies, one for each column of $A$. Simultaneously, Player I chooses a row $x\in [q]$ and Player II chooses a column $y\in [q]$. The payoff function for player I is $A(x,y)$ and the payoff function for player II is $-A(x,y)$. If player I uses the mixed strategy\footnote{A mixed strategy is a probability distribution over rows (or columns).} $\mu:[q]\rightarrow[0,1]$ and player II uses the mixed strategy $\nu:[q]\rightarrow [0,1]$ then the expected payoff of player I is defined as $$A(\mu,\nu)=\mathop{\E}_{\substack{x\sim \mathbf{\mu}\\ y\sim\mathbf{\nu}}}\left[ A(x,y)\right]=\sum_{x=1}^q \sum_{y=1}^q A(x, y)\mu(x)\nu(y).$$ Since the game defined by $A$ is a finite two-player zero-sum game, it follows from von Neumann's minimax theorem that it has a \emph{value} $V\geq 0$ such that
\begin{equation}
V=\max_{\mu}\min_{\nu} A(\mu,\nu) = \min_{\nu}\max_{\mu} A(\mu,\nu).
\label{nmpre}
\end{equation}
Here, the $\max$ is over all mixed strategies $\mu$ for player I, and the $\min$ is over all mixed strategies $\nu$ for player II. The value $V$ (along with optimal strategies) is efficiently computable using linear programming. As a corollary to (\ref{nmpre}), sometimes known as Loomis' lemma, it follows that there exists an optimal mixed strategy, $\mu^{\star}$, for player I and an optimal mixed strategy, $\nu^{\star}$, for player II such that $V=\max_{x\in[q]} A(x,\nu^{\star})=\min_{y\in[q]} F(\mu^{\star}, y)$.

The game defined above is sometimes called a \emph{one-shot} game. Given such a one-shot game, one may define a \emph{repeated game} where the two players repeatedly play the one-shot game. We will now define an online version of this repeated game.

\begin{definition}
Let $q\in\mathbb{N}$. A \emph{repeated matrix game (RMG)} is an online problem defined by a cost matrix $A\in\posr^{q\times q}$. The input $\sigma=( n, x_1, \ldots , x_n )$ consists of an integer $n$ and a sequence of $n$ characters from $[q]$. Round $1\leq i\leq n$ of the game proceeds according to the following rules:
\begin{enumerate}
\item If $i=1$, the algorithm learns $n$. If $i>1$, the algorithm learns the input character $x_{i-1}$.
\item The algorithm answers $y_i=f_i(n,x_1,\ldots , x_{i-1})\in [q]$, where $f_i$ is a function defined by the algorithm.
\end{enumerate}
The cost of the output $\gamma=y_1\ldots y_n$ computed by the algorithm is $\sum_{i=1}^nA(x_i, y_i)$.
\end{definition}
Note that in a repeated matrix game, the adversary is the row-player and the algorithm is the column-player. We will not consider the competitive ratio of a RMG-algorithm but instead the total cost incurred by the algorithm. Also, following previous work~\cite{A2, sg, AOC} on string guessing problems, we slightly abuse notation by using $n$ to denote the number of input characters and not the number of requests, which is $n+1$ (in the standard model of online problems, it is not possible to avoid this extra round since the $i$th input character must be revealed strictly later than when the algorithm has to output its $i$th character). Also, note that RMG does not have any initial state ($n$ is not the initial state but a request that must be answered by the algorithm).

For a RMG with cost matrix $A$, there exists a deterministic algorithm which incurs a cost of at most $\min_{y\in [q]}\max_{x\in [q]} A(x,y)$ in each round by playing according to an optimal pure strategy for the column player. Clearly, no deterministic algorithm (without advice) can do better than this. On the other hand, a randomized algorithm can ensure an expected cost of at most $V$ in each round by playing according to an optimal mixed strategy for the column player. A small amount of advice suffices to derandomize this strategy.
\begin{theorem}
\label{RMGalg}
Let $V$ be the value of the (one-shot) two-player zero-sum game defined by the matrix $A\in\posr^{q\times q}$. Then, there exists a randomized algorithm $\ra$ without advice for the RMG with cost matrix $A$ such that $\E[\ra(\sigma)]\leq Vn$ for every input $\sigma$. Furthermore, there exists a deterministic algorithm, $\ALG$, reading $\lceil\log q\rceil$ bits of advice such that $\ALG(\sigma)\leq Vn$ for every input $\sigma$.
\end{theorem}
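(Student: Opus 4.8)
The plan is to prove the two halves separately, each being a near-immediate consequence of the minimax structure of the one-shot game $A$. For the randomized algorithm, I would first invoke von Neumann's minimax theorem (already recalled in~\cref{nmpre}) to fix an optimal mixed strategy $\nu^\star:[q]\to[0,1]$ for the column player, so that $A(x,\nu^\star)=\sum_{y}A(x,y)\nu^\star(y)\leq V$ for every row $x\in[q]$. Then I define $\ra$ to be the (oblivious) randomized algorithm that in each round $i$ independently samples its answer $y_i\sim\nu^\star$, ignoring the observed history $x_1,\dots,x_{i-1}$. This is a legitimate RMG-algorithm since each $f_i$ is just the randomized constant map given by $\nu^\star$. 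For a fixed input $\sigma=(n,x_1,\dots,x_n)$, linearity of expectation gives $\E[\ra(\sigma)]=\sum_{i=1}^n\E[A(x_i,y_i)]=\sum_{i=1}^n A(x_i,\nu^\star)\leq\sum_{i=1}^n V=Vn$, which is the first claim.

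For the deterministic algorithm with advice, I would use the corollary to the minimax theorem (Loomis' lemma, stated just after~\cref{nmpre}): there is a \emph{pure} column $y^\star\in[q]$ with $A(x,\nu^\star)\leq V$ replaced by the observation that some optimal mixed strategy is supported on columns each of which — against the adversary's optimal mixed strategy $\mu^\star$ — yields payoff exactly $V$; more usefully, I would note directly that since $\nu^\star$ is a distribution over $[q]$ with $\sum_y \nu^\star(y)A(x,y)\le V$ for all $x$, averaging over the columns in $\supp(\nu^\star)$ shows that at least one column $y^\star$ in the support satisfies $\max_{x\in[q]}A(x,y^\star)\le$ \dots — wait, that averaging argument only bounds the average, not the max, so I should be slightly more careful. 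The correct route: the advice string simply encodes \emph{one column $y^\star$ drawn from $\nu^\star$ that happens to work for the given input}. Concretely, for a fixed input $\sigma$, among the columns $y\in\supp(\nu^\star)$ we have $\sum_{y}\nu^\star(y)\bigl(\sum_{i=1}^n A(x_i,y)\bigr)=\sum_{i=1}^n A(x_i,\nu^\star)\le Vn$, so by an averaging (probabilistic-method) argument there exists a column $y^\star$ with $\sum_{i=1}^n A(x_i,y^\star)\le Vn$. The oracle writes this $y^\star$ on the advice tape using $\lceil\log q\rceil$ bits; the algorithm $\ALG$ reads it once at the start and answers $y_i=y^\star$ in every round, achieving cost $\sum_{i=1}^n A(x_i,y^\star)\le Vn$ on that input.

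The main (minor) obstacle is making sure the second part is stated in a way that is actually valid: it is tempting to claim a single universal pure column beats $Vn$ on \emph{all} inputs, which is false in general (that would require $\min_y\max_x A(x,y)=V$, i.e.\ a pure-strategy equilibrium). The fix, which I have already sketched, is that the advice is allowed to depend on the input, so we only need a \emph{per-input} good column, and that exists by the averaging argument above. Everything else is routine: no case analysis, no heavy computation, just linearity of expectation and one invocation each of minimax/Loomis and the probabilistic method. I would present the two paragraphs of the proof in exactly this order, flagging explicitly that $\lceil\log q\rceil$ bits suffice because there are only $q$ possible columns.
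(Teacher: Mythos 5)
Your proof is correct and follows essentially the same route as the paper: invoke the minimax theorem to get an optimal mixed strategy $\nu^\star$ for the column player, bound the expected cost by $Vn$ via linearity of expectation, and use $\lceil\log q\rceil$ bits of advice to encode a per-input good pure column. The only difference is cosmetic: the paper's $\ra$ samples a single column $Y\sim\nu^\star$ once up front and plays it in every round, which makes the deterministic half immediate (the advice just realizes the best outcome of $Y$, so $\ALG(\sigma)\leq\E[\ra(\sigma)]\leq Vn$ with no further work), whereas your per-round i.i.d.\ sampling forces you to run a separate averaging argument for the existence of a good column $y^\star$ — which you correctly supply, and both routes give the same bound.
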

\begin{proof}
Consider the one-shot game defined by $A$. From the minimax theorem, we know that there exists an optimal mixed strategy $\nu^*:[q]\rightarrow [0,1]$ for the column-player such that for each $x\in[q]$, it holds that $A(x,\nu^*)=\E_{y\sim \nu^*}[A(x,y)]=\sum_{y=1}^qA(x,y)\nu^*(y)\leq V$. We will now define the algorithm $\ra$ for the RMG with cost matrix $A$. Before the first request arrives, the algorithm $\ra$ selects a character, $Y\in[q]$, at random according to $\nu^*$ (note that $Y$ is a random variable). In each round, $\ra$ answers $Y$. Let $1\leq i\leq n$ and let $x_i\in[q]$ be the input character in round $i$. We know that the expected cost $A(x_i, \nu^*)$ incurred by $\ra$ in round $i$ is at most $V$. Thus, by linearity of expectation, $E[\ra(\sigma)]\leq Vn$ for every input $\sigma$.

We can convert $\ra$ into a deterministic algorithm $\ALG$ with advice. The oracle looks at the input $\sigma$ and determines the best choice, $y$, of the random variable $Y$ for this particular input (that is, the oracle finds the character $y$ minimizing $\sum_{i=1}^nA(x_i,y)$). It writes $y$ onto the advice tape using $\lceil \log q\rceil $ bits of advice. The algorithm $\ALG$ learns $y$ from the advice and answers $y$ in each round. It follows that $\ALG(\sigma)\leq \E[\ra(\sigma)]\leq Vn$ for every input $\sigma$.
\end{proof}
\subsection{Proof of \texorpdfstring{\cref{m:RMGmainthm}}{Theorem 3}}
We will show that linear advice is required to ensure a cost of at most $(V-\varepsilon)n$ if $\varepsilon>0$ is constant. This result follows from \cref{techlemma} and the fact that all RMGs are $\Sigma$-repeatable. However, it is much more natural to prove this result directly using our direct product theorem (\cref{mainthm}), since the result is in fact an easier version of \cref{techlemma} where each round consists of exactly one request. Also, there is a slight technical problem with using the fact that a RMG is $\Sigma$-repeatable if one wants a non-asymptotic lower bound: A RMG-input must consist of at least two requests (since the number of requests is $n+1$, where $n$ is the number of input characters). By our definitions of $\Sigma$-repeatable online problems and our statement of \cref{techlemma}, this will result in a lower bound which is unnecessarily a factor of two lower than it should be. Therefore, we give a direct proof of \cref{RMGmainthm} instead of using \cref{techlemma}. For convenience, we restate \cref{RMGmainthm} before giving the proof.

\restateRMGmainthm*
\begin{proof}[Proof of \cref{RMGmainthm}]
Since the value of the game defined by $A$ is $V$, we know that there exists a probability distribution $\mu^*$ over characters from $[q]$ such that for every $y\in [q]$, it holds that $A(\mu^*, y)\geq V$. 

Let $n\in\mathbb{N}$. Define $p^n\colon\{(n, x_1, \ldots , x_n)\vert x_i\in [q]\}\rightarrow [0,1]$ to be the RMG-input distribution which maps $(n, x_1, \ldots , x_n)$ into $\mu^*(x_1)\mu^*(x_2)\cdots \mu^*(x_n)$. Thus, in each of the $n$ rounds, we draw independently an input character from $[q]$ according to $\mu^*$. The $i$th round cost function simply assigns a cost of $A(x_i, y_i)$ to the algorithm where $x_i$ is the input character and $y_i$ is the answer produced by the algorithm in round $i$. Together with these cost-functions, the input distribution $p^n$ is an $n$-round input distribution. Note that the $i$th round input distribution $p_i$ is simply $p_i=\mu^*$.

Let $\ALG$ be a deterministic RMG-algorithm reading at most $b$ bits of advice on inputs of length at most $n$.
Fix $1\leq i\leq n$ and $w\in \mathcal{W}_i$. Let $d=D_{\KL}(\piw \| p_i)$. In order to apply \cref{maint1}, we need a lower bound on $\E [\cost_i(\ALG)\vert W_i=w]$ in terms of $d$. By Pinsker's inequality (\ref{pinsker}),
\begin{equation}
\|\piw - p_i\|_{1}\leq\sqrt{d\cdot \ln 4}.
\label{pintechlemRMG}
\end{equation}
Let $h(d)=\sqrt{d\cdot\ln 4}$. For every $w\in\mathcal{W}_i$ where $\Pr[W_i=w]>0$, the algorithm $\ALG$ outputs a fixed character $y_w\in[q]$ in round $i$ and incurs an expected cost of $\E_{x\sim \piw}[A(x,y_w)]$. Obviously, the cost incurred by $\ALG$ in any round is never larger than $\|A\|_\infty:=\max_{x,y}A(x,y)$. Thus,
\begin{align*}
\E[\cost_i(\ALG)\vert W_i=w]&=\E_{x\sim \piw}[A(x,y_w)]\\
&\geq \E_{x\sim p_i}[A(x, y_w)]-\|A\|_\infty \cdot h(d) \tag*{\{By (\ref{pintechlemRMG}) and (\ref{etv})\}}\\
&= \E_{x\sim \mu^*}[A(x, y_w)]-\|A\|_\infty \cdot h(d)\\
&\geq V-\|A\|_\infty\cdot h(d).
\end{align*}Define $f(d)=V-\|A\|_\infty\cdot h(d)$. Since $f$ is convex and decreasing, \cref{mainthm} yields 
\begin{equation}
\label{techlemmaconc}
\E_{\sigma\sim p^n}[\ALG(\sigma)]=\E[\cost(\ALG)]\geq nf(b/n)=n\left(V-\|A\|_\infty\sqrt{\frac{b \cdot \ln 4}{n}}\right).
\end{equation}
It follows from Yao's principle that if a randomized algorithm on every input of length $n$ reads at most $b$ bits of advice and incurs an expected cost of at most $(V-\varepsilon)n$, then
\begin{align*}
(V-\varepsilon)n\geq n\left(V-\|A\|_\infty\sqrt{\frac{b \cdot \ln 4}{n}}\right).
\end{align*}
A straightforward calculation completes the proof of the theorem.

\end{proof}
Since $\varepsilon\leq V\leq \|A\|_\infty$, we have $\frac{\varepsilon^2}{2\ln (2)\|A\|^2_\infty}\leq \frac{1}{2\ln(2)}<0.722$. Thus, the lower bound (\ref{RMGmainthmeq}) is never larger than $0.722n$ and will often be much lower. In what follows, we will obtain better bounds than (\ref{RMGmainthmeq}) for some specific cost matrices.
\subsection{Guessing strings and matching pennies}
For alphabets of size $q=2$, string guessing with known history~\cite{A2, sg} is the repeated matrix game with cost matrix $$A_2=\begin{pmatrix} 0 & 1 \\ 1 & 0 \end{pmatrix}.$$ More generally, the $q$-SGKH problem~\cite{A2, sg} is the repeated matrix game with cost matrix $A_q=J_q-I_q$ where $J_q$ is the all-ones matrix and $I_q$ the identity matrix, both of size $q\times q$. That is,
$$A_q= \begin{pmatrix} 0 &  1 & 1  & \ldots & 1\\
1  &  0 & 1 & \ldots & 1\\
1 & 1 & 0 &  \ldots & 1\\
\vdots & \vdots & \vdots &\ddots & \vdots \\
1  &   1 & 1       &\ldots & 0\end{pmatrix}.$$

Emek et al.\ introduced the generalized matching pennies problem~\cite{A2} which is equivalent to the string guessing problem, except that a dummy cost of $1/\tau$ is added to every entry of $A_q$ for some integer $\tau$. Using the high-entropy technique, they obtained lower bounds valid for alphabets of size $q\geq 4$ and certain ranges of the advice read and the cost incurred. B{\"o}ckenhauer et al.\ later obtained tight (up to an additive lower order term of order $O(\log n)$) lower and upper bounds on the deterministic advice complexity of the string guessing problem \cite{sg}. Since their introduction, a large number of advice complexity lower bounds have been obtained by reductions from the string guessing or generalized matching pennies problem. This includes lower bounds for bin packing~\cite{Abp,DBLP:conf/wads/AngelopoulosDKR15}, list update~\cite{AListupdate}, metrical task systems~\cite{A2}, reordering buffer management~\cite{Arbm}, set cover~\cite{Asetcover}, and several other online problems.

It is easy to see that the value $V_q$ of the game defined by the matrix $A_q$ is $V_q=(q-1)/q$, and that the optimal mixed strategy for both players is to select a character uniformly at random. Thus, it follows immediately from \cref{RMGmainthm} that $\Omega(n)$ bits of advice are needed for an algorithm to ensure a cost of at most $( \frac{q-1}{q}-\varepsilon)n$. Note how \cref{RMGmainthm} makes it a trivial task to obtain this result. For several applications, the lower bound of $\Omega(n)$ is all that one needs. However, if one seeks a better lower bound for larger alphabets, or if the exact coefficient of the higher-order term is important, a better lower bound is needed. In this section, we will show how to reprove the tight lower bounds due to B{\"o}ckenhauer et al.~\cite{sg} using our direct product theorem. We believe that our technique yields a very intuitive proof of the lower bound. In particular, it explains nicely why the $q$-ary entropy function appears in the tight lower bound. Also, we slightly strengthen the lower bound of \cite{sg} by showing that it also applies to randomized algorithms with advice (this also follows from our derandomization results in \cref{sec:derand}).

\begin{theorem}
\label{thmreprove}
Let $q\geq 2$ and let $0<\alpha <(q-1)/q$. A randomized $q$-SGKH algorithm which on every input of length $n$ has an expected cost of at most $\alpha n$ must read at least $$b\geq K_{1/q}(1-\alpha)n=(1-h_q(\alpha))n\log_2 q$$ bits of advice.
\end{theorem}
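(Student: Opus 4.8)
The plan is to apply the information-theoretic direct product theorem (\cref{mainthm}) directly to the $q$-SGKH problem viewed as an $n$-round input distribution, where each round consists of a single input character drawn i.i.d.\ uniformly from $[q]$. Let $u$ denote the uniform distribution on $[q]$, and let $p^n$ be the product distribution on $[q]^n$ where $p_i = u$ for every $i$. First I would fix a deterministic algorithm $\ALG$ reading at most $b$ bits of advice on inputs of length $n$, and fix a round $i$ and a history/advice value $w \in \mathcal{W}_i$. Conditioned on $W_i = w$, the algorithm outputs a fixed character $y_w \in [q]$, so the probability it is correct in round $i$ equals $p_{i\mid w}(y_w)$, and hence the expected cost satisfies
\begin{equation}
\E[\cost_i(\ALG)\mid W_i=w] = 1 - p_{i\mid w}(y_w).
\end{equation}

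The key step is to lower bound $1 - p_{i\mid w}(y_w)$ in terms of $d := D_{\KL}(p_{i\mid w} \| u)$. By \cref{Klem} (applied with $\nu = u$, so $\nu(y_w) = 1/q$), we get $p_{i\mid w}(y_w) \le K_{1/q, r}^{-1}(d)$. Therefore
\begin{equation}
\E[\cost_i(\ALG)\mid W_i=w] \ge 1 - K_{1/q, r}^{-1}(d) =: f(d).
\end{equation}
I would then need to check that $f$ is convex and decreasing on $[0,\infty]$ so that \cref{mainthm} applies. That $f$ is decreasing is immediate since $K_{1/q, r}^{-1}$ is increasing. Convexity of $f$ amounts to concavity of $K_{1/q, r}^{-1}$, which follows from the fact that $K_{1/q}$ restricted to $[1/q, 1]$ is convex and increasing (stated right after \cref{def:k}), and the inverse of a convex increasing function is concave. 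With these properties verified, \cref{mainthm} yields
\begin{equation}
\E[\cost(\ALG)] \ge n f(b/n) = n\left(1 - K_{1/q,r}^{-1}(b/n)\right).
\end{equation}

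To finish, I would invoke Yao's principle (\cref{Yao}), or rather the elementary averaging argument behind it: any randomized algorithm reading at most $b$ bits of advice with expected cost at most $\alpha n$ on every input must, against the distribution $p^n$, satisfy $\alpha n \ge n(1 - K_{1/q,r}^{-1}(b/n))$, hence $K_{1/q,r}^{-1}(b/n) \ge 1 - \alpha$. Since $0 < \alpha < (q-1)/q$ means $1-\alpha \in (1/q, 1)$ lies in the range where $K_{1/q, r}^{-1}$ is the genuine inverse of $K_{1/q}|_{[1/q,1]}$, applying $K_{1/q}$ (which is increasing there) to both sides gives $b/n \ge K_{1/q}(1-\alpha)$, i.e.\ $b \ge K_{1/q}(1-\alpha)n$. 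The claimed alternative form $b \ge (1 - h_q(\alpha))n\log_2 q$ then follows from the identity $K_{1/q}(x) = (1 - h_q(1-x))\log q$ recorded just before \cref{Klem}, evaluated at $x = 1-\alpha$.

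The main obstacle I anticipate is the convexity verification for $f(d) = 1 - K_{1/q,r}^{-1}(d)$: one must be careful about the conventions at the endpoints (the paper defines $K_{y,r}^{-1}(t) = 1$ for $t > \log(1/y)$, so $f$ is eventually constant at $0$, which is still fine for convexity and monotonicity), and the concavity of $K_{1/q,r}^{-1}$ should be argued cleanly from the convexity and monotonicity of $K_{1/q}$ on $[1/q,1]$ rather than by differentiating. A secondary point requiring care is making sure the single-character-per-round setup genuinely fits \cref{rround,rround2} — in particular that the extra ``learn $n$'' request at the start does not interfere, which it does not since it carries no information about the $x_i$'s and can be folded into the initial state / round-$1$ information $W_1$. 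Everything else is a direct substitution into the machinery already developed.
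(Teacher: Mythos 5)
Your proof follows exactly the same route as the paper: the uniform $n$-round distribution, the bound $\E[\cost_i(\ALG)\mid W_i=w]\ge 1-K_{1/q,r}^{-1}(d)$ via \cref{Klem}, the application of \cref{mainthm} with $f(d)=1-K_{1/q,r}^{-1}(d)$, and the closing algebra and Yao step. The only difference is that you spell out the convexity verification for $f$ and the treatment of the initial ``learn $n$'' request, both of which the paper leaves implicit; these additions are correct and do not change the argument.
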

\begin{proof}
Let $n\in\mathbb{N}$. For $1\leq i\leq n$, the input character $x_i$ is chosen uniformly at random from $[q]$. This gives rise to an $n$-round input distribution where each round consists of exactly one request. The $i$th round cost-function $\cost_i$ simply assigns a cost of $0$ to an algorithm in round $i$ if the algorithm guesses correctly and a cost of $1$ otherwise.

Fix a deterministic algorithm $\ALG$ which reads at most $b$ bits of advice on inputs of length $n$. Let $1\leq i\leq n$ and let $w\in\mathcal{W}_i$ be any possible history in round $i$. Let $d=D_{KL}(\piw \|p_i)$.
Then, by \cref{Klem}, for every $x\in [q]$ it holds that $\piw(x)\leq K_{1/q,r}^{-1}(d)$. Thus,
$$\E[\cost_i(\ALG)|W_i=w] \geq 1-K_{1/q,r}^{-1}(d),$$ since no matter which character that $\ALG$ chooses in round $i$, the probability of this character being wrong (and $\ALG$ therefore incurring a cost of $1$) is at least $1-K_{1/q,r}^{-1}(d)$. Let $f(d)=1-K_{1/q,r}^{-1}(d)$. Note that $f$ is convex and decreasing. By \cref{mainthm}, we therefore get that $\E[\cost(\ALG)]\geq nf(n^{-1}b)$. We will now calculate how large $b$ needs to be in order to ensure that the expected cost is at most $\alpha n$.
\begin{align*}
&\alpha n\geq nf(b/n)\\
&\Rightarrow \alpha\geq 1-K_{1/q, r}^{-1}(b/n)\\
&\Rightarrow K^{-1}_{1/q, r}(b/n)\geq 1-\alpha\\
&\Rightarrow b/n\geq K_{1/q}(1-\alpha)\tag*{\{$K_{1/q}$ is increasing on $[1/q,1]$\}}\\
&\Rightarrow b\geq n K_{1/q}(1-\alpha)=n(1-h_q(\alpha))\log(q).
\end{align*}
Thus, any deterministic algorithm needs to read at least $b\geq K_{1/q}(1-\alpha)n$ bits of advice in order to ensure an expected cost of at most $\alpha n$ against this $n$-round input distribution. By Yao's principle, we get that any randomized algorithm must read at least $K_{1/q}(1-\alpha)n$ bits of advice in order to guarantee that it always incurs an expected cost of at most $\alpha n$.
\end{proof}
As mentioned, it is shown in \cite{sg} that \cref{thmreprove} is tight up to (additive) lower order terms. Note that \cref{thmreprove} shows that we need $\Omega(n\log q)$ bits to ensure a cost of at most $(\frac{q-1}{q}-\varepsilon)n$, whereas \cref{RMGmainthm} only gives a lower bound of $\Omega(n)$.

\subsection{Anti-string guessing}
The \emph{anti-string guessing problem (Anti-$q$-SG)} over an alphabet of size $q$ is the repeated matrix game with cost-matrix $I_q$ (the identity matrix of size $q\times q$). Thus, the goal of the algorithm is to output a character \emph{different} from the input character. The value of the one-shot game defined by $I_q$ is $1/q$. For both players, the optimal strategy is to select a character uniformly at random. \cref{RMGmainthm} therefore implies that an online algorithm needs $\Omega(n)$ bits of advice in order to achieve a cost of at most $(1/q-\varepsilon)n$ on every input of length $n$. As was the case for the original string guessing problem, \cref{RMGmainthm} makes it trivial to obtain this result. Note that for $q=2$, string guessing and anti-string guessing are equivalent. However, one interesting aspect of anti-string guessing is that it becomes easier as $q$ grows. 

In this section, we will show that a direct application of our information theoretical direct product theorem yields lower bounds for anti-string guessing which are not just asymptotically tight, but are tight up to an additive $O(\log n)$ term.

\begin{theorem}
\label{thm:antilower}
Let $q\geq 2$ and let $0<\alpha <1/q$. A randomized Anti-$q$-SG algorithm which on every input of length $n$ incurs an expected cost of at most $\alpha n$ must read at least $$b\geq K_{1/q}(\alpha)n=(1-h_q(1-\alpha))n\log_2 q$$ bits of advice.
\end{theorem}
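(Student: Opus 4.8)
The plan is to mirror the proof of \cref{thmreprove}, but to use the \emph{lower} bound (\ref{Klem1}) from \cref{Klem} in place of the upper bound. Fix $n\in\mathbb{N}$ and let each input character $x_i$, $1\le i\le n$, be drawn independently and uniformly from $[q]$. This is an $n$-round input distribution in which every round consists of exactly one request, with $i$th round cost function $\cost_i$ assigning cost $1$ if the algorithm's guess in round $i$ equals $x_i$ and cost $0$ otherwise; the $i$th round input distribution $p_i$ is the uniform distribution on $[q]$. Now fix a deterministic Anti-$q$-SG algorithm $\ALG$ reading at most $b$ bits of advice on inputs of length $n$. For $1\le i\le n$ and $w\in\mathcal{W}_i$, the algorithm outputs some fixed character $y_w\in[q]$ in round $i$, so $\E[\cost_i(\ALG)\mid W_i=w]=p_{i\mid w}(y_w)$. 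Writing $d=D_{\KL}(p_{i\mid w}\|p_i)$, inequality (\ref{Klem1}) of \cref{Klem}, applied with $\nu=p_i$ uniform so that $\nu(y_w)=1/q$, gives $p_{i\mid w}(y_w)\ge K_{1/q,l}^{-1}(d)$. Hence, setting $f(d)=K_{1/q,l}^{-1}(d)$ (extended by $f(d)=0$ for $d>K_{1/q}(0)$, as in our conventions), we have $\E[\cost_i(\ALG)\mid W_i=w]\ge f(d)$, which is the hypothesis (\ref{mainthmas}) of \cref{mainthm}.

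To invoke \cref{mainthm} I must check that $f$ is convex and decreasing on $[0,\infty]$. It is decreasing because $K_{1/q}$ is decreasing on $[0,1/q]$, so its inverse on this branch is decreasing, and the constant-$0$ extension preserves this. For convexity, recall that $K_{1/q}$ is convex with $K_{1/q}'<0$ on $(0,1/q)$; differentiating the inverse twice yields $\bigl(K_{1/q,l}^{-1}\bigr)''=-K_{1/q}''/(K_{1/q}')^3>0$ on the open branch, and since $K_{1/q}'(x)\to-\infty$ as $x\to 0^+$ the left derivative of $f$ at $d=K_{1/q}(0)$ equals $0$, matching the constant-$0$ piece to its right; thus $f$ is convex throughout. \cref{mainthm} then gives $\E[\cost(\ALG)]\ge n\,f(b/n)$.

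Finally, suppose a randomized Anti-$q$-SG algorithm reading at most $b$ bits of advice guarantees expected cost at most $\alpha n$ on every input of length $n$. Averaging the bound above over the deterministic algorithms comprising it (exactly the interchange-of-expectation step in the proof of \cref{Yao}) gives $\alpha n\ge n\,f(b/n)$, i.e.\ $\alpha\ge K_{1/q,l}^{-1}(b/n)$. If $b/n>K_{1/q}(0)$ then $b/n>K_{1/q}(0)\ge K_{1/q}(\alpha)$ already; otherwise, since $\alpha$ and $K_{1/q,l}^{-1}(b/n)$ both lie in $[0,1/q]$ where $K_{1/q}$ is decreasing, applying $K_{1/q}$ to both sides reverses the inequality to $K_{1/q}(\alpha)\le K_{1/q}\bigl(K_{1/q,l}^{-1}(b/n)\bigr)=b/n$. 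In either case $b\ge K_{1/q}(\alpha)\,n$. The closed form follows from the identity $K_{1/q}(x)=(1-h_q(1-x))\log q$ recorded earlier (with $\log=\log_2$). The main point to get right is the \emph{direction} of \cref{Klem} — here we need every character to retain probability bounded away from $0$, rather than bounded away from $1$ as in the $q$-SGKH analysis — together with the convexity check on the inverted decreasing branch of $K_{1/q}$; everything else is routine.
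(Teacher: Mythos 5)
Your proof is correct and takes essentially the same approach as the paper's: same uniform $n$-round input distribution, same appeal to the lower-bound branch (\ref{Klem1}) of \cref{Klem}, same choice $f(d)=K_{1/q,l}^{-1}(d)$, same application of \cref{mainthm} and of Yao's principle. You merely spell out the convexity check and the $b/n>K_{1/q}(0)$ edge case in more detail than the paper does.
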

\begin{proof}
Let $n\in\mathbb{N}$. For $1\leq i\leq n$, the input character $x_i$ is chosen uniformly at random from $[q]$. This gives rise to an $n$-round input distribution where each round consists of exactly one request. The $i$th round cost-function $\cost_i$ simply assigns a cost of $0$ to an algorithm in round $i$ if the algorithm outputs something else than the input character and a cost of $1$ if the algorithm outputs the input character.

Fix a deterministic algorithm $\ALG$ which reads at most $b$ bits of advice on inputs of length $n$. Let $1\leq i\leq n$ and let $w\in\mathcal{W}_i$ be any possible history in round $i$. Let $d=D_{KL}(\piw\|p_i)$. Then, by \cref{Klem}, the smallest probability assigned to a single character by $\piw$ is at least $K_{1/q,l}^{-1}(d)$.
Thus, $\E[\cost_i(\ALG)|W_i=w]\geq K_{1/q,l}^{-1}(d)$, since no matter which character that $\ALG$ chooses in round $i$, the probability of this character being equal to the input character is at least $K_{1/q,l}^{-1}(d)$. Let $f(d)=K_{1/q,l}^{-1}(d)$. Note that $f$ is convex and decreasing. By \cref{mainthm}, we have that $\E[\cost(\ALG)]\geq nf(n^{-1}b)$. We will now calculate how large $b$ needs to be in order to ensure that the expected cost is at most $\alpha n$:
\begin{align*}
&\alpha n\geq nf(b/n)\\
&\Rightarrow\alpha\geq K_{1/q,l}^{-1}(b/n)\\
&\Rightarrow K_{1/q}(\alpha)\leq b/n\tag*{\{$K_{1/q}$ is decreasing on $[0, 1/q]$\}}\\
&\Rightarrow b\geq n K_{1/q}(\alpha)=n(1-h_q(1-\alpha))\log(q).
\end{align*}
Thus, any deterministic algorithm needs to read at least $b\geq K_{1/q}(\alpha)n$ bits of advice in order to ensure an expected cost of at most $\alpha n$ against this $n$-round input distribution. By Yao's principle, we get that any randomized algorithm must read at least $K_{1/q}(\alpha)n$ bits of advice in order to guarantee that its expected cost is never more than $\alpha n$.

\end{proof}
\paragraph{Upper bound for anti-string guessing.} It follows from \cref{RMGalg} that it is possible to ensure an expected cost of at most $n/q$ for Anti-q-SG by simply answering randomly in each round (or by using $\lceil \log q\rceil$ bits of advice to communicate the character appearing least frequently in the input). We will now show how a simple probabilistic argument yields an upper bound on the advice needed to ensure a cost less than $n/q$. This upper bound shows that \cref{thm:antilower} is essentially tight.

\begin{theorem}
\label{thm:upperanti}
Let $q\geq 2$ and let $0<\alpha<1/q$. There exists a deterministic algorithm for Anti-$q$-SG which on every input of length $n$ incurs a cost of at most $\alpha n$ and which reads
\begin{align*}
b=K_{1/q}(\alpha)n+O(\log n+\log\log q)=(1-h_q(1-\alpha))n\log_2 q+O(\log n+\log\log q)
\end{align*}
bits of advice.
\end{theorem}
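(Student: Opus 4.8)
The plan is to give an explicit advice scheme together with a union-bound (probabilistic method) argument, mirroring the structure used for string-guessing upper bounds in B\"ockenhauer et al.~\cite{sg}. The key point is that for Anti-$q$-SG, a single fixed character $y\in[q]$ used as the answer in every round incurs cost exactly equal to the number of occurrences of $y$ in the input; more generally, a fixed \emph{string} $\gamma=(y_1,\dots,y_n)\in[q]^n$ used as the sequence of answers incurs cost equal to the Hamming distance between $\gamma$ and the input $\sigma=(x_1,\dots,x_n)$. So the task reduces to: how many bits are needed to specify, for an adversarially chosen $\sigma\in[q]^n$, a string $\gamma$ within Hamming distance $\alpha n$ of $\sigma$? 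Equivalently, we need a covering code of $[q]^n$ with balls of radius $\alpha n$, and the advice is an index into this code (plus $O(\log n + \log\log q)$ bits to transmit $n$ and $q$, or to handle rounding of $\alpha n$).

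First I would recall the standard sphere-covering bound: a ball of radius $\alpha n$ in $[q]^n$ has size $\binom{n}{\le \alpha n}(q-1)^{\alpha n} = 2^{(h_q(\alpha)+o(1))n\log q}$ when $\alpha < (q-1)/q$ — wait, here we need $\alpha < 1/q$, which is the regime where we are covering by \emph{small} balls, and the relevant exponent is $1-h_q(1-\alpha)$ (this is exactly why the $K_{1/q}$ function, equal to $(1-h_q(1-\alpha))\log q$ on $[0,1/q]$, shows up; cf.\ the identity $K_{1/q}(x)=(1-h_q(1-x))\log q$ noted after Definition~\ref{def:k}). A greedy or random covering code of $[q]^n$ by radius-$\alpha n$ balls has size at most $\lceil \frac{q^n}{|B(\alpha n)|}\cdot n\ln q\rceil$, by the standard argument that a random set of that many centers covers every point with positive probability: the probability a fixed point is uncovered is $(1-|B|/q^n)^{|C|}\le e^{-|C|\,|B|/q^n}$, and taking $|C| = \lceil (q^n/|B|)\cdot n\ln q\rceil$ makes the union over all $q^n$ points strictly less than $1$. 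Hence there exists a covering code of size $q^n/|B(\alpha n)| \cdot \mathrm{poly}(n,\log q)$, and its logarithm is $n\log q - \log|B(\alpha n)| + O(\log n + \log\log q) = K_{1/q}(\alpha)n + O(\log n+\log\log q)$, using the lower bound $|B(\alpha n)| \ge \binom{n}{\alpha n}(q-1)^{\alpha n} \ge 2^{h_q(\alpha)n\log q - O(\log n)}$... again I must be careful with which entropy exponent appears, but the clean way is: $\log_2|B(\alpha n)| \ge (h_q(1-\alpha)-o(1))\cdot$ ... I will instead just state $n\log q - \log_2|B(\alpha n)| \le K_{1/q}(\alpha)n + O(\log n)$ directly from $\log_2|B(\alpha n)| \ge \big(1 - (1-h_q(1-\alpha))\big)n\log q - O(\log n) = (1-K_{1/q}(\alpha)/\log q)\,n\log q - O(\log n)$, which follows from the standard entropy estimate $\binom{n}{\le k}(q-1)^k = 2^{(1-K_{1/q}(k/n)/\log q + o(1))\,n\log q}$ for $k/n \le 1/q$.

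The algorithm and oracle are then: the oracle, knowing $\sigma$, finds a codeword $\gamma$ in the covering code within Hamming distance $\alpha n$ of $\sigma$ (one exists by the covering property), and writes its index — together with $n$ and $q$ in self-delimiting form, costing $O(\log n + \log\log q)$ bits — onto the advice tape. The algorithm reads $n$ and $q$, reconstructs the codeword $\gamma = (y_1,\dots,y_n)$ from its index, and answers $y_i$ in round $i$. Since it ignores the revealed input characters, this is a valid online algorithm; its cost is $|\{i : y_i = x_i\}| = \dist_{\mathrm{Ham}}(\gamma,\sigma) \le \alpha n$. The total advice is $K_{1/q}(\alpha)n + O(\log n+\log\log q)$ as claimed.

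The main obstacle I expect is purely bookkeeping: making the entropy-counting estimate for $|B(\alpha n)|$ precise enough that the additive error is genuinely $O(\log n + \log\log q)$ and not, say, $O(n/\log n)$ — i.e.\ pinning down $\log_2\binom{n}{\alpha n} = n h(\alpha) - \Theta(\log n)$ and combining it with the $(q-1)^{\alpha n}$ factor and the $\alpha n\log_q(q-1)$ term inside $h_q$, and also handling the rounding $\lfloor \alpha n\rfloor$ vs.\ $\alpha n$ cleanly (which only shifts things by $O(\log n)$ in the code size). None of this is conceptually hard; it is the same computation that appears in \cite{sg} for the complementary regime, applied to radius $\alpha n$ with $\alpha < 1/q$ rather than $\alpha$ close to $(q-1)/q$. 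A secondary point worth a sentence is that the $n\ln q$ overhead from the covering-code existence argument contributes only $O(\log n + \log\log q)$ to the advice length after taking logarithms, which is why it is absorbed into the stated error term.
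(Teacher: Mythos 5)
Your high-level plan — use the probabilistic method to construct a code of the right size, have the oracle transmit a codeword index plus $O(\log n + \log\log q)$ bits of bookkeeping, and have the algorithm blindly output the codeword — is exactly the skeleton of the paper's proof. But there is a sign error at the very start that the rest of the argument inherits, and it is not just bookkeeping: it changes which combinatorial object you need and, for $q\geq 3$, changes the exponent.

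The cost of a fixed answer string $\gamma$ in Anti-$q$-SG is the number of \emph{agreements} with the input $\sigma$, i.e.\ $n - d_{\mathrm{Ham}}(\gamma,\sigma)$, not $d_{\mathrm{Ham}}(\gamma,\sigma)$ (your own earlier sentence about ``occurrences of $y$'' is correct and already contradicts the Hamming-distance claim). So you need a codeword \emph{far} from $\sigma$: concretely, a set $M\subseteq[q]^n$ such that every $x\in[q]^n$ has some $y\in M$ with $d(x,y)\geq(1-\alpha)n$. The paper calls this a \emph{$q$-ary anti-covering code} of radius $(1-\alpha)n$, and it bounds its size via the probabilistic method using the inverse Chernoff bound $\Pr[d(x,Y)\geq(1-\alpha)n] = \Pr[\mathrm{Bin}(n,1/q)\leq\alpha n] \geq 2^{-K_{1/q}(\alpha)n}/(n+1)$ for a uniformly random $Y$, which is precisely where $K_{1/q}(\alpha)$ enters. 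Your covering code of radius $\alpha n$ is a different object: its algorithm outputs a $\gamma$ that agrees with $\sigma$ on at least $(1-\alpha)n$ coordinates, for a cost of at least $(1-\alpha)n$, which is the opposite of what we want.

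These two codes also do not have the same size, so this is not a cosmetic renaming. The covering code of radius $\alpha n$ has $\log_2|C| = (1 - h_q(\alpha))n\log_2 q + O(\log n)$, whereas the anti-covering has $\log_2|M| = K_{1/q}(\alpha)n + O(\log n) = (1-h_q(1-\alpha))n\log_2 q + O(\log n)$, and $h_q(1-\alpha) - h_q(\alpha) = (1-2\alpha)\log_q(q-1) > 0$ for $q\geq 3$ and $\alpha<1/2$. Consequently the estimate you invoke, $\log_2|B(\alpha n)| \geq h_q(1-\alpha)n\log q - O(\log n)$, is false; the correct exponent for $|B(\alpha n)|$ is $h_q(\alpha)$, so even with the cost error fixed by fiat, your covering-code count would give a strictly worse advice bound than the theorem states. (For $q=2$ the two quantities coincide and one can turn a covering code into an anti-covering by complementing each codeword, which is likely the source of the conflation; this symmetry disappears for $q\geq 3$.) To repair the proof, replace ``covering code of radius $\alpha n$'' with ``anti-covering code of radius $(1-\alpha)n$'' and run the same random-code-plus-union-bound argument using the lower-tail Chernoff estimate on the number of agreements; then the oracle sends a codeword at distance $\geq(1-\alpha)n$ from $\sigma$ and the algorithm's cost is $n - d \leq \alpha n$, as required.
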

\begin{proof}
See \cref{app:antilowerup}.
\end{proof}

\subsubsection{Application to paging}
We have already shown using \cref{maint1} that $\Omega(n)$ bits of advice are needed to be better than $H_k$-competitive for paging. However, as usual, lower bounds obtained from \cref{maint1} are only interesting from an asymptotic point of view. In this section, we show how a reduction from anti-string guessing to paging gives the same asymptotic lower bound but with (much) better constants than those hidden in the $\Omega(n)$ lower bound from \cref{maint1}. We remark that in the reduction from Anti-$q$-SG to paging, we need the fact that the lower bound obtained in \cref{thm:antilower} is proved using the uniform input distribution.

\begin{theorem}
\label{pagingprec}
Let $k\in\mathbb{N}$ and let $1<c< H_k$. For every $\varepsilon>0$, there exists an $n_\varepsilon$ such that a strictly $c$-competitive paging algorithm with a cache of size $k$ must read at least $$b\geq K_{\mbox{$\frac{1}{k+1}$}}\left(\frac{c}{(k+1)H_k-\varepsilon}+\frac{c}{n}\right)n$$ bits of advice on inputs of length $n\geq n_\varepsilon$.
\end{theorem}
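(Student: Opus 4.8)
The plan is to reduce from the anti-string guessing problem Anti-$q$-SG with $q=k+1$, so that the tight lower bound of \cref{thm:antilower} transfers to paging. First I would fix a set of $k+1$ pages $\{p_0,p_1,\ldots,p_k\}$ and describe how an Anti-$(k+1)$-SG input $\sigma=(x_1,\ldots,x_n)$ with $x_i\in[k+1]$ is turned into a paging request sequence. The idea is the standard one for $H_k$ lower bounds: at each step the cache holds exactly $k$ of the $k+1$ pages, so there is a unique page \emph{not} in the cache, and we can encode this ``hole'' as an element of $[k+1]$. At step $i$, after the paging algorithm has made its eviction decision (so the hole is at some position $y_i\in[k+1]$), the adversary requests the page $p_{x_i}$, i.e.\ the page indexed by the Anti-SG input character; this causes a page fault precisely when $y_i=x_i$, that is, when the paging algorithm's ``guess'' $y_i$ for where the hole should be \emph{coincides} with the input character — exactly the Anti-SG cost structure. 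Crucially, because the Anti-SG lower bound of \cref{thm:antilower} is proved against the \emph{uniform} input distribution, the induced distribution over paging inputs is also well-defined and the reduction is faithful in expectation; I would state this explicitly since the theorem statement flags it.

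Next I would handle the two-sided cost accounting. A paging algorithm $\ALG$ that is strictly $c$-competitive satisfies $\ALG(\rho)\le c\cdot\OPT(\rho)$ on the induced sequence $\rho$. The number of page faults of $\ALG$ on $\rho$ is exactly the Anti-SG cost $\cost(\ALG)$ on $\sigma$ (one fault iff $y_i=x_i$). For $\OPT$ I would invoke the classical fact (Sleator–Tarjan / the $H_k$ lower-bound construction) that on a random sequence over $k+1$ pages of this type, $\E[\OPT(\rho)]$ is roughly $n/((k+1)H_k)$; more precisely, there is an offline strategy (evict the page whose next request is farthest in the future) achieving about one fault per $k\cdot H_k$ requests-ish — I would pin down the exact constant so that $\E_{\text{unif}}[\OPT(\rho)]\le \tfrac{n+c'}{(k+1)H_k}$ for a suitable lower-order term, choosing $n_\varepsilon$ large enough that the $\varepsilon$ slack in the denominator $(k+1)H_k-\varepsilon$ absorbs it. Combining, an algorithm reading $b$ bits of advice which is strictly $c$-competitive must, on the uniform Anti-$(k+1)$-SG distribution, incur expected cost at most
\begin{equation*}
\E[\cost(\ALG)]\le c\cdot\E[\OPT(\rho)]\le \left(\frac{c}{(k+1)H_k-\varepsilon}+\frac{c}{n}\right)n=:\alpha n.
\end{equation*}
Since $c<H_k$ we have $\alpha<1/(k+1)=1/q$ for $n$ large, so \cref{thm:antilower} (with $q=k+1$) applies and yields $b\ge K_{1/(k+1)}(\alpha)n$, which is exactly the claimed bound.

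The main obstacle I anticipate is the $\OPT$ bookkeeping: getting the exact constant $\tfrac{1}{(k+1)H_k}$ for the expected optimal cost on the induced random sequence, together with controlling the additive lower-order term so it can be hidden inside the $\varepsilon$ and the $c/n$ terms. This is where the choice of $n_\varepsilon$ enters, and it requires care because the relationship between ``one uniform Anti-SG character'' and ``one paging request'' must be set up so that $\OPT$'s fault rate matches the harmonic-number bound rather than something weaker. A secondary technical point is verifying that the reduction is advice-preserving in the right direction — an algorithm with $b$ bits of advice for paging yields an algorithm with $b$ bits of advice for Anti-SG on the induced distribution, which is immediate from the simulation but should be stated. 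Everything else (convexity of $K_y$, the $K_{1/q}=(1-h_q(1-\alpha))\log q$ identity, Yao's principle) is already available from \cref{thm:antilower} and the preliminaries, so once the reduction and the $\OPT$ estimate are in place the proof is a short computation.
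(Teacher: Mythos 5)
Your proposal is correct and matches the paper's proof in all essentials: the reduction to Anti-$(k+1)$-SG by encoding the single cache hole as the Anti-SG answer, the classical bound $\E_{x\sim\text{unif}}[\OPT(\sigma_x)]\le \frac{n}{(k+1)H_k-\varepsilon}+1$ for $n\ge n_\varepsilon$, and the final combination via strict $c$-competitiveness and \cref{thm:antilower} (with the correct check that $\alpha<1/(k+1)$, which is implicit in the paper). The only cosmetic differences are your slightly looser intermediate phrasing of the $\OPT$ estimate and the ``$k\cdot H_k$-ish'' remark, both of which you already flag and self-correct to the right constant $(k+1)H_k$.
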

\begin{proof}
We will prove the lower bound by a reduction from Anti-$(k+1)$-SG to paging with a cache of size $k$ and a universe of $k+1$ different pages, $\{s_1,\ldots , s_{k+1}\}$. To this end, let $\ALG$ be an arbitrary $c$-competitive deterministic paging algorithm with advice complexity $b(n)$ and an additive constant of $\alpha$. Define $\ALG'$ to be the following algorithm for Anti-$(k+1)$-SG: For an input string $x=x_1\ldots x_n\in [k+1]^n$, the algorithm $\ALG'$ simulates the paging algorithm $\ALG$ on the input $\sigma_x=(s_{x_1}, s_{x_2},\ldots , s_{x_n})$ with an initial cache of, say, $\{s_1,\ldots , s_k\}$. When $\ALG'$ has to output an answer $y_i\in [k+1]$ in round $i$, it computes the page, $s_j$, which is outside of the cache of $\ALG$ when the $i$th request of $\sigma_x$ arrives. $\ALG'$ then answers $y_i=j$. Clearly, $\ALG'$ incurs a cost of $1$ in round $i$ if and only if $\ALG$ makes a page-fault on the $i$th request. Furthermore, $\ALG'$ only needs to read $b(n)$ bits of advice for simulating $\ALG$. 


Now, we let $p:[k+1]^n\rightarrow [0,1]$ be the uniform distribution over all strings of length $n$ (with an alphabet of size $k+1$). We know from \cref{thm:antilower} that if $\E_{x\sim p}[\ALG'(x)]\leq \beta n$ for every input $x$ of length $n$, then $b(n)\geq K_{1/q}(\beta)n$. Furthermore, it is known (see e.g.\ \cite{BE98b}) that for every $\varepsilon>0$, there exists an $n_\varepsilon$ such that $\E_{x\sim p}[\OPT(\sigma_x)]\leq \frac{n}{(k+1)H_k-\varepsilon}+1$ whenever $n\geq n_{\varepsilon}$. For the rest of the proof, assume that $n\geq n_\varepsilon$. Using that $\ALG$ is strictly $c$-competitive, we have
\begin{align*}
\label{eq:pagred}
\E_{x\sim p}[\ALG'(x)]&= \E_{x\sim p}[\ALG(\sigma_x)]\\
&\leq c\E_{x\sim p}[\OPT(\sigma_x)]\\
&\leq \frac{cn}{(k+1)H_k-\varepsilon}+c. 
\end{align*}
By construction, the number of advice bits read by $\ALG'$ is $b(n)$. It follows from \cref{thm:antilower} that $$b(n) \geq K_{\frac{1}{k+1}}\left(\frac{c}{(k+1)H_k-\varepsilon}+\frac{c}{n}\right)n. $$
\end{proof}
\subsection{Weighted binary string guessing}
\label{appendix:binpackproof}
Let $0<s\leq t$. The \emph{weighted binary string guessing problem} with weights $(s,t)$, denoted BSG-$(s,t)$ for short, is the repeated matrix game with cost-matrix $$A=\begin{pmatrix} 0 & s \\ t & 0 \end{pmatrix}.$$
For simplicity, we will use $\{0,1\}$ as the alphabet instead of $\{1,2\}$ when working with weighted binary string guessing. It is easy to show that the value of the one-shot game defined by the matrix $A$ is $V=st/(s+t)$, and that the optimal strategy for an algorithm is to output $0$ with probability $s/(s+t)$ and $1$ with probability $t/(s+t)$. Let $\varepsilon>0$. From \cref{RMGmainthm}, we get that an online algorithm which on inputs of length $n$ is guaranteed to incur a cost of at most $\left(st/(s+t)-\varepsilon\right) n$ must read at least $\Omega(n)$ bits of advice. On the other hand, according to \cref{RMGalg}, the algorithm that answers $0$ with probability $s/(s+t)$ and $1$ with probability $t/(s+t)$ incurs an expected cost of at most $(st/(s+t)) n$ on every input of length $n$. Note that for $s=t=1$, these results are in agreement with the results for (unweighted) binary string guessing.

\begin{theorem}
\label{thm:wsgb}
Let $0<s\leq t$ and let $0<\alpha <\frac{st}{s+t}$. A randomized BSG-$(s,t)$ algorithm which on every input of length $n$ incurs an expected cost of at most $\alpha n$ must read at least $$b\geq K_{s/(s+t)}\left(\frac{\alpha}{t}\right)n$$ bits of advice.
\end{theorem}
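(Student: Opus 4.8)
The plan is to follow the proof of \cref{thm:antilower}, using \cref{Klem} inside the direct product theorem \cref{mainthm} with a hard $n$-round input distribution in which the adversary plays its minimax strategy in every round. Let $\beta=s/(s+t)\in(0,\tfrac12]$, and for $n\in\mathbb N$ let $p^n$ be the $n$-round input distribution in which each character $x_i\in\{0,1\}$ is drawn independently from the Bernoulli distribution with parameter $\beta$ (so $p_i=\beta$-Bernoulli for every $i$), where the $i$th round cost function charges $A(x_i,y_i)$, i.e.\ $t$ if $(x_i,y_i)=(1,0)$, $s$ if $(x_i,y_i)=(0,1)$, and $0$ otherwise. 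Fix a deterministic $\ALG$ reading at most $b$ bits of advice on length-$n$ inputs, a round $i$, and a history $w\in\mathcal W_i$, and set $d=D_{KL}(\piw\|p_i)$ and $z=\piw(x_i=1)$. Given $W_i=w$ the algorithm outputs a fixed character, so $\E[\cost_i(\ALG)\mid W_i=w]$ equals $tz$ or $s(1-z)$, hence is at least $g(z):=\min\{tz,\,s(1-z)\}$. By \cref{Klem} (with $q=2$, $\nu=p_i$, $\mu=\piw$) we have $z\in[\,K_{\beta,l}^{-1}(d),\,K_{\beta,r}^{-1}(d)\,]$; since $g$ increases on $[0,\beta]$ and decreases on $[\beta,1]$ (because $tz=s(1-z)$ exactly at $z=\beta$) and this interval contains $\beta$, the minimum of $g$ over it is attained at an endpoint:
\begin{equation*}
\E[\cost_i(\ALG)\mid W_i=w]\ \ge\ \min\Bigl\{\,t\,K_{\beta,l}^{-1}(d),\ s\bigl(1-K_{\beta,r}^{-1}(d)\bigr)\Bigr\}.
\end{equation*}

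The crux is to show the first term never exceeds the second, so the bound simplifies to $t\,K_{\beta,l}^{-1}(d)$. Using $K_y(x)=K_{1-y}(1-x)$ we get $1-K_{\beta,r}^{-1}(d)=K_{1-\beta,l}^{-1}(d)$, so (with $s=(s+t)\beta$, $t=(s+t)(1-\beta)$) the claim is equivalent to $(1-\beta)\,K_{\beta,l}^{-1}(d)\le\beta\,K_{1-\beta,l}^{-1}(d)$. I would deduce this from the pointwise inequality $K_{1-\beta}\!\bigl(\tfrac{1-\beta}{\beta}z\bigr)\ge K_\beta(z)$ for all $z\in[0,\beta]$: since both $K_{1-\beta,l}^{-1}(d)$ and $\tfrac{1-\beta}{\beta}K_{\beta,l}^{-1}(d)$ lie in $[0,1-\beta]$ where $K_{1-\beta}$ is strictly decreasing, and $K_{1-\beta}\bigl(K_{1-\beta,l}^{-1}(d)\bigr)=d=K_\beta\bigl(K_{\beta,l}^{-1}(d)\bigr)\le K_{1-\beta}\bigl(\tfrac{1-\beta}{\beta}K_{\beta,l}^{-1}(d)\bigr)$, monotonicity gives $K_{1-\beta,l}^{-1}(d)\ge\tfrac{1-\beta}{\beta}K_{\beta,l}^{-1}(d)$, which is exactly the desired inequality. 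For the pointwise inequality, write $c=\tfrac{1-\beta}{\beta}\ge1$ and consider $G(z):=K_{1-\beta}(cz)-K_\beta(z)$ on $[0,\beta]$; a direct computation gives $G(\beta)=0$, $G'(\beta)=0$, and $G''(z)=\tfrac1{\ln 2}\bigl(\tfrac{c-1}{z}+\tfrac{c^2}{1-cz}-\tfrac1{1-z}\bigr)\ge0$ on $(0,\beta)$ — the last because $cz\le z\cdot 1\le z$ is false, rather $1-cz\le1-z$ gives $\tfrac{c^2}{1-cz}\ge\tfrac{c^2}{1-z}\ge\tfrac1{1-z}$. Thus $G$ is convex on $[0,\beta]$ with a stationary point at its right endpoint, so $G\ge G(\beta)=0$ there.

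Finally, $f(d):=t\,K_{\beta,l}^{-1}(d)$ is convex and decreasing (it is $t$ times the inverse of the convex decreasing function $K_\beta$ restricted to $[0,\beta]$), so by \cref{mainthm}, $\E[\cost(\ALG)]\ge n\,f(b/n)=n\,t\,K_{\beta,l}^{-1}(b/n)$ for every deterministic $\ALG$ reading at most $b$ bits of advice on length-$n$ inputs. Hence any such algorithm with $\E_{\sigma\sim p^n}[\cost(\ALG)]\le\alpha n$ satisfies $K_{\beta,l}^{-1}(b/n)\le\alpha/t$; since $\alpha<\tfrac{st}{s+t}$ forces $\alpha/t<\beta$ and $K_\beta$ is decreasing on $[0,\beta]$, applying $K_\beta$ to both sides yields $b/n\ge K_\beta(\alpha/t)$, i.e.\ $b\ge K_{s/(s+t)}(\alpha/t)\,n$. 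By Yao's principle the same lower bound holds for randomized algorithms whose expected cost is at most $\alpha n$ on every length-$n$ input, which is the claim. The main obstacle is the endpoint comparison of the second paragraph (equivalently, the convexity argument for $G$); the rest is a direct transcription of the proof of \cref{thm:antilower}.
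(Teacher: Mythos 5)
Your proof is correct and essentially follows the paper's own argument: the same $n$-round minimax input distribution (Bernoulli with parameter $\beta=s/(s+t)$ on the character $1$), the same use of \cref{Klem} to bound $\piw$, the same key comparison of the two endpoint costs reducing the round-wise bound to $t\,K_{\beta,l}^{-1}(d)$, the same convex decreasing $f$ plugged into \cref{mainthm}, and the same final inversion. There are two presentational departures. First, you prove the general $(s,t)$ case directly, whereas the paper proves BSG-$(1,t)$ and then rescales via $\ALG(x)=s\cdot\ALG'(x)$; your route avoids the rescaling step at the cost of carrying $s$ and $t$ through the formulas, and the two are interchangeable. Second, and more usefully, you supply an explicit proof of the inequality the paper dismisses as ``a lengthy but straightforward calculation,'' namely $K_{p(1)}(x/t)\le K_{p(0)}(x)$ (in your normalization, $K_{1-\beta}\bigl(\tfrac{1-\beta}{\beta}z\bigr)\ge K_\beta(z)$ for $z\in[0,\beta]$). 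Your convexity argument for $G(z)=K_{1-\beta}(cz)-K_\beta(z)$ — showing $G(\beta)=G'(\beta)=0$ and $G''\ge 0$ on $(0,\beta)$, hence $G\ge 0$ by the tangent-line inequality — is a clean way to establish this and only requires the range $z\in[0,\beta]$ (equivalently $x\in[0,1-\beta]$), which is exactly what the application needs even though the paper asserts the weaker range $x\in[0,1]$. The small detours you take (bracketing $z$ between $K_{\beta,l}^{-1}(d)$ and $K_{\beta,r}^{-1}(d)$, then converting $1-K_{\beta,r}^{-1}(d)=K_{1-\beta,l}^{-1}(d)$) are equivalent to the paper's use of \cref{Klem1} separately on $p_i(0\mid w)$ and $p_i(1\mid w)$. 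Overall this is the same proof, with a welcome fill-in for the omitted calculation.
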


\begin{proof}
Let $t>1$. We start by proving the lower bound for BSG-$(1,t)$. To this end, we use the following $n$-round input distribution: In each round, the correct answer is $0$ with probability $t/(t+1)$ and $1$ with probability $1/(t+1)$. Fix a deterministic algorithm $\ALG$ which reads at most $b$ bits of advice on on inputs of length $n$. Let $1\leq i\leq n$ and let $w\in\mathcal{W}_i$ be any possible history in round $i$. Recall that $p_i(0|w)=\Pr[X_i=0|W_i=w]$ and similar for $p_i(1|w)$. Note that if $\ALG$ answers $0$ in round $i$, then its expected cost is $p_i(1|w)\cdot t$, and if $\ALG$ answers $1$, then its expected cost is $p_i(0|w)$. It follows that
\begin{equation}
\E[\cost_i(\ALG)|W_i=w]\geq\min\left\{p_i(0| w), p_i(1| w)\cdot t\right\}.
\end{equation}
Let $d=D_{KL}(p_i(x|w)\|p_i)$. It follows from (\ref{Klem1}) of \cref{Klem} that $p_i(0|w)\geq K_{p(0),l}^{-1}(d)$ and $p_i(1|w)\geq K^{-1}_{p(1),l}(d)$. A lengthy but straightforward calculation reveals that for all $x\in[0,1]$, it holds that $K_{p(1)}(x/t)\leq K_{p(0)}(x)$. Using this inequality with $x=K_{p(0)}^{-1}(d)$ yields $K_{p(1),l}^{-1}(d)\cdot t\leq K_{p(0)}^{-1}(d)$. Thus, $\min\{p_i(0\vert w), p_i(1\vert w)\cdot t\}\geq K_{p(1),l}^{-1}(d)\cdot t$ and therefore
\begin{equation}
\E[\cost_i(\ALG)|W_i=w]\geq K_{p(1),l}^{-1}(d)\cdot t.
\end{equation}
Let $f(d)=K_{p(1),l}^{-1}(d)\cdot t$. Since $f$ is convex and decreasing, it follows from \cref{mainthm} that $\E[\cost(\ALG)]\geq n f(b/n)$. We will now calculate how large $b$ needs to be in order to ensure that the expected cost is at most $\alpha n$ for $0<\alpha< t/(1+t)$:
\begin{align}
\notag &\alpha n\geq nf(b/n)\\
\notag&\Rightarrow \alpha \geq K_{p(1),l}^{-1}(b/n)t\\
\notag&\Rightarrow K_{p(1)} (\alpha / t)\leq b/n\\
&\Rightarrow b\geq nK_{1/(1+t)}(\alpha / t). \label{thm16:calc}
\end{align}

\emph{The general case. }We are now ready to prove the lower bound for the general case. Let $0<\alpha<\frac{st}{s+t}$. Note that in particular, $\alpha<t$. If $s=t$, then BSG-$(s,t)$ is equivalent (after rescaling the costs) to $2$-SGKH and we are done. Assume therefore (without loss of generality) that $s<t$. Let $\ALG$ be a BSG-$(s,t)$ algorithm with advice complexity $b(n)$. Let $\ALG'$ be the BSG-$(1,t/s)$ algorithm which works exactly as $\ALG$. By definition, for every input string $x\in\{0,1\}^n$, it holds that $\ALG(x)=s\cdot \ALG'(x)$. It follows that if $\ALG$ incurs an expected cost of at most $\alpha n$ on inputs of length $n$, then $\ALG'$ never incurs an expected cost larger than $(\alpha / s)n$ for any input string of length $n$. It then follows from (\ref{thm16:calc}) that 
\begin{align*}
b(n)\geq K_{s/(s+t)}\left(\frac{\alpha}{t}\right)n
\end{align*}
\end{proof}
\subsubsection{Proof of \texorpdfstring{\cref{thm:betterbp}}{Theorem 4}: Bin packing}
\label{subsec:betterbp}
We will show how our lower bound for BSG-$(s,t)$ gives rise to an improved advice complexity lower bound for the classical bin packing problem. This illustrates that even for a problem which is not itself repeatable, it may still be possible to use our techniques to obtain improved lower bounds via e.g.\ a reduction from a repeated matrix game.

 Boyar et al.\ gave a reduction from binary string guessing ($2$-SGKH) to bin packing and used this to obtain a lower bound of $9/8$ for bin packing algorithms with sublinear advice~\cite{Abp}. Later, Angelopoulos et al.\ slightly modified the reduction and improved the lower bound to $7/6$~\cite{DBLP:conf/wads/AngelopoulosDKR15}. We will show that using the same reduction as in \cite{DBLP:conf/wads/AngelopoulosDKR15}, but reducing from \emph{weighted} string guessing further improves the lower bound to $4-2\sqrt{2}$. Note that $7/6=1.166\overline{6}$ while $4-2\sqrt{2}>1.1715$. We start by describing the reduction from \cite{DBLP:conf/wads/AngelopoulosDKR15}.
\paragraph{Reduction from string guessing to bin packing (from \cite{DBLP:conf/wads/AngelopoulosDKR15, Abp}).} Let $x=x_1\ldots x_n\in\{0,1\}^n$ be a binary string and let $\ab{x}_1$ be the hamming weight of $x$. We define a bin packing instance, $\sigma_x$, as follows: $\sigma_x$ consists of three phases. Phase 1 consists of $\ab{x}_1$ items all of size $1/2+\varepsilon$. Phase 2 consists of $n$ items with sizes in $(1/3, 1/2-\varepsilon)$. Initially, we set $l_0=1/3$ and $h_0=1/2-\varepsilon$ and $m_0=\frac{h_0+l_0}{2}$. For $1\leq i\leq n-1$, if $x_i=0$ then let $l_i=m_{i-1}$ and $h_i=h_{i-1}$, and if $x_i=1$ then let $l_i=l_{i-1}$ and $h_i=m_{i-1}$. In both cases, let $m_i=(l_i+h_i)/2$. The $i$th item of phase 2 will have size $m_i$. If $x_i=0$, we say that the $i$th item is \emph{small} and if $x_i=1$, we say that it is \emph{large}. Note that exactly $\ab{x}_1$ items will be large and $n-\ab{x}_1$ items will be small. Let $I_s$ be the set of small items and let $I_l$ be the set of large items. Phase 3, the final phase of $\sigma_x$, consists of $n-\ab{x}_1$ items: For each small item $a\in I_s$, phase 3 contains an item of size $1-a$ (here, we follow the standard bin packing convention of using the same symbol for both the item and the size of the item).

Note that $\ab{\sigma_x}=2n$ and $\OPT(\sigma_x)=n$. Indeed, an optimal offline algorithm will pack all large items on top of a phase 1 item and will open a new bin for all small items. It will then pack all phase 3 items on top of the corresponding small item.

Given a bin packing algorithm, $\ALG$, we define a string guessing algorithm, $\ALG'$, as follows: On input $x\in\{0,1\}^n$, the algorithm $\ALG'$ will simulate $\ALG$ on $\sigma_x$. In order to do this, $\ALG'$ is provided with the advice, $\varphi_x$, read by $\ALG$ on input $\sigma_x$. However, note that in order for $\ALG'$ to know how long phase 1 of $\sigma_x$ is, $\ALG'$ will need to know $\ab{x}_1$. Thus, $\ab{x}_1$ is written onto the advice tape of $\ALG'$ in a self-delimiting way using $2\lceil \log n\rceil +1=O(\log n)$ bits. Since $\ALG'$ knows $x_1\ldots x_{i-1}$ when it has to output its guess, $y_i$, in round $i$ (and since it knows $\ab{x}_1$ and $\varphi_x$ from its advice), it is clear that $\ALG'$ can determine into which bin $\ALG$ packs the $i$th item, $a_i$, of phase 2 before deciding on its answer $y_i$. If $a_i$ is packed on top of a phase 1 item, then $\ALG'$ will answer $y_i=1$. If $a_i$ is used to open a new bin, then $\ALG'$ will answer $y_i=0$. If $a_i$ is packed on top of another phase 2 item, then $\ALG'$ will answer $y_i=1$. Intuitively, $\ALG'$ answers $0$ if the packing of $a_i$ indicates that $\ALG$ thinks $a_i$ is a small item (and $\ALG'$ answers $1$ if $\ALG$ seems to think that $a_i$ is large).

We say that a string guessing algorithm makes a \emph{$0$-mistake} if it answers $1$ when the correct answer is $0$, and we define \emph{$1$-mistake} symmetrically. Let $e_0$ be the number of $0$-mistakes made by $\ALG'$ and let $e_1$ be the number of $1$-mistakes. Angelopoulos et al.\ proves in \cite{DBLP:conf/wads/AngelopoulosDKR15} the following relation between the number and type of mistakes made by $\ALG'$ on the input string $x$ and the number of additional (when compared to $\OPT$) bins opened by $\ALG$ when given $\sigma_x$ as input:
\begin{align}
e_0&\leq 2(\ALG(\sigma_x)-n),\label{e0mistakes}\\
e_1&\leq (\ALG(\sigma_x)-n).\label{e1mistakes}
\end{align}
They then observe that this implies $e_0+e_1=\ALG'(x)\leq 3(\ALG(\sigma_x)-n)$ and use the hardness result for $2$-SGKH to obtain a lower bound for bin packing with sublinear advice. Note that by considering the unweighted sum $e_0+e_1$, the asymmetry present in the reduction between $0$-mistakes and $1$-mistakes is not being taken into account. We will now show that using our hardness result for weighted string guessing, we can use this asymmetry between $0$-mistakes and $1$-mistakes in the reduction to obtain a better lower bound for bin packing. For convenience, we restate \cref{thm:betterbp} before giving the proof.

\restatebetterbp*
\begin{proof}[Proof of \cref{thm:betterbp}]
Fix $t\geq 1$. The proof is by a reduction from BSG-$(1,t)$. We will show that if 
\begin{equation}
c<\frac{t}{(1+t)(2+t)}+1
\label{bpasump}
\end{equation}
 and $c$ is constant, then every $c$-competitive randomized bin packing algorithm must read at least $\Omega(n)$ bits of advice.

Fix $n\in\mathbb{N}$. Let $\ALG$ be a deterministic bin packing algorithm reading $b(2n)$ bits of advice on inputs of length $2n$. Using the reduction of Angelopoulos et al.\ \cite{DBLP:conf/wads/AngelopoulosDKR15} (which is described in details before the statement of \cref{thm:betterbp}), we convert $\ALG$ into a BSG-$(1,t)$ algorithm, $\ALG'$, which on input $x\in\{0,1\}^n$ works by simulating $\ALG$ on the bin packing instance $\sigma_x$ of length $2n$. From (\ref{e0mistakes}) and (\ref{e1mistakes}), it follows that $\ALG'$ will make at most $2(\ALG(\sigma)-n)$ $0$-mistakes and at most $(\ALG(\sigma)-n)$ $1$-mistakes. Thus, $\ALG'(x)=e_0+te_1\leq (2+t)(\ALG(\sigma_x)-n)$.

Let $\ra$ be a randomized $c$-competitive bin packing algorithm reading $b(2n)$ bits of advice on inputs of length $2n$. We can convert $\ra$ into a randomized BSG-$(1,t)$ algorithm, $\ra'$, by applying the reduction described above to all algorithms in the support of the probability distribution over deterministic algorithms specified by $\ra$. Doing so, we get that $\E[\ra'(x)]\leq (2+t) (\E[\ra(\sigma_x)]-n)$. Since $\ra$ is $c$-competitive, this implies that $ \E[\ra'(x)]\leq (2+t)(cn+\alpha-n)=(2+t)(c-1)n+\alpha(2+t)$ for some additive constant $\alpha$ (here, we used that $\OPT(\sigma_x)=n$).

Since $c$ is a constant satisfying (\ref{bpasump}), an easy calculation yields $\E[\ra'(x)]\leq(t/(1+t)-\varepsilon)n+\alpha(2+t)$ for some constant $\varepsilon>0$. Since $\alpha(2+t)$ is constant, this implies that for all $n$ large enough, we have $\E[\ra'(x)]\leq (t/(1+t)-\frac{\varepsilon}{2})n$. Recall that $\ab{\sigma_x}=2n$ and that $\ra'$ therefore reads $b(2n)+2\lceil \log n\rceil +1$ bits of advice when given $x$ as input (because of how the reduction from string guessing to bin packing works). From \cref{thm:wsgb} (or simply \cref{RMGmainthm}), we get that $b(2n)+2\lceil \log n\rceil +1=\Omega(n)$. This implies that $b(n)=\Omega(n)$. The proof is finished by setting $t=\sqrt{2}$ for which (\ref{bpasump}) attains its maximum value of $4-2\sqrt{2}$.

\end{proof}

\section{Superlinear advice lower bound for graph coloring}
Up until now, we have almost exclusively shown linear advice complexity lower bounds. In this section, we will show to use our direct product theorems to obtain the following superlinear lower bound for online graph coloring (see \cite{DBLP:conf/dagstuhl/Kierstead96} for a survey on the problem): For every fixed $\varepsilon>0$, a randomized graph coloring algorithm with advice complexity $o(n\log n)$ must have a competitive ratio of at least $\Omega(n^{1-\varepsilon})$. Previously, it was only known that $\Omega(n\log n)$ bits of advice are were necessary to be $1$-competitive~\cite{Agraphc}. Note that $O(n\log n)$ bits of advice trivially suffice to be optimal. Thus, our lower bound for graph coloring shows that there is a drastic phase-transition in the advice complexity of the problem. 

It is easy to see that for every $c=n^{1-o(1)}$, there exists a $c$-competitive graph coloring algorithm reading $o(n \log n)$ bits of advice. Indeed, let $k$ be the chromatic number of the input graph and suppose that $n/k\leq c$. Then we may just color the graph greedily, since even using $n$ colors is good enough to be $c$-competitive. On the other hand, if $n/k> c$, then $k<n/c$. But if $c=n^{1-o(1)}$, then this implies that $k=n^{o(1)}$. Using $b=n\lceil\log k\rceil =n\log (n^{o(1)})$ bits of advice (plus some small lower-order term to encode $k$ in a self-delimiting way) is therefore enough to be $c$-competitive. But if $b=n\log (n^{o(1)})$, then $b=o(n\log n)$. For example, one can obtain a competitive ratio of $O(n/\log^{O(1)}n)$ using $O(n\log (\log^{O(1)} n))=O(n\log\log n)$ bits of advice. Thus, our lower bound saying that $\Omega(n\log n)$ bits are needed to be $O(n^{1-\varepsilon})$-competitive for every constant $\varepsilon>0$ is essentially tight, in the sense that $O(n^{1-\varepsilon})$ cannot be increased to $n^{1-o(1)}$.

In order to prove a lower bound for algorithms with superlinear advice, we apply our martingale-theoretic direct product theorem (\cref{dpmartin}) to a hard input distribution for graph coloring due to Halld{\'o}rsson and Szegedy~\cite{MS}. They show that a randomized graph coloring algorithm without advice must have a competitive ratio of at least $\Omega(n/\log^2 n)$. By slightly modifying some of the parameters in their construction and combining it with \cref{dpmartin}, we obtain a lower bound of $\Omega(n\log n)$ on the advice complexity of algorithms with a competitive ratio of $O(n^{1-\varepsilon})$. We remark that one can equally well use \cref{mainthm} to obtain this lower bound, but we believe that \cref{dpmartin} is easier to apply in this case.

\label{sec:graphc}

We begin by formally defining the abstract guessing game which underlies the lower bound in \cite{MS}.
\begin{definition}[cf.\ \cite{MS}]
\label{def:hsgg}
The \HS (HSGG) is a minimization problem. At the beginning, two integers $k$ and $n$ where $k\leq n$ and $k$ is even are revealed to the algorithm. The problem consists of $n$ rounds. For $1\leq i \leq n$, round $i$ proceeds as follows:
\begin{enumerate}
\item A set $A_i\in [k]^{k/2}$ of \emph{available characters} is revealed to the algorithm.
\item The algorithm answers $y_i\in [n]$ subject to the following feasibility constraint: 
\begin{equation}
\text{For each $1\leq t<i$, if $y_t=y_i$, then $x_t$ must belong to $A_i$.}
\label{feas}
\end{equation}
\item The \emph{correct character} $x_i\in A_i$ is revealed to the algorithm.
\end{enumerate} 
The cost of the \emph{output} $y=y_1\ldots y_n\in [n]^n$ is the number of different characters from $[n]$ in $y$, i.e., the cost is $\ab{\{y_1,\ldots , y_n\}}$.
\end{definition}

Note that there will always be at least one character in $[n]$ satisfying the feasibility constraint (\ref{feas}). This follows since the game has $n$ rounds and since a character which have never been used before by the algorithm is always feasible. Furthermore, we remark that the correct character $x_i$ will not in general be a feasible answer for an online algorithm in round $i$. Namely, it could be that the algorithm in a previous round $t<i$ (wrongly) answered $y_t=x_i$. If $x_t\notin A_i$, then answering $x_i$ is not possible in round $i$. However, the cost of an optimal solution to an instance of HSSG is always at most $k$. Indeed, since \OPT knows the correct characters $x_1\ldots x_n\in [k]^n$, \OPT may simply answer $y_i=x_i$ in each round. To see that this is possible, note that if $y_t=y_i$ for some $1\leq t<i$, then $y_t=x_i$ and hence $x_t=y_t=x_i\in A_i$. Thus, $x_i$ will be a valid answer in round $i$ since \OPT has not previously made any mistakes.

While $k$ is an upper bound on $\OPT$, it is not a tight bound (even if all $k$ characters actually appears in $x_1\ldots x_n$). Consider the instance with $n=4,k=4$, correct characters $x_1x_2x_3x_4=1234$, and $A_1=A_2=\{1,2\}$, $A_3=A_4=\{3,4\}$. In this case, $y=1133$ is a feasible output of cost $2$.

Note that $O(n\log k)$ bits of advice allows for an optimal HSGG-algorithm.

\begin{theorem}
\label{HSGGadvice}
Let $c\in\mathbb{N}$, let $\delta>1/c$, and let $\ALG$ be a deterministic HSGG-algorithm such that on inputs where $n=k^c$, the cost of $\ALG$ is at most $O(n^{1-\delta})$. Then, the algorithm $\ALG$ must read at least $b=\Omega(n\log n)$ bits of advice.
\end{theorem}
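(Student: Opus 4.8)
The approach is to apply the martingale-theoretic direct product theorem (\cref{dpmartin}) to an $r$-round input distribution built from the Halld\'orsson--Szegedy construction, where each round is one step of the guessing game. First I would fix parameters $n=k^c$ and think of the $n$ rounds of a single HSGG-instance as the $r=n$ rounds of an $r$-round input distribution: in round $i$, the adversary reveals a set $A_i$ of $k/2$ available characters and then picks the correct character $x_i$. The random choices are: $A_i$ drawn uniformly (or according to the distribution used by Halld\'orsson--Szegedy) from the $k/2$-subsets of $[k]$, and $x_i$ drawn uniformly from $A_i$. These choices are independent across rounds, so this is a legitimate $r$-round input distribution in the sense of \cref{rround}, with $\cost_i$ being the indicator (or a suitably scaled version) of the event that round $i$ forces the algorithm to spend a ``new'' color — i.e. that the algorithm's answer $y_i$ is a value it has not been committed to before, or more precisely that round $i$ contributes to the count $|\{y_1,\dots,y_n\}|$. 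The key combinatorial fact from \cite{MS} is that, regardless of history, a deterministic algorithm without advice is forced with constant probability in each round to use a fresh color, because for any previously-used color $y_t$ with committed character $x_t$, the probability that $x_t\in A_i$ (so that $y_t$ remains feasible) is bounded away from $1$, and only a bounded number of old colors can remain simultaneously feasible. This gives the lower-bound constant $t$ in assumption \eqref{asume}, and assumption \eqref{asumv} is immediate since each round's cost is in $\{0,1\}$ (take $s=1$).

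The heart of the argument is then purely mechanical: \cref{dpmartin} says that for any algorithm reading $b$ bits of advice,
\begin{equation*}
\Pr[\cost(\ALG)\leq (t-\varepsilon)r]\leq \exp_2\!\left(b-K_{\frac{\gamma}{1+\gamma}}\!\left(\tfrac{\alpha+\gamma}{1+\gamma}\right)r\right),
\end{equation*}
with $r=n$, $\gamma=s^2/t^2$, $\alpha=\varepsilon/t$ all constants. So if $b<K_{\frac{\gamma}{1+\gamma}}(\frac{\alpha+\gamma}{1+\gamma})n - 1$, the probability is less than $1$ and there exists an input on which $\ALG$ is forced to use $\Omega(n)$ colors. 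But the problem statement demands a superlinear $\Omega(n\log n)$ bound, so I need to boost the entropy. The trick — this is where the parameter choice $n=k^c$ with $\delta>1/c$ enters — is that each round carries $\Theta(\log k)=\Theta(\log n)$ bits of entropy, not $\Theta(1)$. Concretely, I would not collapse each round's cost to a single bit but instead exploit that to avoid a fresh color in round $i$ the algorithm essentially has to correctly predict $x_i$ from an alphabet of effective size $\Theta(k)$ (there are $k$ possible correct characters, and only a constant number of colors can remain feasible). Thus avoiding cost in round $i$ requires the conditional distribution $p_{i|w}$ to deviate from $p_i$ by an amount whose $D_{KL}$ is $\Omega(\log k)$, mirroring the $q$-SGKH analysis in \cref{thmreprove} with $q=\Theta(k)$. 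Feeding this into the information-theoretic theorem (\cref{mainthm}) rather than the martingale one — using a convex decreasing $f$ with $f(d)$ staying near $t$ until $d$ reaches $\Omega(\log k)$ — yields that to make the cost $O(n^{1-\delta})=o(n)$ one needs $b=\Omega(n\log k)=\Omega(n\log n)$, since $\log k = \frac1c\log n = \Theta(\log n)$.

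So the concrete steps are: (1) set up the $r$-round distribution with $r=n$ rounds, each round drawing $A_i$ and $x_i$; (2) reprove the Halld\'orsson--Szegedy structural lemma in the form needed: conditioned on any history $w$ and any advice, if the conditional distribution of $x_i$ given $W_i=w$ is within KL-divergence $d$ of uniform-on-$A_i$, then the expected round-$i$ cost is at least $f(d)$ for an explicit convex decreasing $f$ with $f(0)$ a positive constant and $f$ dropping to near $0$ only once $d=\Omega(\log k)$ — this uses \cref{Klem} exactly as in \cref{thmreprove,thm:antilower}, together with the counting bound on how many old colors stay feasible; (3) invoke \cref{mainthm} to get $\E[\cost(\ALG)]\geq n f(b/n)$; (4) solve $n f(b/n)\leq O(n^{1-\delta})$ for $b$, obtaining $b=\Omega(n\log k)=\Omega(n\log n)$; (5) pass from deterministic to randomized via Yao's principle (\cref{Yao}), exactly as in the other repeated-matrix-game theorems. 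The main obstacle is step (2): carefully bounding how many previously-used colors can remain feasible after a new $A_i$ is revealed, and showing this forces the effective ``guessing alphabet'' in round $i$ to have size $\Theta(k)$ so that the $\log k$ entropy is genuinely extracted per round — this is the delicate part of the Halld\'orsson--Szegedy construction and is where their specific choice of set sizes $k/2$ and the relation $n=k^c$ must be used to ensure that feasibility constraints from distant rounds do not accumulate. Everything downstream is a routine application of the machinery already developed in the paper.
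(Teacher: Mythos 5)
Your plan starts on the right track — the $n$-round input distribution with $A_i$ uniform on $k/2$-subsets and $x_i$ uniform on $A_i$, the $\{0,1\}$-valued auxiliary round cost, and the idea of pushing through a direct product theorem are all in line with what the paper does. But you hit the superlinear obstacle and take the wrong exit, and the detour you propose is not filled in.

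The specific gap is in your treatment of the variance. You take $s=1$ in assumption (\ref{asumv}), observe (correctly) that with constant $t$ and $s=1$ the exponent in \cref{dpmartin} is only $\Theta(1)$ per round and therefore yields just $b=\Omega(n)$, and conclude that the martingale theorem is too weak and you must switch to the information-theoretic route. But \cref{dpmartin} is not too weak; your choice of $s$ is too loose. The paper's round cost is the indicator $\pair_i$ of a ``new pair'' being formed, and the crucial structural estimate (via the counting argument on subsets containing $H_i(j)$) shows that a deterministic algorithm \emph{without} advice has $\Pr[\pair_i=1\mid W_i=w]\geq p$ where $p=(1-2/\sqrt{k})^2\to 1$. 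Since the round cost is Bernoulli with mean so close to $1$, one can take $s^2=p(1-p)=\Theta(1/\sqrt{k})$, not $s^2=1$. With this $s$, the parameter $\gamma=s^2/t^2=(1-p)/p$ vanishes as $k\to\infty$, and the algebra collapses the exponent to $K_{1-p}(1/2)=\tfrac{1}{2}\log\frac{1}{4p(1-p)}=\Theta(\log k)=\Theta(\log n)$ per round — exactly the superlinear strength you are looking for, without ever leaving \cref{dpmartin}. Finally, one passes from $\pair$ to actual cost by the observation that a single output color can be paired with at most $|A_i|=k/2$ distinct correct characters, so $\pair\geq n/2$ forces cost $\geq n/(k)=n^{1-1/c}$, which contradicts $O(n^{1-\delta})$ with $\delta>1/c$.

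Your alternative step (2) — arguing that to avoid a fresh color the conditional distribution $p_{i\mid w}$ must deviate from $p_i$ by $\Omega(\log k)$ in KL, and pushing this through \cref{mainthm} — could conceivably be made to work, but you have not carried it out, you acknowledge it is ``the delicate part,'' and your informal justification (``only a bounded number of old colors can remain simultaneously feasible'') is not the argument the Halld\'orsson--Szegedy construction actually supports; the correct structural claim is probabilistic (with high probability over $A_i$, \emph{every} feasible color $j$ has $|A_i\setminus H_i(j)|\geq k/2-\sqrt{k}$), not a deterministic bound on how many colors stay feasible. As written, the proposal has a genuine gap, and the cleaner repair is to keep \cref{dpmartin} and use the tight variance bound, as the paper does — precisely the optimization that the remark after \cref{dpmartin} flags.
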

\begin{proof}[Proof (cf.\ \cite{MS})]
Fix $c\in\mathbb{N}$. Let $k$ be an even square number and let $n=k^c$. We want to apply \cref{dpmartin}. To this end, we construct an $n$-round input distribution. Each round of the distribution will consist of exactly one request (and hence correspond to a single round of HSGG). In each of the $n=k^c$ rounds, the set of available characters $A_i\in[k]^{k/2}$ is selected uniformly at random (and independently of all previous choices made by the adversary). The correct character $x_i\in A_i$ is then selected uniformly at random~from~$A_i$.

Recall that when specifying an $n$-round input distribution, we also need to specify a cost function for each round. Following \cite{MS}, we will not use directly the cost function from the definition of HSGG. Instead, for each $1\leq i\leq n$, we define an auxiliary cost function, $\pair_i$, as follows: Let $y=y_1\ldots y_n$ be the output computed by some algorithm and let $x=x_1\ldots x_n$ be the correct characters. If $(y_i, x_i)\notin \{(y_t,x_i): 1\leq t<i\}$, i.e., round $i$ is the first time the algorithm answers $y_i$ while the correct character is $x_i$, then we say that $(y_i, x_i)$ is a \emph{new pair} (and that $y_i$ and $x_i$ are \emph{paired}). If $(x_i, y_i)$ is a new pair, then $\pair_i(y,x)=1$. Otherwise, if $y_i$ and $x_i$ were already paired previous to round $i$, then $\pair_i(y,x)=0$. It was observed in \cite{MS} that if $\pair(y,x)=\sum_{i=1}^n\pair_i(y,x)=m$, then the actual cost of the output $y$ is at least $\frac{m}{k/2}$ (since $\ab{A_i}=k/2$ for all $i$, an output character cannot be paired with more than $k/2$ correct characters).   

Let $\DET$ be a deterministic HSSG-algorithm without advice. We want to show that the probability that $\pair_i(\DET)$ is no more than $n/2$ is very small. For a round $i$ and an output character $j\in [n]$, let $H_i(j)=\{x_t \colon y_t=j\text{ and } t<i\}$ be those correct characters which prior to round $i$ were paired with $j$ (note that $H_i(j)$ depends on $\DET$). Define

\begin{equation}
\dist(A_i, \DET)=\min_{\mbox{\small $j\in [n]\colon H_i(j)\subseteq A_i$}} \ab{A_i \setminus H_i(j)}.
\end{equation}

We remark that $\dist(A_i,\DET)$ is a random variable. Suppose that $\DET$ outputs $y_i$ in round $i$. Then it must hold that $H_i(y_i)\subseteq A_i$ since otherwise $y_i$ was not a feasible answer in that round. Furthermore, since $x_i$ is selected uniformly at random from $A_i$, the probability that $(y_i, x_i)$ is a new pair is $\frac{\ab{A_i\setminus H_{i}(y_i)}}{\ab{A_i}}$. Thus, for every history $w\in\mathcal{W}_i$,

\begin{equation}
\Pr \left[\pair_i(\DET)=1\middle| W_i=w\right]\geq \frac{\E[\dist (A_i, \DET)| W_i=w]}{k/2}.
\end{equation}

The random variable $\dist (A_i, \DET)$ depends in a complicated way on the previous rounds of the guessing game. However, suppose we want an upper bound on $\Pr[\dist (A_i, \DET)\leq s]$ for some $s\in\mathbb{N}$. No matter how the previous rounds have proceeded, for each of the $n$ output characters $j\in [n]$, the set $H_i(j)$ is a subset of $[k]$ of size at most $k/2$. Furthermore, $H_i(j)$ is a subset of at most $\binom{k-\ab{H_i(j)}}{k/2-\ab{H_i(j)}}$ sets of size $k/2$. Thus, if $\ab{H_i(j)}\geq k/2-s$, then $H_i(j)$ is a subset of at most $\binom{k/2+s}{s}$ sets of size $k/2$. Using that there are $\binom{k}{k/2}$ choices for $A_i$ (each of which is equally likely) and the union bound, we get that
\begin{align*}
\Pr \left[\operatorname{dist} (A_i, \DET)\leq k/2-\sqrt{k} \middle| W_i=w\right]&\leq\frac{\binom{k-\sqrt{k}}{k/2-\sqrt{k}}\cdot n}{\binom{k}{k/2}}\\
&\leq \frac{2^{k-\sqrt{k}}\cdot k^c}{2^k \cdot k^{-1}}\\
&=2^{-\sqrt{k}}k^{c+1}.
\end{align*}
Here, we used that $\binom{k}{k/2}\geq 2^k / k$. We can now bound from below the expected value of $\dist(A_i, \DET)$. Set $s=k/2-\sqrt{k}$. Then,
\begin{align*}
\E\left[ \dist(A_i, \DET)\middle| W_i=w \right] &\geq s\cdot \Pr\left[\dist (A_i, \DET)\geq s \middle| W_i=w\right]\\
&\geq s\cdot \left(1-\Pr\left[\dist (A_i, \DET)\leq s \middle| W_i=w\right]\right)\\
&\geq \left(k/2-\sqrt{k}\right)\left(1-2^{-\sqrt{k}}k^{c+1}\right).
\end{align*}
Thus,
\begin{align*}
  \frac{\E\left[\dist(A_i, \DET)\middle| W_i=w\right]}{k/2}&\geq \frac{\left(k/2-\sqrt{k}\right)\left(1-2^{-\sqrt{k}}k^{c+1}\right)}{k/2}\\
&\geq \left(1-\frac{2}{\sqrt{k}}\right)\cdot\left(1-\frac{k^{c+1}}{2^{\sqrt{k}}}\right)\\
&\geq \left(1-\frac{2}{\sqrt{k}}\right)^2\text{ for $k\geq K$}.
\end{align*}
Here, $K$ is chosen such that $(1-2/\sqrt{k})\leq (1-2^{-\sqrt{k}}k^{c+1})$ for all $k\geq K$. This is always possible since $k^{c+1}\sqrt{k}=o(2^{\sqrt{k}})$. Set $p:=(1-2/\sqrt{k})^2$. For the rest of the proof, we assume $k\geq K$, and we assume $k\geq 49$ so that $p\geq 1/2$.

Just as $\dist(A_i, \DET)$, the random variable $\pair_i(\DET)$ is obviously not independent of $\pair_t(\DET)$ for $t<i$. However, the lower bound derived above holds for every possible history $w\in\mathcal{W}_i$ of the game at round $i$. Thus, $\pair_i(\DET)$ is a $\{0,1\}$-valued random variable with conditional expectation $\E[\pair_i(\DET) \vert W_i=w]=\Pr[\pair_i(\DET)=1 \vert W_i=w]\geq p$ for every $w\in\mathcal{W}_i$. From this, it follows that for every $w\in\mathcal{W}_i$,

\begin{align*}
\E\left[\left(\pair_i(\DET)-p\right)^2\middle| W_i=w\right]&=\Pr[\pair_i(\DET)=0 \vert W_i=w]\cdot (0-p)^2\\ 
&\phantom{{}={}}+\Pr[\pair_i(\DET)=1 \vert W_i=w]\cdot (1-p)^2\\
&=\left(1-\Pr[\pair_i(\DET)=1 \vert W_i=w]\right)\cdot p^2\\
&\phantom{{}={}}+\Pr[\pair_i(\DET)=1 \vert W_i=w]\cdot (1-p)^2\\
&\leq p(1-p).
\end{align*}
The last inequality above holds since we are assuming that $p\geq 1/2$. let $\ALG$ be a deterministic HSGG-algorithm with advice complexity $b(n)$ such that on inputs where $n=k^c$, the cost of $\ALG$ is at most $O(n^{1-\delta})$.  If we let $\varepsilon=p-1/2$, it follows from \cref{dpmartin} and our calculations above that
\begin{align*}
\Pr[\pair(\ALG)\leq 1/2 n]&=\Pr[\pair(\ALG)\leq (p-\varepsilon)n]\\
&\leq \exp_2\left(b(n)-K_{\mbox{\Large$\frac{\gamma}{1+\gamma}$}}\left(\frac{\alpha+\gamma}{1+\gamma}\right)n\right)\\
&=\exp_2\left(b(n)-K_{1-p}\left(\frac{1}{2}\right)n\right),
\end{align*}
where $\gamma=(p(1-p))/p^2=\frac{1-p}{p}$ and $\alpha=\varepsilon / p=1-\frac{1}{2p}$. Now,
\begin{align*}
K_{1-p}(1/2)&=1/2\cdot \log\left(\frac{1/2}{1-p}\right)+1/2\cdot \log\left(\frac{1/2}{p}\right)\\
&=1/2\cdot\log\left(\frac{1}{4p(1-p)}\right)\\
&=1/2\cdot\log\left(\frac{k^2}{16(\sqrt{k}-2)^2(\sqrt{k}-1)}\right)\\
&\geq 1/2 \cdot\log\left(\frac{k^2}{16 (\sqrt{k})^2(\sqrt{k})}\right)\\
&=1/2\cdot\log\left(\frac{\sqrt{k}}{16}\right).
\end{align*}
Thus,
\begin{equation}
\label{hsgg:maineqproof}
\Pr[\pair(\ALG)\leq n/2]\leq \exp_2\left(b(n)-\frac{1}{2}\log\left(\frac{\sqrt{k}}{16}\right)n\right).
\end{equation}

Recall that the cost of $\ALG$ is assumed to be at most $O(n^{1-\delta})$ for some fixed constant $\delta>1/c$. Assume by way of contradiction that $b(n)\in o(n \log n)$. Under this assumption, if $k$ (and thereby also $n=k^c$) is sufficiently large, then $b(n)<\frac{1}{2}\log\left(\frac{\sqrt{k}}{16}\right)n$ since $\frac{1}{2}\log\left(\frac{\sqrt{k}}{16}\right)n=\frac{1}{2}\log\left(\frac{n^{1/c}}{16}\right)n=\Omega(n\log n)$. Thus, for all sufficiently large $k$ and $n=k^c$, it follows from (\ref{hsgg:maineqproof}) that there exists an input, $\sigma$, such that $\pair(\ALG(\sigma))\geq n/2$ and, hence, \mbox{$\ALG(\sigma)\geq\frac{\pair(\ALG(\sigma))}{k/2}\geq\frac{n/2}{k/2}=n^{1-1/c}$}. This contradicts the claimed performance guarantee of $\ALG$ since $1-1/c>1-\delta$. We conclude that $b(n)\in \Omega(n\log n)$.

\end{proof}

The following lemma which provides a reduction from HSGG to graph coloring follows directly from \cite{MS}.
\begin{lemma}[\cite{MS}]
\label{lem:hsgggraph}
If there exists a deterministic graph-coloring algorithm with advice complexity $b(n)$ which for $n$-vertex $k$-colorable graphs uses at most $f(n,k)$ colors, then there exists a HSGG-algorithm with advice complexity $b(n)$ which always incurs a cost of at most $f(n,k)$.
\end{lemma}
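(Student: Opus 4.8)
The plan is to give a simple online reduction from the \HS to online graph coloring. Given an instance of HSGG with parameters $k$ and $n$, I would describe how a graph-coloring algorithm $\mathcal{C}$ with advice complexity $b(\cdot)$ can be turned into an HSGG-algorithm $\mathcal{H}$. The algorithm $\mathcal{H}$ maintains a graph $G$ whose vertices are the HSGG rounds: vertex $i$ is created in round $i$, and for $t<i$ the edge $\{t,i\}$ is placed in $G$ exactly when $x_t\notin A_i$. When round $i$ begins, $\mathcal{H}$ already knows $A_i$ (just revealed) and $x_1,\ldots,x_{i-1}$ (revealed during the previous rounds), so it can compute precisely which of the earlier vertices are adjacent to vertex $i$; it then presents vertex $i$, together with those incident edges, to $\mathcal{C}$. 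Whatever color $c_i$ the algorithm $\mathcal{C}$ assigns to vertex $i$, the algorithm $\mathcal{H}$ answers $y_i:=c_i$. Afterwards $\mathcal{H}$ receives $x_i$ and stores it for the edge computations of the later rounds.

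First I would verify feasibility. Since $\mathcal{C}$ outputs a proper coloring of $G$, for $t<i$ the equality $y_t=y_i$ can occur only if $t$ and $i$ are non-adjacent in $G$, which by construction means $x_t\in A_i$. This is exactly the feasibility constraint~\eqref{feas}, so $y=y_1\cdots y_n$ is a legal HSGG output.

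Next I would bound the cost. The coloring of $G$ that assigns the color $x_i\in[k]$ to vertex $i$ is proper: if $t<i$ are adjacent then $x_t\notin A_i$ whereas $x_i\in A_i$, so $x_t\neq x_i$. Hence $G$ is an $n$-vertex $k$-colorable graph, and by hypothesis $\mathcal{C}$ uses at most $f(n,k)$ distinct colors on it. The cost of the HSGG output is $|\{y_1,\ldots,y_n\}|$, which equals the number of colors used by $\mathcal{C}$ on $G$ and is therefore at most $f(n,k)$. For the advice, the oracle for $\mathcal{H}$ first inspects the entire HSGG input $\sigma$, reconstructs the graph $G_\sigma$ (which is determined by the adversarial choices $A_i,x_i$, hence by $\sigma$ alone), and writes onto the advice tape exactly the string that $\mathcal{C}$'s oracle produces for $G_\sigma$; since $G_\sigma$ has $n$ vertices, $\mathcal{C}$ --- and therefore the simulation run by $\mathcal{H}$ --- reads at most $b(n)$ of those bits.

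I do not expect a genuine obstacle here: this is essentially the reduction already implicit in the construction of Halld\'orsson and Szegedy~\cite{MS}. The only points that require care are the online bookkeeping --- checking that at each step $\mathcal{H}$ possesses exactly the information ($A_i$ together with $x_1,\ldots,x_{i-1}$) needed both to determine the neighborhood of vertex $i$ and to invoke $\mathcal{C}$, and that the sequence of vertices-with-incident-edges fed to $\mathcal{C}$ constitutes a legitimate online graph-coloring input --- together with the observation that it is precisely the correct-character coloring of $G_\sigma$ that certifies the $k$-colorability required to apply the hypothesis on $\mathcal{C}$.
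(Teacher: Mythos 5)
Your reduction is correct and is the one the paper implicitly invokes by citing Halld\'orsson and Szegedy (the paper gives no explicit proof of the lemma, just the citation). The graph $G_\sigma$ with vertex set $[n]$ and edges $\{t,i\}$ for $t<i$ with $x_t\notin A_i$ is exactly the right object: the correct-character assignment $i\mapsto x_i$ certifies $k$-colorability, a proper coloring translated back to HSGG answers automatically satisfies the feasibility constraint~\eqref{feas}, the HSGG cost equals the number of colors used, and the online bookkeeping (the set $A_i$ is revealed before $y_i$ must be output, and $x_1,\ldots,x_{i-1}$ are known from earlier rounds) makes the graph presentable to $\mathcal{C}$ one vertex at a time. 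Your handling of advice is also right: the graph is a deterministic function of $\sigma$, so $\mathcal{H}$'s oracle can simply forward $\mathcal{C}$'s advice for $G_\sigma$.

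One minor point you gloss over: HSGG answers must lie in $[n]$, while $\mathcal{C}$'s color palette is a priori arbitrary. This is resolved by relabelling on the fly — assign each new color used by $\mathcal{C}$ the next unused integer in $[n]$ in order of first appearance (there are at most $n$ colors on an $n$-vertex graph). This bijective relabelling is computable online, preserves the feasibility argument (equal/unequal answers are preserved), and preserves the count $|\{y_1,\ldots,y_n\}|$, so the cost bound $f(n,k)$ is unaffected. With that caveat noted, the proof is complete.
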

Combining \cref{HSGGadvice} with \cref{lem:hsgggraph}, we obtain a proof of \cref{thm:graphcollow} (we restate the theorem here for convenience).

\restategraphcollow*
\begin{proof}[Proof of \cref{thm:graphcollow}]
Let $\ALG$ be a deterministic $c(n)$-competitive graph coloring algorithm where $c(n)\in O(n^{1-\varepsilon})$. By definition, $\ALG(\sigma)\leq c(n)\cdot \OPT(\sigma)+\alpha$ for some constant $\alpha$. In particular, this means that $\ALG$ can color $n$-vertex $k$-colorable graphs online using at most $f(n,k)=c(n)\cdot k+\alpha$ colors. By the reduction from HSGG to graph coloring (\cref{lem:hsgggraph}), we get a HSGG-algorithm, $\ALG'$, which incurs a cost of at most $f(n,k)$ and which has the same advice complexity as $\ALG$.

Choose $c\in\mathbb{N}$ such that $2/c<\varepsilon$ and consider HSGG-inputs where $n=k^c$. On these inputs, $\ALG'$ incurs a cost of at most $f(n, n^{1/c})=c(n)\cdot n^{1/c}+\alpha=O(n^{1-\varepsilon}n^{1/c})=O(n^{1-\delta})$ where $\delta:=\varepsilon-1/c$. Since $2/c<\varepsilon$, it follows that $\delta=\varepsilon-1/c>1/c$. By \cref{HSGGadvice}, $\ALG'$ must therefore use $\Omega(n\log n)$ bits of advice. Since the advice complexity of $\ALG'$ was the same as that of $\ALG$, this proves \cref{thm:graphcollow} for deterministic algorithms.

We still need to consider the case where the graph coloring algorithm is randomized. Note that for a graph on $n$ vertices, the number of requests in a single round can be at most $2^n$ (there are at most $2^n$ possible subsets of incident vertices for the vertex currently being revealed). Thus, the number of inputs of length $n$ for online graph coloring is at most $(2^n)^n=2^{n^2}=\dri$. It follows from \cref{thm:derandmin} in \cref{sec:derand} that a randomized $O(n^{1-\varepsilon})$-competitive graph coloring algorithm with advice complexity $b(n)$ can be converted to a deterministic $O(n^{1-\varepsilon})$-competitive graph coloring algorithm with advice complexity $b(n)+O(\log n)$. From the above, we get that $b(n)+O(\log n)=\Omega(n\log n)$ from which it follows that $b(n)=\Omega(n\log n)$.
\end{proof}

\section{Lower bounds for \texorpdfstring{$\lor$}{v}-repeatable problems}
\label{sec:maxrep}
So far, we have only considered $\Sigma$-repeatable (minimization) problems, where the cost of a solution in $\P^*$ is the sum of costs in each round. In this section, we consider $\lor$-repeatable (minimization) problems, where the cost of a solution in $\P^*$ is the \emph{maximum} of the costs in each round. \cref{lorrep} shows that lower bounds for $\lor$-repeatable problems can often be obtained rather easily. We remark that the proof of \cref{lorrep} does not rely on our previous direct product theorems.
\begin{theorem}
\label{lorrep}
Let $\P$ be a strictly $\lor$-repeatable problem and let $I=\{\sigma_1,\ldots , \sigma_m\}$ be a finite set of $\P$-inputs. Furthermore, let $t=\max_{\sigma\in I}\OPT(\sigma)$ and let $\varepsilon>0$ be a constant. Suppose that for every deterministic $\P$-algorithm without advice, $\ALG$, there exists some $1\leq i\leq m$ such that $\ALG(\sigma_i)\geq k$. Then, for every randomized $\P$-algorithm, $\ra$, reading $o(n)$ bits of advice, there exists a $\P$-input $\sigma$ such that $\E[\ra(\sigma)]\geq (1-\varepsilon)k$ and such that $\OPT(\sigma)\leq t$.
\end{theorem}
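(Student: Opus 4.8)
The plan is to combine the $m$ given hard inputs into a single long $\lor$-repeatable input consisting of $r$ rounds, each of which is one of the $\sigma_j$'s, and argue that no algorithm with sublinear advice can do well on a random such input. First I would fix a randomized algorithm $\ra$ reading $b(n) \in o(n)$ bits of advice, and by Yao's principle it suffices to exhibit, for each $r$, a distribution over $r$-round $\P^*_\lor$-inputs on which every deterministic algorithm reading at most $b$ bits of advice (for the relevant length) incurs expected cost close to $k$ in some round. The natural distribution: independently in each of the $r$ rounds, pick $\sigma_{j}$ uniformly at random from $I = \{\sigma_1,\dots,\sigma_m\}$; since $\P$ is strictly $\lor$-repeatable, there is a map $g$ turning such an $r$-round $\P^*_\lor$-input into a genuine $\P$-input $g(\sigma^*)$ with $\ab{g(\sigma^*)} = \ab{\sigma^*}$ (as $k_1 = 0$), with $\ALG^*(\sigma^*) \le \ALG(g(\sigma^*))$ and $\OPT^*_\lor(\sigma^*) \ge \OPT(g(\sigma^*))$ (as $k_2 = k_3 = 0$). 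Because all inputs in $I$ have length at most some constant $L$, the input $g(\sigma^*)$ has length at most $Lr$, so the advice read on it is at most $b(Lr)$, which is $o(r)$; in particular for $r$ large enough $b(Lr) < r$ (indeed $b(Lr) \le \eta r$ for any $\eta > 0$ once $r$ is large).

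The key combinatorial step is the following counting argument. A deterministic algorithm reading at most $b$ bits of advice is a choice of one out of $2^b$ deterministic advice-free algorithms. Fix a single deterministic advice-free $\P^*_\lor$-algorithm $\ALG^*$; restricted to round $\ell$ with a fixed history $w \in \mathcal{W}_\ell$, hard-wiring $w$ produces an advice-free $\P$-algorithm, so by hypothesis there is at least one index $j \in [m]$ with $\ALG^*(\text{round }\ell \text{ served as } \sigma_j \mid W_\ell = w) \ge k$. Hence, conditioned on any history, the probability that $\ALG^*$ incurs cost $\ge k$ in round $\ell$ is at least $1/m$, so the probability that $\ALG^*$ incurs cost $< k$ in every one of the $r$ rounds is at most $(1 - 1/m)^r$. (This is an independence-over-rounds argument carried out round by round: $\Pr[\text{cost} < k \text{ in rounds } 1,\dots,r] = \prod_{\ell} \Pr[\text{cost}<k\text{ in round }\ell \mid \text{rounds } <\ell] \le (1-1/m)^r$.) Taking a union bound over the $2^b$ advice-free algorithms underlying an algorithm with $b$ bits of advice, the probability that some round has cost $\ge k$ is at least $1 - 2^b (1-1/m)^r$. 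Choosing $r$ large enough that $2^{b(Lr)}(1 - 1/m)^r < \varepsilon$ — possible because $b(Lr) = o(r)$ while $(1-1/m)^r$ decays geometrically, so $2^{b(Lr)}(1-1/m)^r \to 0$ — we get that for this $r$, every deterministic algorithm with $b(Lr)$ bits of advice has expected $\P^*_\lor$-cost at least $(1-\varepsilon)k$ on the random input, while $\OPT^*_\lor \le \max_{\sigma \in I}\OPT(\sigma) = t$ deterministically.

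Finally I would push this back to $\P$ itself via the reduction $g$: the distribution on $\P$-inputs is $\sigma = g(\sigma^*)$ for $\sigma^* \sim (\text{uniform on } I)^r$. For any deterministic $\P$-algorithm $\ALG$ with advice complexity $b$, strict $\lor$-repeatability gives a $\P^*_\lor$-algorithm $\ALG^*$ with $\ALG^*(\sigma^*) \le \ALG(g(\sigma^*))$, reading the same amount of advice (on these bounded-length inputs this is exactly \cref{lem:repadv} with $k_2 = 0$), so $\E[\ALG(g(\sigma^*))] \ge \E[\ALG^*(\sigma^*)] \ge (1-\varepsilon)k$; and $\OPT(g(\sigma^*)) \le \OPT^*_\lor(\sigma^*) \le t$. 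Hence, by the trivial (summation) direction of Yao's principle applied to this fixed distribution, the randomized algorithm $\ra$ must itself incur expected cost at least $(1-\varepsilon)k$ on some input $\sigma$ in the support, and that $\sigma$ satisfies $\OPT(\sigma) \le t$, which is the claim. The main obstacle — and the place to be careful — is bookkeeping the advice length: one must ensure that because each round has bounded length, the total length $\ab{g(\sigma^*)}$ grows only linearly in $r$, so that $b(\ab{g(\sigma^*)}) = o(r)$ and the union-bound factor $2^{b}$ is beaten by the geometric term $(1-1/m)^r$; everything else is a short union-bound and the two reductions supplied by strict $\lor$-repeatability.
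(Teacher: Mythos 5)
Your proposal is correct and follows essentially the same route as the paper: the same uniform-over-$I$ product distribution over rounds, the same conditional per-round bound of $1-1/m$ chained into $(1-1/m)^r$, the same union bound over the $2^b$ advice-free algorithms, the same choice of $r$ so that $2^{b(Lr)}(1-1/m)^r<\varepsilon$, and the same push-back to $\P$ via $g$ and Yao's principle. Your round-by-round conditional product is exactly the paper's justification for the $(1-1/m)^r$ bound, merely written out more explicitly.
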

\begin{proof}
The proof is via Yao's principle. Thus, let $\ALG$ be an arbitrary deterministic $\P$-algorithm reading $b(n)\in o(n)$ bits of advice on inputs of length $n$. We will first construct a hard input distribution for $\Prm$, and then use that $\P$ is $\lor$-repeatable.

For each $r\in\mathbb{N}$, we define an $r$-round input distribution, $p^r$, for $\Prm$ as follows: In each round, the adversary selects uniformly at random an input from $I$. When a round ends, a reset occurs (thus, each round of the $r$-round input distribution corresponds to a round of $\Prm$). Let $n'=\max_{\sigma\in I}\ab{\sigma}$ be the length of the longest input in $I$. Note that an input in the support of $p^r$ has length at most $n'r$.

By assumption, $b:\mathbb{N}\rightarrow\mathbb{N}$ is a function such that $b(n)\in o(n)$. Since $n'$ is a constant, this means that $b(n'r)\in o(r)$. Thus, since $1-1/m<1$, it follows that $2^{b(n'r)}(1-1/m)^r\rightarrow 0$ as $r\rightarrow\infty$. Choose $r$ large enough that $2^{b(n'r)}(1-1/m)^r\leq \varepsilon$.

The number $r$ of rounds is now fixed. Following \cref{lem:repadv}, we will define a $\Prm$-algorithm $\ALG^*$ based on the $\P$-algorithm $\ALG$. The algorithm $\ALG^*$ will only be defined for inputs in $\supp(p^r)$. For other inputs, it can behave arbitrarily. By definition, for each input length $n$, there exists $2^{b(n)}$ deterministic $\P$-algorithms, $\ALG_1,\ldots , \ALG_{2^{b(n)}}$ such that $\ALG(\sigma)=\min_{j}\ALG_j(\sigma)$ for every $\P$-input $\sigma$ of length at most $n$. Since $\P$ is strictly $\lor$-repeatable, this implies that for every $n$, there exists $2^{b(n)}$ deterministic $\Prm$-algorithms, $\ALG_1^*,\ldots , \ALG_{2^{b(n)}}^*$ such that $\ALG_j^*(\sigma^*)\leq \ALG_j(g(\sigma^*))$ for every $\Prm$-input $\sigma^*$ of length at most $n$ (where $g$ is as in \cref{re}). Recall that an input in $\supp (p^r)$ has length at most $n'r$. We define $\ALG^*$ to be the following $\Prm$-algorithm: For an input $\sigma^*\in\supp (p^r)$, the algorithm $\ALG^*$ will read $b(n'r)$ bits of advice and use these to select the best of the $2^{b(n'r)}$ algorithms $\ALG_1^*,\ldots , \ALG_{2^{b(n'r)}}^*$ for this particular input $\sigma^*$. Note that since $r$ is fixed and $n'$ is a constant, $\ALG^*$ can compute $n'r$ itself (and thereby also $b(n'r)$). By definition, $\ALG^*(\sigma^*)\leq \ALG(g(\sigma^*))$ for every input $\sigma^*\in \supp (p^r)$.

Let $1\leq j\leq 2^{b(n'r)}$ and let $1\leq i\leq r$. We consider the computation of $\ALG^*_{j}$ in round $i$ of $p^r$. The computation of $\ALG_j^*$ in round $i$ may depend on the request revealed in previous rounds. However, recall that a reset occurred just before round $i$ began. Thus, for each fixed history $w\in\mathcal{W}_i$, the computation of the algorithm $\ALG_j^*$ in round $i$ given the history $w$ defines a deterministic $\P$-algorithm valid for all inputs in $I$. Thus, by assumption, there exists at least one input $\sigma\in I$ such that if $\sigma$ is the input in round $i$, then $\ALG_j^*$ incurs a cost of at least $k$ in round $i$. Since each of the $m$ inputs in $I$ have probability $1/m$ of being selected in round $i$, this implies that for \emph{every} $w\in\mathcal{W}_i$,
\begin{equation}
\Pr[\cost_i(\ALG_j^*) < k\vert W_i=w]\leq 1-1/m.
\label{eq:lorrepallhistory}
\end{equation}
Thus, even though the costs incurred by $\ALG_j^*$ in each of the $r$ rounds are not (necessarily) independent, we can use \cref{eq:lorrepallhistory} to bound from above the probability that the maximum cost over all $r$ rounds is less than $k$:
\begin{align*}
\Pr[\cost(\ALG_j^*)< k]&=\Pr[\forall 1\leq i\leq r\colon \cost_i(\ALG_j^*)<k]\\
&\leq (1-1/m)^r.
\end{align*}
Applying the union bound over all $2^{b(n'r)}$ algorithms $\ALG_1^*,\ldots , \ALG_{2^{b(n'r)}}$ gives
\begin{align}
\notag \Pr\left[\cost(\ALG^*)<k \right]&=\Pr\left[\exists 1\leq j\leq 2^{b(n'r)}\colon  \cost(\ALG_j^*)< k\right] \\
&\notag \leq 2^{b(n'r)}(1-1/m)^r\\
&\leq \varepsilon. \label{lorreppeq}
\end{align}
It follows that
\begin{align*}
\E [\cost(\ALG^*)]&\geq \Pr [\cost(\ALG^*)\geq k]\cdot k\\
&\geq (1-\varepsilon)\cdot k.
\end{align*}
This implies that $\E_{\sigma^*\sim p^r}[\ALG(g(\sigma^*))]\geq \E_{\sigma^*\sim p^r}[\ALG^*(\sigma^*)]\geq (1-\varepsilon)k$. 

Using Yao's principle, we conclude that for every randomized $\P$-algorithm $\ra$ with advice complexity $o(n)$, there exists an $r\in\mathbb{N}$ and a $\sigma^*\in\supp(p^r)$ such that $\E[\ra(g(\sigma^*))]\geq (1-\varepsilon)k$. Since $\OPT(g(\sigma^*))\leq \OPT^*_{\lor}(\sigma^*)\leq t$ for every $\sigma^*\in\supp(p^r)$, this finishes the proof.
\end{proof}
\subsection{Applications}
\label{sec:lorapp}
\paragraph{Application: Edge coloring}
It is easy to see that online edge coloring is strictly $\lor$-repeatable (see \cite{Bar-Noy} for a formal definition of the problem). In \cite{Bar-Noy}, it is shown that for every $\Delta\in\mathbb{N}$, there exists a finite set of inputs $I_\Delta$ with the following properties: For $\sigma\in I_\Delta$, the underlying graph of $\sigma$ is a forest (and hence $\Delta$-edge-colorable). Also, for every deterministic algorithm $\DET$ without advice, there exists some $\sigma\in I_\Delta$ such that $\DET(\sigma)\geq (2\Delta -1)$. We can use this construction together with \cref{lorrep}. Let $\varepsilon>0$ be a fixed constant. Suppose by way of contradiction that $\ALG$ is a $(2-\varepsilon)$-competitive randomized edge coloring algorithm with advice complexity $o(n)$. Then there exists some $\alpha$ such that $\E[\ALG(\sigma)]\leq(2-\varepsilon)\OPT(\sigma)+\alpha$ for all inputs $\sigma$. Choose $\Delta$ large enough so that $(2-\varepsilon)\Delta+\alpha < 2\Delta-1$. By \cref{lorrep}, there exists an input $\sigma\in I_\Delta$ such that $\E[\ALG(\sigma)]\geq 2\Delta-1 > (2-\varepsilon)\Delta+\alpha=(2-\varepsilon)\OPT(\sigma)+\alpha$. This is a contradiction. Since $\varepsilon$ was arbitrary, we conclude that the competitive ratio of a randomized edge-coloring algorithm with advice complexity $o(n)$ must be at least $2$. This result was already obtained in \cite{Aedge} where it shown directly how an adversary may repeatedly use the lower bound for algorithms without advice. \cref{lorrep} reveals that the proof of the lower bound in \cite{Aedge} is in fact just a special case of a much more general technique.
\paragraph{Application: Graph coloring}
We have already shown previously how to obtain good advice complexity lower bounds for graph coloring on general graphs (see \cref{sec:graphc}). We will now consider the $2$-vertex-coloring problem (i.e., the online graph coloring problem restricted to bipartite graphs).

It is easy to see that online $2$-vertex-coloring is strictly $\lor$-repeatable. Furthermore, it is shown in \cite{GyarfasL88} that for every $n$ which is a power of two, there exists a finite set of inputs, $I_n$, with the following properties: For $\sigma\in I_n$, the graph defined by the input $\sigma$ is a $2$-colorable graph on $n$ vertices. Furthermore, for every deterministic online graph coloring algorithm $\DET$ there exists a $\sigma\in I_n$ such that $\DET(\sigma)\geq \log_2 (n)$. 

We will now apply \cref{lorrep}. Let $c=O(1)$ be a constant. Assume by way of contradiction that $\ALG$ is a $c$-competitive randomized $2$-vertex-coloring algorithm and that the advice complexity of $\ALG$ is $o(n)$. This means that there exists an additive constant $\alpha$ such that $\E[\ALG(\sigma)]\leq c\cdot \OPT(\sigma)+\alpha$ for every input $\sigma$ where the underlying graph is bipartite. Choose $n$ large enough so that $2c + \alpha < \log_2 (n)$ (and such that $n$ is a power of two). By \cref{lorrep}, we get that for this choice of $n$, there exists an input $\sigma\in I_n$ such that $\E[\ALG(\sigma)]\geq \log_2(n)> 2c+\alpha=c\cdot\OPT(\sigma)+\alpha$. But this is a contradiction since $\ALG$ was assumed to be $c$-competitive with an additive constant of $\alpha$. We conclude that if a randomized algorithm with advice complexity $o(n)$ is $c$-competitive, then it must be the case that $c=\omega(1)$.

Previously, is was only known that $\Omega(n)$ bits of advice was needed to never use more than $3$ colors when coloring a bipartite graph online~\cite{Avbipartite}. Thus, it was known that an algorithm with advice complexity $o(n)$ would use $4$ colors on some graphs and therefore have a strict competitive ratio of at least $2$.

\paragraph{Application: $L(2,1)$-coloring on paths and cycles}
See \cite{AL21} for a formal definition of the problem. The input is a graph of maximum degree two. When a vertex is revealed, an online algorithm must irrevocably assign a non-negative integer to the vertex such that the integers assigned to the set of vertices is a valid $L(2,1)$-coloring. The cost of a solution is $\lambda$, where $\lambda$ is the smallest integer such that the set of colors used in the coloring is contained in $\{0,1,\ldots , \lambda\}$. It is easy to see that $\OPT(\sigma)\leq 4$ for every input $\sigma$.

In \cite{AL21}, the following results were obtained: It is shown that there exists a deterministic algorithm $\DET$ without advice such that $\DET(\sigma)\leq 6$ for every input $\sigma$. A matching lower bound is provided by showing that there exists a finite set of inputs $I$ such that, for each deterministic algorithm $\ALG$ without advice, $\ALG(\sigma)\geq 6$ for some $\sigma\in I$. Finally, it is also shown in \cite{AL21} that a deterministic algorithm $\ALG$ which can guarantee to solve the $L(2,1)$ problem on paths with a cost of at most $5$ must read at least $3.9402\cdot 10^{-10}n=\Omega(n)$ bits of advice. Since costs are integral and $\OPT(\sigma)$ is always at most $4$, this gives a (tight) lower bound of $3/2$ on the strict competitive ratio of algorithms with sublinear advice. Using derandomization (see \cref{sec:derand}), the authors conclude that no randomized algorithm can achieve a strict competitive ratio smaller than $3/2$.

It is easy to see that $L(2,1)$-coloring on paths is a strictly $\lor$-repeatable problem. Thus, combining the lower bound for deterministic algorithms without advice from \cite{AL21} with \cref{lorrep}, we get that for every $\varepsilon>0$ and every (possibly randomized) algorithm reading $o(n)$ bits of advice, there exists an input $\sigma$ such that $\E[\ALG(\sigma)]\geq (1-\varepsilon) 6$. Since costs are integral, this reproves the lower bound of $3/2$ on the strict competitive ratio of algorithms with $o(n)$ bits of advice. We believe that the proof is simpler than the existing proof. Furthermore, it highlights the fact that obtaining a lower bound for deterministic algorithms without advice suffices in order to obtain a lower bound for randomized algorithms.

\paragraph{Necessity of assumptions in \cref{lorrep}}
Note that in order to successfully apply \cref{lorrep}, it must be the case that $t=\max_{\sigma\in I}\OPT(\sigma)$ is small compared to $k$ (where $I$ and $k$ are as in the statement of \cref{lorrep}). One might wonder if it is possible to get rid of this assumption. In particular, one could hope to show that if algorithms without advice for a $\lor$-repeatable problem can be no better than $c$-competitive, then algorithms with advice complexity $o(n)$ also cannot be (much) better than $c$-competitive. Unfortunately, it turns out that this is false in general. For instance, the online multi-coloring problem on a path network (without deletions) provides a counter-example (see \cite{DBLP:conf/isaac/ChanCYZZ06} for a definition of the problem). This problem is strictly $\lor$-repeatable. It is known that no algorithm without advice (even if we allow randomization) can achieve a competitive ratio better than $4/3$~\cite{CFLpath}. On the other hand, it has been shown that there exists a strictly $1$-competitive algorithm which uses $O(\log n)$ bits of advice~\cite{CFLApath}. The advice given to the algorithm is the number of colors used in an optimal solution. 

The structure of the $4/3$-lower bound for randomized algorithms is as follows: For an input length $n$, one defines two inputs, $\sigma_1^n, \sigma_2^n$, and considers the uniform distribution $p$ over these two inputs. The inputs $\sigma_1^n$ and $\sigma_2^n$ are such that $\OPT(\sigma_1)=n/2$ while $\OPT(\sigma_2)=n/4$. This implies that $\E_{\sigma\sim p}[\OPT(\sigma)]=1/2\cdot (n/2+n/4)=(3/8)n$. On the other hand, it is shown that for every deterministic algorithm $\DET$ (without advice), it holds that $\E_{\sigma\sim p}[\DET(\sigma)]\geq n/2$, which yields a lower bound of $4/3$. However, this lower bound does not carry over to algorithms with advice complexity $o(n)$ by simply repeating it $r$ times as in the proof of \cref{lorrep}. The problem is that the expected cost of $\OPT$ will tend to $n/2$ as $r$ tends to infinity. Indeed, the probability that $\OPT$ incurs a cost of $n/2$ in any given round is $1/2$, and if the cost of $\OPT$ in even a single round is $n/2$ then the total cost of $\OPT$ will also be $n/2$. This example illustrates why we need to work with $t=\max_{\sigma\in I}\OPT(\sigma)$ in \cref{lorrep}.

\paragraph*{Acknowledgment.} The author would like to thank Joan Boyar, Magnus Find, Lene Favrholdt, Adi Ros\'{e}n, and the anonymous reviewers for helpful comments on this work and its presentation.

\phantomsection

\addcontentsline{toc}{section}{References}

\bibliographystyle{plainurl}
\bibliography{refs}

\newpage
\appendix
\section{Derandomization using advice}
In this appendix, we slightly extend the derandomization result due to B{\"o}ckenhauer et al.\ \cite{Ak-server} (see also the survey \cite{BockenhauerHK14}). In particular, we show that it is possible to derandomize algorithms which uses both advice and randomization. Also, we show how to modify the result for minimization problems so that it can be applied to maximization problems. Finally, we provide lower bounds showing that these derandomization results are essentially best possible.

\label{sec:derand}
\begin{theorem}
\label{thm:derandmin}
Let $\P$ be an online minimization problem and let $R$ be a randomized algorithm with advice complexity $b(n)$. Suppose that there exists a function $I:\mathbb{N}\rightarrow \mathbb{N}$ such that the number of valid $\P$-inputs of length $n$ is at most $I(n)$. Then for every $\varepsilon>0$, there exists a deterministic algorithm with advice, $\ALG$, such that $\ALG(\sigma)\leq(1+\varepsilon)\E[R(\sigma)]$ for every input $\sigma$ and such that $\ALG$ reads at most $$b(n)+O(\log n + \log \log I(n)+\log\varepsilon^{-1})$$ bits of advice on inputs of length $n$. In particular, if $I(n)=\dri$ and $\varepsilon>0$ is a small but fixed constant, then $\ALG$ reads $b(n)+O(\log n)$ bits of advice.
\end{theorem}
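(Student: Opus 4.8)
The plan is to adapt the standard derandomization-via-advice argument of B\"ockenhauer et al.~\cite{Ak-server} (the "set cover" / majority-vote style argument), but to keep track of the fact that $R$ may already use $b(n)$ bits of advice, and to aim for a multiplicative $(1+\varepsilon)$ guarantee rather than exact preservation. Fix an input length $n$ and work only with the (at most $I(n)$) inputs of that length; the final algorithm will first read a self-delimiting encoding of $n$ (costing $O(\log n)$ bits), after which $n$ is known and all further quantities may depend on it. Conditioned on each advice string $\varphi\in\{0,1\}^{b(n)}$, the randomized algorithm $R$ is a distribution over deterministic algorithms; the oracle for our deterministic algorithm $\ALG$ will first write down the advice string $\varphi^\star$ that the original oracle would use on $\sigma$, costing $b(n)$ bits, and then it remains to derandomize a \emph{single} randomized algorithm $R_{\varphi^\star}$ (no advice) while preserving the guarantee $\E[R_{\varphi^\star}(\sigma)]=\E[R(\sigma)]$ up to a $(1+\varepsilon)$ factor.

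\textbf{Key steps.} First I would reduce the randomized algorithm to a distribution over a \emph{finite} set of deterministic algorithms: by Newman's theorem (the derandomization technique cited in the introduction, \cite{DBLP:journals/ipl/Newman91,DBLP:conf/soda/ChattopadhyayEEP12}), for each fixed $n$ there is a multiset $\{D_1,\dots,D_t\}$ of deterministic (advice-free) algorithms, with $t = O(\varepsilon^{-2}\log I(n))$, such that sampling uniformly from this multiset gives, for \emph{every} input $\sigma$ of length $n$, expected cost at most $(1+\varepsilon)\E[R_{\varphi^\star}(\sigma)]$. Equivalently there exists an index $j\in\{1,\dots,t\}$ with $D_j(\sigma)\le(1+\varepsilon)\E[R_{\varphi^\star}(\sigma)]$, since the minimum is at most the average. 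The oracle for $\ALG$ then writes this index $j$ using $\lceil\log t\rceil = O(\log\log I(n)+\log\varepsilon^{-1})$ bits; the algorithm $\ALG$ reads $\varphi^\star$ and $j$ from the tape, reconstructs the multiset $\{D_1,\dots,D_t\}$ (which depends only on $n$ and $\varepsilon$, both known to $\ALG$), and simulates $D_j$ on the $\varphi^\star$-fixed version of $R$. Summing the advice costs gives $b(n) + O(\log n) + O(\log\log I(n)) + O(\log \varepsilon^{-1})$, which is the claimed bound; and when $I(n)=\dri=2^{n^{O(1)}}$ we have $\log\log I(n)=O(\log n)$, so the total is $b(n)+O(\log n)$.

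\textbf{Main obstacle.} The delicate point is the precise form of Newman-style derandomization needed here: the standard statement gives additive or exact preservation of a \emph{bounded} payoff, whereas costs in an online minimization problem are unbounded, so I must instead invoke the multiplicative version — pick $t=O(\varepsilon^{-2}\log I(n))$ samples from the distribution defining $R_{\varphi^\star}$ and apply a Chernoff/Hoeffding bound to the normalized random variable $D(\sigma)/\E[R_{\varphi^\star}(\sigma)]$ for each of the $I(n)$ fixed inputs $\sigma$ (a slight subtlety if $\E[R_{\varphi^\star}(\sigma)]=0$, handled separately since then every $D$ in the support must have cost $0$ on $\sigma$), then union-bound over all $I(n)$ inputs to conclude that with positive probability a single multiset works for all of them. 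A second, more bookkeeping-level obstacle is making the encoding of $n$ and of the index $j$ self-delimiting on the infinite advice tape without circularity (the length of the $j$-encoding depends on $t=t(n,\varepsilon)$, which depends on $n$); this is resolved by first reading the self-delimiting prefix encoding $n$, after which $t$ is computable and the fixed-length block for $j$ can be read. I expect the Chernoff-plus-union-bound step to be where all the real content lives; the rest is routine adaptation of \cite{Ak-server}.
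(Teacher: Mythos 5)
Your high-level plan — sample a small number $t$ of deterministic algorithms from the distribution defining $R$, union-bound over the at most $I(n)$ inputs of length $n$, and encode the index of a good algorithm plus its own advice on the tape — is exactly the structure of the paper's proof, and your bookkeeping (encode $n$ self-delimitingly first, then the index, then the inner advice) is correct. But the concentration step is wrong as stated, and that is where you yourself say all the content lives. You apply a Chernoff/Hoeffding bound to $D(\sigma)/\E[R(\sigma)]$ to conclude that the sample average $\frac{1}{t}\sum_i D_i(\sigma)$ is at most $(1+\varepsilon)\E[R(\sigma)]$, then use that the minimum is at most the average. That normalized variable is non-negative with mean $1$ but is otherwise unbounded, so no Chernoff/Hoeffding bound applies, and the sample mean of such a variable genuinely need not concentrate: if $D(\sigma)/\E[R(\sigma)]$ equals $M$ with probability $1/M$ and $0$ otherwise, then for $M\geq t(1+\varepsilon)$ the sample average exceeds $1+\varepsilon$ whenever any of the $t$ samples is nonzero, i.e.\ with probability $1-(1-1/M)^t\approx t/M$, which for $M=t^2$ decays only like $1/t$ — far too slowly to union-bound over $I(n)$ inputs with $t=O(\varepsilon^{-2}\log I(n))$. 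The paper avoids the sample average altogether and bounds the probability that the \emph{minimum} of the $t$ samples exceeds $(1+\varepsilon)\E[R(\sigma)]$: for each sample, Markov's inequality gives $\Pr\bigl[R_i(\sigma)>(1+\varepsilon)\E[R(\sigma)]\bigr]\leq 1/(1+\varepsilon)$ (using only non-negativity of costs), and independence gives $(1+\varepsilon)^{-t}$ for all $t$ samples to fail simultaneously. This decays exponentially with no boundedness assumption, and taking $t$ just above $\log I(n)/\log(1+\varepsilon)$ yields $\lceil\log t\rceil=O(\log\log I(n)+\log\varepsilon^{-1})$ exactly as required.

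There is also a smaller mismatch with the model in your preamble. By the paper's definition of a randomized algorithm with advice, $R$ is a probability distribution over deterministic algorithms, each of which comes with its \emph{own} oracle and may read a different advice string on the same input $\sigma$; there is no single ``advice string $\varphi^\star$ that the original oracle would use'' to condition on, so the decomposition into ``fix $\varphi^\star$, then derandomize the advice-free $R_{\varphi^\star}$'' is not well-defined, and the claimed identity $\E[R_{\varphi^\star}(\sigma)]=\E[R(\sigma)]$ has no basis. The paper sidesteps this entirely by sampling the full advice-equipped deterministic algorithms $R_1,\ldots,R_t$ directly; the oracle for $\ALG$ then writes $n$, the index $i$ of a sampled $R_i$ that succeeds on $\sigma$, and the advice that $R_i$ itself reads on $\sigma$ — the latter is where the $b(n)$ term comes from. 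Your separate treatment of inputs with $\E[R(\sigma)]=0$ matches the paper and is fine.
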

\begin{proof}
Let $I_n$ be the set of all $\P$-inputs of length $n$. By assumption, $\ab{I_n}\leq I(n)$. Recall that a randomized algorithm with advice complexity $b$ is a probability distribution over deterministic algorithms with advice complexity (at most) $b$. Let $t$ be an integer and let the random variables $R_1,\ldots , R_t$ be $t$ deterministic algorithms selected independently at random according to the probability distribution specified by the randomized algorithm $R$. Then, for every fixed input $\sigma\in I_n$ such that $\E[R(\sigma)]>0$ and every $1\leq i\leq t$, Markov's inequality implies that
\begin{align}
\label{derandmark}
\Pr\Big[R_i(\sigma)> (1+\varepsilon)\E[R(\sigma)]\Big]\leq \frac{1}{1+\varepsilon}.
\end{align}
Recall that costs are by definition non-negative (\cref{def}). Therefore, if $\E[R(\sigma)]=0$ for some $\sigma\in I_n$, then $R_i(\sigma)=0$ for $1\leq i\leq t$. Thus, for every $\sigma\in I_n$ where $\E[R(\sigma)]=0$, the inequality (\ref{derandmark}) holds for the trivial reason that the probability on the left-hand side is zero.

Since $R_1,\ldots , R_t$ are independent random variables, inequality (\ref{derandmark}) implies that for every fixed input $\sigma\in I_n$, the probability of all $t$ algorithms failing to achieve the desired guarantee is at most
\begin{align*}
\Pr\Big[\min_{1\leq i\leq t} \{R_i(\sigma)\}> (1+\varepsilon)\E[R(\sigma)]\Big]\leq \left(\frac{1}{1+\varepsilon}\right)^t.
\end{align*}
An application of the union bound gives
\begin{equation}
\Pr\Big[\exists\sigma\in I_n : \min_{1\leq i\leq t} \{R_i(\sigma)\}> (1+\varepsilon)\E[R(\sigma)]\Big]\leq I(n)\left(\frac{1}{1+\varepsilon}\right)^t.
\label{existssigma}
\end{equation}
Note that
\begin{align*}
I(n)\left(\frac{1}{1+\varepsilon}\right)^t<1 \Leftrightarrow \frac{\log (I(n))}{\log (1+\varepsilon)}< t.
\end{align*}
Choose $t=t_{n,\varepsilon}$ to be the smallest integer such that $\frac{\log (I(n))}{\log (1+\varepsilon)}<t_{n,\varepsilon}$. Then
\begin{align*}
\Pr\Big[\forall\sigma\in I_n : \min_{1\leq i\leq t_{n,\varepsilon}} \{R_i(\sigma)\}\leq (1+\varepsilon)\E[R(\sigma)]\Big]\geq 1-I(n)\left(\frac{1}{1+\varepsilon}\right)^{t_{n,\varepsilon}}>0.
\end{align*}
This shows that there exists $t_{n,\varepsilon}$ deterministic algorithms such that for every input $\sigma\in I_n$, at least one of these $t_{n,\varepsilon}$ algorithms incurs a cost of at most $(1+\varepsilon)\E[R(\sigma)]$. By assumption, the advice complexity of each of these $t_{n,\varepsilon}$ algorithms is at most $b(n)$.

We are now ready to define $\ALG$. The oracle will write the length, $n$, of the input onto the advice tape using $O(\log n)$ bits. Knowing $n$, $\ALG$ will be able to\footnote{Exactly how this is done depends on the model of online algorithms being used. If an online algorithm is just a strategy for a request-answer game without computational constraints (such as in the model of Ben-David et al.~\cite{BBKTW}, see also \cref{app:rando}), then we do not need to specify how $\ALG$ determines these algorithms (it suffice to know that they exist). In most other models, such as the one used in \cite{Ak-server}, it will be possible for $\ALG$ to use exhaustive search.} determine $t_{n,\varepsilon}$ algorithms $\ALG_1,\ldots , \ALG_{t_{n,\varepsilon}}$ such that for every input $\sigma\in I_n$, it holds that $\min_{i}\{\ALG_i(\sigma)\}\leq (1+\varepsilon)\E[R(\sigma)]$. The oracle will encode the index, $i$, of the best of these $t_{n,\varepsilon}$ algorithms for the current input $\sigma$. This requires $\lceil \log t_{n,\varepsilon}\rceil$ bits. Furthermore, the oracle writes the advice read by $\ALG_i$ (when given $\sigma$ as input) onto the advice tape of $\ALG$. The algorithm $\ALG$ is then able to simulate $\ALG_i$ and, thus, $\ALG(\sigma)=\ALG_i(\sigma)\leq(1+\varepsilon)\E[R(\sigma)]$. The number of advice bits read by $\ALG$ is at most
\begin{align*}
b(n)+\left\lceil \log\left(\frac{\log (I(n))}{\log (1+\varepsilon)}+1\right)\right\rceil+O(\log n)&=b(n)+O(\log n+ \log \log I(n)+\log \varepsilon^{-1}).
\end{align*}
\end{proof}
Note that $\varepsilon$ could be non-constant. Assuming $I(n)=\dri$, we can take, for example, $\varepsilon=1/n$ and get an algorithm $\ALG$ which reads $O(\log n)$ bits of advice while $\ALG(\sigma)\leq (1+1/n)\E[R(\sigma)]$.

We will now show how to obtain a guarantee on the competitive ratio using \cref{thm:derandmin}.
\begin{corollary}
Let $c\geq 1$ be a constant, let $\P$ be a minimization problem and let $R$ be a randomized $c$-competitive algorithm with advice complexity $b(n)$. Suppose that there exists a function $I:\mathbb{N}\rightarrow \mathbb{N}$ such that the number of valid $\P$-inputs of length $n$ is at most $I(n)$. If $I(n)=\dri$ then for every constant $\varepsilon>0$, there exists a deterministic $(c+\varepsilon)$-competitive algorithm with advice complexity $b(n)+O(\log n)$.
\end{corollary}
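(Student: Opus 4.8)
The plan is to derive this directly from \cref{thm:derandmin}. First I would use that $R$ is $c$-competitive: by \cref{def} there is a constant $\alpha$ with $\E[R(\sigma)]\le c\cdot\OPT(\sigma)+\alpha$ for every input $\sigma$. Fix a constant $\varepsilon>0$ and set $\varepsilon'=\varepsilon/c$, which is again a positive constant since $c\ge 1$ is a constant. Applying \cref{thm:derandmin} with this $\varepsilon'$ produces a deterministic algorithm $\ALG$ with $\ALG(\sigma)\le(1+\varepsilon')\E[R(\sigma)]$ for every $\sigma$, and with advice complexity at most $b(n)+O(\log n+\log\log I(n)+\log\varepsilon'^{-1})$ on inputs of length $n$.

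Next I would simplify the advice bound. Since $I(n)=\dri=2^{n^{O(1)}}$, we have $\log I(n)=n^{O(1)}$ and hence $\log\log I(n)=O(\log n)$. Moreover $\varepsilon'$ is a fixed constant, so $\log\varepsilon'^{-1}=O(1)$. Therefore the advice complexity of $\ALG$ is $b(n)+O(\log n)$, as required.

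Finally I would check competitiveness. Combining the two inequalities,
\[
\ALG(\sigma)\le(1+\varepsilon')\E[R(\sigma)]\le(1+\varepsilon')\big(c\cdot\OPT(\sigma)+\alpha\big)=(c+\varepsilon)\cdot\OPT(\sigma)+(1+\varepsilon')\alpha,
\]
using $(1+\varepsilon')c=c+\varepsilon'c=c+\varepsilon$. Since $(1+\varepsilon')\alpha$ is a constant independent of $\sigma$, this shows that $\ALG$ is $(c+\varepsilon)$-competitive. There is no real obstacle here: the statement is a routine corollary of \cref{thm:derandmin}. The only points requiring a little care are ensuring that $\varepsilon'$ is chosen to be a constant (so that the $\log\varepsilon'^{-1}$ term is absorbed into $O(\log n)$), using the hypothesis $I(n)=\dri$ to bound $\log\log I(n)$ by $O(\log n)$, and keeping track of the additive constant $\alpha$ as it is multiplied by $1+\varepsilon'$.
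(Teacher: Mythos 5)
Your proof is correct and follows essentially the same route as the paper: both set $\varepsilon'=\varepsilon/c$, invoke \cref{thm:derandmin}, and then combine the resulting multiplicative guarantee with $c$-competitiveness of $R$. The only difference is that you spell out the bookkeeping for the advice bound ($\log\log I(n)=O(\log n)$ when $I(n)=\dri$, and $\log\varepsilon'^{-1}=O(1)$), which the paper leaves implicit.
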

\begin{proof}
Choose $\varepsilon'=\varepsilon/c$ and apply \cref{thm:derandmin} to $R$. This gives a deterministic algorithm $\ALG$ with the desired advice complexity such that $\ALG(\sigma)\leq (1+\varepsilon')\E[R(\sigma)]$ for every input $\sigma$. Using that $R$ is $c$-competitive, we get that for every input $\sigma$,
\begin{align*}
\ALG(\sigma)\leq (1+\varepsilon')\E[R(\sigma)]\leq (1+\varepsilon')(c\cdot\OPT(\sigma)+\alpha)\leq (c+\varepsilon)\OPT(\sigma)+(1+\varepsilon)\alpha.
\end{align*}
This proves that $\ALG$ is $(c+\varepsilon)$-competitive.
\end{proof}

We will now prove a derandomization result for maximization algorithms. However, there are some technical difficulties involved in this. For a minimization problem, we always know that the cost of a solution is at least zero. For a maximization problem, $\OPT(\sigma)$ could grow arbitrarily fast (as a function of the length of the input) while the profit of a randomized algorithm could remain small. We resolve this by assuming that the strict competitive ratio of the algorithm is bounded by some function $c(n)$, and show that if $c(n)$ is polynomial in $n$, then we get the same conclusion as for minimization problems. In \cref{app:lowerstrict}, we show that this assumption cannot be removed (without replacing it with something else).

\begin{theorem}
\label{derandmax}
Let $\P$ be an online maximization problem and let $R$ be a randomized algorithm with advice complexity $b(n)$. Suppose that there exists a function $I:\mathbb{N}\rightarrow \mathbb{N}$ such that the number of valid $\P$-inputs of length $n$ is at most $I(n)$. Furthermore, suppose that $R$ is strictly $c(n)$-competitive, that is, $\OPT(\sigma)\leq c(n)\cdot E[R(\sigma)]$ for every input $\sigma$ of length at most $n$.

Then for every $0<\varepsilon<1$, there exists a deterministic algorithm with advice, $\ALG$, such that $\ALG(\sigma)\geq (1-\varepsilon)\E[R(\sigma)]$ for every input $\sigma$ and such that $\ALG$ reads at most $$b(n)+O(\log n + \log \log I(n)+\log c(n) + \log\varepsilon^{-1})$$ bits of advice on inputs of length $n$. In particular, if $I(n)=\dri$, $c(n)=n^{O(1)}$, and $\varepsilon$ is a small but fixed constant, then $\ALG$ reads $b(n)+O(\log n)$ bits of advice.
\end{theorem}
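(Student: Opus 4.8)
The plan is to follow the probabilistic derandomization argument behind \cref{thm:derandmin}, adding one new ingredient to cope with the fact that for a maximization problem the sampled algorithms must be concentrated \emph{from below} rather than from above. Recall that a randomized algorithm with advice complexity $b(n)$ is a distribution over deterministic algorithms with advice complexity at most $b(n)$. For each input length $n$ I would sample $t = t_{n,\varepsilon}$ deterministic algorithms $R_1,\dots,R_t$ independently according to the distribution defining $R$, and show that for a suitable $t$ the pointwise best of $R_1,\dots,R_t$ achieves profit at least $(1-\varepsilon)\E[R(\sigma)]$ on every input $\sigma$ of length $n$ simultaneously. As in \cref{thm:derandmin}, the final deterministic algorithm $\ALG$ receives from its oracle the input length $n$ in self-delimiting form ($O(\log n)$ bits), reconstructs the list $R_1,\dots,R_t$ in the same way as in the proof of \cref{thm:derandmin} (exhaustive search, or simply existence in the request--answer model), and is then told which of them to simulate on the current input, together with the advice string that algorithm reads.

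The new ingredient is a reverse-Markov estimate. Fix $\sigma$ of length $n$ with $\mu := \E[R(\sigma)] > 0$; the case $\mu = 0$ is trivial, since profits are non-negative and the hypothesis then forces $\OPT(\sigma) = 0$, hence every $R_i(\sigma) = 0 = (1-\varepsilon)\mu$. Each $R_i(\sigma)$ is a non-negative random variable of mean $\mu$, and --- crucially --- it is bounded above by $\OPT(\sigma) \le c(n)\mu$ by the strict $c(n)$-competitiveness hypothesis. Writing $p := \Pr[R_i(\sigma) \ge (1-\varepsilon)\mu]$ and splitting the expectation $\mu = \E[R_i(\sigma)]$ over the events $\{R_i(\sigma) < (1-\varepsilon)\mu\}$ and $\{R_i(\sigma) \ge (1-\varepsilon)\mu\}$, one gets $\mu \le (1-\varepsilon)\mu(1-p) + c(n)\mu\, p$, which rearranges (using $\mu > 0$, $\varepsilon < 1$, $c(n) \ge 1$) to $p \ge \varepsilon/(c(n)-1+\varepsilon) \ge \varepsilon/c(n)$. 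By independence, $\Pr[\max_{1 \le i \le t} R_i(\sigma) < (1-\varepsilon)\mu] \le (1 - \varepsilon/c(n))^t$, and a union bound over the at most $I(n)$ inputs of length $n$ gives a failure probability of at most $I(n)(1-\varepsilon/c(n))^t$. Since $1 - x \le e^{-x}$, this is strictly below $1$ once $t > (c(n)/\varepsilon)\ln I(n)$, so I would take $t_{n,\varepsilon}$ to be the least integer with this property.

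Assembling the pieces: there exist $t_{n,\varepsilon}$ deterministic advice-algorithms $\ALG_1,\dots,\ALG_{t_{n,\varepsilon}}$, each of advice complexity at most $b(n)$, such that every input of length $n$ is served by one of them with profit at least $(1-\varepsilon)\E[R(\sigma)]$. The oracle for $\ALG$ writes $n$, then the index $i^\star$ of a best algorithm for the actual input ($\lceil \log t_{n,\varepsilon} \rceil$ bits), then the advice string $\ALG_{i^\star}$ reads on $\sigma$ (at most $b(n)$ bits); $\ALG$ simulates $\ALG_{i^\star}$, so $\ALG(\sigma) = \ALG_{i^\star}(\sigma) \ge (1-\varepsilon)\E[R(\sigma)]$. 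The total advice is $b(n) + O(\log n) + \lceil \log t_{n,\varepsilon} \rceil$, and since $\log t_{n,\varepsilon} = \log((c(n)/\varepsilon)\ln I(n) + 1) = O(\log c(n) + \log\log I(n) + \log\varepsilon^{-1})$, this is the claimed bound; specializing to $I(n) = \dri$, $c(n) = n^{O(1)}$ and $\varepsilon = \Theta(1)$ collapses the extra term to $O(\log n)$.

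The only real obstacle is the reverse-Markov step and its dependence on the competitiveness hypothesis: unlike the minimization case, where $R_i(\sigma) \ge 0$ gives a usable estimate for free, here the lower tail of $R_i(\sigma)$ can be controlled only by importing the upper bound $R_i(\sigma) \le \OPT(\sigma) \le c(n)\E[R(\sigma)]$. This is exactly why $c(n)$ enters the advice complexity, and why some hypothesis of this form is unavoidable, as the matching lower bounds in \cref{app:lowerstrict} show.
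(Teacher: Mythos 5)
Your proposal is correct and follows essentially the same route as the paper: sample $t$ deterministic algorithms from $R$'s distribution, bound the per-input failure probability using the upper bound $R_i(\sigma)\le\OPT(\sigma)\le c(n)\E[R(\sigma)]$, union-bound over the at most $I(n)$ inputs, and encode (input length, index of best sample, that sample's advice). Your ``reverse-Markov'' decomposition is algebraically identical to the paper's application of Markov's inequality to the non-negative variable $\OPT(\sigma)-R_i(\sigma)$ --- both yield the same per-sample success probability $\varepsilon/(c(n)-1+\varepsilon)$, which you merely relax to $\varepsilon/c(n)$ --- so only the bookkeeping in choosing $t_{n,\varepsilon}$ differs cosmetically.
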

\begin{proof}
Consider the set $I_n$ of all $\P$-inputs of length $n$. By assumption, $\ab{I_n}\leq I(n)$. Recall that a randomized algorithm with advice complexity $b$ is a probability distribution over deterministic algorithms with advice complexity (at most) $b$. Let $t$ be an integer and let the random variables $R_1,\ldots , R_t$ be $t$ deterministic algorithms selected independently at random according to the probability distribution specified by the randomized algorithm $R$. Then, for every fixed input $\sigma\in I_n$ such that $\E[R(\sigma)]>0$ and every $1\leq i\leq t$, Markov's inequality (applied to the positive random variable $\OPT(\sigma)-R_i(\sigma)$) implies that
\begin{align}
\notag \Pr\Big[R_i(\sigma)< (1-\varepsilon)\E[R(\sigma)]\Big]&\leq \frac{\OPT(\sigma)-\E[R(\sigma)]}{\OPT(\sigma)-(1-\varepsilon)\E[R(\sigma)]}\\
\notag &=\frac{\OPT(\sigma)/\E[R(\sigma)]-1}{\OPT(\sigma)/\E[R(\sigma)]-(1-\varepsilon)}\\
\label{maxderandmark}&\leq\frac{c(n)-1}{c(n)-(1-\varepsilon)}.
\end{align}
The same upper bound trivially holds if $E[R(\sigma)]=0$, since then the probability on the left-hand side of (\ref{maxderandmark}) is zero (because profits are by definition non-negative). Since $R_1,\ldots , R_t$ are independent random variables, inequality (\ref{maxderandmark}) implies that for every fixed input $\sigma\in I_n$, the probability of all $t$ algorithms failing to achieve the desired performance guarantee is at most
\begin{align*}
\Pr\Big[\min_{1\leq i\leq t} \{R_i(\sigma)\}< (1-\varepsilon)\E[R(\sigma)]\Big]\leq \left(\frac{c(n)-1}{c(n)-(1-\varepsilon)}\right)^{t}.
\end{align*}
An application of the union bound gives
\begin{equation}
\Pr\Big[\exists\sigma\in I_n : \min_{1\leq i\leq t} \{R_i(\sigma)\}< (1-\varepsilon)\E[R(\sigma)]\Big]\leq I(n) \left(\frac{c(n)-1}{c(n)-(1-\varepsilon)}\right)^{t}.
\label{existssigmamax}
\end{equation}
Note that
\begin{align*}
I(n)\cdot \left(\frac{c(n)-1}{c(n)-(1-\varepsilon)}\right)^{t}<1 \Leftrightarrow \frac{\log (I(n))}{\log \left(\frac{c(n)-(1-\varepsilon)}{c(n)-1}\right)}< t.
\end{align*}
Using that $\log(1+x)^{-1}\leq 1/x$ for $0<x\leq 1$ and $\log(1+x)^{-1}\leq 1$ for $x>1$, we get that
\begin{align*}
\log \left(\frac{c(n)-(1-\varepsilon)}{c(n)-1}\right)^{-1}&=\log\left(1+\frac{\varepsilon}{c(n)-1}\right)^{-1}\\
&\leq \frac{c(n)-1}{\varepsilon}+1.
\end{align*}
Choose $t=t_{n,\varepsilon}$ to be the smallest integer such that ${\log I(n)}\cdot \left(\frac{c(n)-1}{\varepsilon}+1\right)<t_{n,\varepsilon}$. Then
\begin{align*}
\Pr\Big[\forall\sigma\in I_n : \min_{1\leq i\leq t_{n,\varepsilon}} \{R_i(\sigma)\}\geq (1-\varepsilon)\E[R(\sigma)]\Big]>0.
\end{align*}
This shows that there exist $t_{n,\varepsilon}$ deterministic algorithms such that for every input $\sigma\in I_n$, at least one of these $t_{n,\varepsilon}$ algorithms obtains a profit of at least $(1-\varepsilon)\E[R(\sigma)]$. By assumption, the advice complexity of each of these $t_{n,\varepsilon}$ algorithms is at most $b(n)$.

We are now ready to define $\ALG$. The oracle will write the length $n$ of the input onto the advice tape using $O(\log n)$ bits. Knowing $n$, $\ALG$ will be able to determine $t_{n,\varepsilon}$ algorithms $\ALG_1,\ldots , \ALG_t$ such that $\min_{i}\{\ALG_i(\sigma)\}\geq (1-\varepsilon)\E[R(\sigma)]$ for every $\sigma\in I_n$. The oracle will encode the index, $i$, of the best of the $t_{n,\varepsilon}$ algorithms for the current input $\sigma$. This requires $\lceil \log t_{n,\varepsilon}\rceil$ bits. Furthermore, the oracle writes the advice read by $\ALG_i$ (when given $\sigma$ as input) onto the advice tape of $\ALG$. The algorithm $\ALG$ is then able to simulate $\ALG_i$ and, thus, $\ALG(\sigma)=\ALG_i(\sigma)\geq(1-\varepsilon)\E[R(\sigma)]$. The number of advice bits read by $\ALG$ is at most
\begin{align*}
b(n)+&\log\left({\log I(n)}\cdot\left({\frac{c(n)-1}{\varepsilon}}+1\right)+1\right)+O(\log n)\\
&=b(n)+O(\log n+ \log \log I(n)+\log c(n)+\log \varepsilon^{-1}).
\end{align*}
\end{proof}

\begin{corollary}
\label{derandmax:cor}
Let $c\geq 1$ be a constant. Furthermore, let $\P$ be an online maximization problem and let $R$ be a randomized algorithm with advice complexity $b(n)$. Suppose that there exists a function $I:\mathbb{N}\rightarrow \mathbb{N}$ such that the number of valid $\P$-inputs of length $n$ is at most $I(n)$. If $I(n)=\dri$, then for every $\varepsilon>0$, if there exists a $c$-competitive randomized algorithm $R$ with advice complexity $b(n)$, then there exists a $(c+\varepsilon)$-competitive deterministic algorithm with advice complexity $b(n)+O(\log n)$.
\end{corollary}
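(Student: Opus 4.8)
The plan is to mimic the proof of the minimization corollary following \cref{thm:derandmin}, but with one extra step needed to bridge a gap: \cref{derandmax} requires a \emph{strict} competitive bound with a polynomially bounded ratio $c(n)=n^{O(1)}$, whereas the hypothesis only gives a (non-strict) $c$-competitive randomized algorithm $R$, say with additive constant $\alpha$, so that $\OPT(\sigma)\le c\,\E[R(\sigma)]+\alpha$ for every $\sigma$. First I would fix the target $\varepsilon>0$ and set $\varepsilon_0=\varepsilon/(c+\varepsilon)$, so that $c/(1-\varepsilon_0)=c+\varepsilon$ (and I may assume $\alpha>0$, else $R$ is already strictly $c$-competitive and \cref{derandmax} applies with $c(n)=c$). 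Then I would split the inputs of length at most $n$ into the \emph{small} ones, where $\E[R(\sigma)]<\alpha$ — for which $\OPT(\sigma)<(c+1)\alpha=O(1)$ by $c$-competitiveness — and the \emph{large} ones, where $\E[R(\sigma)]\ge\alpha$, for which $\OPT(\sigma)\le c\,\E[R(\sigma)]+\alpha\le(c+1)\E[R(\sigma)]$, so that the ratio $\OPT(\sigma)/\E[R(\sigma)]$ is bounded by the constant $c+1$. The point of this split is that the additive constant lets one ignore the small inputs entirely, while on the large inputs one has exactly the strict, constant-ratio bound that the argument of \cref{derandmax} needs.

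Next I would rerun the probabilistic argument from the proof of \cref{derandmax}, restricted to the large inputs. Sample $t$ deterministic algorithms $R_1,\dots,R_t$ independently from the distribution defining $R$ (each of advice complexity at most $b(n)$). For a large input $\sigma$, Markov's inequality applied to the nonnegative random variable $\OPT(\sigma)-R_i(\sigma)$ gives $\Pr[R_i(\sigma)<(1-\varepsilon_0)\E[R(\sigma)]]\le\frac{\OPT(\sigma)/\E[R(\sigma)]-1}{\OPT(\sigma)/\E[R(\sigma)]-(1-\varepsilon_0)}$; since $x\mapsto\frac{x-1}{x-(1-\varepsilon_0)}$ is increasing on $(1-\varepsilon_0,\infty)$ and $\OPT(\sigma)/\E[R(\sigma)]\le c+1$, this is at most the constant $q:=\frac{c}{c+\varepsilon_0}<1$. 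A union bound over the at most $I(n)$ large inputs of length $n$ shows that if $t$ is the least integer with $I(n)q^t<1$ — which is $t=\Theta(\log I(n))$ since $q$ is a constant — then there exist deterministic algorithms $\ALG_1,\dots,\ALG_t$, each of advice complexity at most $b(n)$, such that $\max_i\ALG_i(\sigma)\ge(1-\varepsilon_0)\E[R(\sigma)]$ for every large $\sigma$ of length $n$.

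Finally I would assemble $\ALG$ exactly as in \cref{derandmax}: the oracle writes $n$ using $O(\log n)$ bits, then for the actual input $\sigma$ it writes the index $i\in\{1,\dots,t\}$ achieving the above bound (or an arbitrary index if $\sigma$ is small), using $\lceil\log t\rceil=O(\log\log I(n))$ bits, and then the at most $b(n)$ advice bits that $\ALG_i$ reads on $\sigma$; since $I(n)=\dri$ we have $\log\log I(n)=O(\log n)$, so the total advice complexity is $b(n)+O(\log n)$. For competitiveness, on a large input $\sigma$ we get $\OPT(\sigma)\le c\,\E[R(\sigma)]+\alpha\le\frac{c}{1-\varepsilon_0}\ALG(\sigma)+\alpha=(c+\varepsilon)\ALG(\sigma)+\alpha$, and on a small input $\sigma$ we get $\OPT(\sigma)<(c+1)\alpha\le(c+\varepsilon)\ALG(\sigma)+(c+1)\alpha$ because profits are nonnegative; hence $\OPT(\sigma)\le(c+\varepsilon)\ALG(\sigma)+(c+1)\alpha$ for all $\sigma$, so $\ALG$ is $(c+\varepsilon)$-competitive. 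The only non-bookkeeping point — and the main obstacle — is precisely this mismatch between the strict, polynomially bounded hypothesis of \cref{derandmax} and the non-strict constant-competitiveness of $R$; once one observes that the additive constant renders the small inputs free and the large inputs carry a genuine constant ratio, the remainder is a direct transcription of the proof of \cref{derandmax} (equivalently, of the minimization corollary after \cref{thm:derandmin}).
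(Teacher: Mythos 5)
Your proof is correct and takes essentially the same route as the paper's: you isolate the mismatch between the strict, bounded-ratio hypothesis of \cref{derandmax} and the (non-strict) $c$-competitiveness of $R$, split the inputs into a ``trivial'' class (handled by the additive constant) and a class on which $R$ is strictly $O(1)$-competitive, and run the derandomization on the latter. The paper's only cosmetic differences are that it defines the split by $\OPT(\sigma)\geq 2\alpha$ rather than $\E[R(\sigma)]\geq\alpha$ (yielding a strict ratio of $2c$ instead of your $c+1$), that it invokes \cref{derandmax} as a black box on ``$\P$ restricted to $I_{2\alpha}$'' rather than inlining the probabilistic argument, and that it spends one advice bit to signal which side of the split $\sigma$ falls on; none of these change the substance of the argument.
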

\begin{proof}
By assumption, there exists an additive constant $\alpha$ such that $\OPT(\sigma)\leq c\E[R(\sigma)]+\alpha$ for every input $\sigma$. Let $I_{2\alpha}$ those $\P$-inputs for which $2\alpha\leq \OPT(\sigma)$. Note that if $\sigma\in I_{2\alpha}$, then $c\E[R(\sigma)]+\alpha \geq \OPT(\sigma)\geq 2\alpha$. Thus, $c\E[R(\sigma)]\geq \alpha$, and hence $2c\E[R(\sigma)]\geq \OPT(\sigma)$ for every $\sigma\in I_{2\alpha}$. In particular, $R$ is strictly $2c$-competitive when we restrict to inputs from $I_{2\alpha}$. Choose $\varepsilon'=\varepsilon/(c+\varepsilon)$. By applying \cref{derandmax} to the algorithm $R$ restricted to inputs from $I_{2\alpha}$, we obtain a deterministic algorithm $\ALG'$ with advice complexity $b(n)+O(\log n)$ such that $\ALG'(\sigma)\geq (1-\varepsilon)\E[R(\sigma)]$ for every $\sigma\in I_{2\alpha}$.

Our deterministic $(c+\varepsilon)$-competitive algorithm $\ALG$ works as follows: A single bit of advice is used to indicate if $\sigma\in I_{2\alpha}$ or not. If $\sigma\notin I_{2\alpha}$, then $\ALG$ computes an arbitrary valid output. If $\sigma\in I_{2\alpha}$, the algorithm simulates $\ALG'$ on $\sigma$. By the choice of $\varepsilon'$, we get that for every $\sigma\in I_{2\alpha}$,
\begin{align*}
\OPT(\sigma)\leq c\E[R(\sigma)]+\alpha\leq \frac{c}{1-\varepsilon'} \ALG'(\sigma)+\alpha=(c+\varepsilon)\ALG'(\sigma)+\alpha=(c+\varepsilon)\ALG(\sigma)+\alpha.
\end{align*}
Since $\OPT(\sigma)\leq 2\alpha$ for every $\sigma\notin I_{2\alpha}$, it follows that $\ALG$ is $(c+\varepsilon)$-competitive with an additive constant of $2\alpha$.
\end{proof}

\subsection{Lower bounds for derandomization using advice}

In this section, we show that \cref{thm:derandmin} is essentially tight by providing (for any choice of $I(n)$) an online problem where $\Omega(\log \log I(n))$ bits of advice are necessary for a deterministic online algorithm to achieve a competitive ratio anywhere near that of the best randomized algorithm.
\begin{theorem}
Let $I(n):\mathbb{N}\rightarrow\mathbb{N}$ be an increasing function. There exists a minimization problem \P with at most $I(n)$ inputs of length $n$ such that a randomized algorithm without advice can be $1$-competitive, while a deterministic algorithm with advice needs to read at least $\Omega(\log \log I(n))$ bits of advice to achieve a constant competitive ratio.
\end{theorem}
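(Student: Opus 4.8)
The plan is to build an explicit ``hiding game'' in which $\OPT$ is identically $0$, so that being $c$-competitive for \emph{any} constant $c$ just means incurring $O(1)$ cost, and then to exhibit a randomized algorithm achieving this with expected cost $1$ while showing that a deterministic algorithm with too little advice is forced into arbitrarily large cost. Concretely, for input length $n$ set $r=r(n)=\min\paren*{n,\lceil\log\log I(n)\rceil}$, let $a=a(n)$ be the largest integer with $\binom{a}{\lfloor a/r\rfloor}^{r}\le I(n)$, and put $s=s(n)=\lfloor a(n)/r(n)\rfloor$. The problem $\P$ runs for $r$ rounds (padded by $n-r-O(1)$ dummy requests); in round $j$ the algorithm first outputs a guess $y_j\in[a]$, then a \emph{forbidden set} $B_j\subseteq[a]$ with $|B_j|=s$ is revealed, and the algorithm pays $1$ in round $j$ iff $y_j\in B_j$, the cost being the number of rounds in which it is caught. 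For all large $n$ we have $s<a$, so an offline solver picks $y_j\notin B_j$ in every round and $\OPT(\sigma)=0$ for every input $\sigma=(B_1,\dots,B_r)$. The number of inputs of length $n$ is $\binom{a(n)}{s(n)}^{r(n)}\le I(n)$ by the choice of $a(n)$, as required. (For the finitely many small $n$ with $I(n)<4$ we simply place no nontrivial inputs at that length; this is irrelevant to the asymptotic claims.)

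Randomization is $1$-competitive essentially for free: the algorithm that picks each $y_j$ uniformly and independently from $[a]$ is caught in round $j$ with probability exactly $s/a\le 1/r$, so its expected total cost on every input is at most $r\cdot(1/r)=1=\OPT(\sigma)+1$.

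For the advice lower bound I would argue as follows. Suppose a deterministic $\ALG$ reads $b(n)$ bits of advice, so on length-$n$ inputs it is the pointwise minimum of $t:=2^{b(n)}$ fixed deterministic algorithms $A_1,\dots,A_t$. I claim that whenever $F(t/s+1)\le r$ the adversary can construct a single length-$n$ input on which every $A_i$ pays at least $F$: partition the $r$ rounds into $F$ blocks of $\lceil t/s\rceil$ rounds each; within a block, repeatedly set $B_j$ to contain the $s$ most popular current guesses among the algorithms not yet caught during this block — such a round catches at least $\min(s,m)$ of them, where $m$ is their current number, so after $\lceil t/s\rceil$ rounds of the block every $A_i$ has been caught at least once more, and hence after $F$ blocks every $A_i$ has been caught $\ge F$ times (note the adversary knows all guesses since it constructs the input). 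Since $rs=r\lfloor a/r\rfloor=\Theta(a(n))$, one checks from the defining inequality $r\log\binom{a}{\lfloor a/r\rfloor}\le\log I(n)$ together with its failure at $a+1$ that $\log(rs)=\log a(n)=\Theta(\log\log I(n))$ and $r(n)\to\infty$. Consequently, if $b(n)=o(\log\log I(n))$ then for every constant $F$ and all large enough $n$ we have $r(n)\ge 2F$ and $2^{b(n)}=a(n)^{o(1)}=o\paren*{a(n)/(2F)}\le s(r/F-1)$, so there is a length-$n$ input $\sigma$ with $\ALG(\sigma)\ge F$ while $\OPT(\sigma)=0$; thus $\sup_\sigma\ALG(\sigma)=\infty$ and $\ALG$ is not $c$-competitive for any constant $c$. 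Contrapositively, achieving any constant competitive ratio forces $b(n)=\Omega(\log\log I(n))$. (This is tight: $\lceil\log a(n)\rceil$ advice bits, enough to name a column $k$ minimizing $\#\{j:k\in B_j\}\le rs/a\le 1$, already yield a deterministic $1$-competitive algorithm.)

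The main obstacle — and the reason $\OPT$ is taken to be $0$ — is the tension between the two halves of the proof: to keep the randomized algorithm $1$-competitive the forbidden sets must be sparse ($s\ll a$), but to fool many deterministic algorithms simultaneously the adversary needs enough ``room'' per round to catch a $\Theta(s)$-fraction of the survivors; balancing the choice $s=\lfloor a/r\rfloor$ against the input-count constraint $\binom{a}{s}^r\le I(n)$ is precisely what makes the bound land on $\log\log I(n)$ rather than on $\log n$ (which would be the answer if one only used singleton forbidden sets) or on $\log I(n)$ (the trivial ``encode the whole input'' bound). Verifying $\log a(n)=\Theta(\log\log I(n))$ uniformly over every increasing $I$ — in particular in the two extreme regimes, where $r(n)$ is pinned by $\lceil\log\log I(n)\rceil$ respectively by $n$ — is the one genuinely fiddly computation; the rest is routine.
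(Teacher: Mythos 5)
Your proof is correct, but it takes a genuinely different and noticeably heavier route than the paper's. The paper constructs a one-shot hiding game: the universe has $u(n)=2^{f(n)}$ elements with $f(n)=\lfloor\sqrt{\log I(n)}\rfloor$, the adversary reveals a ``bad'' set of size $f(n)$, and the \emph{cost is asymmetric} ($f(n)$ if the algorithm's single guess lands in the bad set, $1$ otherwise). This makes $\OPT\equiv 1$, gives the uniformly random guesser expected cost $\le f(n)^2/2^{f(n)}+1\le 3$, and makes the deterministic lower bound a one-line pigeonhole: with $2^{b(n)}\le f(n)$ the adversary can cover all achievable outputs and force cost $f(n)\to\infty$. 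You instead enforce $\OPT\equiv 0$ by letting the algorithm always have a safe choice, which \emph{requires} a multi-round structure ($r\approx\log\log I(n)$ rounds with uniform unit costs and size-$s$ forbidden sets) and a block-based adaptive adversary to push the cost of every one of the $2^{b(n)}$ residual algorithms above any constant $F$. Both land on $\Omega(\log\log I(n))$. What you gain is a problem with $\OPT\equiv 0$ and a clean matching upper bound of $\lceil\log a(n)\rceil$ bits inside the same game; what you pay is having to calibrate three interacting parameters ($r,a,s$) against $\binom{a}{s}^r\le I(n)$ and to verify $\log a(n)=\Theta(\log\log I(n))$, the block-covering step (top $s$ values among $m$ survivors cover $\ge\min(s,m)$), and the arithmetic chain $2^{b(n)}\le s(r/F-1)$ --- none of which is wrong, but all of which the paper's asymmetric-cost trick avoids entirely. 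One small point worth making explicit if you write this up: the conclusion you extract directly is ``$b(n)=o(\log\log I(n))$ is impossible,'' and to turn that into the stated $\Omega(\log\log I(n))$ bound you should observe, as the paper does, that fixing the additive constant $\alpha$ (hence $F=\alpha+1$) yields the pointwise inequality $b(n)\ge\log a(n)-\log(4F)$ for all sufficiently large $n$, not merely infinitely often.
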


\begin{proof}
Define $f(n)=\lfloor\sqrt{\log I(n)}\rfloor$ and $u(n)=2^{f(n)}$. Furthermore, for each $n\geq 2$, let $M_n$ be a set containing $u(n)$ elements. The problem \P is defined as follows: In the first $n-2$ rounds, a dummy request appears which requires no response. In round $n-1$, a special request, $?$, is revealed and the algorithm must respond by choosing an element $x$ from $M_n$ (note that the algorithm can deduce $n$ from the number of dummy requests). In round $n$, a set $M'\subset M_n$ such that $\ab{M'}=f(n)$ is revealed (no response is required from the algorithm). If $x\in M'$, the algorithm incurs a cost of $f(n)$. Otherwise, the algorithm incurs a cost of $1$. Formally, the input to \P is a sequence $(d_1,\ldots , d_{n-2}, ?, M')$ where $d_i$ denotes a dummy request. Note that since $f(n)<u(n)$, it follows that $\OPT(\sigma)=1$ for any input. Also, the number of inputs of length $n$ is at most $\binom{u(n)}{f(n)}\leq u(n)^{f(n)}=2^{f(n)^2}\leq 2^{\log I(n)}=I(n)$.

Consider the randomized algorithm $R$ which chooses $x$ uniformly at random from $M_n$ in round $n-1$. Since the adversary is oblivious, the expected cost incurred by $R$ on an input $\sigma$ of length $n$ is $$\E[R(\sigma)]\leq f(n)\cdot \frac{f(n)}{u(n)}+1\cdot\left(1-\frac{f(n)}{u(n)}\right)\leq \frac{f(n)^2}{u(n)} + 1\leq 3. $$
It follows that $\E[R(\sigma)]\leq \OPT(\sigma)+2$ and hence $R$ is $1$-competitive.

Let $\ALG$ be a deterministic algorithm reading at most $b(n)$ bits of advice on inputs of length $n$. Since the first $n-1$ request are identical in all inputs, the element $x$ chosen in round $n-1$ can only depend on $n$ and the advice. Thus, $\ALG$ can choose at most $2^{b(n)}$ different elements from $M_n$ on inputs of length $n$. If for some $n$ we have that $2^{b(n)}\leq f(n)$, then an adversary can choose $M'$ to contain all of the elements that could possibly be outputted by $\ALG$. Thus, $\ALG$ will incur a cost of $f(n)$. In particular, if $\ALG$ is to achieve a constant competitive ratio, then there must be an $N$ such that for all $n\geq N$, we have that $2^{b(n)}>f(n)$ and, hence, $$b(n)>\log (f(n))=\log\left( \lfloor \sqrt{\log I(n)}\rfloor\right) =\Omega(\log \log I(n)).$$
\end{proof}

For maximization problems, it is natural to ask if the $O(\log c(n))$-term in \cref{derandmax} is necessary or just an artifact of our proof-strategy. We show in the following example that it is indeed necessary, by designing an online problem and a strictly $c$-competitive randomized algorithm such that any deterministic algorithm reading $o(\log c(n))$ bits of advice will, on some inputs, obtain a profit much smaller than the randomized algorithm.

\begin{theorem}
\label{app:lowerstrict}
Let $c:\mathbb{N}\rightarrow \mathbb{N}$ be a function. There exists an online maximization problem $\P$ with $I(n)$ inputs of length $n$ and a randomized $\P$-algorithm $R$ without advice such that: 
\begin{enumerate}[(i)]
\item The algorithm $R$ is strictly $c(n)$-competitive and $\E[R(\sigma)]=n$ for every input $\sigma$ of length $n$.
\item If $\ALG$ is a deterministic algorithm reading $O(\log n+ \log \log I(n))$ bits of advice, then there exists an $n_0$ such that for every $n\geq n_0$, there is an input $\sigma$ of length $n$ such that $\ALG(\sigma)=0$.
\item If $\ALG$ is a deterministic algorithm such that $\ALG(\sigma)\geq 1$ for every input $\sigma$, then $\ALG$ must read at least $\Omega(\log c(n))$ bits of advice.
\end{enumerate}
\end{theorem}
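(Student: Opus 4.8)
The plan is to realize $\P$ as a single ``commit‑then‑reveal'' guessing game, padded to the correct length. For each $n$ fix a set $M_n$ with $\ab{M_n}=c(n)$ (the three claims are of interest only when $c(n)\to\infty$, so assume $c(n)\ge 2$). An input of length $n$ has the form $\sigma=(d_1,\dots,d_{n-2},?,m)$: the first $n-2$ requests are dummies requiring no response, request $n-1$ asks the algorithm to commit an answer $x\in M_n$, and request $n$ reveals a target $m\in M_n$ (from the number of dummies the algorithm already knows $n$, hence $M_n$). The profit of an output is $n\cdot c(n)$ if $x=m$ and $0$ otherwise. Thus inputs of length $n$ are in bijection with $M_n$, so $I(n)=c(n)$ and $\log\log I(n)=\log\log c(n)$, while $\OPT(\sigma)=n\cdot c(n)$ for every $\sigma$ of length $n$.

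\textbf{The randomized algorithm and (i).} For $R$ I would simply commit a uniformly random $x\in M_n$. Since the adversary is oblivious, for each fixed $\sigma$ we have $\Pr[x=m]=1/c(n)$, hence $\E[R(\sigma)]=\tfrac1{c(n)}\cdot n\,c(n)=n$; and $\OPT(\sigma)=c(n)\cdot n=c(n)\,\E[R(\sigma)]$, so $R$ is strictly $c(n)$-competitive. This is (i).

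\textbf{Parts (iii) and (ii).} Recall from \cref{cadef} (in its maximization form) that a deterministic $\ALG$ of advice complexity $b(n)$ is, on inputs of length $n$, the pointwise maximum of $2^{b(n)}$ advice‑free algorithms. Because the $n-1$ requests preceding the commitment are identical across all inputs of length $n$, each of these $2^{b(n)}$ algorithms commits a fixed answer in $M_n$ depending only on $n$; so $\ALG$ can realize at most $2^{b(n)}$ distinct commitments on length‑$n$ inputs. If $\ALG(\sigma)\ge 1$ for every $\sigma$, then for every $m\in M_n$ some advice string must make $\ALG$ commit $m$, whence $2^{b(n)}\ge\ab{M_n}=c(n)$, i.e. $b(n)\ge\log_2 c(n)=\Omega(\log c(n))$: this is (iii). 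For (ii), take any $\ALG$ with $b(n)=O(\log n+\log\log I(n))=O(\log n+\log\log c(n))$; then $\ALG$ realizes at most $2^{b(n)}=n^{O(1)}(\log c(n))^{O(1)}$ commitments, which for all large $n$ is strictly less than $c(n)$, so some target $m^\star\in M_n$ is never committed and the input with $m=m^\star$ yields $\ALG(\sigma)=0$.

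\textbf{Main obstacle.} The delicate step is the inequality $2^{b(n)}<c(n)$ that drives (ii): it requires $\log c(n)$ to dominate $\log n+\log\log I(n)$, which is why the construction is tuned so that $I(n)$ is only $c(n)$ — and it cannot be made smaller, since $O(\log I(n))$ bits of advice always suffice for optimality while (iii) already forces $\Omega(\log c(n))$ bits, so $\log\log I(n)=\Theta(\log\log c(n))$ is unavoidable. The resulting gap between the $\Omega(\log c(n))$ of (iii) and the $O(\log n+\log\log c(n))$ of (ii) is precisely what shows that the $\log c(n)$ term in \cref{derandmax} is not an artifact of the proof, and it is a genuine gap exactly in the regime $\log c(n)=\omega(\log n)$. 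Everything else — the oblivious‑adversary expectation for $R$, well‑definedness of the profit function, and the inertness of the dummy requests — is routine.
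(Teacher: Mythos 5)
Your proof is correct, but your construction is genuinely different from the paper's. The paper realizes $\P$ as $n$ rounds of string guessing with \emph{unknown} history over the alphabet $[c(n)]$ (each correct guess pays $c(n)$, so $\OPT=nc(n)$ and $I(n)=c(n)^n$), and the $\Omega(\log c(n))$ bound in (iii)/(ii) comes from a coordinate-wise argument: with fewer than $c(n)$ distinct outputs, for \emph{each} round $i$ one can find a character $x_i$ no output uses, and the concatenated string $x_1\cdots x_n$ yields profit $0$. You instead use a single ``commit-then-reveal'' round padded with dummies, with $I(n)=c(n)$ inputs, and argue over one coordinate only: fewer than $c(n)$ distinct commitments means some target is never hit. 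In effect you have ported the padding trick the paper already uses to prove its minimization lower bound (the construction preceding this theorem in the same appendix) to the maximization setting, which yields a cleaner single-round argument and a much smaller input universe. The trade-off is cosmetic: your $I(n)=c(n)$ makes $\log\log I(n)=\log\log c(n)$ whereas the paper's $I(n)=c(n)^n$ gives $\log\log I(n)=\Theta(\log n+\log\log c(n))$, but since the bound in (ii) is $O(\log n+\log\log I(n))$ in both cases this collapses to the same threshold $O(\log n + \log\log c(n))$, so the two constructions certify exactly the same gap. Your remark on the ``main obstacle'' is also accurate and worth keeping: both your proof and the paper's silently use that $\log c(n)=\omega(\log n)$ to close step (ii), which is the only regime in which the theorem is informative; the paper's proof of (ii) also relies on this when it asserts $O(\log n+\log\log I(n))=o(\log c(n))$, so this is a feature of the theorem statement rather than a defect of either argument.
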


\begin{proof}
Let $\P$ be the following modification of the string guessing problem with unknown history (cf. \cite{sg}): The problem $\P$ consists of $n$ rounds. The number of rounds is revealed to the online algorithm at the beginning. In round $1\leq i\leq n$, the algorithm has to guess a character from the alphabet $\{1,2,\ldots , c(n)\}$. Only after all $n$ rounds, the correct characters are revealed (thus, the algorithm does not learn the correct character right after guessing). For each character correctly guessed by the algorithm, it obtains a profit of $c(n)$. For each wrong guess, the algorithm obtains a profit of $0$.

Note that $\OPT(\sigma)=n\cdot c(n)$ and that $I(n)=c(n)^n$. Let $R$ be the algorithm which selects a character uniformly at random in each round. The probability of $R$ guessing right in round $i$ is $1/c(n)$. Since the profit of a correct guess is $c(n)$, this means that the expected profit of $R$ in each round is $1$. By linearity of expectation, $\E[R(\sigma)]=n$ for every input $\sigma$ of length $n$. It follows that $R$ is strictly $c(n)$-competitive.

Let $\ALG$ be a deterministic algorithm with advice complexity $b(n)$. Suppose that for some $n$, it holds that $b(n)\leq \log c(n)-1$. This means that $\ALG$ can produce at most $2^{b(n)}\leq 2^{\log( c(n))-1}=c(n)/2<c(n)$ different outputs on inputs of length $n$. In particular, for each $1\leq i\leq n$, there exists a character $x_i\in\{1,2,\ldots , c(n)\}$ such that none of the outputs that $\ALG$ can produce has $x_i$ as the answer in round $i$. It follows that there exists an input string $x=x_1\ldots x_n$ of length $n$ such that $\ALG(x)=0$. This proves that (iii) holds. In order to see that (ii) holds, note that $\log \log I(n)=\log \log (c(n)^n)=O(\log n + \log \log c(n))$. In particular, if $b(n)=O(\log n+\log \log I(n))$, then $b(n)=o(\log c(n))$ which shows that (ii) holds.  

\end{proof}

\section{Concentration inequalities}
\refstepcounter{subsection}
\subsection*{\thesubsection \quad Chernoff bounds}
\begin{theorem}[{\cite[Lemma II.2]{MoT}}]
Let $0<p<1$ and let $X_1,X_2,\ldots $ be a sequence of independent random variables such that for each $i$, $\Pr[X_i=1]=p$ and $\Pr[X_i=0]=1-p$. Then for every $\varepsilon>0$,
\begin{equation}
\frac{2^{-K_{p}(p+\varepsilon)\cdot n}}{n+1}\leq \Pr\left[ \sum_{i=1}^n X_i \geq (p+\varepsilon)n\right]\leq 2^{-K_{p}(p+\varepsilon)\cdot n}.
\label{chernoff}
\end{equation}
\label{chernoffthm}
\end{theorem}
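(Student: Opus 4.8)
The plan is to prove the two bounds separately; both are standard large-deviation estimates for $S_n:=\sum_{i=1}^n X_i$, which is binomially distributed with parameters $n$ and $p$. Write $q:=p+\varepsilon$. If $q>1$ the probability on the left is $0$ and nothing needs proof, and if $q=1$ one checks directly that $\Pr[S_n\ge n]=p^n=2^{-n\log(1/p)}=2^{-nK_p(1)}$, so I would assume $p<q\le 1$. It is also worth noting at the outset that the cleanest version of the statement is the one where $m:=qn$ is an integer; that is the substantive content, and I would organize the argument around it, addressing the rounding at the end.

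For the upper bound I would use the exponential moment method. For every $\lambda>0$, Markov's inequality applied to $e^{\lambda S_n}$ gives $\Pr[S_n\ge qn]\le e^{-\lambda qn}\,\E[e^{\lambda S_n}]=e^{-\lambda qn}(1-p+pe^{\lambda})^n$, using independence and $\E[e^{\lambda X_i}]=1-p+pe^{\lambda}$. Choosing the minimizing value $\lambda^{\star}=\ln\!\left(\frac{q(1-p)}{p(1-q)}\right)$, which is positive precisely because $q>p$, one computes $1-p+pe^{\lambda^{\star}}=\frac{1-p}{1-q}$, and a short computation of $\tfrac1n\log$ of the resulting bound collapses exactly to $-K_p(q)$, yielding $\Pr[S_n\ge qn]\le 2^{-K_p(q)n}$ with no lower-order loss. (One could instead run a method-of-types argument: each term $\binom{n}{k}p^k(1-p)^{n-k}$ with $k/n\ge q$ is at most $2^{-nK_p(k/n)}\le 2^{-nK_p(q)}$ since $K_p$ is increasing on $[p,1]$; but summing $n+1$ such terms gives only the weaker $(n+1)2^{-nK_p(q)}$, which is why the exponential-moment version is preferable here.)

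For the lower bound I would extract a single term of the binomial sum. Since $qn>pn$ lies to the right of the mode of $\mathrm{Bin}(n,p)$, the sequence $k\mapsto\binom{n}{k}p^k(1-p)^{n-k}$ is non-increasing for $k\ge\lceil qn\rceil$, so $\Pr[S_n\ge qn]\ge\Pr[S_n=m]$ for the relevant integer $m$. The key sub-lemma is the standard entropy lower bound $\binom{n}{k}\ge\frac1{n+1}2^{nH(k/n)}$ for $0\le k\le n$, where $H(x)=-x\log x-(1-x)\log(1-x)$: writing $x=k/n$, the $n+1$ summands of $\sum_{j=0}^{n}\binom{n}{j}x^j(1-x)^{n-j}=1$ are maximized at $j=k$ (as $k$ is a mode of $\mathrm{Bin}(n,x)$ when $nx=k$), so that summand is at least $\frac1{n+1}$, and rearranging gives the claim. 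Plugging this in, $\Pr[S_n=m]=\binom{n}{m}p^m(1-p)^{n-m}\ge\frac1{n+1}2^{n(H(m/n)+(m/n)\log p+(1-m/n)\log(1-p))}=\frac1{n+1}2^{-nK_p(m/n)}$, which for $m=qn\in\mathbb{N}$ is exactly $\frac1{n+1}2^{-nK_p(q)}$.

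The main obstacle, and the point I would be careful about, is the non-integer threshold. When $qn\notin\mathbb{Z}$ one is forced to take $m=\lceil qn\rceil$, so $m/n>q$; since $K_p$ is increasing, $2^{-nK_p(m/n)}$ is strictly smaller than $2^{-nK_p(q)}$, and the naive single-term bound then yields only $\frac1{n+1}2^{-nK_p(\lceil qn\rceil/n)}$ — genuinely weaker than the stated inequality (and small examples indicate that the literal lower bound with $K_p(p+\varepsilon)$ need not hold unless $(p+\varepsilon)n$ is an integer). Consequently I would present the result for $(p+\varepsilon)n\in\mathbb{N}$ — the form in which it is invoked, since one always picks the threshold to be an attainable value of $S_n$ — and remark that for a general real threshold one should replace $p+\varepsilon$ by $\lceil(p+\varepsilon)n\rceil/n$; the upper bound, by contrast, holds verbatim for every $\varepsilon>0$.
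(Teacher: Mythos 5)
The paper does not prove this theorem; it is cited verbatim from an external reference ({[MoT, Lemma II.2]}) and invoked as ``well-known,'' so there is no internal proof to compare against. On its own merits, your argument is correct and complete: the exponential-moment calculation for the upper bound (with the minimizer $\lambda^\star=\ln\frac{q(1-p)}{p(1-q)}$ collapsing the exponent to $-K_p(q)$) is exact and lossless, and the lower bound via a single binomial term together with the type-counting inequality $\binom{n}{m}\ge\frac{1}{n+1}2^{nH(m/n)}$ is the standard method-of-types route --- which is in fact what the cited lemma encapsulates, so your derivation is a self-contained expansion of precisely the argument the paper delegates.

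Your observation about the non-integer threshold is a genuine and accurate catch, not merely a stylistic quibble. The lower bound as printed can fail when $(p+\varepsilon)n\notin\mathbb{Z}$: taking $n=1$, $p=0.1$, $\varepsilon=0.1$ gives $\Pr[S_1\ge 0.2]=0.1$ while $\tfrac{1}{2}2^{-K_{0.1}(0.2)}\approx 0.48$, so the inequality is violated. This reflects the fact that the method-of-types lower bound applies to a single type class and so requires $(p+\varepsilon)$ to be a type, i.e.\ $(p+\varepsilon)n\in\mathbb{Z}$; for a general threshold the correct exponent is $K_p\bigl(\lceil(p+\varepsilon)n\rceil/n\bigr)$. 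This is exactly the repair you propose. Since the paper only uses the lower bound inside \cref{anticodelem} for asymptotic estimates, where the $O(1/n)$ rounding gap washes out, the imprecision is harmless in context, but your version is the one that is literally true, and it is worth stating the hypothesis $(p+\varepsilon)n\in\mathbb{Z}$ explicitly as you do.
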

The upper bound in (\ref{chernoff}) is a variant of the Chernoff bound. The lower bound in (\ref{chernoff}) shows that this variant of the Chernoff bound is essentially tight. \cref{chernoffthm} is well-known and can, for example, be easily derived as a special case of Lemma II.2 in \cite{MoT}. 
\refstepcounter{subsection}
\subsection*{\thesubsection \quad The Azuma-Hoeffding inequality}
\label{app:sec:az}
The Azuma-Hoeffding inequality \cite{AzuOrg, HoeffOrg} gives a bound similar to (\ref{chernoff}) for bounded-difference martingales.

\begin{theorem}[\cite{dembo, superazu}]
Let $Z_0,Z_1,\ldots Z_n$ be a (real-valued) supermartingale. Assume that there exist constants $d,\sigma\geq 0$ such that, for $0\leq i<n$,
\begin{align*}
Z_{i+1}-Z_i&\leq d,\\
\E[(Z_{i+1}-Z_i)^2\vert Z_{1},\ldots , Z_{i}] &\leq \sigma^2.
\end{align*}
Then, for every $r\geq 0$,
\begin{equation}
\label{azumai}
\Pr[Z_n-Z_0\geq t n]\leq \exp_2\left(-K_{\mbox{\Large $\frac{\gamma}{1+\gamma}$}}\left(\frac{\alpha+\gamma}{1+\gamma}\right)n\right),
\end{equation}
where $\gamma=\dfrac{\sigma^2}{d^2}$ and $\alpha=\dfrac{t}{d}$ (and $\exp_2(x)=2^x$).
\label{azuma}
\end{theorem}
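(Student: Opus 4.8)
The plan is to run the Cram\'er--Chernoff exponential-moment method, adapted to supermartingales, and reduce the whole statement to a sharp bound on the conditional moment generating function of a single increment. Set $Y_i=Z_i-Z_{i-1}$ and let $\mathcal F_{i-1}$ be the $\sigma$-algebra generated by $Z_1,\dots,Z_{i-1}$, so that the hypotheses read $\E[Y_i\mid\mathcal F_{i-1}]\leq 0$, $Y_i\leq d$, and $\E[Y_i^2\mid\mathcal F_{i-1}]\leq\sigma^2$ almost surely. Fix $\lambda\geq 0$. By Markov's inequality applied to $e^{\lambda(Z_n-Z_0)}$ and then peeling the increments off one at a time,
$$\Pr[Z_n-Z_0\geq tn]\leq e^{-\lambda tn}\,\E\!\left[e^{\lambda(Z_n-Z_0)}\right]=e^{-\lambda tn}\,\E\!\left[e^{\lambda(Z_{n-1}-Z_0)}\,\E\!\left[e^{\lambda Y_n}\mid\mathcal F_{n-1}\right]\right],$$
so if I can show a deterministic bound $\E[e^{\lambda Y_i}\mid\mathcal F_{i-1}]\leq\varphi(\lambda)$ holding a.s.\ for every $i$, then induction on $i$ gives $\E[e^{\lambda(Z_n-Z_0)}]\leq\varphi(\lambda)^n$ and hence $\Pr[Z_n-Z_0\geq tn]\leq\bigl(e^{-\lambda t}\varphi(\lambda)\bigr)^n$.

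The crux is the choice $\varphi(\lambda)=\frac{1}{1+\gamma}\bigl(\gamma e^{\lambda d}+e^{-\lambda\gamma d}\bigr)$ with $\gamma=\sigma^2/d^2$, which is precisely the m.g.f.\ of the two-point random variable $W$ taking value $d$ with probability $p:=\gamma/(1+\gamma)$ and value $-\gamma d$ with probability $1-p$ --- the distribution chosen so that $\E[W]=0$ and $\E[W^2]=\sigma^2$. To establish $\E[e^{\lambda Y_i}\mid\mathcal F_{i-1}]\leq\varphi(\lambda)$ --- which, conditionally, is a statement about a single random variable $Y$ with $Y\leq d$, $\E[Y]\leq 0$, $\E[Y^2]\leq\sigma^2$ --- I would dominate the convex function $y\mapsto e^{\lambda y}$ on $(-\infty,d]$ by the quadratic $q(y)=A+By+Cy^2$ that matches $e^{\lambda\cdot}$ at the points $y=d$ and $y=-\gamma d$; one verifies $B\geq 0$ and $C\geq 0$, so that $\E[e^{\lambda Y}]\leq A+B\,\E[Y]+C\,\E[Y^2]\leq A+C\sigma^2$, and $A+C\sigma^2=\E[q(W)]=\E[e^{\lambda W}]=\varphi(\lambda)$ since $q$ and $e^{\lambda\cdot}$ agree on the support of $W$ and $\E[W]=0$. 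This is a form of Bennett's m.g.f.\ lemma, and carrying the quadratic-domination argument through all parameter regimes of $(\lambda,d,\sigma)$ --- with the right case distinction about where $q$ touches the exponential and with the correct tangency normalization fixing the third coefficient --- is the main technical obstacle; everything else is routine.

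It remains to optimize over $\lambda\geq 0$. Writing $\alpha=t/d$, $\mu=\lambda d$, and $W'=W/d$, we have $e^{-\lambda t}\varphi(\lambda)=\E[e^{\mu(W'-\alpha)}]$, and differentiating gives the minimizer $\mu^\star=\frac{1}{1+\gamma}\ln\frac{\gamma+\alpha}{\gamma(1-\alpha)}\geq 0$ in the relevant range $0\leq t\leq d$ (for $t>d$ the event $\{Z_n-Z_0\geq tn\}$ is empty since $Z_n-Z_0\leq nd$, so the bound is trivial). Substituting back, a direct computation shows $e^{-\mu^\star\alpha}\E[e^{\mu^\star W'}]=2^{-K_p(q)}$ with $p=\frac{\gamma}{1+\gamma}$ and $q=\frac{\alpha+\gamma}{1+\gamma}$: indeed $q/p=1+\alpha/\gamma$ and $(1-q)/(1-p)=1-\alpha$, whence $K_p(q)=q\log_2(1+\alpha/\gamma)+(1-q)\log_2(1-\alpha)$, which matches the evaluated exponent. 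Therefore $\Pr[Z_n-Z_0\geq tn]\leq\bigl(e^{-\lambda^\star t}\varphi(\lambda^\star)\bigr)^n=2^{-K_{\gamma/(1+\gamma)}(\frac{\alpha+\gamma}{1+\gamma})n}$, which is the claimed inequality. This endgame is structurally identical to the tight Chernoff bound of \cref{chernoffthm}, so one could alternatively phrase it as an appeal to that result applied to the two-point variable $W$.
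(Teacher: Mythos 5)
Your approach is genuinely different from the paper's because the paper does not prove this theorem: it simply cites Corollary 2.4.7 of \cite{dembo} (stated for martingales, with a remark that the same proof adapts to supermartingales) and Theorem 2.1 of \cite{superazu}, and then observes that the latter's bound $\bigl((\gamma/(\alpha+\gamma))^{n(\alpha+\gamma)}(1-\alpha)^{-n(1-\alpha)}\bigr)^{1/(1+\gamma)}$ agrees with the stated $\exp_2$ form by elementary algebra. You instead reconstruct the underlying Cram\'er--Chernoff argument from scratch. Your endgame is correct: stationarity does give $\mu^\star=\tfrac{1}{1+\gamma}\ln\tfrac{\alpha+\gamma}{\gamma(1-\alpha)}$, and substituting back yields $e^{-\mu^\star\alpha}\varphi(\mu^\star)=\tfrac{\gamma}{\alpha+\gamma}\bigl(\tfrac{\alpha+\gamma}{\gamma(1-\alpha)}\bigr)^{(1-\alpha)/(1+\gamma)}=\bigl((\gamma/(\alpha+\gamma))^{\alpha+\gamma}(1-\alpha)^{-(1-\alpha)}\bigr)^{1/(1+\gamma)}=2^{-K_{\gamma/(1+\gamma)}((\alpha+\gamma)/(1+\gamma))}$, exactly as required, and the restriction to $0\leq t\leq d$ is handled properly. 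The one place your argument is only a sketch is the conditional m.g.f.\ lemma $\E[e^{\lambda Y}\mid\mathcal F_{i-1}]\leq\tfrac{1}{1+\gamma}\bigl(\gamma e^{\lambda d}+e^{-\lambda\gamma d}\bigr)$: you correctly identify the extremal two-point law and the quadratic-domination strategy, but verifying that the dominating quadratic has nonnegative coefficients $B,C$ (so that the one-sided constraints $\E[Y]\leq 0$ and $\E[Y^2]\leq\sigma^2$ push in the right direction) and actually lies above $e^{\lambda y}$ on all of $(-\infty,d]$, uniformly over $\lambda,d,\sigma$, is precisely the technical content of the cited theorems --- a sharp form of Bennett's m.g.f.\ lemma --- and you rightly flag it as the main obstacle. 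The trade-off is clear: the paper outsources the hard step to the literature and gains brevity; your sketch is self-contained in spirit and makes transparent why a Kullback--Leibler exponent appears (it is the Chernoff exponent of the extremal two-point law), at the cost of one moment-generating-function lemma that would still need to be filled in or itself cited.
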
 
\cref{azuma} is almost identical to Corollary 2.4.7 in \cite{dembo}, except that the corollary in \cite{dembo} is only stated for martingales, not supermartingales. However, one may verify that the same proof applies to supermartingales. Alternatively, \cref{azuma} can be derived from Theorem 2.1 in \cite{superazu}, which is stated and proved for supermartingales. Indeed, under the assumptions of \cref{azuma}, it follows immediately from \cite{superazu} that
\begin{equation}
\Pr[Z_n-Z_0\geq tn]\leq\left(\left(\frac{\gamma}{\alpha+\gamma}\right)^{n(\alpha+\gamma)}\left(\frac{1}{1-\alpha}\right)^{n(1-\alpha)}\right)^{\frac{1}{1+\gamma}}.\label{superazui}
\end{equation}
where $\gamma=\sigma^2/d^2$ and $\alpha=t/d$.
It can be shown by a straightforward algebraic calculation that the right-hand side of (\ref{superazui}) is identical to the right-hand side of (\ref{azumai}).

\cref{azuma} provides a very good bound (for example, it can be used to obtain the upper bound of \cref{chernoffthm}) but it is sometimes a little cumbersome to work with. If one is willing to settle for a weaker bound, the following version of Azuma-Hoeffding is easier to use:
\begin{theorem}[\cite{AzuOrg}]
\label{stdazuma}
Let $Z_0,Z_1,\ldots Z_n$ be a (real-valued) supermartingale. Assume that there exists a constant $d$ such that for $0\leq i<n$, $\ab{Z_{i+1}-Z_i}\leq d$. Then
\begin{equation}
\Pr[Z_n-Z_0\geq tn]\leq \exp\left(-\frac{t^2}{2d^2}n\right).
\end{equation}
\end{theorem}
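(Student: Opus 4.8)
The final statement is the standard Azuma--Hoeffding inequality for supermartingales with bounded increments, so the plan is the classical exponential-moment (Chernoff-type) argument. First I would reduce to the case $Z_0 = 0$ by replacing each $Z_i$ with $Z_i - Z_0$, which preserves both the supermartingale property and the bound $|Z_{i+1} - Z_i| \le d$. Writing $D_i = Z_i - Z_{i-1}$ for the increments, we then have $|D_i| \le d$ deterministically and, by the defining property of a supermartingale, $\E[D_i \mid Z_1,\ldots,Z_{i-1}] \le 0$ almost surely.

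The crux is a conditional moment-generating-function estimate: for every fixed $\lambda > 0$ I claim that $\E\big[e^{\lambda D_i} \mid Z_1,\ldots,Z_{i-1}\big] \le e^{\lambda^2 d^2/2}$ almost surely. Indeed, let $\mu := \E[D_i \mid Z_1,\ldots,Z_{i-1}] \le 0$; since $\lambda > 0$ we have $e^{\lambda\mu} \le 1$, so $\E[e^{\lambda D_i} \mid Z_1,\ldots,Z_{i-1}] = e^{\lambda\mu}\,\E[e^{\lambda(D_i-\mu)} \mid Z_1,\ldots,Z_{i-1}] \le \E[e^{\lambda(D_i-\mu)} \mid Z_1,\ldots,Z_{i-1}]$, and the recentred increment $D_i - \mu$ has conditional mean $0$ and lies in an interval of length at most $2d$, so Hoeffding's lemma gives $\E[e^{\lambda(D_i-\mu)} \mid Z_1,\ldots,Z_{i-1}] \le e^{\lambda^2(2d)^2/8} = e^{\lambda^2 d^2/2}$. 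I expect proving Hoeffding's lemma itself --- that $\E[e^{\lambda X}] \le e^{\lambda^2(b-a)^2/8}$ whenever $\E[X]=0$ and $X \in [a,b]$ --- to be the only genuinely non-routine step: one dominates $e^{\lambda x}$ on $[a,b]$ by the secant line through its endpoints (convexity), takes expectations to obtain $\E[e^{\lambda X}] \le e^{\psi(\lambda)}$ for an explicit cumulant function $\psi$, and then verifies $\psi(0) = \psi'(0) = 0$ together with $\psi''(\lambda) \le (b-a)^2/4$, whence $\psi(\lambda) \le \lambda^2(b-a)^2/8$ by Taylor's theorem; this is textbook material, but the bound on $\psi''$ takes a short computation.

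Given the conditional estimate I would peel off the increments one at a time via the tower rule: $\E[e^{\lambda Z_n}] = \E\big[e^{\lambda Z_{n-1}}\,\E[e^{\lambda D_n} \mid Z_1,\ldots,Z_{n-1}]\big] \le e^{\lambda^2 d^2/2}\,\E[e^{\lambda Z_{n-1}}]$, hence by induction $\E[e^{\lambda Z_n}] \le e^{n\lambda^2 d^2/2}$ (using $Z_0 = 0$). Markov's inequality applied to the nonnegative variable $e^{\lambda Z_n}$ then gives, for every $\lambda > 0$,
\[
\Pr[Z_n \ge tn] \le e^{-\lambda t n}\,\E[e^{\lambda Z_n}] \le \exp\!\Big(n\big(\tfrac{1}{2}\lambda^2 d^2 - \lambda t\big)\Big),
\]
and choosing $\lambda = t/d^2$ (legitimate when $t > 0$; the bound is vacuous when $t \le 0$) makes the exponent equal to $-t^2 n/(2d^2)$, which yields $\Pr[Z_n - Z_0 \ge tn] \le \exp(-t^2 n/(2d^2))$ as claimed. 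As an alternative route bypassing Hoeffding's lemma, the same bound follows from Theorem~\ref{azuma} by taking the variance proxy $\sigma = d$ (so $\gamma = 1$) and checking the elementary inequality $\ln 2 \cdot K_{1/2}\!\big(\tfrac{1+t/d}{2}\big) \ge t^2/(2d^2)$, which holds because the difference of the two sides vanishes to second order at $t = 0$ and is convex in $t$ on $(-d,d)$.
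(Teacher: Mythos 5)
The paper does not prove this theorem; it simply cites \cite{AzuOrg} and notes that it can also be derived from \cref{azuma}. Your main argument --- recentring the increments, bounding the conditional moment generating function via Hoeffding's lemma, peeling off one increment at a time with the tower rule, and optimizing $\lambda = t/d^2$ in the Chernoff bound --- is the standard textbook proof and is correct; the recentring step is handled properly ($D_i - \mu$ still lies in an interval of length $2d$, and $e^{\lambda\mu}\le 1$ since $\mu\le 0$, $\lambda>0$). Your alternative derivation from \cref{azuma} with $\sigma=d$ (hence $\gamma=1$) is also sound: one has $\E[(Z_{i+1}-Z_i)^2\mid\cdot]\le d^2$ automatically from the two-sided bound, and the elementary inequality $\tfrac{1+u}{2}\ln(1+u)+\tfrac{1-u}{2}\ln(1-u)\ge u^2/2$ (which is $\ln 2\cdot K_{1/2}((1+u)/2)\ge u^2/2$) does hold by comparing second derivatives, so \cref{azuma} indeed implies \cref{stdazuma} --- a nice observation that explains why the paper presents the two theorems side by side.
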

\section{Models of online computation}
In this appendix, we will formally define what we mean by a randomized online algorithm with advice.

\refstepcounter{subsection}
\subsection*{\thesubsection \quad Randomized online algorithms}
\label{app:rando}

At STOC'90, Ben-David, Borodin, Karp, Tardos, and Wigderson presented a formal model of deterministic and randomized online algorithms, and proved several fundamental results regarding this model \cite{BBKTW}. We will refer to this model as the \emph{BBKTW-model}. The BBKTW-model became a popular model for formalizing randomized online computation. For example, the model is used in the textbook by Borodin and El-Yaniv \cite{BE98b}. In the BBKTW-model, a deterministic online algorithm is simply a set of mappings which defines a strategy for a Request-Answer game (see \cite{BBKTW} for the details). In particular, no computational restrictions are imposed. The model only considers what is mathematically possible to achieve in an online environment with incomplete information and does not take computational resources into account.
A randomized online algorithm is then defined as a probability distribution over deterministic online algorithms.

A probability distribution over deterministic algorithms corresponds to a \emph{mixed strategy}. Another very natural way to define a randomized algorithm is to allow the online algorithm to flip a number of coins before answering each request. Such a definition of a randomized online algorithms corresponds to a \emph{behavioral strategy}. Note that in an online problem, the algorithm has complete information of all previous requests and answers. It follows from Kuhn's theorem~\cite{Kuhn,KuhnInf} that because of this complete information, mixed and behavioral strategies are equivalent. Intuitively, one can think of the probability distribution of a mixed strategy as fixing the coin-flips of the algorithm. In the case of finite online problems, the textbook \cite{BE98b} by Borodin and El-Yaniv contains a full proof of this equivalence. However, online problems are usually not finite, since the length of the input is unbounded. For an accessible proof of the equivalence between mixed and behavioral strategies for infinite games, we refer to e.g.\ \cite{maschler2013game}. 

One possible alternative to our definition of a randomized algorithm with advice would be to define the advice complexity as the \emph{expected} number of advice bits read by the algorithm over inputs of length at most $n$. We suspect that most of our lower bounds would also hold for such slightly more powerful randomized algorithms with advice but leaves this as an open question.

\refstepcounter{subsection}
\subsection*{\thesubsection \quad The advice-at-beginning model}
\label{app:aab}
Recall that according to \cref{cadef}, the advice is provided to the online algorithm on an infinite tape in order to prevent the algorithm to learn anything from the length of the advice. We will now introduce another model of online algorithms with advice in which the advice is given as a finite bitstring. All of our lower bounds holds in this slightly more powerful model.

\begin{definition}[Advice-at-beginning]
An online algorithm with advice in the \emph{advice-at-beginning} model is defined as a triple $\ALG=\big(\{D_x\}_{x\in\{0,1\}^*},b,\varphi\big)$ where $b:\mathbb{N}\rightarrow \mathbb{N}\cup\{0\}$ is a non-decreasing function, $\varphi:I\rightarrow \{0,1\}^*$ is a mapping such that $\ab{\varphi(\sigma)}=b(n)$ where $n=\ab{\sigma}$ is the number of requests in $\sigma$, and, for each bit-string $x\in \{0,1\}^*$, $D_x$ is a deterministic online algorithm in the Request-Answer model. On input $\sigma$, the algorithm $\ALG$ produces the same output as the deterministic algorithm $D_{\varphi(\sigma)}$. In particular, $\score(\ALG(\sigma))=\score(D_{\varphi(\sigma)}(\sigma))$. The function $b(n)$ is called the \emph{advice complexity} of $\ALG$.
\end{definition}

Please note that we do not allow the algorithm to receive as advice a bitstring which is shorter than $b(n)$. This is in order to ensure that an algorithm with advice complexity $b(n)$ corresponds to at most $2^{b(n)}$ different deterministic algorithms. If the advice string was allowed to be smaller than $b(n)$, the algorithm would correspond to $\sum_{i=0}^{b(n)}2^i=2^{b(n)+1}-1$ different deterministic algorithms. This is not particularly important but slightly simplifies the model.
\paragraph{Relationship to the advice-on-tape model.}
An algorithm in the advice-on-tape model using $b$ bits of advice can be converted to an algorithm in the advice-at-beginning model using $b$ bits of advice.  This is why a lower bound in the advice-at-beginning model immediately translates into a lower bound in the advice-on-tape model.

An algorithm in the advice-at-beginning model using $b$ bits of advice can be converted to an algorithm in the advice-on-tape model as follows: On input $\sigma$, the oracle computes $b(n)$ where $n=\ab{\sigma}$ and it computes $\varphi(\sigma)$. It then writes the integer $b(n)$ onto the advice tape in a self-delimiting way followed by the string $\varphi(\sigma)$. This requires at most $b+2\lceil \log (b+1)\rceil+1=b+O(\log b)$ bits of advice. This shows that the two models are equivalent up to an additive logarithmic difference in the advice complexity.

\newpage

\section{Modeling online problems as task systems}
\label{app:modellazy}
In this appendix, we show that the following online problems can be modeled as a finite lazy task system (\cref{def:lts}):
\begin{enumerate}
\item The MTS problem on a finite metric space.
\item The $k$-server problem on a finite metric space.
\item The list update problem.
\item The dynamic binary search three problem on a set of $N$ nodes.
\end{enumerate}
We also show that reordering buffer management is $\Sigma$-repeatable.

\paragraph{MTS, $k$-server, and list update.}

If $d$ is a metric, then $d$ also satisfies the conditions imposed on the distance function in the definition of a lazy task system. This proves the following lemma.
\begin{lemma}
If $(\gs, \gt)$ is a MTS on a finite metric space, then $(\gs, \gt)$ is also a finite lazy task system.
\end{lemma}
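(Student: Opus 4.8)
The plan is to unwind the definitions and verify directly that the three structural requirements on the distance function in \cref{def:lts} are inherited from the metric axioms. Recall that a metrical task system is a generalized task system $(\gs, \gt, d)$ in which $d$ is a metric on the (finite) state space $\gs$; the task set $\gt$ in an MTS is typically the set of all tasks $t:\gs\to\posr\cup\{\infty\}$ that are finite on at least one state, but finiteness of $\gt$ holds automatically once $\gs$ is finite and we restrict to tasks taking values in a fixed finite set, which is the standard convention when one speaks of ``a finite MTS.'' So the only thing to check is that a metric $d$ satisfies conditions \cref{ltsreq1} and \cref{ltsreq2}.

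First I would observe that \cref{ltsreq2}, the inequality $d(s,s')\le d(s,s'')+d(s'',s')$, is literally the triangle inequality, which every metric satisfies. Second, \cref{ltsreq1} asks that $d(s,s')>0$ whenever $s\neq s'$; this is exactly the part of the metric axioms stating that $d(s,s')=0$ if and only if $s=s'$ (identity of indiscernibles), so distinct states have strictly positive distance. Hence both defining conditions of a lazy task system hold, and $(\gs,\gt,d)$ is a lazy task system. Since $\gs$ is finite by hypothesis and $\gt$ is finite (being the task set of a \emph{finite} MTS), $(\gs,\gt)$ is a \emph{finite} lazy task system, which is what we wanted.

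There is essentially no obstacle here: the lemma is a one-line unwinding of definitions, and the only care needed is to make the finiteness conventions explicit — namely, that ``finite MTS'' is taken to mean that both the state space and the task space are finite sets, matching the hypothesis of the LTS definitions used later (e.g.\ in \cref{LTZcompact} and \cref{finiteltslem}). I would state this convention once at the start of the appendix section and then the proof reduces to: ``a metric satisfies \cref{ltsreq1} by positivity and \cref{ltsreq2} by the triangle inequality.'' The real work of the paper lies not in this lemma but in \cref{LTZcompact} (finite LTS are compact) and \cref{GTSar} ($\Sigma$-repeatability), both of which are already proved earlier; this lemma is merely the bridge that lets those theorems apply to the MTS problem, and subsequently — via the separate modeling arguments for $k$-server, list update, and dynamic BSTs — to those problems as well.
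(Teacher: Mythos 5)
Your proposal is correct and takes essentially the same approach as the paper: the paper's proof is the one-line observation that a metric automatically satisfies the two conditions imposed on the distance function in \cref{def:lts}, which is exactly what you verify. Your added remarks on the finiteness convention are consistent with the paper's usage and don't change the argument.
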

It is easy to see that the $k$-server problem can be modeled as a MTS. It is shown in e.g.\ \cite{BE98b} how to model the list update problem as a MTS.
\begin{lemma}
The $k$-server problem on a finite metric space can be modeled as a finite lazy task system.
\end{lemma}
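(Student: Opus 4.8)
The plan is to exhibit an explicit finite lazy task system $(\gs,\gt,\rho)$ whose induced online problem is exactly the $k$-server problem on a given finite metric space $(M,d_M)$, and then to verify the two structural conditions from \cref{def:lts}. For the states, I would take $\gs$ to be the set of all $k$-element multisubsets of $M$ (the server configurations); since $M$ is finite, $\gs$ is finite. For the tasks, for each point $r\in M$ I would define the task $t_r\colon\gs\to\mathbb{R}_{\geq 0}\cup\{\infty\}$ by $t_r(C)=0$ if $r\in C$ and $t_r(C)=\infty$ otherwise; the task set $\gt=\{t_r : r\in M\}$ is finite, and each $t_r$ is a valid task because any configuration containing $r$ (of which there is at least one) is a haven for it. For the transition cost, I would let $\rho(C,C')$ be the minimum-weight perfect matching cost between $C$ and $C'$ under $d_M$, i.e. $\rho(C,C')=\min_{\pi}\sum_{i}d_M(a_i,b_{\pi(i)})$ where $\pi$ ranges over bijections between the (indexed) multisets $C=\{a_1,\dots,a_k\}$ and $C'=\{b_1,\dots,b_k\}$. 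Note $\rho(C,C)=0$.

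Next I would check that this GTS really is the $k$-server problem. When the current configuration is $C$ and request $r$ arrives, a feasible answer is any $C'$ with $t_r(C')\neq\infty$, i.e. any $C'\ni r$; the cost incurred is $\rho(C,C')+t_r(C')=\rho(C,C')$. A standard exchange argument shows that $\min_{C'\ni r}\rho(C,C')$ equals $\min_{a\in C}d_M(a,r)$, the cost of moving the server closest to $r$ onto $r$: any matching realizing $\rho(C,C')$ for an optimal $C'$ can be modified, without increasing cost, so that all but one server stays put (if two servers both move, the triangle inequality lets one of them stay, and iterating yields the claim). Hence the offline/online cost structure of this GTS coincides with that of $k$-server, so the problem is modeled by $(\gs,\gt,\rho)$.

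Finally I would verify the lazy task system requirements on $\rho$. Positivity on distinct states: if $C\neq C'$ then no bijection between them can have cost $0$, since a cost-$0$ matching would pair each $a_i$ with a point at $d_M$-distance $0$, hence (as $d_M$ is a metric) equal to it, forcing $C=C'$ as multisets; so $\rho(C,C')>0$ whenever $C\neq C'$. Triangle inequality: given bijections $\pi$ realizing $\rho(C,C'')$ and $\tau$ realizing $\rho(C'',C')$, the composition $\tau\circ\pi$ is a bijection from $C$ to $C'$ whose cost is, by the triangle inequality of $d_M$ applied termwise, at most $\rho(C,C'')+\rho(C'',C')$; minimality gives $\rho(C,C')\leq\rho(C,C'')+\rho(C'',C')$. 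This establishes both conditions, so $(\gs,\gt,\rho)$ is a finite lazy task system, which proves the lemma. (Alternatively, one can simply note that $\rho$ is a metric on $\gs$, so $(\gs,\gt,\rho)$ is a finite MTS, and then invoke the preceding lemma that every MTS on a finite metric space is a finite lazy task system.)

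I do not expect a genuine obstacle here; the only mildly technical points are the verification that the matching distance $\rho$ is a metric (positivity and triangle inequality, both handled above) and the exchange argument showing that an optimal transition moves a single server, and both are routine and standard (cf.\ the treatment of list update in \cite{BE98b}).
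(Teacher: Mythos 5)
Your proposal is correct and takes essentially the same approach the paper does: the paper simply remarks that the $k$-server problem is ``easy to see'' to be an MTS (configurations as states, matching distance as the metric, $\{0,\infty\}$-valued tasks) and then invokes the preceding lemma that any finite MTS is a finite LTS. You have spelled out that standard construction and verified the metric axioms explicitly — and you correctly note at the end that this reduces to the MTS-implies-LTS lemma, which is precisely the paper's route.
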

\begin{lemma}
The list update problem can be modeled as a finite lazy task system.
\end{lemma}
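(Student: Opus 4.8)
The plan is to exhibit an explicit finite lazy task system $(\gs,\gt,d)$ whose online behaviour coincides (up to the $O(1)$ additive constant that is irrelevant for repeatability and compactness) with the list update problem on a fixed list of $\ell$ items. The natural choice of state space is $\gs = S_\ell$, the set of all $\ell!$ permutations of the list; this is finite, which is what we ultimately need. For the distance function we take $d(\pi,\pi')$ to be the minimum number of transpositions (equivalently, the Kendall tau / bubble-sort distance, or the cost of a sequence of paid exchanges) needed to transform the arrangement $\pi$ into $\pi'$. This $d$ is a metric on $S_\ell$: it is symmetric, satisfies $d(\pi,\pi)=0$, is strictly positive for $\pi\neq\pi'$ (one needs at least one exchange), and satisfies the triangle inequality since a transformation $\pi\to\pi''$ followed by $\pi''\to\pi'$ is itself a transformation $\pi\to\pi'$. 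In particular conditions \cref{ltsreq1} and \cref{ltsreq2} of \cref{def:lts} hold, so once we produce the task set we will have a genuine LTS, and since $\gs=S_\ell$ is finite it is a \emph{finite} LTS.

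Next I would define the task set $\gt$. For each item $a$ in the list, there is a request ``access $a$''. I model this as the task $t_a$ with $t_a(\pi) = \mathrm{pos}_\pi(a)$, the position of $a$ in the arrangement $\pi$ (counting from $1$), and $t_a(\pi)\neq\infty$ for every $\pi$ (every state is feasible). There are $\ell$ such tasks, so $\gt$ is finite. In the standard list update cost model, serving an access to $a$ from arrangement $\pi$ costs $\mathrm{pos}_\pi(a)$ plus the cost of any paid exchanges the algorithm chooses to perform (free exchanges moving $a$ forward are a standard simplification, but even the full paid-exchange model is captured here up to a constant factor, which suffices). In the GTS/LTS framework, an algorithm currently in state $\pi_{i-1}$ that moves to state $\pi_i$ before serving $t_a$ pays $d(\pi_{i-1},\pi_i) + t_a(\pi_i) = d(\pi_{i-1},\pi_i) + \mathrm{pos}_{\pi_i}(a)$, which is exactly the paid-exchange cost to reach $\pi_i$ plus the access cost in $\pi_i$. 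Thus the total LTS cost of serving a request sequence equals the list update cost of the corresponding algorithm (modulo the free-exchange convention, which changes the cost by at most a constant factor per request and hence does not affect whether the problem is compact or $\Sigma$-repeatable). The max-cost of this LTS is finite — any transition costs at most $\binom{\ell}{2}$ and any access at most $\ell$ — consistent with \cref{finiteltslem}. I would spell this correspondence out as a short paragraph, noting that lazy LTS-algorithms correspond to list update algorithms that do not reorder on an access to the front item, which is without loss of generality.

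Finally, having established that list update is (essentially) a finite LTS, I would invoke the earlier results to harvest the consequences implicitly bundled into this lemma: by \cref{LTZcompact} the problem is compact, and by \cref{GTSar} (a finite LTS is in particular a GTS with finite max-cost) it is $\Sigma$-repeatable. The main obstacle I anticipate is purely a matter of bookkeeping rather than of mathematical difficulty: one must be careful about the free-vs-paid exchange conventions in the definition of list update and about the additive/multiplicative slack this introduces, and one must check that the feasibility condition ``there exists a state with $t(s)\neq\infty$'' of \cref{GTSdefinition} is met (it is, trivially, since every arrangement is feasible for every access). None of the constructions require clever ideas; the content of the lemma is the observation that the permutation metric together with position-cost tasks yields a finite metric task system, and that list update on a finite list therefore inherits compactness and $\Sigma$-repeatability. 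I would keep the write-up to one short paragraph per list item, since the $k$-server and MTS cases (already stated as separate lemmas just above) are the standard ones and list update is the only one needing this explicit construction.
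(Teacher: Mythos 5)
Your construction is exactly the standard way of modeling list update as a metrical task system — state space $S_\ell$, Kendall-tau (adjacent-transposition) distance, and tasks that charge the access position — and the paper takes the same route, except that it simply cites \cite{BE98b} for the MTS construction and then appeals to the preceding lemma that any finite MTS is a finite LTS; you have essentially unfolded that citation. One small slip: you open with ``minimum number of transpositions'' and then equate this to Kendall tau / bubble-sort distance, but these coincide only if you restrict to \emph{adjacent} transpositions (which is what you want, since list-update paid exchanges are adjacent swaps); either way both are metrics, so the LTS requirements \cref{ltsreq1} and \cref{ltsreq2} hold, and your acknowledgment that the free-exchange convention only perturbs costs by a bounded amount is the standard way this bookkeeping is dispatched.
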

\paragraph{Dynamic binary search trees.}

We will show how to model the dynamic binary search tree problem as a lazy task system (when defined as in \cref{app:dbs}).

\begin{lemma}
\label{app:BSTlazy}
The dynamic binary search three problem on a set of $N$ nodes can be modeled as a finite lazy task system.
\end{lemma}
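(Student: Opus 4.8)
The plan is to exhibit an explicit finite lazy task system whose associated online problem is the dynamic binary search tree problem of \cref{app:dbs}, and then verify the two axioms of \cref{def:lts} together with finiteness. For the state set $\mathcal{S}$ I would take all pairs $(T,v)$ where $T$ is a binary search tree over the fixed key set $[N]$ and $v$ is a node of $T$ (the position of the pointer); since $N$ is fixed there are only finitely many such trees (the $N$th Catalan number), so $\mathcal{S}$ is finite. For a key $x\in[N]$ write $v_x(T)$ for the node of $T$ holding $x$, and define the task $t_x$ by $t_x(T,v)=1$ if $v=v_x(T)$ and $t_x(T,v)=\infty$ otherwise; the constant $1$ bookkeeps the mandatory unit-cost pointer initialization that begins every search. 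Set $\mathcal{T}=\{t_x:x\in[N]\}$, which is finite, and each $t_x$ is finite on the states with the pointer on $x$, hence a legitimate task. Finally let $d(s,s')$ be the minimum number of unit-cost BST operations (move the pointer to a child or to the parent, or rotate the current node with its parent) needed to transform configuration $s$ into configuration $s'$. This is finite because the digraph on configurations induced by the unit operations is strongly connected: it is classical that any BST over $[N]$ can be turned into any other by finitely many rotations, and the pointer can always be walked to the node a given rotation requires.

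The LTS axioms are then immediate. If $s\neq s'$ then changing $s$ into $s'$ costs at least one unit operation, so $d(s,s')\ge 1>0$, giving \cref{ltsreq1}; concatenating a shortest $s\to s''$ sequence with a shortest $s''\to s'$ sequence yields an $s\to s'$ sequence, so $d(s,s')\le d(s,s'')+d(s'',s')$, giving \cref{ltsreq2}. Note that \cref{def:lts} demands no symmetry of $d$, which is convenient since ``rotate the current node with its parent'' is not undone by a single operation. Combined with the finiteness of $\mathcal{S}$ and $\mathcal{T}$, this shows $(\mathcal{S},\mathcal{T},d)$ is a finite lazy task system.

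It remains to argue that the online problem defined by this LTS is the dynamic BST problem. An LTS-input $(s_0,t_{x_1},\dots,t_{x_n})$, with $s_0$ the initial configuration, corresponds to the BST instance with that initial configuration and search sequence $x_1,\dots,x_n$: in the LTS, round $i$ is forced to end in a state $s_i=(T_i,v_{x_i}(T_i))$ and costs $d(s_{i-1},s_i)+t_{x_i}(s_i)=d(s_{i-1},s_i)+1$, which is exactly the cost of a search for $x_i$ that starts where search $i-1$ left the pointer, pays one unit to initialize, and then performs a shortest transforming sequence (which visits $v_{x_i}(T_i)$ because it ends there). Conversely, any deterministic BST algorithm can be converted, without increasing its cost on any input, into one that ends each search on the node it searched for: if it visits $v_{x_i}(T_i)$ and then performs $b$ further operations before ending search $i$, let it instead stop at $v_{x_i}(T_i)$ and carry out those $b$ operations at the start of search $i+1$; a short exchange argument (in which the trailing operations of the last search are simply dropped) shows the total cost does not increase, and $\OPT$ is unchanged. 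Hence the LTS algorithms are precisely a cost-preserving reformulation of the BST algorithms, $\OPT$ coincides, and the two problems have the same competitive ratios.

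The routine parts of all this are the two $d$-axioms, the strong-connectivity remark, and finiteness. I expect the main obstacle to be making the last paragraph fully rigorous: checking that forcing every search to end on the searched node is genuinely without loss of generality and, in particular, introduces no spurious additive term growing with $n$, and that the mandatory search initialization is charged exactly once per round by the constant in $t_x$. Once the modeling is in place, \cref{finiteltslem} gives finite max-cost, so \cref{GTSar} makes the problem $\Sigma$-repeatable and \cref{LTZcompact} makes it compact, which is the point of establishing the lemma.
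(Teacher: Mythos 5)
Your proposal is correct, and it takes a genuinely different route from the paper's proof. The paper models a state as a triple $(T, v_r, v_e)$ consisting of a tree, a ``request node'' $v_r$, and an ``end node'' $v_e$ (the pointer). The task $t_v$ has processing cost $0$ and forces the algorithm into a state with $v_r = v$; the distance $d(s,s')$ is then defined as the minimum number of unit-cost operations (including the mandatory initialization) that starts the pointer at $v_e$, transforms $T$ into $T'$, \emph{visits} $v_r'$ somewhere along the way, and ends at $v_e'$. The ``must visit'' constraint is thus baked into $d$ via the extra $v_r$-coordinate, so no WLOG about where a search ends is needed. Your version uses the leaner state $(T,v)$, charges the initialization as a processing cost $t_x(s)=1$ instead of inside $d$, and encodes the visit constraint by forcing each round to end with the pointer \emph{on} the searched node. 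This is tidier (fewer states, a plain graph-distance $d$) but it does put the weight of the argument on the cost-nonincreasing exchange showing that WLOG a BST algorithm ends each search on the node it looked for. You are right that this is the point to be careful about, and your sketch is sound: deferring the $b_i$ trailing operations of search $i$ to the start of search $i+1$ leaves each intermediate round's total cost unchanged (the one-unit initialization is still paid exactly once per round) while dropping the last round's $b_n$ trailing operations entirely, so the modified algorithm's cost is pointwise $\le$ the original's and remains online. One small imprecision: at the moment the original algorithm first visits the node holding key $x_i$, the tree need not yet be the round-$i$ terminal tree $T_i$, so ``visits $v_{x_i}(T_i)$'' should be ``visits the node holding $x_i$ in the current tree'' -- but this is notational, not a gap. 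Both constructions verify (\ref{ltsreq1}) and (\ref{ltsreq2}) the same way (positivity from the unit initialization / unit operation cost; triangle inequality by concatenating shortest operation sequences, with no symmetry needed), and finiteness of $\mathcal{S}$ and $\mathcal{T}$ is immediate in both.
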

\begin{proof}

Fix $N$ nodes. The set of states $\gs$ consists of all possible states $s$ of the form $s=(T, v_r,v_e)$ where $T$ is a binary search tree on $N$ nodes and $v_r, v_e$ are nodes in $T$. We say that $v_r$ is the \emph{request node} of $s$ and that $v_e$ is the \emph{end note} of $s$. We will now define the set of tasks $\gt$. For each node $v$, there is an associated task $t_{v}$ defined as follows: \begin{align*}t_{v}(s)=\begin{cases}0 & \text{if $v$ is the request node of $s$}\\ \infty & \text{if $v$ is not the request node of $s$} \end{cases}\end{align*}

The distance $d(s,s')$ between two states $s=(T, v_r, v_e),s'=(T', v_r', v_e')$ is defined as follows: $d(s,s')$ is the minimum number of unit-cost operations that must be performed such that the following conditions are satisfied:
\begin{enumerate}
\item The pointer is initialized at the node $v_e$ in the tree $T$ (recall that initialization is a unit-cost operation).
\item The rotations performed during the sequence of unit-cost operations transforms $T$ into $T'$.
\item The node $v_r'$ is visited during the sequence of unit-cost operations.
\item The pointer ends at the node $v_e'$ in the tree $T'$.
\end{enumerate}

We will argue that the LTS defined above models the binary search tree problem. Consider an input sequence $\sigma=(x_1,\ldots , x_n)$ and an initial state $T_0$ for the binary search tree problem. The corresponding request sequence in $(\gs, \gt)$ will be $(t_{x_1}, \ldots , t_{x_n})$ and the initial state will be any one of the states for which $T_0$ is the associated tree. Suppose that a $(\gs, \gt)$-algorithm is in state $s=(T, v_r, v_e)$ just before $t_{x_i}$ arrives. Then the algorithm will be forced into a state $s'=(T',v_r', v_e')$ for which $v_r'=x_i$ is the request node. The algorithm incurs a cost of $d(s,s')$ for serving $t_{x_i}$. By definition, this cost is equal to the lowest possible cost for transforming $T$ into $T'$ when the pointer must begin in $v_e$ and must end in $v_e'$, and when the node containing the key $x_i$ must be visited along the way. This shows that $(\gs, \gt)$ models the dynamic binary search tree problem on the fixed set of $N$ nodes.

We still need to prove that $(\gs, \gt)$ is a LTS. Since $d(s,s')\geq 1$ for all $s,s'$ (due to the initialization cost), we only need to show that $d$ satisfies the triangle inequality. To this end, fix three states $s=(T, v_r, v_e),s'=(T', v_r', v_e')$, and $s''=(T'', v_r'', v_e'')$. Assume that $d(s,s')=t_1$ and $d(s',s'')=t_2$. This means that with a pointer starting at $v_e$, we can transform $T$ into $T'$, find $v_r'$ and end at $v_e'$ by performing $t_1$ unit-cost operations. Similarly, with the pointer starting at $v_e'$, we can transform $T'$ into $T''$, find $v_r''$ and end at $v_e''$ with $t_2$ unit-cost operations. We will prove that $d(s,s'')\leq t_1+t_2$. First, we perform the minimum number of unit-cost operations needed to transform $T$ into $T'$ and end in $v_e'$. This requires at most $t_1$ such operations (maybe fewer, since we do not need to find $v_r'$). After doing so, we perform the minimum number of unit-cost operations needed to transform $T'$ into $T''$ while also finding $v_r''$ along the way and ending at $v_e''$. This requires at most $t_2$ unit-cost operations (in fact, it requires at most $t_2-1$ operations since we do not need to perform the initialization operation which is included in $t_2=d(s',s'')$). Thus, we have transformed $T$ into $T''$, found $v_r''$ and ended up at $v_e''$ with at most $t_1+t_2$ unit-cost operations. Hence, $d(s,s'')\leq t_1+t_2 = d(s,s') + d(s',s'')$ as required.
 
\end{proof}

\paragraph{Reordering buffer management.} See \cite{DBLP:conf/esa/RackeSW02,STOCrbm} for a definition of the problem. Note that the ending of a RBM-input is special, since the algorithm has to empty its buffer at the end (even though no new item arrives). Since we need to discuss the repeated version of the problem, we need to specify exactly how this should be done and how to encode into the request sequence that the input ends. The precise way in which this is done is not important, but we need to settle on some definition. We will assume that there is a special item, $x_{\text{stop}}$, which will always be the last item of the input sequence. When a RBM-algorithm tries to load a new item into the buffer and discovers that the only item left is $x_{\text{stop}}$, it will not load this item into the buffer. Instead, the algorithm now knows that no further items will enter the buffer, and so all there is left for the algorithm is to empty its current buffer. Thus, when the item $x_{\text{stop}}$ is revealed, an algorithm incurs a cost equal to the minimum number of color switches needed to empty its buffer.
\begin{lemma}
\label{lem:rbmrep}
The reordering buffer management problem with a buffer of capacity $k$ is $\Sigma$-repeatable.
\end{lemma}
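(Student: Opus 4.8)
The plan is to construct the reduction $g$ witnessing $\Sigma$-repeatability directly, using a fresh set of colors for each round so that the rounds cannot interact. Given a $\Prs$-input $\sigma^* = (\sigma_1; \ldots; \sigma_r)$, where each $\sigma_i$ is a RBM-input on a buffer of capacity $k$, I would first rename the colors used in $\sigma_i$ so that the color palettes of distinct rounds are pairwise disjoint. I would then define $g(\sigma^*)$ to be the concatenation of the (recolored) item streams of $\sigma_1, \ldots, \sigma_r$, with the intermediate $x_{\text{stop}}$ markers removed and a single $x_{\text{stop}}$ appended at the very end. The key observation is that, because each complete round $\sigma_i$ already contains at least $k$ items before its $x_{\text{stop}}$ (or can be padded to do so — this is a minor technicality I would handle), by the time the algorithm begins seeing items of round $i+1$ its buffer can contain only items of round $i$; and since the color sets are disjoint, color switches during round $i+1$ never ``reuse'' the output slot established by round $i$, so the cost decomposes additively across rounds up to a small boundary correction.

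Concretely, I would verify the three conditions of \cref{re}. Condition $\Sigma 1$ (length) holds with $k_1 = 0$ (or a small constant accounting for the removed/added stop markers), since $g$ only relabels colors and deletes $r-1$ stop items. For condition $\Sigma 2$: given a $\P$-algorithm $\ALG$ for RBM, define $\ALG^*$ to simulate $\ALG$ on $g(\sigma^*)$, producing in each round the same output color sequence that $\ALG$ produces on the corresponding segment. When a reset occurs in $\Prs$, the buffer of $\ALG^*$ is forcibly emptied; this costs at most the number of distinct colors currently in $\ALG$'s buffer, which is at most $k$ (the buffer capacity). Hence $\ALG^*(\sigma^*) \le \ALG(g(\sigma^*)) + k r$, giving $k_2 = k$. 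For condition $\Sigma 3$: $\OPT^*_\Sigma$ serves each round optimally and independently, and $\OPT$ on $g(\sigma^*)$ can mimic this strategy round by round; the only discrepancy is that $\OPT$ does not get the free buffer-flush at round boundaries, so $\OPT^*_\Sigma(\sigma^*) \ge \OPT(g(\sigma^*)) - kr$, giving $k_3 = k$. (In fact, since the color sets are disjoint, an optimal schedule for $g(\sigma^*)$ essentially must serve the rounds in blocks, so one could likely push $k_2, k_3$ down, but any constant bound suffices for $\Sigma$-repeatability.)

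The main obstacle is the careful handling of the $x_{\text{stop}}$ semantics and the boundary effects at round transitions. Because the ``emptying'' cost at the end of a RBM-input is part of the cost but is triggered by a special non-item marker, I must make sure that in $g(\sigma^*)$ the algorithm is not prematurely forced to flush the buffer between rounds, while in $\sigma^*$ it is — and that the difference is uniformly bounded by a constant per round. The disjointness of color palettes is what makes this clean: no item of round $i+1$ can ever be ``merged'' (in the output color sequence) with an item of round $i$, so the only interaction is through the physical contents of the buffer at the transition, and that is bounded by the buffer capacity $k$. Once this bookkeeping is pinned down, the verification of $\Sigma 1$–$\Sigma 3$ is routine, and I would state the conclusion as: reordering buffer management with buffer capacity $k$ is $\Sigma$-repeatable with parameters $(O(1), k, k)$.
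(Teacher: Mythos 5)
Your proposal is correct in substance and reaches the same conclusion by largely the same route as the paper: define $g$ by deleting the intermediate $x_{\text{stop}}$ markers and concatenating the rounds, have $\ALG^*$ simulate $\ALG$ on $g(\sigma^*)$ while paying at most $k$ to flush at each reset (so $k_2=k$), and observe that the forced per-round flushes only hurt $\OPT^*_\Sigma$. The main difference is that the paper does \emph{not} recolor: it keeps the original colors and argues $\Sigma 3$ globally --- the very output sequence $\OPT^*_\Sigma$ emits for $\sigma^*$ is already a valid output for $g(\sigma^*)$ at the same item order, so $\OPT^*_\Sigma(\sigma^*)\geq\OPT(g(\sigma^*))$ directly, giving $k_3=0$ rather than your $k_3=k$. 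Both bounds are fine for $\Sigma$-repeatability, so this is a presentational difference; your disjoint-palette device does make it visually obvious that rounds cannot interact, at the price of the weaker $\Sigma 3$ constant and some extra bookkeeping (translating colors back and forth through the bijection in the definition of $\ALG^*$).

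One thing to flag: the ``each round contains at least $k$ items, or can be padded'' observation is both unnecessary and not quite right. It is false in general that the buffer contains only round-$i$ items when round $i+1$ begins --- a lazy algorithm could keep items from rounds $1,\dots,i$ around indefinitely --- and in any case nothing in the verification of $\Sigma 1$--$\Sigma 3$ needs this. Worse, actually padding $\sigma_i$ would change $g(\sigma^*)$'s item set, so you would then have to account for the extra items' contribution to both $\ALG(g(\sigma^*))$ and $\OPT(g(\sigma^*))$, which reintroduces exactly the kind of bookkeeping you are trying to avoid. Drop that step entirely: the disjoint colors (or, as in the paper, the subsequence argument together with the $\leq k$ flush cost) already give you everything you need, and the only cross-round interaction that survives is the physical buffer contents at a reset, which is bounded by $k$ distinct colors.
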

\begin{proof}
We will show that the reordering buffer management problem is $\Sigma$-repeatable with parameters $(0,k,0)$.

Let $\sigma^*=(\sigma_1;\ldots;\sigma_r)\in I^*$ be an instance of the repeated buffer management problem, RBM$^*$. For $1\leq i\leq r$, the $i$th-round input sequence has the form $\sigma_i=(x_1,\ldots , x_{n-1}, x_{\text{stop}})$ where $x_{\text{stop}}$ is a special request marking the end of the input stream (note that after $x_{\text{stop}}$ arrives, a RBM-algorithm still has to pay some cost in order to empty the buffer). For $i<r$, let $\sigma_i'=\sigma_i-\{x_{\text{stop}}\}=(x_1,\ldots , x_{n-1})$, and let $\sigma_r'=\sigma_r$. The mapping $g:I^*\rightarrow I$ is defined by $g(\sigma^*)=\sigma_1'\sigma_2'\ldots\sigma_r'$. It is clear that $g$ satisfies (\ref{re1}) with $k_1=0$ since $\ab{g(\sigma^*)}=\ab{\sigma^*}-(r-1)\leq \ab{\sigma^*}$. 

In order to prove (\ref{re2}), fix a deterministic RBM-algorithm $\ALG$ (without advice). Let $\sigma^*\in I^*$. We will now define a RBM$^*$-algorithm $\ALG^*$: The algorithm $\ALG^*$ simulates $\ALG$ on $g(\sigma^*)$. In each round, the algorithm $\ALG^*$ makes the exact same color switches (at the same time) as $\ALG$ does. When an $x_{\text{stop}}$ item is revealed, $\ALG^*$ pauses its simulation of $\ALG$ and empties its current buffer using the minimum number of color switches possible. Recall that $x_{\text{stop}}$ is always the last request of a round. When the next round begins, $\ALG^*$ resumes its simulation of $\ALG$. Note that the set of items in the buffer of $\ALG^*$ is always a subset of the set of items in the buffer of $\ALG$. Also, $\ALG^*$ pays a cost of at most $k$ for emptying its buffer at the end of each round. Thus, $\ALG^*(\sigma^*)\leq \ALG(g(\sigma^*))+kr$.

We will show that (\ref{re3}) holds. Let $\sigma^*\in I^*$. Consider an optimal sequence of color switches made by $\OPT^*_{\Sigma}$ for serving $\sigma^*$. The exact same color switches will also serve all items in $g(\sigma^*)$, since $g(\sigma^*)$ contains the same items in the same order as $\sigma^*$. The fact that $\sigma^*$ contains some additional $x_{\text{stop}}$ items which forces $\OPT^*_{\Sigma}$ to empty its buffer at the end of each round is only a disadvantage for $\OPT^*_{\Sigma}$. Thus, $\OPT^*_{\Sigma}(\sigma^*)\geq \OPT(g(\sigma^*))$.

\end{proof}

\section{Omitted proofs}
\label{sec:omitted}
\subsection{Assuming same initial state}
We prove \cref{lem:wlogsames} which justifies an assumption regarding initial states made in e.g.\ \cref{maint1}.
Define $\text{init}(\sigma)$ to be the initial state of the input $\sigma$.  Let $\P$ be an online problem and supose that we are given a probability distribution $p$ such that $\E_{p}[\DET(\sigma)]\geq c\E_p[\OPT(\sigma)]+\alpha$ for every deterministic algorithm $\DET$ and such that $\supp(p)$ is finite. Let $S_p$ be the set of all initial states appearing in at least one input in $\supp(p)$. For $s\in S_p$, let $p_s$ be the probability distribution obtained from $p$ by conditioning on the event that the initial state of the input is $s$, that is:
\begin{equation*}
p_s(\sigma)=\begin{cases}0 & \text{if $\text{init}(\sigma)\neq s$}\\
\frac{p(\sigma)}{\Pr_{\sigma\sim p}[\text{init}(\sigma)=s]} & \text{else\footnotemark} \end{cases}
\end{equation*}
\footnotetext{Note that since $\supp(p)$ is finite, so is $S_p$ and hence $\Pr_{\sigma\sim p}[\text{init}(\sigma)=s]>0$ for all $s\in S_p$.}
\begin{lemma}
Let $p, p_s$, and $S_p$ be as defined above. Then there must exist some $s\in S_p$ such that $\E_{p_s}[\DET(\sigma)]\geq c\E_{p_s}[\OPT(\sigma)]+\alpha$ for every deterministic $\P$-algorithm $\DET$.
\label{lem:wlogsames}
\end{lemma}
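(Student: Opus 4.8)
The statement is a standard averaging argument: if the weighted average (over the choices of initial state) of some quantity is at least a bound, then at least one term in the average must meet the bound. First I would introduce the shorthand $q_s = \Pr_{\sigma\sim p}[\text{init}(\sigma)=s]$ for each $s\in S_p$, noting that these are strictly positive (since $\supp(p)$ is finite) and sum to $1$. The key observation is that $p$ decomposes as a convex combination of the conditional distributions: $p(\sigma) = \sum_{s\in S_p} q_s\, p_s(\sigma)$ for every $\sigma$, and hence for any function $F$ on inputs, $\E_p[F(\sigma)] = \sum_{s\in S_p} q_s\, \E_{p_s}[F(\sigma)]$. I would record this identity for both $F(\sigma) = \DET(\sigma)$ (for a fixed deterministic algorithm $\DET$) and $F(\sigma) = \OPT(\sigma)$.

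The heart of the argument is to show that the family of ``good'' initial states cannot be empty. Suppose for contradiction that for every $s\in S_p$ there exists a deterministic algorithm $\DET_s$ (depending on $s$) such that $\E_{p_s}[\DET_s(\sigma)] < c\,\E_{p_s}[\OPT(\sigma)] + \alpha$. I would then assemble a single deterministic algorithm $\DET^\star$ that, after seeing the initial state $s = \text{init}(\sigma)$ of its input, simulates $\DET_s$ from that point onward. This is well-defined: the initial state is revealed to the algorithm before the first request (per \cref{cadef}), so $\DET^\star$ can read it and switch to the appropriate sub-algorithm, and on inputs with initial state $s$ it incurs exactly the cost $\DET_s$ would. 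Consequently $\E_{p_s}[\DET^\star(\sigma)] = \E_{p_s}[\DET_s(\sigma)]$ for each $s$, and therefore
\begin{align*}
\E_p[\DET^\star(\sigma)] &= \sum_{s\in S_p} q_s\, \E_{p_s}[\DET^\star(\sigma)] = \sum_{s\in S_p} q_s\, \E_{p_s}[\DET_s(\sigma)]\\
&< \sum_{s\in S_p} q_s\big(c\,\E_{p_s}[\OPT(\sigma)] + \alpha\big) = c\,\E_p[\OPT(\sigma)] + \alpha,
\end{align*}
contradicting the hypothesis that $\E_p[\DET(\sigma)] \geq c\,\E_p[\OPT(\sigma)] + \alpha$ for \emph{every} deterministic $\DET$. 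Hence there must exist some $s\in S_p$ for which no such $\DET_s$ exists, i.e.\ $\E_{p_s}[\DET(\sigma)] \geq c\,\E_{p_s}[\OPT(\sigma)] + \alpha$ for all deterministic $\DET$, which is what we wanted.

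The only mild subtlety — and the step I would be most careful about — is checking that $\DET^\star$ is a legitimate deterministic online algorithm, namely that branching on the initial state does not violate the online constraint. This is fine precisely because in the model of \cref{cadef} the initial state $s$ arrives before request $x_1$ and each answer $y_i$ may depend on $s, x_1,\dots,x_i$; so ``first read $s$, then run $\DET_s$'' is a valid strategy. (If one instead wanted to avoid constructing a combined algorithm, one could argue directly: the negation of the conclusion says that for each $s$ the infimum over deterministic algorithms of $\E_{p_s}[\DET(\sigma)] - c\,\E_{p_s}[\OPT(\sigma)]$ is strictly below $\alpha$; taking a near-optimal $\DET_s$ for each $s$ and averaging gives the same contradiction up to an arbitrarily small slack, but the clean version above avoids even that.) Everything else is the elementary convex-combination identity, so no further calculation is needed.
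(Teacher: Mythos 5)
Your proof is correct and follows essentially the same route as the paper: assume for contradiction that every $s\in S_p$ admits a $\DET_s$ beating the bound on $p_s$, assemble a single deterministic algorithm that branches on the revealed initial state, and apply the law of total expectation (your explicit convex-combination identity) to contradict the hypothesis on $p$. The only cosmetic difference is that you spell out the decomposition $p=\sum_s q_s p_s$ explicitly rather than invoking the law of total expectation directly; the argument is otherwise identical.
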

\begin{proof}
 Assume by way of contradiction that this is not the case. Then, for every $s\in S_p$, there exists a deterministic algorithm $\DET_s$ such that $\E_{p_s}[\DET_s(\sigma)] < c\E_{p_s}[\OPT(\sigma)]+\alpha$. We now define a $\P$-algorithm $\DET'$ as follows: Let $s$ be the initial state of the input (which is known to the algorithm before the first request arrives). The algorithm $\DET'$ uses the algorithm $\DET_s$ to serve the input. Using the law of total expectation, we get
\begin{align*}
\E_{\sigma\sim p}[\DET'(\sigma)-c\OPT(\sigma)]&=\E_{s\in S_p}\left[\E_{\sigma\sim p_s}[\DET'(\sigma)-c\OPT(\sigma)] \right]\\
&=\E_{s\in S_p}\left[\E_{\sigma\sim p_s}[\DET_s(\sigma)-c\OPT(\sigma)] \right]\\
&<\alpha,
\end{align*}
which contradicts the assumption that $\E_{\sigma\sim p}[\DET(\sigma)-c\OPT(\sigma)]\geq \alpha$ for every deterministic algorithm $\DET$.
\end{proof}

\subsection{Proof of \texorpdfstring{\cref{thm:upperanti}}{Theorem 11}}
\label{app:antilowerup}
Before giving the proof of \cref{thm:upperanti}, we introduce some terminology and prove a lemma. For $x,y\in[q]^n$, we denote the Hamming distance between $x$ and $y$ by $d(x,y)$.
\begin{definition}
Let $n,q,r\in\mathbb{N}$ and assume that $q\geq 2$. Let $M$ be a set consisting of strings (codewords) from $[q]^n$. We say that $M$ is a \emph{$q$-ary anti-covering of radius $r$ and length $n$} if for every $x\in[q]^n$, there exists a codeword $y\in M$ such that $d(x,y)\geq r$.
\end{definition}

\begin{lemma}
\label{anticodelem}
Let $n,q\in\mathbb{N}$ and assume that $q\geq 2$. Furthermore, let $0< \alpha <1/q$. There exists a $q$-ary anti-covering code of radius $(1-\alpha)n$ and length $n$ of size at most $$2^{K_{1/q}(\alpha)n}(n+1)\ln(q)n+1.$$
\end{lemma}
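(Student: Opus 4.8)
\textbf{Proof plan for \cref{anticodelem}.}
The plan is to use the probabilistic method: draw a random set $M$ of codewords from $[q]^n$ uniformly and independently, and show that with positive probability it forms a $q$-ary anti-covering of radius $r=(1-\alpha)n$. The key quantity is the probability that a fixed string $x\in[q]^n$ is \emph{not} anti-covered by a single random codeword $y$, i.e.\ the probability that $d(x,y)<(1-\alpha)n$. Writing $d(x,y)$ as a sum of $n$ i.i.d.\ indicator variables, each equal to $1$ with probability $1-1/q$ (the probability that a uniformly random coordinate of $y$ differs from the corresponding coordinate of $x$), the event $d(x,y)<(1-\alpha)n$ is exactly the event that this sum is at most $(1-\alpha)n$. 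Since $\alpha<1/q$ we have $1-\alpha>1-1/q$, so this is a lower-tail-relative-to-the-mean... wait, no: $1-\alpha > 1-1/q$ means $(1-\alpha)n$ is \emph{above} the mean $(1-1/q)n$, so we actually want the probability that the sum is \emph{small}, which is a lower-tail bound. Let me restate: we want $\Pr[\sum_i Z_i \le (1-\alpha)n]$ where $\E Z_i = 1-1/q < 1-\alpha$; since the threshold is above the mean, this is a probability close to $1$, which is useless. So the random codeword should instead be analyzed via the \emph{complementary} count: let $W_i$ indicate that coordinate $i$ of $y$ equals coordinate $i$ of $x$, so $\E W_i = 1/q$ and $d(x,y)\ge (1-\alpha)n \iff \sum_i W_i \le \alpha n$. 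Since $\alpha < 1/q = \E W_i$, the event $\sum W_i \le \alpha n$ is a genuine lower-tail event, and by the Chernoff bound of \cref{chernoffthm} (applied with $p=1/q$ and the appropriate deviation, noting the symmetry of $K_p$-type bounds, or directly via the lower-tail version) its probability is at least something like $\frac{2^{-K_{1/q}(\alpha)n}}{(n+1)\ln q \cdot n}$ — here I would invoke the lower bound in \cref{chernoff} after converting it to the lower-tail form. Hence the probability that a fixed $x$ is anti-covered by $y$ is at least $\rho := \frac{2^{-K_{1/q}(\alpha)n}}{(n+1)\ln(q)\,n}$.

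Now the standard covering-code argument: if $M$ consists of $N$ independent uniform codewords, then for a fixed $x$ the probability that \emph{no} codeword in $M$ anti-covers $x$ is at most $(1-\rho)^N \le e^{-\rho N}$. Taking a union bound over all $q^n \le 2^{n\log q}$... actually more carefully over all $q^n$ strings $x\in[q]^n$, the probability that $M$ fails to be an anti-covering is at most $q^n e^{-\rho N}$. Choosing $N$ so that $q^n e^{-\rho N} < 1$, i.e.\ $N > \rho^{-1} n\ln q$, guarantees a valid anti-covering exists. Plugging in $\rho^{-1} = 2^{K_{1/q}(\alpha)n}(n+1)\ln(q)\,n$ gives $N > 2^{K_{1/q}(\alpha)n}(n+1)(\ln q)^2\,n^2$, so one can take $N = 2^{K_{1/q}(\alpha)n}(n+1)\ln(q)\,n + 1$ — wait, this does not quite match; I would need to check that the exponent in $e^{-\rho N}$ beats $q^n = e^{n\ln q}$ with the stated $N$. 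The claimed bound in the lemma is $2^{K_{1/q}(\alpha)n}(n+1)\ln(q)\,n + 1$, so evidently the intended estimate for $\rho$ is slightly different (perhaps $\rho \ge 2^{-K_{1/q}(\alpha)n}/((n+1)\ln q)$, absorbing one factor of $n$ differently, or using $q^n \le 2^{n}\cdot$something). I would reconcile this by being careful with the exact Chernoff lower-bound constant from \cref{chernoffthm} and with whether the union bound is over $q^n$ or over a coarser count; the arithmetic is routine once the shape is fixed.

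The main obstacle I anticipate is matching the Chernoff lower bound to the precise polynomial prefactor in the statement. \cref{chernoffthm} as quoted gives $\Pr[\sum X_i \ge (p+\varepsilon)n] \ge \frac{2^{-K_p(p+\varepsilon)n}}{n+1}$ for the \emph{upper} tail; I need the analogous \emph{lower}-tail statement $\Pr[\sum W_i \le \alpha n] \ge \frac{2^{-K_{1/q}(\alpha)n}}{n+1}$, which follows by applying the quoted inequality to the complementary variables $1-W_i$ (which are Bernoulli with parameter $1-1/q$) and using that $K_{1-p}(1-x) = K_p(x)$, a one-line identity from \cref{def:k}. Once that is in hand, the rest is: (i) $\rho \ge 2^{-K_{1/q}(\alpha)n}/(n+1)$ exactly, (ii) $(1-\rho)^N \le e^{-\rho N}$, (iii) union bound over the $q^n$ strings, (iv) solve $q^n e^{-\rho N} < 1$ for $N$ and round up. The factors of $\ln q$ and $n$ in the final bound come from $n\ln q$ in step (iv) combined with the $1/(n+1)$ in $\rho$; I expect the stated bound to come out (possibly after loosening constants slightly), and I would not grind through the exact inequalities here — the proof structure is what matters, with the $K_{1-p}(1-x)=K_p(x)$ identity and the Chernoff lower bound being the two ingredients that make the $K_{1/q}$ exponent appear.
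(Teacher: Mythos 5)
Your proposal follows the same route as the paper: draw random codewords, lower-bound the per-codeword anti-covering probability via the inverse Chernoff bound of \cref{chernoffthm} (using the symmetry $K_{1-p}(1-x)=K_p(x)$ to land on $K_{1/q}(\alpha)$), apply $(1-\rho)^N\le e^{-\rho N}$ and a union bound over all $q^n$ strings, and solve $q^n e^{-\rho N}<1$ for $N$. Your initial definition of $\rho$ with the spurious $\ln(q)\,n$ factor in the denominator is wrong, but you correct it in step (i) to $\rho\ge 2^{-K_{1/q}(\alpha)n}/(n+1)$, after which the arithmetic $N = \rho^{-1}n\ln q + 1 \le 2^{K_{1/q}(\alpha)n}(n+1)\ln(q)\,n + 1$ gives exactly the stated bound, matching the paper's proof.
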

\begin{proof}
We apply the probabilistic method. Fix $n,q\in\mathbb{N}$ and $0<\alpha<1/q$. Let $M$ be a set of $m$ strings chosen uniformly at random and independently from $[q]^n$. Our goal is to show that if $m$ is sufficiently large, then with some non-zero probability the set $M$ is an anti-cover of radius $(1-\alpha)n$.

Fix $x\in [q]^n$. For any $y\in M$, the probability that $d(x,y)\geq (1-\alpha)n$ can be bounded from below using an inverse Chernoff bound (\cref{chernoffthm}):
\begin{align*}
\Pr[d(x,y)\geq (1-\alpha)n]\geq \frac{2^{-K_{1/q}(\alpha)n}}{n+1}:=s.
\end{align*}
Thus, the probability that there does not exists some $y\in M$ such that $d(x,y)\geq (1-\alpha)n$ is at most
\begin{align*}
\Pr[\forall y\in M : d(x,y)<(1-\alpha)n]\leq (1-s)^m.
\end{align*}
Applying the union bound over all strings in $[q]^n$ shows that the probability that $M$ is not an anti-covering of radius $(1-\alpha)n$ is at most
\begin{align*}
\Pr[M\text{ is not an anti-covering}]&\leq q^n(1-s)^m\\
&\leq q^ne^{-ms}=e^{\ln (q)n-ms}.
\end{align*}
Here, we used that $(1-a)^b\leq e^{-ba}$ for $b>1, a\leq 1$. Set $m=\lfloor s^{-1}\ln (q)n+1\rfloor$. Then $m$ is an integer strictly larger than $s^{-1}\ln (q)n$ and hence the probability that $M$ is not an anti-covering of radius $r$ is strictly smaller than $1$. Thus, the probability that $M$ is in fact such an anti-covering is strictly positive. This finishes the proof since
\begin{align*}
m&=\lfloor s^{-1}\ln(q)n+1\rfloor \\
&\leq 2^{K_{1/q}(\alpha)n}(n+1)\ln(q)n+1.
\end{align*} 
\end{proof}

\begin{proof}[Proof of \cref{thm:upperanti}]
Let $x=x_1\ldots x_n$ be the input string. The algorithm computes a $q$-ary anti-covering code, $M^*$, of radius $(1-\alpha)n$ and length $n$ with as few codewords as possible (recall that $n$ is revealed to the algorithm before it has to output its first answer). This is simply done by an exhaustive search. The oracle computes the same anti-covering code $M^*$ as the algorithm. Then, the oracle selects a codeword $y=y_1,\ldots , y_n\in M^*$ such that $d(x,y)\geq (1-\alpha)n$ and writes the (lexicographical) index of $y$ in $M^*$ onto the advice tape. The algorithm learns the index of $y$, and thereby also $y$ itself, from the advice. In round $i$, the algorithm answers $y_i$. Since $d(x,y)\geq (1-\alpha)n$, the cost incurred by the algorithm will be at most $\alpha n$.   

From \cref{anticodelem}, we get that the number of codewords in $M^*$ is at most $2^{K_{1/q}(\alpha)n}(n+1)\ln(q)n+1$. Thus, the number $b$ of advice bits needed to encode the index of a codeword in $M^*$ is at most
\begin{align*}
b&\leq \log\left(2^{K_{1/q}(\alpha)n}(n+1)\ln(q)n+1 \right)\\
&=K_{1/q}(\alpha)n+\log((n+1)\ln(q)n+1)=K_{1/q}(\alpha)n+O(\log n+\log\log q).
\end{align*}
\end{proof}

\section{Statement of main results for maximization problems}
\label{app:sec:max}
In what follows, we will state the main results (and definitions) of the paper for the case of $\Sigma$-repeatable maximization problems. The proofs are omitted since they can be obtained from the proofs of the corresponding results for minimization with only minor changes. When dealing with a maximization problem, \cref{rround} and \cref{rround2} still applies, except that the word ``cost'' should be replaced with the word ``score''. The definition of \Prs (\cref{def:prs}) applies as stated to maximization problems. The information-theoretic direct product theorem (\cref{mainthm}) and its application looks as follows for maximization problems.

\begin{theorem}[\cref{mainthm} for maximization problems]
\label{mainthm:max}
Let $\P$ be an online maximization problem and let $p^r$ be an $r$-round input distribution with associated cost-functions $\cost_i$. Furthermore, let $\ALG$ be a deterministic algorithm reading at most $b$ bits of advice on every input in the support of $p^r$.

Assume that there exists a concave and increasing function $f:[0,\infty]\rightarrow\mathbb{R}$ such that for every $w\in\mathcal{W}_i$ where $\Pr[W_i=w]>0$, the following holds: 
\begin{equation}
\E [\score_i(\ALG)\vert W_i=w]\leq f\left(D_{KL}(p_{i\vert w}\| p_i)\right)
\label{mainthmas:max}
\end{equation}
Then,
\begin{equation}
\E[\score(\ALG)]\leq  r f(b/r).
\label{mainthmconc:max}
\end{equation}
\end{theorem}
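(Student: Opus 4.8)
The plan is to follow, essentially verbatim, the proof of \cref{mainthm}, reversing every inequality and replacing "convex and decreasing" by "concave and increasing". The purely information-theoretic part of the argument carries over unchanged: since $\ALG$ reads at most $b$ advice bits, $H(B)\le b$, so $I(X;B)=H(B)-H(B\mid X)\le b$; and because $X_1,\dots,X_r$ are independent, the chain rule for conditional entropy gives $I(X;B)=\sum_{i=1}^r I(X_i;W_i)$ exactly as in \cref{Ichain}. Hence $\sum_{i=1}^r I(X_i;W_i)\le b$, and, as computed in \cref{mainthmmikl}, $I(X_i;W_i)=\E_w[D_{\KL}(p_{i\vert w}\| p_i)]$ for each $i$. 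As in the minimization case one notes $\supp(p_{i\vert w})\subseteq\supp(p_i)$, so each divergence above is well-defined and finite.

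Next I would bound the per-round expected score from above. Using the law of total expectation, hypothesis \cref{mainthmas:max}, and Jensen's inequality applied to the \emph{concave} function $f$ (which gives $\E[f(Y)]\le f(\E[Y])$),
\begin{equation*}
\E[\score_i(\ALG)]=\E_w\big[\E[\score_i(\ALG)\mid W_i=w]\big]\le \E_w\big[f(D_{\KL}(p_{i\vert w}\| p_i))\big]\le f\big(\E_w[D_{\KL}(p_{i\vert w}\| p_i)]\big)=f(I(X_i;W_i)).
\end{equation*}
Summing over $i$ and using linearity of expectation yields $\E[\score(\ALG)]\le\sum_{i=1}^r f(I(X_i;W_i))$.

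Finally I would invoke the same pooling step: writing $h_i=I(X_i;W_i)\ge 0$ with $\sum_i h_i\le b$, Jensen's inequality for the concave $f$ gives $\frac1r\sum_{i=1}^r f(h_i)\le f\!\big(\frac1r\sum_{i=1}^r h_i\big)$, and since $f$ is increasing and $\frac1r\sum_i h_i\le b/r$, this is at most $f(b/r)$. Therefore $\sum_{i=1}^r f(h_i)\le r f(b/r)$, which combined with the previous display gives $\E[\score(\ALG)]\le r f(b/r)$, i.e.\ \cref{mainthmconc:max}.

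There is no genuine obstacle here: every step is the mirror image of an already-established one. The only point requiring a moment's care is the direction of monotonicity in the pooling step — we need $f$ \emph{increasing} (not merely concave) so that enlarging the argument from $\frac1r\sum_i h_i$ up to $b/r$ can only increase $f$. This is precisely why the maximization hypothesis asks for a concave \emph{and increasing} $f$, rather than just a concave one, mirroring the convex-and-decreasing requirement in \cref{mainthm}.
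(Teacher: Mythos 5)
Your proposal is correct and is exactly the route the paper intends: Appendix F explicitly states that the maximization proofs are obtained from the minimization ones "with only minor changes," and your mirrored argument (reversed inequalities, Jensen for concave $f$, monotonicity used in the pooling step) supplies precisely those changes. Nothing further is needed.
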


\begin{lemma}[\cref{techlemma} for maximization problems]
\label{techlemma:max}
Fix $r\geq 1$. Let $\P$ be a maximization problem. Let $p:I\rightarrow [0,1]$ be an input distribution, where $I_s$ is a finite set of $P$-inputs with the same initial state $s$ and length at most $n'$. Assume that for every deterministic $\P$-algorithm $\ALG$ without advice, it holds that $\E_{\sigma\sim p}[\ALG(\sigma)]\leq t$. Also, let $M$ be the largest score that any $\P$-algorithm can achieve on any input from $I_s$.

Then, there exists an $r$-round input distribution, $p^r$, over $\Prs$-inputs with at most $rn'$ requests in total, such that any deterministic $\Prs$-algorithm reading at most $b$ bits of advice (on inputs of length at most $rn'$) has an expected score of at most $r(t+2M\sqrt{b/r})$. Furthermore, $\E_{\sigma^*\sim p^r}[\OPT^*_{\Sigma}(\sigma^*)]=r\E_{\sigma\sim p}[\OPT(\sigma)]$.
\end{lemma}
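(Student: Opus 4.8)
\textbf{Proof proposal for \cref{techlemma:max} (maximization version).}

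The plan is to mirror the proof of \cref{techlemma} essentially line for line, flipping inequalities and swapping "convex/decreasing" for "concave/increasing" wherever the direct product theorem is invoked. First I would define the $r$-round input distribution $p^r$ exactly as before: map $\sigma^*=(\sigma_1;\ldots;\sigma_r)$ to $p(\sigma_1)\cdots p(\sigma_r)$, so in each round we independently draw an input from $I_s$ according to $p$, and equip each round with the natural score function $\score_i$ inherited from the definition of $\Prs$. As in the minimization case, the $i$th round input distribution is $p_i=p$, and a round of $p^r$ corresponds to a round of $\P^*_\Sigma$. The claim that $\E_{\sigma^*\sim p^r}[\OPT^*_\Sigma(\sigma^*)] = r\,\E_{\sigma\sim p}[\OPT(\sigma)]$ is unchanged — it follows from additivity of $\OPT^*_\Sigma$ over rounds and linearity of expectation, word for word as in \cref{techlemma}.

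Next I would fix a deterministic $\Prs$-algorithm $\ALG^*$ reading at most $b$ bits of advice on inputs of length $\le rn'$, fix $1\le i\le r$ and $w\in\mathcal{W}_i$, and set $d=D_{\KL}(p_{i\vert w}\|p_i)$. Pinsker's inequality \cref{pinsker} gives $\|p_{i\vert w}-p_i\|_1 \le \sqrt{d\ln 4} =: h(d)$. I would then hard-wire $w$ into $\ALG^*$ to obtain a $\P$-algorithm $\ALG_w$ without advice, defined on $\supp(p_i)$, with $\|\ALG_w\|_\infty \le M$ and agreeing with the round-$i$ score of $\ALG^*$ on $\supp(p_{i\vert w})$. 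Using \cref{etv} together with the bound on $\|p_{i\vert w}-p_i\|_1$, but now bounding the conditional score from \emph{above},
\begin{align*}
\E[\score_i(\ALG^*)\vert W_i=w] &= \E_{\sigma\sim p_{i\vert w}}[\ALG_w(\sigma)] \\
&\le \E_{\sigma\sim p}[\ALG_w(\sigma)] + M\cdot h(d) \\
&\le t + M\cdot h(d).
\end{align*}
Defining $f(d) = t + M\sqrt{\ln 4\cdot d}$, which is concave and increasing, the maximization direct product theorem \cref{mainthm:max} yields
\begin{align*}
\E_{\sigma^*\sim p^r}[\ALG^*(\sigma^*)] = \E[\score(\ALG^*)] &\le r f(b/r) \\
&= r\Big(t + M\sqrt{\tfrac{b\ln 4}{r}}\Big) \\
&\le r\Big(t + 2M\sqrt{\tfrac{b}{r}}\Big),
\end{align*}
using $\sqrt{\ln 4} < 2$, which is exactly the desired bound.

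The steps are all routine given the machinery already established; there is no genuine obstacle. The one point that needs a small amount of care is checking that the concavity/monotonicity hypotheses of \cref{mainthm:max} are met: $f(d)=t+M\sqrt{\ln 4\cdot d}$ is increasing because $M\ge 0$ and $\sqrt{\cdot}$ is increasing, and concave because $\sqrt{\cdot}$ is concave and the composition with the affine map $d\mapsto \ln 4\cdot d$ preserves concavity. A second minor point is ensuring $\ALG_w$ is well-defined on all of $\supp(p_i)$ (behaving arbitrarily but validly off $\supp(p_{i\vert w})$), so that the $\|\cdot\|_\infty$ bound and the application of \cref{etv} are legitimate — again identical to the minimization proof. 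Everything else transfers verbatim with "cost" replaced by "score" and the direction of each inequality reversed.
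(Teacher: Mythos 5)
Your proof is correct and takes exactly the route the paper intends: the paper explicitly states that the maximization results are obtained from the minimization proofs ``with only minor changes,'' and you have carried out precisely those changes (flipping the direction of the inequality after \cref{etv}, replacing convex/decreasing with concave/increasing for $f$, and invoking \cref{mainthm:max} in place of \cref{mainthm}). Nothing is missing; the checks on concavity, monotonicity, and the well-definedness of $\ALG_w$ are all in order.
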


We need to make some obvious changes to \cref{re} in order to obtain a proper definition of being $\Sigma$-repeatable for maximization problems.
\begin{definition}[\cref{re} for maximization problems]
Let $k_1,k_2,k_3\geq 0$. An online maximization problem $P$ is \emph{$\Sigma$-repeatable with parameters $(k_1,k_2,k_3)$} if there exists a mapping $g:I^*\rightarrow I$ with the following properties:
\begin{enumerate}
\item For every $\sigma\in I^*$, 
\begin{equation}
\ab{g(\sigma^*)}\leq \ab{\sigma^*}+k_1r,
\label{re1:max}
\end{equation}
where $r$ is the number of rounds in $\sigma^*$.
\item For any deterministic $\P$-algorithm $\ALG$, there is a deterministic $\Prs$-algorithm $\ALG^*$ such that for every $\sigma^*\in I^*$, 
\begin{equation}
\ALG^*(\sigma^*)\geq \ALG(g(\sigma^*))-k_2r,
\label{re2:max}
\end{equation}
where $r$ is the number of rounds in $\sigma^*$. 
\item For every $\sigma^*\in I^*$,
\begin{equation}
\OPT^*_{\Sigma}(\sigma^*)\leq \OPT(g(\sigma^*))+k_3r,
\label{re3:max}
\end{equation}
where $r$ is the number of rounds in $\sigma^*$.
\end{enumerate}
\end{definition}

\begin{theorem}[\cref{maint1} for maximization problems]
\label{maint1:max}
Let $\P$ be a $\Sigma$-repeatable online maximization problem. Suppose that for every $\varepsilon>0$ and every $\alpha$, there exists an input distribution $\pea:I\rightarrow [0,1]$ with finite support such that $\E_{\pea}[\OPT(\sigma)]\geq (c-\varepsilon)\E_{\pea}[\DET(\sigma)]+\alpha$ for every deterministic algorithm $\DET$ without advice. Then, every randomized algorithm reading at most $o(n)$ bits of advice on inputs of length $n$ has a competitive ratio of at least $c$.
\end{theorem}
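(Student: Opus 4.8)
\textbf{Proof plan for \cref{maint1:max} (the maximization version of \cref{maint1}).}

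The plan is to mirror the proof of \cref{maint1} step by step, swapping every inequality direction and replacing ``cost'' by ``score'' throughout. First I would invoke the maximization analogue of Yao's principle: to show that every randomized algorithm with advice complexity $o(n)$ has competitive ratio at least $c$, it suffices to exhibit, for each $\varepsilon'>0$ and $\alpha'$, an input distribution $p$ such that $\E_p[\OPT(\sigma)]\geq (c-\varepsilon')\E_p[\ALG(\sigma)]+\alpha'$ for every deterministic $\P$-algorithm $\ALG$ reading $b(n)\in o(n)$ bits of advice (this is the ``easy'' direction of Yao, and its proof is the same interchange-of-summation argument as in \cref{Yao}, now applied with the roles of $\OPT$ and $\ALG$ reversed because larger score is better). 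Fix $\varepsilon=\varepsilon'$ and choose $\alpha=\alpha'+k_2+k_3(c-\varepsilon)+1$, exactly as in the minimization proof but with the sign of the $k_2,k_3$ terms chosen to match \cref{re2:max} and \cref{re3:max}; then pick a distribution $\pea$ with finite support witnessing $\E_{\pea}[\OPT(\sigma)]\geq (c-\varepsilon)\E_{\pea}[\DET(\sigma)]+\alpha$ for all advice-free deterministic $\DET$, and (by \cref{lem:wlogsames}, whose proof works verbatim for maximization) assume all inputs in $\supp(\pea)$ share an initial state. Since $\P$ is $\Sigma$-repeatable, $\supp(\pea)$ is finite and the number of valid outputs per input is finite, so there is a finite bound $M$ on the score any algorithm can achieve on inputs from $\supp(\pea)$.

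Next I would apply \cref{techlemma:max}: for every $r\in\mathbb{N}$ it yields an $r$-round input distribution $\pear$ over $\Prs$-inputs such that any $\Prs$-algorithm $\ALG^*$ reading at most $b^*$ bits of advice on $\supp(\pear)$ satisfies $\E_{\sigma^*\sim\pear}[\ALG^*(\sigma^*)]\leq r\bigl((c-\varepsilon)^{-1}(\ldots)\bigr)$---more precisely $\E[\ALG^*]\leq r\bigl(\E_{\pea}[\DET^{\max}]+2M\sqrt{b^*/r}\bigr)$ where $\E_{\pea}[\DET^{\max}]\leq (\E_{\pea}[\OPT]-\alpha)/(c-\varepsilon)$; and also $\E_{\sigma^*\sim\pear}[\OPT^*_\Sigma(\sigma^*)]=r\,\E_{\sigma\sim\pea}[\OPT(\sigma)]$. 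The internal proof of \cref{techlemma:max} uses Pinsker's inequality and \cref{mainthm:max} (with a concave increasing $f(d)=t+M\sqrt{\ln 4\cdot d}$) in place of the convex decreasing $f$ of the minimization case. Then, as in the minimization proof, I would fix $r$ large enough that $2M\sqrt{b(L_{\alpha,\varepsilon}r+k_1 r)/r}\leq 1$, which is possible because $b(n)\in o(n)$ and $L_{\alpha,\varepsilon},k_1$ are constants; use \cref{lem:repadv} (which is stated for $\Sigma$-repeatable problems and applies here since the length of inputs in $\supp(\pear)$ is bounded) to turn the $\P$-algorithm $\ALG$ into a $\Prs$-algorithm $\ALG^*$ reading a fixed number $b_r$ of advice bits with $\ALG^*(\sigma^*)\geq \ALG(g(\sigma^*))-k_2 r$ (the maximization version of the inequality in \cref{lem:repadv}); and finally chain the inequalities: $\E_{\pear}[\OPT(g(\sigma^*))]\geq \E_{\pear}[\OPT^*_\Sigma(\sigma^*)]-k_3 r = r\E_{\pea}[\OPT(\sigma)]-k_3 r$ by \cref{re3:max}, while $\E_{\pear}[\ALG(g(\sigma^*))]\leq \E_{\pear}[\ALG^*(\sigma^*)]+k_2 r$, and combining with the bound from \cref{techlemma:max} and the choice of $\alpha$ gives $\E_{\pear}[\OPT(g(\sigma^*))]\geq (c-\varepsilon')\E_{\pear}[\ALG(g(\sigma^*))]+\alpha'$, which is precisely the distribution needed for Yao.

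The only genuinely delicate point---and the main place where a careless adaptation could go wrong---is bookkeeping the direction of every inequality and the sign of every additive slack term when passing through the three inequalities of $\Sigma$-repeatability (\cref{re1:max}--\cref{re3:max}), through \cref{lem:repadv}, and through \cref{techlemma:max}; in the maximization setting $\OPT$ appears on the \emph{left} of the witnessing inequality and $\ALG$ on the right, so the additive constant $\alpha$ must be inflated by $k_2+k_3(c-\varepsilon)+1$ rather than by the analogous quantity with opposite sign, and one must check that $(c-\varepsilon')$ (not $(c-\varepsilon')^{-1}$) is the coefficient that survives. Everything else---the information-theoretic core, the use of Pinsker, the convexity/concavity of $f$, the large-$r$ argument---is a direct transcription of the minimization proof, so I would not expect any new mathematical obstacle, only the need for care with signs and inequality directions.
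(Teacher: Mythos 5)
Your plan is essentially correct and matches what the paper has in mind: the paper explicitly omits this proof, saying only that it follows from the minimization case with minor changes, and your step-by-step transcription — maximization Yao, Lemma 7 (with the same-initial-state reduction), \cref{techlemma:max} giving $\E[\ALG^*]\leq r(t+2M\sqrt{b^*/r})$, \cref{lem:repadv}, picking $r$ so that $2M\sqrt{b_r/r}\leq 1$, and chaining through $\Sigma$-repeatability — is exactly the intended argument.

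One small but concrete slip: the formula $\alpha=\alpha'+k_2+k_3(c-\varepsilon)+1$ cannot be carried over verbatim. In the minimization chain the $(c-\varepsilon)$ factor hits the $\OPT^*_\Sigma\to\OPT(g(\sigma^*))$ error term $k_3 r$, producing $k_3(c-\varepsilon)$, while $k_2 r$ and the Pinsker slack $r$ enter unmultiplied; in the maximization chain the inequality $\E[\OPT]\geq(c-\varepsilon)\E[\DET]+\alpha$ sits on the \emph{opposite} side, so after dividing to get $t\leq(\E[\OPT]-\alpha)/(c-\varepsilon)$ and then multiplying back by $(c-\varepsilon')$, it is the $\ALG^*\to\ALG(g(\sigma^*))$ error $k_2 r$ and the Pinsker slack $r$ that pick up the $(c-\varepsilon')$ factor, while the $k_3 r$ term does not. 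Working the chain out carefully gives the requirement $r\bigl(\alpha-k_3-(c-\varepsilon')(1+k_2)\bigr)\geq\alpha'$, so the correct choice is $\alpha=\alpha'+k_3+(c-\varepsilon')(1+k_2)$, i.e.\ the roles of $k_2$ and $k_3$ are effectively swapped relative to minimization, and the ``$+1$'' also acquires a $(c-\varepsilon')$ factor. You flag the additive constant as the one delicate point, which is fair, but the statement that $\alpha$ is inflated by ``$k_2+k_3(c-\varepsilon)+1$'' is not what actually comes out; a reader following your plan literally would hit a contradiction at the final inequality unless $k_2=k_3$ and $c-\varepsilon=1$. Everything else in the plan is sound.
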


\begin{definition}[\cref{compdef} for maximization problems]
\label{app:max:compdef}
Let $\P$ be a maximization problem and let $c>1$ be such that the expected competitive ratio of every randomized $\P$-algorithm is at least $c$. We say that $\P$ is \emph{compact} if for every $\varepsilon>0$ and every $\alpha\geq 0$, there exists an input distribution $\pea$ such that

\begin{enumerate}
\item If $\ALG$ is a deterministic online algorithm (without advice) then \begin{equation}\E_{\pea}[\OPT(\sigma)]\geq (c-\varepsilon)\cdot \E_{\pea}[\ALG(\sigma)]+\alpha. \label{app:max:invyaoeq}\end{equation}
\item The support $\supp (\pea)$ of $\pea$ is finite.
\end{enumerate}
\end{definition}

The maximization version of \cref{mmtt} follows easily from \cref{app:max:compdef} and \cref{maint1:max} (and \cref{derandmax:cor}).
\begin{theorem}[\cref{mmtt} for maximization problems]
Let $\P$ be compact and $\Sigma$-repeatable maximization problem with at most $\dri$ inputs of length $n$, and let $c$ be a constant not depending on $n$. The following are equivalent:
\begin{enumerate}
\item For every $\varepsilon>0$, there exists a randomized $(c+\varepsilon)$-competitive $\P$-algorithm without advice.
\item For every $\varepsilon>0$, there exists a deterministic $(c+\varepsilon)$-competitive $\P$-algorithm with advice~complexity $o(n)$.
\end{enumerate}
\end{theorem}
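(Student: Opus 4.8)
The statement to prove is the maximization analogue of \cref{mmtt}: for a compact and $\Sigma$-repeatable maximization problem $\P$ with at most $\dri$ inputs of length $n$ and a constant $c$, the existence of randomized $(c+\varepsilon)$-competitive algorithms without advice for all $\varepsilon>0$ is equivalent to the existence of deterministic $(c+\varepsilon)$-competitive algorithms with advice complexity $o(n)$ for all $\varepsilon>0$. The plan is to mirror exactly the two-way argument used in the proof of \cref{mmtt} (the minimization case), substituting the maximization versions of the supporting results that are already stated in the excerpt.

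\emph{Forward implication (randomization $\Rightarrow$ sublinear advice).} Suppose that for every $\varepsilon>0$ there is a randomized $(c+\varepsilon)$-competitive $\P$-algorithm without advice. Fix $\varepsilon>0$ and apply this to the constant $\varepsilon/2$, obtaining a randomized $(c+\varepsilon/2)$-competitive algorithm $R$ with advice complexity $0$. Since $\P$ has at most $\dri$ inputs of length $n$, I would invoke \cref{derandmax:cor} (the maximization derandomization corollary) with the constant $c+\varepsilon/2$ in the role of its ``$c$'' and a suitable small constant in the role of its ``$\varepsilon$'', yielding a deterministic $(c+\varepsilon)$-competitive $\P$-algorithm with advice complexity $0+O(\log n)=o(n)$. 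As $\varepsilon>0$ was arbitrary, this gives the first implication.

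\emph{Reverse implication (sublinear advice $\Rightarrow$ randomization).} Suppose that for every $\varepsilon>0$ there is a deterministic $(c+\varepsilon)$-competitive $\P$-algorithm with advice complexity $o(n)$. I would first establish the maximization analogue of \cref{comprep2}: since $\P$ is compact (\cref{app:max:compdef}) and $\Sigma$-repeatable (maximization version), the maximization direct-product machinery — \cref{maint1:max}, which itself rests on \cref{mainthm:max} and \cref{techlemma:max} — shows that any randomized algorithm reading $o(n)$ bits of advice has competitive ratio at least $c$ whenever randomized algorithms without advice do. Concretely, assume for contradiction that for some $\varepsilon>0$ no randomized $(c+\varepsilon)$-competitive algorithm without advice exists; then compactness supplies, for every $\alpha\ge 0$, a finite-support input distribution $\pea$ witnessing the lower bound $c+\varepsilon$ against all deterministic algorithms, and \cref{maint1:max} (whose hypothesis is exactly the existence of such $\pea$ for all $\varepsilon'>0,\alpha$) forces every $o(n)$-advice algorithm to have competitive ratio at least $c+\varepsilon$ — contradicting the hypothesized $(c+\varepsilon')$-competitive algorithm with sublinear advice for $\varepsilon'<\varepsilon$. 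Hence for every $\varepsilon>0$ a randomized $(c+2\varepsilon)$-competitive algorithm without advice exists (the factor-$2$ slack absorbs the gap between the hypothesized competitive ratio $c+\varepsilon$ and the lower bound), and since $\varepsilon$ was arbitrary, the second implication follows.

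\emph{Main obstacle.} The argument is essentially a bookkeeping exercise once the maximization versions of the supporting lemmas are in place, so there is no deep technical difficulty here; the one point requiring care is making sure the chain of quantifiers over $\varepsilon$ is handled correctly in the reverse direction — in particular that the compactness witness $\pea$ exists for the $c$ that is genuinely a lower bound on randomized algorithms without advice, and that the $\Sigma$-repeatability parameters $(k_1,k_2,k_3)$ of the maximization problem do not interfere with the $o(n)$ bound in \cref{maint1:max}. Both of these are exactly as in the minimization proof of \cref{mmtt}, so I would simply cite that structure and note that the proof is obtained from it by replacing ``cost'' with ``score'', reversing the direction of the competitive-ratio inequalities, and using \cref{maint1:max}, \cref{derandmax:cor}, and \cref{app:max:compdef} in place of their minimization counterparts.
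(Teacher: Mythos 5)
Your proposal is correct and takes essentially the same approach as the paper: the paper simply remarks that the maximization version of \cref{mmtt} follows from \cref{app:max:compdef}, \cref{maint1:max}, and \cref{derandmax:cor}, and your argument fills in those steps by mirroring the minimization proof (forward direction via \cref{derandmax:cor}, reverse direction via the maximization analogue of \cref{comprep1}/\cref{comprep2} built on \cref{maint1:max}). The factor-of-$2$ slack you introduce in the reverse direction is unnecessary — one can contradict the hypothesized $(c+\varepsilon')$-competitive algorithm directly for $\varepsilon'<\varepsilon$ without doubling — but since $\varepsilon$ ranges over all positives it is harmless.
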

\end{document}